\newcommand{\@abbrev}[3]{
	\def\c@a@def##1{
		\if ##1.
		\relax
		\else
		\@ifdefinable{\@nameuse{#1##1}}{\@namedef{#1##1}{#2##1}}
		\expandafter\c@a@def
		\fi
	}
	\c@a@def #3.
}
\newcommand{\AC}{\ensuremath{\textsc{AC}}}
\newcommand{\FO}{\ensuremath{\textsc{FO}}}
\newcommand{\CPT}{\ensuremath{\text{CPT}}}
\newcommand{\classmonPCx}[1]{\ensuremath{\mathbf{\mathcal{K}_{\monPCx 
				k}}}}
\newcommand{\G}{\ensuremath{\mathfrak{G}}}
\newcommand{\HF}{\text{\upshape HF}}
\newcommand{\Aut}{\mathbf{Aut}}
\newcommand{\dist}{\text{dist}}
\newcommand{\Sym}{{\mathbf{Sym}}}
\renewcommand{\phi}{\varphi}
\newcommand{\lra}{\longrightarrow}
\renewcommand{\phi}{\varphi}
\renewcommand{\theta}{\vartheta}
\renewcommand{\AA}{{\mathfrak A}}
\newcommand{\BB}{{\mathfrak B}}
\newcommand{\GG}{{\mathfrak G}}
\newcommand{\HH}{\mathfrak H}
\renewcommand{\epsilon}{\varepsilon}
\newcommand{\ptime}{\mbox{\sc Ptime}}
\newcommand{\tc}{\text{\upshape tc}}
\newcommand{\Aa}{{\cal A}}
\newcommand{\Bb}{{\cal B}}
\newcommand{\Cc}{{\cal C}}
\newcommand{\Ee}{{\cal E}}
\newcommand{\Hh}{{\cal H}}
\newcommand{\Kk}{{\cal K}}
\newcommand{\Oo}{{\cal O}}
\newcommand{\Pp}{{\cal P}}
\newcommand{\Qq}{{\cal Q}}
\newcommand{\Ss}{{\cal S}}
\newcommand{\Vv}{{\cal V}}
\newcommand{\Xx}{{\cal X}}
\renewcommand{\bar}{\overline}
\renewcommand{\G}{G} 
\newcommand{\Orb}{\mathbf{Orb}}
\newcommand{\Alt}{\mathbf{Alt}}
\newcommand{\Stab}{\mathbf{Stab}}
\newcommand{\StabP}{\Stab^\bullet}
\newcommand{\CFI}{\text{\upshape CFI}}
\newcommand{\maxOrb}{\mathbf{maxOrb}}
\renewcommand{\H}{\Hh}		%%%%  SYMBOL FOR HYPERPCUBE
\newcommand{\cutout}[1]{}
\newcommand{\tp}{\text{tp}}
\newcommand{\Ker}{\textbf{Ker}}
\renewcommand{\Im}{\textbf{Im}}
\newcommand{\rk}{\textbf{rk}}
\newcommand{\tw}{\textbf{tw}}
\newcommand{\Sp}{\mathbf{SP}}
\newcommand{\Orbit}{\mathbf{Orbit}}
\newcommand{\Pair}{\textsf{\upshape Pair}}
\newcommand{\Union}{\textsf{\upshape Union}}
\title{ Lower bounds for Choiceless Polynomial Time via Symmetric XOR-circuits  } %TODO Please add
\titlerunning{     }  %TODO optional, please use if title is longer than one line
\author{Benedikt Pago}{Mathematical Foundations of Computer Science, RWTH Aachen University, Germany }{pago@logic.rwth-aachen.de}{}{}%TODO mandatory, please use full name; only 1 author per \author macro; first two parameters are mandatory, other parameters can be empty. Please provide at least the name of the affiliation and the country. The full address is optional
\keywords{finite model theory, descriptive complexity, symmetric computation, symmetric circuits, graph isomorphism} 
\authorrunning{B. Pago} %TODO mandatory. First: Use abbreviated first/middle names. Second (only in severe cases): Use first author plus 'et al.'
\begin{document}
	\maketitle

\begin{abstract}	
	One of the central questions in finite model theory is whether there exists a logic that captures polynomial time. 
	%Two routes towards this goal are particularly well-studied, namely extensions of fixed-point logic with linear-algebraic operators, and choiceless computation with hereditarily finite sets. Linear-algebraic logics were separated from polynomial time not long ago using a variant of the Cai-Fürer-Immerman (CFI) construction by Moritz Lichter. 
	After rank logic and more general linear-algebraic logics have been separated from polynomial time, the most important candidate logics that remain are Choiceless Polynomial Time (CPT) and an extension thereof with a witnessed symmetric choice operator.\\
	In this paper, we make progress towards separating CPT from polynomial time by firstly establishing a connection between the expressive power of CPT and the existence of certain symmetric circuit families, and secondly, proving lower bounds against 
	these circuits.\\
	We focus on the isomorphism problem of unordered Cai-Fürer-Immerman-graphs (the CFI-query) as a potential candidate for separating CPT from P. Results by Dawar, Richerby and Rossman, and subsequently by Pakusa, Schalthöfer and Selman show that the CFI-query is CPT-definable on linearly ordered and preordered base graphs with small colour classes. We define a class of CPT-algorithms, that we call ``CFI-symmetric algorithms'', which generalises all the known ones, and show that such algorithms can only define the CFI-query on a given class of base graphs if there exists a family of symmetric XOR-circuits with certain properties. These properties include that the circuits have the same symmetries as the base graphs, are of polynomial size, and satisfy certain fan-in restrictions.
	Then we prove that such circuits with slightly strengthened requirements (i.e.\ stronger symmetry and fan-in and fan-out restrictions) do not exist for the $n$-dimensional hypercubes as base graphs. This almost separates the CFI-symmetric algorithms from polynomial time -- up to the gap that remains between the circuits whose existence we can currently disprove and the circuits whose existence is necessary for the definability of the CFI-query by a CFI-symmetric algorithm.\\
	To an extent, the connection between XOR-circuits and CPT-algorithms for the CFI-query can also be generalised to non-CFI-symmetric algorithms (even though such algorithms are currently not known). 
\end{abstract}

\section{Introduction}	
\emph{Choiceless Polynomial Time} (CPT) \cite{blass1999} is one of the most prominent candidate logics in finite model theory for capturing polynomial time. It can be seen as an extension of \emph{fixed-point logic with counting} \cite{dawar2015nature} with \emph{hereditarily finite sets} as data structures. This allows in principle to simulate arbitrary computations in CPT -- the only restrictions are that the computed h.f.\ sets are symmetric under the automorphisms of the input structure and polynomially bounded in size.\\
Other less studied candidates are logics with with witnessed choice constructs, such as fixed-point logic with witnessed symmetric choice and interpretations \cite{witnessedIFPlowerBound} and CPT extended with witnessed symmetric choice \cite{witnessedCPT}. Prior to Lichter's breakthrough \cite{lichter}, which separates rank logic from $\ptime$ using a variation of the famous Cai-Fürer-Immerman (CFI) construction \cite{caifurimm92}, logics with linear-algebraic operators \cite{linearLogics} were also considered reasonable candidates. However, as outlined in \cite{DGL22}, the results from \cite{lichter} and \cite{linearLogics} together imply that \emph{no} set of isomorphism-invariant linear algebraic operators can be used to define a logic capturing $\ptime$. Strong lower bounds for \emph{Choiceless Polynomial Time}, though, have remained elusive. Motivated by Gurevich's conjecture that no logic at all can capture polynomial time, in this article we make progress towards separating CPT from $\ptime$. 
For an overview of the unresolved problem "Is there a logic for $\ptime$?" in general, as well as the logic CPT in particular, see for example \cite{gradel2015polynomial}, \cite{grohe2008quest}, \cite{pakusa2015linear}, or \cite{svenja}.\\

Concerning CPT lower bounds, not very much is known so far: There is a non-definability result for a \emph{functional} problem in $\ptime$, namely it is impossible to define the dual of a given finite vector space in CPT \cite{rossman2010choiceless}. What we would like to have is, however, the inexpressibility of a polynomial time \emph{decision problem}. We focus on a standard benchmark from finite model theory, namely the \emph{CFI-query}. It asks to output, given a CFI-graph, whether it is odd or even. What this means is explained in Section \ref{sec:CFI} (it is equivalent to the graph isomorphism problem on these instances). The CFI-query is decidable in polynomial time but not in fixed-point logic with counting nor in rank logic (for generalised CFI-structures). It is open whether it is CPT-definable on \emph{unordered} instances, and our goal is to eventually answer this question in the negative. Our approach starts off from positive results: There do exist CPT-algorithms for linearly ordered and preordered versions of the CFI-query \cite{dawar2008, pakusaSchalthoeferSelman} and also CFI-graphs over base graphs of linear degree \cite{pakusaSchalthoeferSelman}. All these algorithms are based on the construction of hereditarily finite sets which somehow encode the parity of the given CFI-graph in their structure. These sets have been called \emph{super-symmetric} in \cite{dawar2008}.\\
In \cite{pago} it was shown that there exist unordered CFI-graphs (over $n$-dimensional hypercubes) whose degree is not linear and which cannot be preordered in CPT in such a way that the preorder-based algorithm from \cite{pakusaSchalthoeferSelman} (or the total-order-based one from \cite{dawar2008}) could be applied. This shows that these known choiceless algorithms for preordered versions of the CFI-query do not generalise to the unordered case because the necessary combinatorial objects (said preorders) are not symmetric enough: The main limiting factor of CPT is that it cannot define objects which break the symmetry of the input structure ``too much'' -- this is also why unordered (and hence highly symmetric) CFI-instances seem promising for lower bounds.\\
In the present paper, we take this strategy further: We define a general class of CPT-algorithms for the CFI-query, which encompasses all the known ones mentioned above, and show that their expressiveness depends on the existence of certain symmetric combinatorial objects, namely circuits with Boolean XOR-gates. We show that the CFI-query over a given class $\Kk$ of base graphs is only definable by an algorithm from that class if there exists a family of polynomial-size XOR-circuits with certain properties and, in particular, with the same symmetries as the graphs in $\Kk$ (Theorem \ref{thm:mainInformal}). This means that the non-definability of the CFI-query over $\Kk$ can be shown by proving the non-existence of symmetric circuits with the required properties. Indeed, we almost achieve this goal: Our second main result is a lower bound against such circuits; it shows that if we take as $\Kk$ the family of $n$-dimensional hypercubes and make the circuit properties slightly more restrictive than required by our Theorem \ref{thm:mainInformal}, then no circuit family can satisfy all these properties simultaneously. Thus, we come close to showing that the CFI-query over unordered hypercubes is undefinable by any CPT-algorithm from the general class we are considering. 

\paragraph*{Results}
We define in Section \ref{sec:CFIsymmetricSets} a class of CPT-algorithms for the CFI-query that contains all the currently known ones and prove that solving the CFI-query on a given class $\Kk$ of unordered base graphs by means of such an algorithm presupposes the existence of certain symmetric XOR-circuits. Following \cite{dawar2008}, we denote CFI-instances over a base graph $G$ as $\GG^S$, where $S$ is the set of vertices whose CFI-gadget is odd (see Section \ref{sec:CFI}). In the following theorem, we consider circuits whose input gates are labelled with the edges of the base graph $G_n$, and all internal gates are XOR gates. The group $\Aut(G_n) \leq \Sym(V(G_n))$ is the automorphism group of the base graph $G_n$. The $\Aut(G_n)$-orbit of the circuit refers to the set of all images of the circuit under relabelings of the input gates with permutations in $\Aut(G_n)$. A circuit is \emph{sensitive} to an input bit if flipping that bit changes the output of the circuit. The \emph{fan-in dimension} of a circuit is a parameter that we define in this paper; it is a generalisation of the fan-in degree. All these notions concerning our circuits are presented in detail in Section \ref{sec:XOR_circuits}.
\begin{theorem}[Main Theorem, informal]
	\label{thm:mainInformal}
	Let $(G_n = (V_n,E_n))_{n \in \bbN}$ be a sequence of base graphs. Let $\GG^S_n$ be a CFI-graph over $G_n$, and let $\tw_n$ denote the treewidth of $G_n$. If there exists a $\CPT$-program $\Pi$ that is \emph{super-symmetric} and \emph{CFI-symmetric} and decides the CFI-query on the instances $\GG_n^S$, for all $n \in \bbN$, then there also exists a family $(C_n)_{n \in \bbN}$ of XOR-circuits such that  
	\begin{enumerate}
		\item The number of gates in $C_n$ is polynomial in $|\GG^S_n|$.
		\item The $\Aut(G_n)$-orbit of the circuit has size polynomial in $|\GG^S_n|$. 
		\item $C_n$ is \emph{sensitive} to $\Omega(\tw_n)$ many input bits. 
		\item The \emph{fan-in dimension} of $C_n$ is $\Oo(\log |\GG_n^S|)$.
	\end{enumerate}	
\end{theorem}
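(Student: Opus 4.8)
The plan is to extract the circuit family $(C_n)_{n\in\bbN}$ from an accepting run of the hypothetical program $\Pi$ on the instances $\GG^S_n$. A $\CPT$-run builds a sequence of polynomially many hereditarily finite sets over the atoms of $\GG^S_n$, each of polynomial size; the \emph{super-symmetric} hypothesis guarantees in addition that each of these h.f.\ sets lies in an $\Aut(\GG^S_n)$-orbit of polynomial size, equivalently is supported by a bounded set of atoms. I would isolate the h.f.\ set $x_n$ occurring in the run that ``witnesses'' the accept/reject decision and turn $x_n$ together with its $\in$-DAG into a circuit in the natural way: h.f.\ sets become gates, $\in$ becomes the wiring, and the atoms at the leaves become input gates, labelled by the edges of $G_n$ through the fibre structure of the CFI-construction. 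Properties~(1) and~(2) then follow essentially for free: the gate count equals the size of the transitive closure of $x_n$, hence is polynomial in $|\GG^S_n|$; and the $\Aut(G_n)$-orbit of $C_n$ is, modulo routine group-theoretic bookkeeping, the image of the $\Aut(\GG^S_n)$-orbit of $x_n$ under the canonical projection $\Aut(\GG^S_n)\to\Aut(G_n)$, hence also polynomial.

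The substantive point is to see that the internal gates may be taken to be XOR-gates and to bound their fan-in dimension; this is exactly the work the \emph{CFI-symmetric} hypothesis is designed to do. Each CFI-gadget is an affine subspace of $GF(2)$-vectors, and CFI-symmetry should force every h.f.\ set the algorithm forms to depend on the gadget bits only through its orbit under the ``flip'' subgroup $K\trianglelefteq\Aut(\GG^S_n)$ (the automorphisms acting trivially on the base graph), and to make that orbit an affine subspace of the $GF(2)$-space of flips. Reading off this subspace at each gate supplies the XOR-semantics, and the \emph{fan-in dimension} of the gate is, by definition, the dimension of this flip-space. Since $K$ is an elementary abelian $2$-group, a flip-subspace of dimension $d$ forces an orbit of size $2^d$, so the polynomial bound on orbit sizes forces $d=\Oo(\log|\GG^S_n|)$ uniformly over all gates — this is property~(4).

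For property~(3) I would argue at the level of the Boolean function $f_n$ computed by $C_n$. Since $\Pi$ decides the CFI-query, $f_n$, viewed as a function of its edge-labelled inputs, must separate the odd CFI-instances over $G_n$ from the even ones. Suppose $f_n$ were sensitive only to a set $F$ of $o(\tw_n)$ edges. The classical analysis of CFI-graphs — via the cops-and-robbers / $k$-pebble characterisation, in which $\Theta(\tw_n)$ tokens are required, or equivalently via a Feferman--Vaught style decomposition along a thin separator — shows that the odd/even distinction cannot be localised to so few edges: exploiting the gauge freedom of the CFI-construction one can exhibit an odd instance and an even instance whose configurations agree on every edge of $F$, so $f_n$ cannot separate them. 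Hence $f_n$, and therefore $C_n$, is sensitive to $\Omega(\tw_n)$ edges.

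I expect the heart of the difficulty to lie in the middle step: making the \emph{CFI-symmetric} condition precise enough that it genuinely linearises the algorithm's interaction with the gadgets — so that the arbitrary h.f.-set operations a $\CPT$-program may perform collapse, on the gadget bits, to $GF(2)$-affine ones and the ``flip-orbit $=$ affine subspace, and its dimension $=$ fan-in dimension'' dictionary holds — while keeping the condition liberal enough to subsume the known algorithms of \cite{dawar2008} and \cite{pakusaSchalthoeferSelman}, and in verifying that this dictionary is compatible with the recursive structure of the $\in$-DAG. The sensitivity bound~(3) is essentially classical CFI lower-bound technique re-expressed in terms of treewidth, and steps~(1)--(2) are routine bookkeeping once the passage from h.f.\ sets to circuits is fixed.
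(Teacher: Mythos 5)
Your high-level architecture matches the paper's: isolate a h.f.\ set from the run, turn its $\in$-DAG into a circuit, and read the four properties off that translation. But there are two substantive gaps, one of which makes a step fail outright.

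The fatal problem is your argument for property~(3). The circuit $C_n$ does \emph{not} compute the CFI-query as a Boolean function of the instance. Its inputs are the edge-flip bits and its gates encode how a single fixed h.f.\ set $\mu_n$ transforms under edge flips; the root's sensitivity set $\Xx(r)$ records which individual flips $\rho_e$ move $\mu_n$, i.e.\ the minimal CFI-support $\sup_{\CFI}(\mu_n)$. There is no direct sense in which $C_n$ ``separates odd from even CFI-instances,'' so the contradiction you sketch — that a function sensitive to $o(\tw_n)$ edges cannot distinguish odd from even — does not attach to $C_n$ at all. The paper instead invokes the Dawar--Richerby--Rossman support lower bound (Theorem~\ref{thm:DRR_supportLowerBoundCFI}, from~\cite{dawar2008}): any CPT program deciding the CFI-query on $\GG_n^S$ must activate some h.f.\ set whose smallest $\Aut(\GG_n^S)$-support is $\Omega(\tw_n)$ (modulo the homogeneity hypothesis and the CFI-support-gap bookkeeping of Definition~\ref{def:XOR_supportGap}). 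That theorem, applied to the h.f.\ set feeding the circuit construction, is what delivers a linear-in-$\tw_n$ lower bound on $|\sup_{\CFI}(\mu_n)|$; property~(3) then follows from the equivalence ``sensitive to $e$ iff $e\in\sup_{\CFI}(\mu_n)$'' in Theorem~\ref{thm:XOR_mainCircuitTheorem}. You do not get to rederive the support lower bound at the circuit level; you have to import it.

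A second, related confusion: you describe super-symmetry as guaranteeing a small $\Aut(\GG^S_n)$-orbit, but CPT-definability already gives that for free. Super-symmetry (Definition~\ref{def:XOR_superSymmetricObjects}) is the strictly stronger condition that the orbit under the \emph{full} edge-flip group $\Aut_{\CFI}(\GG)\cong\bbF_2^{E}$ — including flips that are \emph{not} automorphisms of the instance — is polynomial. This is precisely what is needed to bound the (unrestricted) fan-in dimension claimed in property~(4); without it your argument via the ``flip subgroup $K\trianglelefteq\Aut(\GG^S_n)$'' only controls the \emph{restricted} fan-in dimension. Finally, the construction itself needs one idea you do not make explicit: the gates are not the elements of $\tc(\mu)$ but their $\sim_E$-classes (orbits under $\Aut_{\CFI}(\GG)$ inside $\tc(\mu)$). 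Quotienting by $\sim_E$ is what collapses the two atoms $e_0,e_1$ of each edge gadget into a single input gate labelled $e$, and, via the kernel/stabiliser correspondence (Lemma~\ref{lem:XOR_kernelLemma}), is what makes the wiring actually compute XOR and ties $\Xx(r)$ to $\sup_{\CFI}(\mu)$. CFI-symmetry is the hypothesis that makes this quotient behave; it is not merely a linearity assumption on ``the algorithm's interaction with the gadgets.''
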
	

The detailed version of this is Theorem \ref{thm:XOR_lowerboundProgram} in Section \ref{sec:XOR_circuits}.
The terms \emph{super-symmetric} and \emph{CFI-symmetric} refer to the properties of a h.f.\ set that is constructed by the program $\Pi$ in order to decide the CFI-query. \emph{Super-symmetry} is a property of h.f.\ sets that goes back to \cite{dawar2008} and means that a h.f.\ set is fixed by \emph{all} edge flips of a CFI-structure $\GG^S$, not only by those which are automorphisms of $\GG^S$ (see Section \ref{sec:CFI_automorphisms}). \emph{CFI-symmetry} is a concept that we define in this paper and which describes the internal structure and ``local symmetries'' of a h.f.\ set. The CFI-algorithms from \cite{dawar2008} and \cite{pakusaSchalthoeferSelman} are based on h.f.\ sets which are both super-symmetric and CFI-symmetric. Arguably, both these properties are crucial for the success of all these algorithms: The way they work is that they aggregate all the CFI-gadgets of $\GG^S$ into one big h.f.\ set $\mu$ which is symmetric under all flips of edge gadgets. Then the vertices $e_0$ and $e_1$ in the edge gadgets in $\mu$ are replaced with the constants $0$ and $1$. Since CPT is choiceless, it cannot arbitrarily choose which of the vertices $e_0, e_1$ is replaced with which constant; it has to do it in both ways. The super-symmetry of $\mu$ ensures that both these replacements lead to the same object. This is necessary to avoid the creation of an exponential number of h.f.\ sets when all combinations are tried out. After all atoms in $\mu$ have been replaced with $0$ and $1$, the parity of the original CFI-graph $\GG^S$ can be extracted. The property of $\mu$ that we call CFI-symmetry essentially says that $\mu$ is composed out of sub-objects which have a similar behaviour as CFI-gadgets. This seems to be a natural design pattern for super-symmetric objects which encode the parity of $\GG^S$ but one could also conceive super-symmetric objects which are not CFI-symmetric (or vice versa). In short, super-symmetry is the main property of h.f.\ sets that makes all known CFI-algorithms work, and CFI-symmetry is the established design principle to achieve super-symmetry.\\

Theorem \ref{thm:mainInformal} reduces the question about the CPT-definability of the CFI-query to the question for the existence of certain non-trivial combinatorial objects, namely polynomial-size symmetric XOR-circuits. Its proof is based on a translation of h.f.\ sets over CFI-structures into XOR-circuits (Theorem \ref{thm:XOR_mainCircuitTheorem}). This is then combined with a lower bound from \cite{dawar2008} on the \emph{support} size of the h.f.\ sets required to decide the CFI-query -- the support size somehow measures how asymmetric a set is (see Section \ref{sec:CFIsupports}).
As our second main result shows, it is not at all clear that the required circuit families do exist for all base graphs, and so, obtaining lower bounds for CPT via lower bounds for symmetric circuits may indeed be possible.\\

 Before we come to this second result, in Section \ref{sec:generalisedCircuitConstruction}, we also prove a version of the above theorem without the restriction to \emph{CFI-symmetric} algorithms. It requires that the h.f.\ sets that are constructed by the algorithm have certain \emph{symmetric bases}, which we prove to be a more general property than CFI-symmetry. However, since all currently known CPT-algorithms for the CFI-query are CFI-symmetric, it is not clear that this more general (and much more complicated to prove) version of the theorem will actually be necessary at some point in order to separate CPT from $\ptime$. It could be that the class of choiceless algorithms satisfying the said symmetric basis condition is a strictly bigger class than the CFI-symmetric ones, but it could also be that every CPT-algorithm for the CFI-query is equivalent to a CFI-symmetric one (in which case Theorem \ref{thm:mainInformal} would in principle be sufficient to separate CPT from P via symmetric circuit lower bounds).\\

Our second main result, which we prove in Section \ref{sec:chapterXOR2}, shows that if we choose the $n$-dimensional hypercubes as the family of base graphs, and impose slightly stronger conditions on the circuits, then it is not possible to satisfy all of them together.
\begin{theorem}
	\label{thm:mainLowerBound}
	Let $(\H_n)_{n \in \bbN}$ be the family of $n$-dimensional hypercubes and let $\tw_n$ denote the treewidth of $\H_n$. Let $\HH_n^S$ denote a CFI-structure over $\H_n$. There exists \emph{no} family of symmetric XOR-circuits $(C_n)_{n \in \bbN}$ such that: 
	\begin{enumerate}
		\item The number of gates in $C_n$ is polynomial in $|\HH^S_n|$.
		\item The $\Aut(\H_n)$-orbit of the circuit has size exactly one. 
		\item $C_n$ is \emph{sensitive} to $\Omega(\tw_n)$ input bits. 
		\item For any two gates $g,h$ in $C_n$ such that $h$ is a parent of $g$, it holds $|\Orbit_{(h)}(g)| \in \Oo(\log |\HH_n^S|)$ and $|\Orbit_{(g)}(h)| \in \Oo(\log |\HH_n^S|)$.
	\end{enumerate}	
	Here, $\Orbit_{(h)}(g)$ denotes the orbit of the gate $g$ with respect to the subgroup of $\Aut(\H_n)$ that fixes the gate $h$ (and vice versa for $\Orbit_{(g)}(h)$).
\end{theorem}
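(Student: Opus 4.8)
The plan is to combine a symmetry analysis of the circuit — exploiting that its $\Aut(\H_n)$-orbit is trivial — with the known support-size lower bound for the CFI-query on high-treewidth base graphs, the point being that conditions~1 and~4 force $C_n$ to be ``low-support'' while sensitivity to $\Omega(\tw_n)$ edges forces a ``high-support'' gate. \emph{Step~1 (reduce to a canonical function and bound all stabilisers).} Since the $\Aut(\H_n)$-orbit of $C_n$ is a single point (condition~2), $\Aut(\H_n)$ acts on the gate-DAG of $C_n$ and fixes its output gate, so the Boolean function computed by $C_n$ is $\Aut(\H_n)$-invariant. As $\H_n$ is edge-transitive and an XOR-circuit computes an affine function of the edge variables, the only non-constant such function is the total parity $\bigoplus_{e} x_e$ (up to adding a constant), and condition~3 rules out the constant function. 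Moreover $|\HH_n^S| = 2^{\Theta(n)}$, so condition~1 makes $C_n$ have $2^{O(n)}$ gates, whence by condition~2 every gate $g$ has an $\Aut(\H_n)$-orbit of size $2^{O(n)}$; equivalently $\Stab(g)$ has index $2^{O(n)}$ in $\Aut(\H_n) = (\Z/2)^n \rtimes S_n$. Pushing this through the extension $1 \to (\Z/2)^n \to \Aut(\H_n) \to S_n \to 1$, the ``flip part'' $\Stab(g) \cap (\Z/2)^n$ has codimension $O(n)$ in $(\Z/2)^n$ and the image of $\Stab(g)$ in $S_n$ has index $2^{O(n)}$.

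\emph{Step~2 (a Support Theorem from condition~4).} Next I would prove that, although $C_n$ may be deep, it has bounded-support gates: every gate $g$ admits a support $\supp(g) \subseteq V(\H_n)$ with $|\supp(g)| = O(\operatorname{polylog}|\HH_n^S|)$, in particular $o(\tw_n)$, meaning the pointwise stabiliser of $\supp(g)$ fixes $g$. This is an analogue of the Support Theorem for symmetric circuits, and here condition~4 is what makes it go through: for a DAG-edge $g \to h$ (child $g$, parent $h$) it gives both $[\Stab(h):\Stab(h)\cap\Stab(g)] = O(n)$ \emph{and} $[\Stab(g):\Stab(h)\cap\Stab(g)] = O(n)$, so adjacent gates have ``mutually close'' stabilisers; it is the second (fan-out) inequality — which Theorem~\ref{thm:mainInformal} does not require — that lets one transport a support from the children of a gate to the gate itself without it growing along the $\Theta(n)$-length paths that the orbit bound of Step~1 forces. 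Concretely, one shows $\supp(g)$ is contained in a bounded-size ``envelope'' of $\bigcup_i \supp(h_i)$ over the children $h_i$ of $g$, and uses the orbit bound to prevent these envelopes from accumulating.

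\emph{Step~3 (contradiction with the treewidth).} Finally I would use a circuit-level counterpart of the support lower bound of \cite{dawar2008}: a symmetric XOR-circuit computing $\bigoplus_e x_e$ over $\H_n$ must contain a gate whose support is (essentially) a balanced vertex-separator of $\H_n$ — morally, the first gate at which the computation combines contributions from two linear-sized ``sides'' of a balanced cut, and the group-theoretic support of such a gate must pin that cut down. Since $\H_n$ has treewidth $\tw_n = \Theta(2^n/\sqrt n)$ and hence no balanced separator of size $o(\tw_n)$, this gate has support $\Omega(\tw_n)$, contradicting the $o(\tw_n)$ bound from Step~2; so no circuit family with properties~1--4 exists.

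\emph{Main obstacle.} The crux is Step~2. Polynomial size only bounds \emph{orbit} sizes, and large index-$2^{O(n)}$ subgroups of $(\Z/2)^n \rtimes S_n$ need not contain the pointwise stabiliser of any small set (e.g.\ $A_n \ltimes (\Z/2)^n$ has index $2$), so a genuine support bound cannot come from group theory alone: one must propagate it through a possibly-deep DAG using both directions of condition~4, and pin down exactly how the support of a gate is assembled from those of its children. A secondary difficulty is making Step~3 precise — establishing the circuit-level separator/support lower bound for $\bigoplus_e x_e$ — which is where edge-transitivity and the specific value $\tw_n = \Theta(2^n/\sqrt n)$ of the hypercubes come in.
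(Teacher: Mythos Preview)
Your approach diverges from the paper's and has a genuine gap at Step~2, which you yourself flag as the crux but do not resolve. The paper does \emph{not} prove a support theorem; it explicitly states in its conclusion that for polynomial-size circuits with hypercube symmetries ``we believe that a support theorem in that strong form does not hold.'' The obstacle is exactly the one you name: since $|\HH_n^S| = 2^{\Theta(n)}$, condition~1 only gives gate stabilisers of index $2^{O(n)}$ in $\Sym_n$, and such subgroups are far too large to be forced to contain the pointwise stabiliser of a small set. Your claim that the two-sided orbit bound of condition~4 lets one ``transport supports'' along DAG edges without accumulation is unsubstantiated: the inequalities $[\Stab(h):\Stab(h)\cap\Stab(g)] = O(n)$ and $[\Stab(g):\Stab(h)\cap\Stab(g)] = O(n)$ constrain how adjacent stabilisers overlap, but give no mechanism preventing the support from growing additively along a path of depth $\Theta(n)$. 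Step~3 is also not solid: computing $\bigoplus_e x_e$ requires no gate with large support (a single XOR gate does it), so a separator-style lower bound cannot come from the function alone but would have to be manufactured from symmetry plus fan-in, and you give no indication how.

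The paper's actual argument is a path-counting parity technique. For each gate it takes the coarsest \emph{alternating supporting partition} $\Sp_A(g)$ of $[n]$ (a partition, not a support set) for $\Stab_n(g) \leq \Sym_n$, and proves (Theorem~\ref{thm:G_alternatingPartitionWithLogManyParts}) that groups of index $\leq 2^{nk}$ have only $o(n)$ singleton parts in $\Sp_A$. Condition~4 is then used (Lemma~\ref{lem:XOR2_mainPartSizeLemma}) to show that between a gate and its parent the sizes of linear-size parts in $\Sp_A$ can change by at most one, and that certain transitions force $|\Orbit_{(g)}(h)|$ to be \emph{even}. The root has $\Sp_A = \{[n]\}$ while any balanced input gate has two linear-size parts, so along every root-to-leaf path the ``size profile'' must change; tracking this shows that for each orbit-profile the number of paths is a product containing an even factor (Lemmas~\ref{lem:XOR2_numPathsWithProfile} and~\ref{lem:XOR2_finalLemma}), whence the total path count is even and, by Lemma~\ref{lem:XOR2_evenPaths}, the input cancels. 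Your Step~1 observation that edge-transitivity forces $\Xx(r)\in\{\emptyset,E\}$ is correct and would streamline the final corollary, but it plays no role in the mechanism.
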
	

If the four circuit properties were the same as in Theorem \ref{thm:mainInformal}, then this would separate the class of super- and CFI-symmetric choiceless algorithms from $\ptime$. The difference between the two theorems is that here, the circuit has orbit size one, i.e.\ it is stabilised by the whole group $\Aut(\H_n)$, whereas in Theorem \ref{thm:mainInformal}, the orbit of the circuit is only required to be polynomial. Moreover, here, we have a logarithmic bound on the parents and children (per orbit) of every gate, whereas in Theorem \ref{thm:mainInformal}, the logarithmic bound is on the \emph{fan-in dimension} of the gates. We define this notion in Section \ref{sec:XOR_circuits}; we do not know if logarithmic fan-in dimension implies the orbit-wise logarithmic bound on the number of children (or vice versa), and probably, it does not imply the bound on the number of parents. So the ``gap'' between our two main results concerns how symmetric the circuits have to be and how restricted the connectivity between two consecutive circuit layers is. The proof of Theorem \ref{thm:mainLowerBound} involves group theoretic techniques based on those from \cite{dawarAnderson}.

\paragraph*{Related work}
The study of lower bounds for symmetric circuits has proven to be fruitful in many contexts: Anderson and Dawar established families of highly symmetric Boolean circuits with majority gates as a computation model equivalent to fixed-point logic with counting \cite{dawarAnderson}. A generalisation of these circuits also captures rank logic \cite{dawarGreg}. Our results regarding CPT are weaker than those in the sense that we do not give a circuit \emph{characterisation} of CPT but only a structural description of the relevant h.f.\ sets that CPT can use to decide the CFI-query. Lower bounds against our circuits seem to be generally harder to obtain than for the circuits from \cite{dawarAnderson} and \cite{dawarGreg} because our symmetry requirements are weaker (for interesting classes of base graphs, at least).\\
Besides these connections to logics from finite model theory, symmetric circuits are also interesting in the context of the VP vs VNP question. Dawar and Wilsenach have shown super-polynomial lower bounds on symmetric arithmetic circuits for the permanent \cite{dawar2020symmetric} and determinant \cite{symmetricCircuitsDeterminant}, for different symmetry groups. Lifting these lower bounds to less symmetric circuits for the permanent would be a step forward towards separating VP from VNP.\\
Other examples for symmetric circuit lower bounds concern $\AC^0$-circuits for the parity function \cite{rossman2019subspace}, and Boolean circuits for the multiplication of permutation matrices \cite{rossmanHe}. An interesting aspect about Rossman's lower bound for $\AC^0$-circuits computing parity is the symmetry group he considers: Contrary to the other mentioned results, the symmetry group is in this case not a large permutation group on the input variables but a Boolean vector space which acts on the set of input literals $\{X_1,\overline{X}_1,...,X_n,\overline{X}_n\}$ by swapping specified literals $X_i$ with their respective negations. This is reminiscent of the flips of CFI-gadgets that we encounter in the present article. However, a direct connection between Rossman's lower bound and ours does not seem to exist because his lower bound concerns the parity function, which can be easily expressed in our setting with a single XOR-gate.\\

Another research direction that is connected with this topic is on extensions of CPT. Lichter and Schweitzer have developed Choiceless Polynomial Time with \emph{witnessed symmetric choice} \cite{witnessedCPT}. This is a logic that allows to make arbitrary choices from definable orbits of the structure, as long as the automorphisms that witness a choice set to be an orbit are also definable. This logic captures polynomial time on all classes of structures where it can define the isomorphism problem, so the witnessed choice operator essentially reduces canonisation to isomorphism testing. A question is in how far our proposed lower bound approach via symmetric circuits also applies to CPT with witnessed choice. As shown in \cite{witnessedCPT}, CPT with witnessed choices has no difficulties to define the CFI-query on structures with a single orbit, i.e.\ unordered CFI-graphs. These are, however, precisely the example for which we have the circuit lower bound, so it seems like the circuit approach exploits a weakness of CPT that does not exist in the witnessed choice extension.

\section{Preliminaries}

\paragraph*{Bounded variable counting logic}

 For $k \in \bbN$, $\Cc^k$ denotes the $k$-variable fragment of first-order logic with \emph{counting quantifiers}. The counting quantifiers in this logic are of the form $\exists^i x \phi(x)$, for every $i \in \bbN$, expressing that at least $i$ elements of the structure satisfy $\phi(x)$. Note that such counting quantifiers can be simulated in ordinary $\FO$ but this requires more than one variable. 
Two structures $\AA$ and $\BB$ are called \emph{$\Cc^k$-equivalent}, denoted $\AA \equiv_{\Cc^k} \BB$, if they satisfy exactly the same $\Cc^k$-sentences.\\

The standard tool to prove $\Cc^k$-equivalence of two given structures is the \emph{bijective $k$-pebble game}. It is played on a pair of structures $(\AA,\BB)$
by two players, Spoiler and Duplicator. Duplicator has a winning strategy if and only if $\AA \equiv_{\Cc^k} \BB$ \cite{hella1996logical}. The game proceeds as follows: A position in the game is a set of pebble-pairs $\pi \subseteq A \times B$ of size at most $k$. In each round, Spoiler may pick up any number of pebble-pairs and remove them from the board such that in the resulting position $\pi'$, less than $k$ pebble-pairs remain. Then Duplicator specifies a bijection $f: A \lra B$ such that for every $(a,b) \in \pi'$, $f(a) = b$. Spoiler now puts down a new pebble on some element $a \in A$ of his choice, and the corresponding pebble in $B$ is placed on $f(a)$. If the resulting set of pebble-pairs does not induce a local isomorphism, then Spoiler wins. Duplicator has a winning strategy if she can enforce to play forever without losing. A position $\pi = \{ (a_1,b_1),...,(a_k,b_k)  \}$ is said to induce a local isomorphism if the mapping $g$ that maps each $a_i$ to $b_i$, for $i \in [k]$, is an isomorphism from the induced substructure of $\AA$ with universe $\{a_1,...,a_k\}$ into the induced substructure of $\BB$ with universe $\{b_1,...,b_k\}$.\\

The positions $\pi$ from which Duplicator has a winning strategy are given by those tuples that have the same \emph{$\Cc^k$-type} in $\AA$ and $\BB$. The \emph{$\Cc^k$-type} of a tuple $\bar{a} \in A^{\leq k}$ is the collection of all $\Cc^k$-formulas that are satisfied by $\bar{a}$ in $\AA$. It is known that in each finite structure, every $\Cc^k$-type is definable by a single $\Cc^k$-formula, so even though a type is an infinite collection of formulas, it is semantically equivalent to one finite $\Cc^k$-formula, if the structure is fixed \cite{gradel1993inductive}.

\paragraph*{Hereditarily finite sets}
Let $A$ be a finite set of atoms. Usually, the atoms will be the universe of a structure $\AA$ (and by convention, whenever a structure is called $\AA$, then $A$ denotes its universe).\\
The set of hereditarily finite objects over $A$, $\HF(A)$, is defined as $\bigcup_{i \in \bbN} \HF_i(A)$, where $\HF_0(A) := A \cup \{\emptyset\}, \HF_{i+1}(A) := \HF_{i}(A) \cup 2^{\HF_{i}(A)}$. The size of a h.f.\ set $x \in \HF(a)$ is measured in terms of its \emph{transitive closure} $\tc(x)$: The set $\tc(x)$ is the least transitive set such that $x \in \tc(x)$. Transitivity means that for every $a \in \tc(x)$, $a \subseteq \tc(x)$. Intuitively, one can view $\tc(x)$ as the set of all sets that appear as elements at some nesting depth within $x$. 

\paragraph*{Choiceless Polynomial Time}
By CPT we always mean Choiceless Polynomial Time \emph{with counting}. For details and various ways to define CPT formally, we refer to the literature: A concise survey can be found in \cite{gradel2015polynomial}. The work in which Blass, Gurevich and Shelah originally introduced CPT as an abstract state machine model is \cite{blass1999} from 1999; later, more ``logic-like'' presentations of CPT were invented, such as Polynomial Interpretation Logic \cite{grapakschalkai15, svenja} and BGS-logic \cite{rossman2010choiceless, gradel2015polynomial}. In short, CPT is like the better-studied \emph{fixed-point logic with counting} \cite{dawar2015nature} plus a mechanism to construct isomorphism-invariant \emph{hereditarily finite sets} of polynomial size. When a CPT-sentence $\Pi$ is evaluated in a finite structure $A$, then $\Pi$ may augment $A$ with hereditarily finite sets over its universe. The total number of distinct sets appearing in them (i.e.\ the sum over the sizes of the transitive closures of the h.f.\ sets) and the number of computation steps is bounded by $p(|A|)$, where $p(n)$ is a polynomial that is explicitly part of the sentence $\Pi$. For the sake of illustration, we sketch the definition of BGS-logic:\\

The sentences of BGS-logic are called \emph{programs}. A program is a tuple $\Pi = (\Pi_{\text{step}}(x),\\ \Pi_{\text{halt}}(x),\Pi_{\text{out}}(x),p(n))$. Here, $\Pi_{\text{step}}(x)$ is a BGS-term, $\Pi_{\text{halt}}$ and $\Pi_{\text{out}}$ are BGS-formulas, and $p(n)$ is a polynomial that bounds the time and space used by the program. BGS-terms take as input hereditarily finite sets and output a hereditarily finite set. Examples of such terms are $\Pair(x,y)$, which evaluates to $\{x,y\}$, or $\Union(x) = \bigcup_{y \in x} y$. Furthermore, if $s$ and $t$ are terms, $x$ is a variable, and $\phi$ a formula, then $\{ s(x) \ : \ x \in t \ : \ \phi(t) \}$ is a comprehension term. It applies the term $s$ to all elements of the set defined by $t$ that satisfy $\phi$, and outputs the set of the resulting objects $s(x)$. When a program is evaluated in a given finite structure $A$, then the term $\Pi_{\text{step}}(x)$ is iteratively applied to its own output, starting with $x_0 = \emptyset$. The iteration stops in step $i$ if the computed set $x_i = (\Pi_{\text{step}})^i(\emptyset)$ satisfies $A \models \Pi_{\text{halt}}(x_i)$. The formula $\Pi_{\text{out}}$ defines, in dependence of $x_i$, whether the run is accepting or rejecting, that is, whether $A \models \Pi$ or not. If the length of the run or the size of the transitive closure of $x_i$ exceeds $p(|A|)$ at some point, then the computation is aborted, and $A \not\models \Pi$.\\

The h.f.\ sets that appear in the run of a program $\Pi$ on a structure $\AA$ are called the sets that are \emph{activated} by $\Pi$ on input $\AA$. Formally, the set of active objects is the union over the transitive closures of all the iteration stages $x_i$. The precise definition is not important for the purposes of this article and there exist multiple slightly varying definitions in the literature \cite{dawar2008, rossman2010choiceless, svenja} which all essentially describe the same concept.\\

\paragraph*{Symmetry groups}
The two key properties of CPT that we exploit for lower bounds are its polynomial boundedness and symmetry-invariance.
For a structure $\AA$ with universe $A$, we denote by $\Aut(\AA) \leq \Sym(A)$ its automorphism group. Any $\pi \in \Sym(A)$ also acts naturally on $\HF(A)$ by renaming the atoms of the h.f.\ set according to $\pi$. With this, the symmetry-invariance of CPT can be summarised as follows: 
\begin{proposition}
	\label{prop:symmetryInvarianceCPT}
	Let $\AA$ be a structure, $x \in \HF(A)$ a hereditarily finite set over $\AA$, $\pi \in \Aut(\AA)$ an automorphism. Any CPT-program that activates $x$ in its run on $\AA$ also activates $\pi(x)$.
\end{proposition}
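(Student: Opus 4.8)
The plan is to prove this by induction over the run of the program, exploiting that every operation of BGS-logic commutes with atom-renamings as long as these renamings preserve the relations of $\AA$. Concretely, I would first establish an \emph{equivariance lemma}: for every $\pi \in \Aut(\AA)$, the interpretation $t^{\AA}$ of every BGS-term $t(\bar x)$ satisfies $\pi(t^{\AA}(\bar a)) = t^{\AA}(\pi(\bar a))$ for all argument tuples $\bar a \in \HF(A)^{|\bar x|}$, and every BGS-formula $\phi(\bar x)$ satisfies $\AA \models \phi(\bar a)$ iff $\AA \models \phi(\pi(\bar a))$. This is proved by simultaneous induction on the syntactic structure of terms and formulas. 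The base cases (atoms, $\emptyset$ and variables for terms; equality, membership and the relation symbols of $\AA$ for atomic formulas) hold because $\pi$ acts as a bijection on $A$ and is by hypothesis an automorphism of $\AA$. For the inductive steps, the purely set-theoretic constructors such as $\Pair$ and $\Union$ obviously commute with the renaming $\pi$, and the Boolean connectives and the (counting) quantifiers over definable sets are handled using the induction hypothesis together with the fact that $\pi$ restricts to a bijection between the relevant definable sets. The one case that genuinely uses that $\pi$ is an automorphism (and not merely some permutation of $A$) is the comprehension term $\{\, s(x) : x \in t : \phi(t)\,\}$: here one writes out the semantics $\{\, s^{\AA}(a) : a \in t^{\AA}(\bar b),\ \AA \models \phi(a,\bar b)\,\}$, applies $\pi$, and uses the induction hypotheses for $s$, $t$ and $\phi$ to rewrite it as $\{\, s^{\AA}(\pi(a)) : \pi(a) \in t^{\AA}(\pi(\bar b)),\ \AA \models \phi(\pi(a),\pi(\bar b))\,\}$, which is exactly the value of the comprehension on the shifted arguments since $a \mapsto \pi(a)$ is a bijection of $t^{\AA}(\bar b)$ onto $t^{\AA}(\pi(\bar b))$.

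With the equivariance lemma in hand, consider the iteration stages $x_0 = \emptyset$, $x_{i+1} = \Pi_{\text{step}}^{\AA}(x_i)$ of the run of $\Pi$ on $\AA$. Since $\pi(\emptyset) = \emptyset$, an induction on $i$ using the equivariance of $\Pi_{\text{step}}$ gives $\pi(x_i) = x_i$ for all $i$: indeed $\pi(x_{i+1}) = \pi(\Pi_{\text{step}}^{\AA}(x_i)) = \Pi_{\text{step}}^{\AA}(\pi(x_i)) = \Pi_{\text{step}}^{\AA}(x_i) = x_{i+1}$. Moreover, by the invariance half of the lemma the halting predicate $\Pi_{\text{halt}}$ holds of $x_i$ iff it holds of $\pi(x_i) = x_i$, and $|\tc(\pi(x_i))| = |\tc(x_i)|$, so the polynomial time and size bounds are triggered at exactly the same step. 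In other words, the run of $\Pi$ on $\AA$ is literally unchanged by $\pi$, and in particular every stage $x_i$ is a fixed point of $\pi$.

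Finally, I would unfold the definition of the activated objects: they form the set $\bigcup_i \tc(x_i)$, the union over all iteration stages $x_i$ of the run. A routine induction on the rank of a h.f.\ object shows $\pi(\tc(y)) = \tc(\pi(y))$ for every $y \in \HF(A)$, using that the $\pi$-image of a transitive set is transitive. Hence, if $x$ is activated, i.e.\ $x \in \tc(x_i)$ for some $i$, then $\pi(x) \in \pi(\tc(x_i)) = \tc(\pi(x_i)) = \tc(x_i)$, so $\pi(x)$ is activated as well.

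I do not expect a genuine obstacle here; the statement is essentially folklore, and the argument is entirely routine. The only points that require care are (i) fixing one concrete presentation of BGS-logic and of ``activation'' — the excerpt itself notes that several mildly different definitions circulate — so that the syntactic induction is exhaustive, and (ii) the comprehension case of the equivariance lemma, which is the single place where the automorphism property of $\pi$ (rather than mere bijectivity on $A$) is actually invoked.
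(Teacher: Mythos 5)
The paper actually \emph{omits} a proof of this proposition, stating only that it is ``well-known and follows simply from the fact that the construction steps of the h.f.\ sets are logically definable.'' Your proposal fills in exactly the standard argument behind that sentence: an equivariance lemma for BGS-terms and formulas proved by simultaneous structural induction, then an induction over the run showing $\pi(x_i) = x_i$ (using $\pi(\emptyset) = \emptyset$), and finally the observation that transitive closure commutes with $\pi$. This is correct and is what the paper is implicitly appealing to, so there is no substantive divergence to report.

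One small internal inconsistency in your write-up: you first (correctly) say that the atomic-formula base case involving the relation symbols of $\AA$ is where the automorphism property of $\pi$ (rather than mere bijectivity on $A$) is used, but a few lines later you describe the comprehension term as ``the one case that genuinely uses that $\pi$ is an automorphism.'' In fact the comprehension step only propagates the property via the induction hypotheses for $s$, $t$, $\phi$; the place the automorphism property actually enters is the atomic base case. This does not affect the correctness of the argument, but the meta-remark should be adjusted.
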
	
We omit the proof because this fact is well-known and follows simply from the fact that the construction steps of the h.f.\ sets are logically definable. As a consequence of this proposition, CPT-definable objects are closed under their $\Aut(\AA)$-\emph{orbits}. The orbit of a set $x \in \HF(A)$ is the set $\{\pi(x) \mid \pi \in \Aut(\AA)\}$ of all its images under $\Aut(\AA)$. Since CPT-definable objects must also obey a polynomial size bound, any object whose orbit size is super-polynomial in $|A|$ cannot be activated by any CPT-program on input $\AA$. This consideration is also reflected in the circuit properties from Theorem \ref{thm:mainInformal}.\\

An important fact from group theory that we sometimes need in this context is the Orbit-Stabiliser Theorem. Applied to h.f.\ sets over structures, it reads as follows.
\begin{proposition}
	\label{prop:orbitStabiliser}
	Let $\AA$ be a structure, $x \in \HF(A)$ a hereditarily finite set over $\AA$.
	Then $|\Orbit(x)| = \frac{|\Aut(\AA)|}{|\Stab(x)|}$.
\end{proposition}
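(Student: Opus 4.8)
The plan is to prove this simply as an instance of the classical Orbit--Stabiliser Theorem, the only point requiring a moment's care being that the atom-renaming action of $\Sym(A)$ on $\HF(A)$ really is a group action. First I would fix $x \in \HF(A)$ and observe that $\Stab(x) = \{\pi \in \Aut(\AA) \mid \pi(x) = x\}$ is a subgroup of $\Aut(\AA)$: it contains the identity, and it is closed under composition and inverses because for all $\pi,\sigma \in \Sym(A)$ and all $y \in \HF(A)$ one has $(\pi \circ \sigma)(y) = \pi(\sigma(y))$ and $\id(y) = y$. These identities are themselves established by a routine induction on the least $i$ with $y \in \HF_i(A)$: they are immediate for atoms and for $\emptyset$, and for $y \subseteq \HF_i(A)$ they follow by applying the induction hypothesis elementwise and using that a permutation acts on a set by acting on its elements. (This also justifies that $\Orbit$ and $\Stab$ are well-behaved notions in the first place.)

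Next I would define the map $\Phi$ from the set of left cosets $\Aut(\AA)/\Stab(x)$ to $\Orbit(x)$ by $\Phi(\pi\,\Stab(x)) := \pi(x)$. To see that $\Phi$ is well defined and injective, note that for $\pi,\sigma \in \Aut(\AA)$ we have $\pi(x) = \sigma(x)$ if and only if $\sigma^{-1}\pi(x) = x$, i.e.\ $\sigma^{-1}\pi \in \Stab(x)$, which is exactly the condition $\pi\,\Stab(x) = \sigma\,\Stab(x)$. Surjectivity of $\Phi$ is immediate from the definition $\Orbit(x) = \{\pi(x) \mid \pi \in \Aut(\AA)\}$. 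Hence $\Phi$ is a bijection, so $|\Orbit(x)|$ equals the index $[\Aut(\AA) : \Stab(x)]$, and by Lagrange's Theorem — applicable since $A$ is finite, so $\Aut(\AA)$ is a finite group — this index equals $|\Aut(\AA)|/|\Stab(x)|$, which is the claimed identity.

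There is no genuine obstacle here; the statement is a textbook fact, and the only ingredient that is not purely group-theoretic — that the renaming operation on $\HF(A)$ satisfies the axioms of a left group action — is the short structural induction sketched above. If one prefers to avoid cosets entirely, an equivalent argument is to observe directly that the fibres of the surjection $\Aut(\AA) \to \Orbit(x),\ \pi \mapsto \pi(x)$ are precisely the left cosets of $\Stab(x)$, hence all of size $|\Stab(x)|$; this partitions $\Aut(\AA)$ into $|\Orbit(x)|$ blocks of equal cardinality and yields the displayed equation at once. Since Proposition~\ref{prop:symmetryInvarianceCPT} already guarantees that CPT-activated sets are closed under orbits, the role of this proposition in the paper is only to convert an orbit-size bound into a stabiliser-index bound, so any of these standard proofs suffices.
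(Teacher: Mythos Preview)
Your proof is correct and is the standard Orbit--Stabiliser argument. The paper does not actually prove this proposition at all: it is stated as ``an important fact from group theory'' and invoked as the Orbit--Stabiliser Theorem throughout, so your write-up supplies strictly more detail than the paper itself.
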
	
 Here, $\Stab(x) = \{ \pi \in \Aut(\AA) \mid \pi(x) = x \}$ denotes the subgroup of the automorphism group that fixes $x$. More generally, if $H$ is a subgroup of $G$ (denoted $H \leq G$), then the \emph{index} of $H$ in $G$ is $[G: H] = |G|/|H|$. This is equal to the number of cosets of $H$ in $G$, and if $H$ is the stabiliser of some object $x$, then $[G: H]$ is exactly the orbit size of $x$. 
 
 \paragraph*{Linear algebra}
 We denote by $\bbF_2$ the finite field with two elements $\{0,1\}$, and by $\bbF_2^n$ the set of $n$-tuples over $\bbF_2$, viewed as an $n$-dimensional vector space. We will also be dealing with spaces indexed by some finite (unordered) set $J$, in which case we write $\bbF_2^J$ for the $|J|$-dimensional vector space whose coordinates are the elements of $J$. Similarly, the rows and columns of a matrix can be indexed with such finite unordered sets, so a matrix $M \in \bbF_2^{I \times J}$ describes a linear transformation $M: \bbF_2^J \lra \bbF_2^I$. The \emph{image} of $M$ is $\Im(M) = \{ M \cdot \mathbf{v} \mid \mathbf{v} \in \bbF_2^J \}$, and the \emph{kernel} is $\Ker(M) = \{\mathbf{v} \in \bbF_2^J \mid M \cdot \mathbf{v} = \mathbf{0}\}$. The \emph{Rank Theorem} states that $|J| = \rk(M) + \dim(\Ker(M))$. The \emph{rank} $\rk(M)$ denotes the dimension of $\Im(M)$. It is equal both to the dimension of the space spanned by the column vectors of $M$ and the dimension of the space spanned by the row vectors.\\
 %The notion of a \emph{kernel} also appears in connection with \emph{group homomorphisms} $h: G \lra H$. The kernel of $h$ is the subgroup of $G$ that is mapped to the neutral element of $H$. 

\section{Unordered Cai-Fürer-Immerman graphs}
\label{sec:CFI}
Fix an undirected (and unordered) connected graph $G = (V,E)$ as the \emph{base graph} for the CFI-construction. We turn $G$ into a CFI-graph by replacing the edges with certain edge-gadgets and the vertices with vertex-gadgets. There are two types of vertex-gadgets, called odd and even. To construct a concrete CFI-graph over $G$, we have to fix a set $S \subseteq V$ of vertices which are replaced by the \emph{odd} gadget. The vertices in $V \setminus S$ will be turned into the \emph{even} gadget. Following the notation in \cite{dawar2008}, we denote the resulting CFI-graph by $\GG^S$. The precise definition is as follows:
Let $\widehat{E} := \{ e_0, e_1 \mid e \in E \}$.
These are the vertices that will form the edge-gadgets of $\GG^S$, so there are two vertices per edge-gadget. To define the vertices in vertex-gadgets, we let, for each $v \in V$, 
\begin{align*}
	v^*_S := \begin{cases} 
		\{  v^X \mid X \subseteq E(v), |X| \text{ even }   \} & \text{ if } v \notin S\\
		\{  v^X \mid  X \subseteq E(v), |X| \text{ odd }  \} & \text{ if } v \in S
	\end{cases}
\end{align*}
Here, $E(v) \subseteq E$ are the edges incident to $v$ in $G$. The vertices in $v^*_S$ form the vertex-gadget of $v$. In total, we let
\[
\widehat{V}_S := \bigcup_{v \in V} v^*_S.
\]
Then the vertex-set of $\GG^S$ is $V(\GG^S) := \widehat{V}_S \cup \widehat{E}$. The edges of the CFI-graph are given by
\[
E(\GG^S) := \{ \{ v^X, e_i  \} \mid v^X \in \widehat{V}_S, e_i \in \widehat{E}, |X \cap \{e\}| = i \} \cup \{ \{e_0,e_1\} \mid e \in E  \}.
\]
In other words, for every $v \in V$, we connect each $v^X \in v^*_S$ with the edge-gadgets of all edges $e \in E(v)$ in such a way that $v^X$ is connected with $e_0$ if $e \notin X$, and otherwise with $e_1$. Also, we connect $e_0$ and $e_1$ to ensure that no automorphism of $\GG^S$ can tear apart the edge-gadgets. Our CFI-graphs are unordered, so the only relation of the structure $\GG^S$ is the edge relation $E$.\\

Below are the gadgets $v^*_S, w^*_S$ for two vertices $v,w \in V$, and the gadget for the edge $e \in E$ connecting them. In this example, we have $v \notin S, w \in S$, and $E(v) = \{e,f,g\}, E(w) = \{e,h,i\}$. Only the edge $e$ is drawn. Notice that $v^*_S$ and $w^*_S$ look the same when we only consider their connections to the $e$-gadget, even though one gadget is even and the other is odd.
	\begin{figure}[H]
		\centering
		\begin{tikzpicture}[dot/.style={draw,circle,minimum size=1.5mm,inner sep=0pt,outer sep=0pt,fill=blue},circ/.style={draw,circle,minimum size=2.5mm,inner sep=0pt, fill=red},
			circY/.style={draw,circle,minimum size=2.5mm,inner sep=0pt, fill=yellow}]
			\node[dot,label=below:$e_0$] (e0) at (0,0) {};
			\node[dot,label=$e_1$] (e1) at (0,1) {};
			
			\draw[radius=1,fill=red, opacity=0.2] (-3.75,0.5) circle;
			\draw[radius=1,fill=yellow, opacity=0.2] (3.75,0.5) circle;

			\node[circ,label=$v^\emptyset$] (v0) at (-3.75,-0.2) {};
			\node[circ,label=$v^{\{e,f\}}$] (vef) at (-3.75,1.2) {};
			\node[circ,label=$v^{\{e,g\}}$] (veg) at (-4.5,0.5) {};
			\node[circ,label=$v^{\{f,g\}}$] (vfg) at (-3,0.5) {};
			
			\node[circY,label=$w^{\{e\}}$] (we) at (3.75,1.2) {};
			\node[circY,label=$w^{\{h\}}$] (wh) at (3.75,-0.2) {};
			\node[circY,label=$w^{\{i\}}$] (wi) at (4.5,0.5) {};
			\node[circY,label=$w^{\{e,h,i\}}$] (wehi) at (3,0.5) {};
			
			\draw (v0) -- (e0);
			\draw (vef) -- (e1);
			\draw (veg) -- (e1);
			\draw (vfg) -- (e0);
			
			\draw (we) -- (e1);
			\draw (wh) -- (e0);
			\draw (wi) -- (e0);
			\draw (wehi) -- (e1);
			
			\draw (e0) -- (e1);
			
		\end{tikzpicture}
		\caption{Gadgets $v^*_S, w^*_S$, connected by the gadget for the edge $e$.}
	\end{figure}
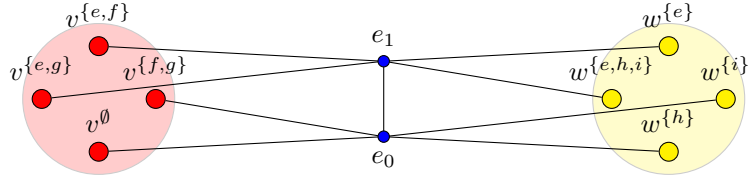
The \emph{CFI-query} asks for the parity of $|S|$, given a CFI-graph $\GG^S$. This is essentially the same question as the graph isomorphism problem for CFI-graphs: 
\begin{theorem}[\cite{caifurimm92} \cite{dawar2008}]
	For two given CFI-graphs over the same base graph, it holds
	\[
	\GG^S \cong \GG^R \text{ if and only if } |S| \equiv |R| \mod 2.
	\]
\end{theorem}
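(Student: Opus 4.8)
The plan is to prove both directions of the equivalence by an explicit analysis of the automorphisms of CFI-graphs, since the ``if'' direction amounts to exhibiting an isomorphism and the ``only if'' direction amounts to showing that no isomorphism can exist when the parities differ. The central object in both directions is the notion of an \emph{edge flip}: for an edge $e \in E$, let $\rho_e$ be the permutation of $V(\GG^S)$ that swaps $e_0$ with $e_1$, and simultaneously, for every $v$ incident to $e$, sends each $v^X$ to $v^{X \triangle \{e\}}$, fixing everything else. One checks directly from the definition of $E(\GG^S)$ that $\rho_e$ is an automorphism of $\GG^S$ (the symmetric difference with $\{e\}$ exactly compensates the swap of $e_0$ and $e_1$, and the parity condition defining $v^*_S$ is preserved because $|X \triangle \{e\}|$ and $|X|$ have different parities but $\triangle\{e\}$ is an involution on the whole gadget, so the gadget maps to itself). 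More generally, for a set $F \subseteq E$ of edges, the composition $\rho_F := \prod_{e \in F} \rho_e$ is an automorphism of $\GG^S$.

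For the ``if'' direction, suppose $|S| \equiv |R| \bmod 2$. I would first establish the reduction to the case where $S$ and $R$ differ in exactly two vertices: since $S \triangle R$ has even size, one can pass from $\GG^S$ to $\GG^R$ by a sequence of steps each changing two vertices of the ``odd set''. So it suffices to build an isomorphism $\GG^S \to \GG^R$ when $R = S \triangle \{u,w\}$ for two vertices $u,w \in V$. Because $G$ is connected, fix a path $u = v_0, v_1, \dots, v_k = w$ in $G$, and let $F = \{\{v_0,v_1\}, \{v_1,v_2\}, \dots, \{v_{k-1},v_k\}\}$ be its edge set. The key computation is that the ``flip along $F$'' map $\rho_F$ — defined exactly as above, swapping $e_0 \leftrightarrow e_1$ for $e \in F$ and toggling membership of these edges in the exponent sets — is an isomorphism from $\GG^S$ to $\GG^R$: at every internal vertex $v_i$ of the path, exactly two incident edges lie in $F$, so the parity of the exponent set $X \cap E(v_i)$ is unchanged, hence the even/odd type of that gadget is preserved; but at the two endpoints $u$ and $w$, exactly one incident edge lies in $F$, so the parity flips, which is precisely what is needed to turn the $u$-gadget and $w$-gadget from their $\GG^S$-type to their $\GG^R$-type. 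One then checks that $\rho_F$ respects the edge relation, which is the same local verification as for a single edge flip.

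For the ``only if'' direction, suppose $|S| \not\equiv |R| \bmod 2$; I must show $\GG^S \not\cong \GG^R$. The standard argument is to attach a $\bbZ_2$-valued invariant to every CFI-graph over $G$ and show (a) it equals $|S| \bmod 2$ and (b) it is isomorphism-invariant. Concretely, I would argue that any isomorphism $\phi: \GG^S \to \GG^R$ must preserve the gadget structure — each edge-gadget $\{e_0,e_1\}$ is recognisable (e.g.\ as the unique pairs of adjacent vertices of the appropriate degree profile, or more robustly by a local argument), and each vertex-gadget is the set of common neighbours pattern of a cluster of edge-gadgets — so $\phi$ induces a permutation $\sigma \in \Sym(V)$ of the base graph's vertices together with, for each base edge $e$, a choice of whether $\phi$ swaps $e_0$ and $e_1$. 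When $\sigma$ is an automorphism of the base graph (in particular when $\GG^S, \GG^R$ are over the same $G$, $\sigma = \id$ is the relevant case to rule out by composing with a base-graph automorphism), $\phi$ differs from an ``edge-flip'' automorphism only by which edges get swapped, i.e.\ $\phi$ corresponds to a flip $\rho_F$ for some $F \subseteq E$; and as the computation above shows, $\rho_F$ changes the odd-set $S$ to $S \triangle \partial F$ where $\partial F$ is the set of vertices incident to an odd number of edges of $F$ — a set of even size. Hence $|S| \equiv |R| \bmod 2$, contradiction.

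The main obstacle I expect is the bookkeeping in the ``only if'' direction: one has to argue carefully that an abstract isomorphism of the CFI-graphs really does decompose into a base-graph permutation plus a family of edge-gadget swaps, and that no ``exotic'' isomorphism mixing up gadgets in an unforeseen way exists. This is where the precise combinatorics of the gadgets (degrees, the $e_0$–$e_1$ edge forcing the gadgets to stay together, the fact that every vertex in $v^*_S$ has the same neighbourhood pattern up to the flips) must be used. Since the statement is cited to \cite{caifurimm92} and \cite{dawar2008}, I would in practice cite the structural rigidity of CFI-gadgets from there and focus the written proof on the edge-flip computation, which is the conceptually essential and self-contained part. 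Everything else — the path argument, the parity of $\partial F$, the preservation of gadget types — reduces to the single observation that symmetric difference with an edge toggles the parity of exactly the two incident vertices, which is the combinatorial heart of the whole CFI construction.
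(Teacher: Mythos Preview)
The paper does not give its own proof of this theorem; it is simply cited from \cite{caifurimm92} and \cite{dawar2008}. So there is nothing to compare your approach against in the paper itself. Your overall strategy (edge-flip isomorphisms along paths for the ``if'' direction, rigidity of the gadget structure plus the parity of $\partial F$ for the ``only if'' direction) is the standard one and is essentially sound.

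There is, however, a genuine error in your first paragraph. You assert that a single edge flip $\rho_e$ is an \emph{automorphism of $\GG^S$}, and justify it by saying ``$\triangle\{e\}$ is an involution on the whole gadget, so the gadget maps to itself''. This is false for $\GG^S$: the gadget $v^*_S$ contains only those $v^X$ with $|X|$ of one fixed parity, and $\rho_e$ sends $v^X$ to $v^{X\triangle\{e\}}$, which has the \emph{opposite} parity. Hence $\rho_e$ does not even map $V(\GG^S)$ to itself; it is an isomorphism $\GG^S \to \GG^{S\triangle\{v,w\}}$ (equivalently, an automorphism of the full graph $\GG$ defined in Section~3.1 of the paper, where each vertex gadget contains both parities). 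The same remark applies to your claim that $\rho_F$ is an automorphism of $\GG^S$ for arbitrary $F$: this holds only when every vertex has even $F$-degree.

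Fortunately this confusion is confined to the setup paragraph. In your second paragraph you use $\rho_F$ correctly as an isomorphism $\GG^S \to \GG^R$ when $F$ is a path, and your parity bookkeeping there is right. Just rewrite the opening so that $\rho_e$ is introduced as an automorphism of $\GG$ (or as an isomorphism between two CFI-instances), not of $\GG^S$; then everything downstream goes through.
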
	
Alternatively, deciding the parity of $|S|$ can be phrased as a linear equation system over $\bbF_2$ in the variables $\widehat{E}$ (see \cite{atseriasBulatovDawar}). Since the reduction to a linear equation system is easily computable from the given CFI-graph $\GG^S$, and linear equation systems can be efficiently solved using, for example, Gaussian elimination, the CFI-query is decidable in polynomial time.\\

For logics that lack the ability to create higher-order objects, such as bounded-variable counting logic and hence FPC, it is provably impossible to distinguish non-isomorphic CFI-graphs, provided that the treewidth of the base graphs is super-constant:
\begin{theorem}[\cite{caifurimm92} \cite{atseriasBulatovDawar}]
	\label{thm:CFI_cfiTheorem}
	Let $G = (V,E)$ be an undirected connected graph with treewidth $t$. Then for any two sets $S, S' \subseteq V$, it holds 
	$
	\GG^S \equiv_{\Cc^t} \GG^{S'}, 
	$
	even if $\GG^S \not\cong \GG^{S'}$.
\end{theorem}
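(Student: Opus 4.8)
The final statement to prove is Theorem~\ref{thm:CFI_cfiTheorem}, which asserts that over a base graph $G$ of treewidth $t$, any two CFI-graphs $\GG^S$ and $\GG^{S'}$ are $\Cc^t$-equivalent. This is a classical result, so my plan is to reconstruct the standard pebble-game argument adapted to unordered CFI-graphs. The overall strategy is to exhibit a winning strategy for Duplicator in the bijective $t$-pebble game on the pair $(\GG^S, \GG^{S'})$, using a tree decomposition of $G$ of width $< t$ (more precisely, of width $t-1$, so that every bag has at most $t$ vertices) as the combinatorial backbone that tells Duplicator how to maintain consistency.

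First I would set up the group-theoretic description of the relationship between $\GG^S$ and $\GG^{S'}$: for any base graph $G$ and any $S,S' \subseteq V$ with $|S| \equiv |S'| \bmod 2$ the two structures are actually isomorphic, and more generally, ``flipping'' the edge gadget of an edge $e$ (i.e.\ swapping $e_0 \leftrightarrow e_1$ and correspondingly re-routing the vertex-gadget connections) changes $S$ by the two endpoints of $e$. So the whole family $\{\GG^S : S \subseteq V\}$, up to isomorphism, depends only on the parity of $|S|$, and any two $\GG^S, \GG^{S'}$ differ by flipping the edges along some set of paths. The key quantitative point is that a single edge-flip is a ``local'' modification, and if Spoiler's $< t$ pebbles live inside a region of $G$ that is separated from a given edge $e$ by a small separator already covered by the tree decomposition, Duplicator can push the discrepancy across $e$ without the pebbled part noticing. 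Next I would formalize Duplicator's invariant: at any position with pebble pairs $\{(a_i,b_i)\}$, letting $W \subseteq V$ be the set of base-graph vertices ``touched'' by the pebbled gadget-elements, Duplicator maintains that there is an isomorphism $\GG^S \to \GG^{S''}$ (for some $S''$ with $|S''|\equiv|S'|$) that is the identity on the gadgets of $W$ and sends $a_i \mapsto b_i$; this is possible exactly because $W$, having size $< t$, is contained in a union of bags in a way that lets us route all the edge-flips needed to reconcile the two sides through the part of $G$ outside the current ``window''.

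Then I would describe the round-by-round argument. When Spoiler picks up pebbles and announces he will place $< t$ pebbles, Duplicator looks at the bag(s) of the tree decomposition containing the (at most $t-1$ many) currently-relevant base vertices together with the base vertex Spoiler is about to touch; since the decomposition has width $t-1$ she can always find a bag, or move along the decomposition tree, so that the new touched set still has size $< t$ and the separator property of tree decompositions isolates the ``future'' from the ``past''. She then defines her bijection $f : V(\GG^S) \to V(\GG^{S'})$ by composing the current local isomorphism with an appropriate edge-flip automorphism that relocates the parity discrepancy to the far side of the separator; on the pebbled elements this agrees with the required values, and because edge-flips are automorphisms of the CFI-graph (as undirected graphs, since we also added the $\{e_0,e_1\}$ edges), $f$ preserves the edge relation on any tuple that is confined to one side of the separator — in particular on the actual pebbled tuple. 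This establishes that the new position again induces a local isomorphism and satisfies the invariant, so Duplicator never loses.

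The main obstacle, and the step I would spend the most care on, is making precise the bookkeeping that links the tree decomposition to the pebble positions: namely, showing that with only $t-1$ movable pebbles Duplicator can always keep the ``touched'' base-vertex set inside a separator of the decomposition and still have room to absorb the parity defect on the untouched side — this is where the treewidth bound $t$ is used in an essential way, and it requires the standard but slightly delicate observation that in a tree decomposition of width $t-1$, for any set of at most $t-1$ vertices and any additional target vertex there is a bag (after possibly sliding along the tree) that ``dominates'' the relevant interaction, so that flipping a path of edges from the target region out to a leaf of the decomposition does not disturb the pebbled gadgets. Once that combinatorial lemma is in hand, verifying that edge-flips are indeed graph automorphisms fixing the untouched gadgets, and that they change $S$ only by boundary vertices, is routine. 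I would also remark that the case $|S| \equiv |S'|$ is trivial (the structures are isomorphic), so the content is entirely in the case where the parities differ and $\GG^S \not\cong \GG^{S'}$, which is exactly the situation the pebble-game argument handles because no single path of flips can fix a global parity obstruction, yet every ``local view'' of size $< t$ is consistent.
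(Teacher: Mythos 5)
Your overall plan — Duplicator wins the bijective $t$-pebble game by hiding the single twisted edge gadget, and the treewidth hypothesis supplies the room to keep hiding it — is the right idea and is what the paper's brief sketch has in mind. But the mechanism you describe is upside down, and as stated it cannot work. You propose that Duplicator consults a tree decomposition of $G$ of width $t-1$ (bags of size $\leq t$) and uses separator properties of that decomposition to isolate the pebbled region from the discrepancy. If $G$ has treewidth $t$, no tree decomposition of width $t-1$ exists; that nonexistence is the whole content of the treewidth hypothesis. Worse, even conceptually, a small-width tree decomposition is a resource for \emph{Spoiler}, not Duplicator: the standard translation is that Spoiler can win the bijective $k$-pebble game on $\GG^S$ and $\GG^{S'}$ by simulating the Cops' strategy in the cops-and-robber game with $k$ cops, and a tree decomposition of width $k-1$ is precisely what gives the Cops a winning strategy. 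So if the decomposition you posit existed, Spoiler would win with $t$ pebbles and the conclusion would fail.

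The correct dual picture, which is also what the paper cites, is: because the treewidth of $G$ is $\geq t$, the \emph{Robber} has a winning strategy against $t$ cops. Duplicator plays that Robber strategy. The twist is identified with the Robber's current escape component; at each round Duplicator presents a bijection that is the identity (in the CFI sense) on gadgets over the cop-occupied vertices and flips edges along a path so as to push the twist further into the Robber's component. This never exposes an inconsistency because the pebbled set can never disconnect the Robber, which is exactly the assertion that $t$ cops do not suffice to capture. So your ``main obstacle'' paragraph, which tries to prove that a suitable bag exists, is asking for something false; the fix is to replace the appeal to a decomposition with an appeal to the Robber's winning strategy (or equivalently, to a bramble or haven of order $> t$ in $G$), and to carry the twist along the haven. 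The parts of your argument about edge-flips being automorphisms of the full CFI graph, about the twist changing $S$ only by the endpoints of flipped edges, and about the trivial case $|S| \equiv |S'| \pmod 2$, are all fine and can be kept once the game-theoretic backbone is reversed.
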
	
This holds because Duplicator has a winning strategy in the bijective $t$-pebble game on $\GG^S$ and $\GG^{S'}$. Intuitively, the difference between $\GG^S$ and $\GG^{S'}$ manifests itself in one single edge whose gadget is twisted, and the aim of Duplicator is to move this twist around in such a way that it is never exposed by the $t$ pebbles. This can be achieved by playing similarly as the robber in the cops and robbers game which witnesses the treewidth to be at least $t$.

\subsection{Automorphisms of unordered CFI-graphs}
\label{sec:CFI_automorphisms}
For a CFI-graph $\GG^S$ over an \emph{unordered} base graph $G = (V,E)$, two different kinds of automorphisms play a role: 
Firstly, there are what we call ``CFI-automorphisms'' or -isomorphisms. These are induced by swapping $e_0$ and $e_1$ in some edge-gadgets (this is called ``flipping the edge''). Secondly, there are the automorphisms of the underlying graph $G$ itself.\\

To speak about the CFI-isomorphisms, we use the terminology from \cite{dawar2008}: For a given base graph $G$, we consider not only a concrete CFI-instance with odd and even vertex gadgets, but we can also construct the ``full'' CFI-graph $\GG$, in which every vertex gadget is both even and odd. Formally, for $v \in V$, let $v^* := v^*_\emptyset \cup v^*_{\{v\}} = \{  v^X \mid X \subseteq E(v) \}$, and $\widehat{V} := \bigcup_{v \in V} v^*$.
The vertex-set of $\GG$ is $\widehat{V} \cup \widehat{E}$, and the edge-set is
\[
E(\GG) := \{ \{ v^X, e_i  \} \mid v^X \in \widehat{V}, e_i \in \widehat{E}, |X \cap \{e\}| = i \} \cup \{ \{e_0,e_1\} \mid e \in E  \}. 
\]
Every CFI-instance $\GG^S$ is an induced subgraph of $\GG$.\\
For each edge $e = \{v,w\} \in E$, let $\rho_e$ denote the automorphism of $\GG$ induced by flipping the edge $e$. Formally, $\rho_e(e_0) = e_1, \rho_e(e_1) = e_0$, and $\rho_e(v^X) = v^{X \triangle \{e\}}$, $\rho_e(w^X) = w^{X \triangle \{e\}}$ for all $v^X, w^X \in v^* \cup w^*$. All other vertices in $\widehat{V}$ are fixed by $\rho_e$. One can check that this is indeed an automorphism of $\GG$; furthermore, $\rho_e$ is an \emph{isomorphism} from any CFI-instance $\GG^S$ to $\GG^{S \triangle \{v,w\}}$ (see also \cite{dawar2008}).
It is easy to see that these edge-flip automorphisms commute, so for $F = \{e^1,...,e^m\} \subseteq E$ we may write $\rho_F$ for $\rho_{e^1} \circ \rho_{e^2} \circ ... \circ \rho_{e^m}$. So in total, for every $F \subseteq E$, $\rho_F$ is an automorphism of $\GG$. For any edge-set $F \subseteq E$, and $v \in V$ let $\deg_F(v) := | E(v) \cap F|$, i.e.\ the number of incident edges that are in $F$. We have $\rho_F(\GG^S) = \GG^{S \triangle T}$, where $T = \{v \in V \mid \deg_F(v) \text{ is odd } \}.$
In particular, if every $v \in V$ is incident to an even number of edges in $F$ (so $F$ is the symmetric difference over a set of cycles in $G$), then $\rho_F$ is also an automorphism of $\GG^S$, not only of $\GG$. 
To sum up, we have the following groups of CFI-automorphisms of $\GG$ and $\GG^S$:
\[
\Aut_{\CFI}(\GG) := \{\rho_F \mid F \subseteq E \}.
\]
This group is isomorphic to the Boolean vector space $\bbF_2^E$: Each $F \subseteq E$ is identified with its characteristic vector $\chi(F) \in \bbF_2^E$. It holds $\rho_F \circ \rho_{F'} = \rho_{F \triangle F'}$, and this corresponds to the vector $\chi(F) + \chi(F') \in \bbF_2^E$.\\
As already said, for a CFI-instance $\GG^S$, i.e.\ an induced subgraph of $\GG$, we have
\[
\Aut_{\CFI}(\GG^S) := \{ \rho_F \in \Aut_{\CFI}(\GG) \mid  \deg_F(v) \text{ is even for every } v \in V \}.
\]
This group is isomorphic to a subspace of $\bbF_2^E$.
In addition to the CFI-automorphisms, we also have to consider $\Aut(G) \leq \Sym(V)$, i.e.\ the automorphism group of the unordered underlying graph; this is different from the typical scenario studied in the literature, where $G$ is ordered and so the automorphisms of $\GG^S$ are just given by the edge-flips.
In total, the automorphism group of the full CFI-graph $\GG$ is isomorphic to the following semi-direct product: 
\[
\Aut(\GG) \cong \Aut_{\CFI}(\GG) \rtimes \Aut(G)  = \{ (\rho_F, \pi) \mid \rho_F \in \Aut_{\CFI}(\GG), \pi \in \Aut(G)  \}.
\]
The action of a pair $(\rho_F, \pi)$ on $V(\GG)$ is determined by its action on $\widehat{E}$: Let $e_i \in \widehat{E}$ with $i \in \{0,1\}$ and $e = \{v,w\} \in E$. Then $(\rho_F,\pi)(e_i) = f_j$, where $f = \{\pi(v), \pi(w)\}$, and $j = i + |F \cap \{e\}| \mod 2$. This action on $\widehat{E}$ extends to an automorphism of $\GG$ in a unique way. The automorphism group $\Aut(\GG^S)$ of a concrete CFI-instance is a subgroup of this.

\subsection{Symmetries and supports of hereditarily finite sets over CFI-structures}
\label{sec:CFIsupports}

Let $\GG^S$ be a CFI-graph over $G = (V,E)$ and $x \in \HF(\widehat{E})$. 
We only consider objects in $\HF(\widehat{E})$ instead of $\HF(\widehat{E} \cup \widehat{V})$ because this is easier and atoms $v^X \in \widehat{V}$ can be uniquely described by their neighbourhood in $\GG^S$, which is again a set over the atoms $\widehat{E}$. Therefore, we can pretend that any CPT-algorithm for the CFI-query only uses h.f.\ sets over $\widehat{E}$.\\
The automorphism group $\Aut(\GG^S)$, as well as the edge-flip-group $\Aut_{\CFI}(\GG)$, and the automorphisms of the base graph $\Aut(G)$ act on $\widehat{E}$ and therefore also on $\HF(\widehat{E})$: For example, let $\pi \in \Aut(G)$, and $x \in \HF(\widehat{E})$. Then $\pi x = \{ \pi y \mid y \in x  \}$. If $x$ is an atom $e_i$, with $i \in \{0,1\}$ and $e \in E$, then $\pi x = \pi(e)_i$.\\
An automorphism $\pi \in \Aut(G)$ \emph{stabilises} an object $x \in \HF(\widehat{E})$, if $\pi x = x$. More precisely, this means that $\pi$, which acts on the atoms of $x$, extends to some automorphism $\sigma$ of the DAG-structure $(\tc(x), \in)$, such that for every atom $e_i \in \tc(x)$, $\sigma(e_i) = \pi(e)_i$. As already said, $\Aut(\GG^S)$ is composed of edge flips and automorphisms of the base graph. We separate the effect of these two subgroups on the elements of $\HF(\widehat{E})$ and consider the following orbits and stabilisers for $x \in \HF(\widehat{E})$. Since the group of edge flips, $\Aut_{\CFI}(\GG)$, is isomorphic to the Boolean vector space $\bbF_2^E$, we often identify an automorphism $\rho_F \in \Aut_{\CFI}(\GG)$ with its characteristic vector $\chi(F) \in \bbF_2^E$.
\begin{align*}
	\Orb_E(x) &:= \{ \rho_F(x) \mid \rho_F  \in \Aut_{\CFI}(\GG) \}.\\
	\Stab_E(x) &:= \{ \chi(F) \mid \rho_F \in \Aut_{\CFI}(\GG), \rho_F(x) = x \} \leq \bbF_2^E.\\
	\Orb_{\CFI}(x) &:= \{ \rho_F(x) \mid \rho_F  \in \Aut_{\CFI}(\GG^S) \}.\\
	\Stab_{\CFI}(x) &:= \{ \chi(F) \mid \rho_F \in \Aut_{\CFI}(\GG^S), \rho_F(x) = x \} \leq \bbF_2^E.\\
	\Orb_{\GG^S}(x) &:= \{(\rho_F,\pi)(x) \mid (\rho_F,\pi) \in \Aut(\GG^S) \}.\\
	\Stab_{\GG^S}(x) &:= \{ (\rho_F,\pi) \in \Aut(\GG^S) \mid (\rho_F,\pi)(x) = x \}.\\
	\Orb_{G}(x) &:= \{(\rho_\emptyset,\pi)(x) \mid \pi \in \Aut(G) \}.\\
	\Stab_{G}(x) &:= \{ \pi \in \Aut(G) \mid (\rho_{\emptyset},\pi)(x) = x \}.\\
	\maxOrb_E(x) &:= \max_{y \in \tc(x)} |\Orb_E(y)|.\\
	\maxOrb_\CFI(x) &:= \max_{y \in \tc(x)} |\Orb_\CFI(y)|.
\end{align*}
It should be emphasised that $\Stab_E(x)$ is a subspace of $\bbF_2^E$, so it makes sense to speak about its dimension and to apply linear transformations to it. 
At this point, we observe for future reference that all objects in the same $\Aut_{\CFI}(\GG)$-orbit have the same $\Aut_{\CFI}(\GG)$-stabiliser because the group is Abelian:
\begin{lemma}
	\label{lem:XOR_stabiliserEqual}
	Let $x, x' \in \HF(\widehat{E})$ such that $x' = \rho_F(x)$, for some $\rho_F \in \Aut_{\CFI}(\GG)$. Then $\Stab_E(x) = \Stab_E(x')$.
\end{lemma}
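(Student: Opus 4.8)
The plan is to use nothing more than the fact, already recorded in the excerpt, that $\Aut_{\CFI}(\GG)$ is abelian: it is isomorphic to $\bbF_2^E$ via $\rho_F \mapsto \chi(F)$, and $\rho_F \circ \rho_{F'} = \rho_{F \triangle F'} = \rho_{F'} \circ \rho_F$. From this, the lemma is essentially a one-line orbit--stabiliser computation, but carried out by hand because we are working with the explicit action of the group on $\HF(\widehat{E})$.

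First I would fix an arbitrary $\rho_{F'} \in \Aut_{\CFI}(\GG)$ with $\chi(F') \in \Stab_E(x)$, i.e.\ $\rho_{F'}(x) = x$ in the sense of the definition of stabilisation given just above the lemma (namely, $\rho_{F'}$ extends to an automorphism of the membership DAG $(\tc(x), \in)$ respecting the atom relabelling). Using commutativity of the action on $\HF(\widehat{E})$,
\[
\rho_{F'}(x') = \rho_{F'}(\rho_F(x)) = \rho_F(\rho_{F'}(x)) = \rho_F(x) = x',
\]
so $\chi(F') \in \Stab_E(x')$. This gives $\Stab_E(x) \subseteq \Stab_E(x')$. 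For the reverse inclusion I would note that $\rho_F$ is an involution, since $F \triangle F = \emptyset$ and hence $\rho_F \circ \rho_F = \rho_\emptyset = \id$; therefore $x = \rho_F(x')$ as well, and running the identical argument with the roles of $x$ and $x'$ interchanged yields $\Stab_E(x') \subseteq \Stab_E(x)$. Equality follows.

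I do not expect any real obstacle here. The only point requiring a modicum of care is purely notational: the lemma is phrased in terms of characteristic vectors $\chi(F) \in \bbF_2^E$ rather than the automorphisms $\rho_F$ themselves, so one should state once, at the start, that these are identified and then argue at the level of the automorphisms; and one should be explicit that the manipulation $\rho_{F'}(\rho_F(x)) = \rho_F(\rho_{F'}(x))$ is just the commutativity of the $\Aut_{\CFI}(\GG)$-action on $\HF(\widehat{E})$ applied pointwise down the transitive closure, so it passes from atoms to arbitrary h.f.\ sets without change. With those conventions fixed, the proof is the displayed computation above together with its mirror image.
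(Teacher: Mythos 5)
Your proof is correct and is essentially the same argument as the paper's: the paper writes $\Stab_E(x') = \{\chi(F) + \alpha + \chi(F) \mid \alpha \in \Stab_E(x)\}$ (the general conjugation formula for the stabiliser of $\rho_F(x)$) and then collapses it using commutativity, while you verify the same conjugation identity element by element. Both proofs hinge on exactly the same fact — that $\Aut_{\CFI}(\GG)$ is abelian — so there is no substantive difference.
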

\begin{proof}
	We have $\Stab_E(x') = \{ \chi(F) + \alpha + \chi(F) \mid \alpha \in \Stab_E(x) \}$. Since $\chi(F) = \chi(F)^{-1}$ and $\bbF_2^E$ is Abelian, we have $\Stab_E(x') = \Stab_E(x)$. 
\end{proof}

We also observe that the two different $\maxOrb$-parameters of a h.f.\ set can be bounded in terms of the size and orbit size of $x$.
\begin{lemma}
	\label{lem:XOR_maxOrbBound}
	Let $x \in \HF(\widehat{E})$. Then $\maxOrb_E(x) \leq |\Orb_E(x)| \cdot |\tc(x)|$ and\\ $\maxOrb_\CFI(x) \leq |\Orb_\CFI(x)| \cdot |\tc(x)|$.
\end{lemma}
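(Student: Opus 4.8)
The plan is to reduce everything to the pointwise claim that for \emph{every} $y \in \tc(x)$ we have $|\Orb_E(y)| \leq |\Orb_E(x)| \cdot |\tc(x)|$; taking the maximum over $y \in \tc(x)$ then gives the bound on $\maxOrb_E(x)$, and the bound on $\maxOrb_\CFI(x)$ follows by running the identical argument with the subgroup $\Aut_{\CFI}(\GG^S)$ in place of $\Aut_{\CFI}(\GG)$. The single fact about the group action that I would isolate first is that every $\rho_F$ acts on $\HF(\widehat{E})$ as a membership-preserving bijection, so that $\tc(\rho_F(x)) = \{\rho_F(z) \mid z \in \tc(x)\}$; in particular $|\tc(\rho_F(x))| = |\tc(x)|$, and $y \in \tc(x)$ implies $\rho_F(y) \in \tc(\rho_F(x))$.

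So fix $y \in \tc(x)$ and consider the set of pairs $P := \{(\rho_F(x),\rho_F(y)) \mid \rho_F \in \Aut_{\CFI}(\GG)\}$. The argument then has two short steps. First, bound $|P|$: the projection of $P$ onto its first coordinate is surjective onto $\Orb_E(x)$, and for a fixed first coordinate $x' = \rho_F(x)$ every admissible second coordinate lies in $\tc(x')$, a set of size $|\tc(x)|$ by the observation above; summing the fibre sizes gives $|P| \leq |\Orb_E(x)| \cdot |\tc(x)|$. Second, observe that the projection of $P$ onto its second coordinate is surjective onto $\Orb_E(y)$, since any $y' = \rho_F(y)$ is witnessed by the pair $(\rho_F(x),y') \in P$; hence $|\Orb_E(y)| \leq |P|$. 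Chaining the two inequalities yields $|\Orb_E(y)| \leq |\Orb_E(x)| \cdot |\tc(x)|$, which is exactly what is needed.

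I do not anticipate a real obstacle here; the only point requiring a little care is the bookkeeping that $\rho_F$ genuinely induces a bijection on the entire transitive closure (so that transitive-closure sizes, and with them orbit sizes, are preserved), which is precisely how $\Sym(\widehat{E})$ and its subgroups $\Aut_{\CFI}(\GG)$ and $\Aut_{\CFI}(\GG^S)$ act on $\HF(\widehat{E})$. An equivalent, and perhaps slightly slicker, formulation is to produce directly a surjection $P \twoheadrightarrow \Orb_E(y)$ together with the stated fibre bound on the other projection $P \to \Orb_E(x)$; I would present whichever version turns out shorter in the write-up.
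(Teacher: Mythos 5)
Your proposal is correct and uses essentially the same idea as the paper: both arguments fibre over $\Orb_E(x)$, observing that for each image $\rho_F(x)$ the relevant copies of $y$ live inside $\tc(\rho_F(x))$, a set of size $|\tc(x)|$. The paper packages this as $|\Orb_E(y)| \leq |\bigcup_{\rho_F}\rho_F(\Orb_E(y)\cap\tc(x))| \leq |\Orb_E(x)|\cdot|\tc(x)|$, while you count the pairs $(\rho_F(x),\rho_F(y))$ -- a cosmetic difference, not a different route.
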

\begin{proof}
	Let $y \in \tc(x)$ be the set where $\maxOrb_E(x)$ is attained, i.e.\ $|\Orb_E(y)| = \maxOrb_E(x)$. Let $Y := \Orb_E(y) \cap \tc(x)$. Clearly, for any $\rho_F \in  \Aut_{\CFI}(\GG)$, $\rho_F(Y) \subseteq \tc(\rho_F(x))$, and $\rho_F(x) \in \Orb_E(x)$. Hence:
	\[
	|\Orb_E(y)| \leq   \Big| \bigcup_{\rho_F\in \Aut_{\CFI}(\GG)} \rho_F(Y) \Big| \leq |\Orb_E(x)| \cdot |\tc(x)|.
	\]
	Similarly, the statement for $\maxOrb_{\CFI}(x)$ is proven.
\end{proof}

In \cite{dawar2008}, the term \emph{super-symmetry} was introduced for h.f.\ sets which are fixed by all automorphisms in $\Aut_{\CFI}(\GG)$. The notion of super-symmetry that is needed for our Theorem \ref{thm:mainInformal} can be relaxed a bit, in the sense that the orbit size w.r.t.\ $\Aut_{\CFI}(\GG)$ does not need to be exactly one.
\begin{definition}[Super-symmetric objects]
	\label{def:XOR_superSymmetricObjects}
	Fix a family of CFI-graphs $(\GG^S_n)_{n \in \bbN}$ and a $\mu_n \in \HF(\widehat{E}_n)$ for every $n$. The objects $\mu_n$ are \emph{super-symmetric} if 
	$
	|\Orb_E(\mu_n)| \leq \text{\upshape poly}(|\GG^S_n|).
	$
\end{definition}	

If a set is super-symmetric and CPT-definable, then we have a handle on its parameter $\maxOrb_{E}$:
\begin{lemma}
	\label{lem:XOR_maxOrbBoundSuperSymmetric}
	If $\mu_n$ is \emph{super-symmetric} and \emph{CPT-definable} in $\GG^S_n$, then $\maxOrb_{E}(\mu_n)$ is polynomially bounded in $|\GG^S_n|$. 
\end{lemma}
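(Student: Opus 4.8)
The plan is to derive this immediately by combining Lemma~\ref{lem:XOR_maxOrbBound} with the definition of super-symmetry and the polynomial boundedness of CPT. Recall that Lemma~\ref{lem:XOR_maxOrbBound} gives $\maxOrb_E(\mu_n) \leq |\Orb_E(\mu_n)| \cdot |\tc(\mu_n)|$, so it suffices to bound each of the two factors on the right-hand side by a polynomial in $|\GG^S_n|$.

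For the first factor, this is exactly the hypothesis: by Definition~\ref{def:XOR_superSymmetricObjects}, super-symmetry of the family $(\mu_n)_{n \in \bbN}$ means $|\Orb_E(\mu_n)| \leq \text{poly}(|\GG^S_n|)$. For the second factor, I would invoke the defining property of Choiceless Polynomial Time that the total number of distinct sets appearing in the active objects is bounded by the polynomial $p$ that is explicitly part of the program $\Pi$ activating $\mu_n$. Since $\mu_n$ is among the objects activated by $\Pi$ on input $\GG^S_n$, its transitive closure $\tc(\mu_n)$ is contained in the set of active objects, hence $|\tc(\mu_n)| \leq p(|\GG^S_n|)$, which is polynomial in $|\GG^S_n|$.

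Putting the two bounds together, $\maxOrb_E(\mu_n) \leq |\Orb_E(\mu_n)| \cdot |\tc(\mu_n)| \leq \text{poly}(|\GG^S_n|) \cdot \text{poly}(|\GG^S_n|)$, which is again polynomial in $|\GG^S_n|$, as claimed. There is no real obstacle here; the only point that deserves a sentence of care is making explicit that ``CPT-definable'' entails the polynomial bound on $|\tc(\mu_n)|$, i.e.\ that the transitive closure of an activated set is part of the active objects whose total size is bounded by the program's polynomial — a fact that is baked into the definition of CPT recalled in the preliminaries.
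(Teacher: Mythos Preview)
Your proof is correct and follows exactly the same approach as the paper: invoke Lemma~\ref{lem:XOR_maxOrbBound}, bound $|\Orb_E(\mu_n)|$ via super-symmetry and $|\tc(\mu_n)|$ via CPT-definability, and conclude. The paper's version is simply a terser rendition of the same argument.
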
	 
\begin{proof}
	By super-symmetry, $|\Orb_E(\mu_n)|$ is polynomially bounded. By CPT-definability, $|\tc(\mu_n)|$ is polynomially bounded. Hence the statement follows with Lemma \ref{lem:XOR_maxOrbBound}.
\end{proof}	

So super-symmetric objects in this sense satisfy the same orbit bound with respect to the bigger symmetry group $\bbF_2^E \cong \Aut_{\CFI}(\GG)$ as all CPT-definable objects naturally do with respect to the automorphism group of the input structure. %In \cite{dawar2008}, super-symmetry is meant in the even stricter sense that a h.f.\ object has orbit-size $1$ under the group $\Aut_{\CFI}(\GG)$.

\paragraph*{Supports for CFI-automorphisms}
Generally, a \emph{support} of a permutation group $\Gamma \leq \Sym(A)$ is a subset $S \subseteq A$ such that the pointwise stabiliser of $S$ in $\Sym(A)$ is a subgroup of $\Gamma$. A support of a h.f.\ set is a support of its stabiliser group. For subgroups of $\Aut_{\CFI}(\GG)$, we will use a different notion, that we call \emph{CFI-support}. The reason why we need a specific type of support for these groups is because otherwise, the group $\Aut_{\CFI}(\GG)$ does not admit unique minimum supports.
\begin{definition}[CFI-support]
	\label{def:CFI_support}
	A \emph{CFI-support} of an object $x \in \HF(\widehat{E})$ is a subset $S \subseteq E$ such that every $\rho_F \in \Aut_{\CFI}(\GG)$ with $F \cap S = \emptyset$ fixes $x$. 
\end{definition}	
There is always a unique minimal CFI-support:
\begin{lemma}
	\label{lem:XOR_minimalCFIsupports}
	Let $\GG^S$ be a CFI-instance and $x \in \HF(\widehat{E})$. Let $A_1,A_2 \subseteq E$ be CFI-supports of $x$. Then $A_1 \cap A_2$ is also a CFI-support of $x$. 	
\end{lemma}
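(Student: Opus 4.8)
The plan is to show directly that if a single edge-flip $\rho_F$ with $F \subseteq E$ leaves $x$ fixed, then it can be factored through the two supports $A_1, A_2$ in a way that witnesses $A_1 \cap A_2$ is a CFI-support. Concretely, let $\rho_F \in \Aut_{\CFI}(\GG)$ with $F \cap (A_1 \cap A_2) = \emptyset$; I must show $\rho_F(x) = x$. The idea is to split $F$ as a disjoint union $F = F_1 \cupdot F_2 \cupdot F_0$, where $F_1 := F \cap A_1$, $F_2 := F \cap (A_2 \setminus A_1)$, and $F_0 := F \setminus (A_1 \cup A_2)$. Since $F$ avoids $A_1 \cap A_2$, we have $F_1 \cap A_2 = \emptyset$, and by construction $F_2 \cap A_1 = \emptyset$ and $F_0 \cap A_1 = F_0 \cap A_2 = \emptyset$.

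The next step uses that $\Aut_{\CFI}(\GG) \cong \bbF_2^E$ is Abelian, so $\rho_F = \rho_{F_1} \circ \rho_{F_2} \circ \rho_{F_0}$ in any order. Now $F_2 \cupdot F_0$ is disjoint from $A_1$ (since $F_2 \subseteq A_2 \setminus A_1$ and $F_0$ avoids $A_1 \cup A_2$), so because $A_1$ is a CFI-support, $\rho_{F_2 \cupdot F_0}$ fixes $x$. Similarly, $F_1 \cupdot F_0$ is disjoint from $A_2$: indeed $F_1 = F \cap A_1$ and $F_1 \cap A_2 \subseteq F \cap A_1 \cap A_2 = \emptyset$, while $F_0$ avoids $A_2$; hence because $A_2$ is a CFI-support, $\rho_{F_1 \cupdot F_0}$ fixes $x$. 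Composing: $\rho_F(x) = \rho_{F_1 \cupdot F_2 \cupdot F_0}(x)$. The cleanest way to finish is to write $\rho_F = \rho_{F_1 \cupdot F_0} \circ \rho_{F_2} \circ \rho_{F_0}^{-1} \circ \rho_{F_2 \cupdot F_0}$ — but over $\bbF_2$ this simplifies; more directly, since $\rho_{F_2 \cupdot F_0}(x) = x$ and $\rho_{F_1 \cupdot F_0}(x) = x$, and in $\bbF_2^E$ we have $\chi(F) = \chi(F_1 \cupdot F_0) + \chi(F_2 \cupdot F_0) + \chi(F_0)$ (because $F_0$ is counted twice and cancels, leaving $F_1 \cupdot F_2$, which together with one more copy of $F_0$... ) — I should be slightly careful here and instead split more symmetrically.

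A more robust factorization: set $F' := F \cap A_1$ and $F'' := F \setminus A_1$. Then $F''$ is disjoint from $A_1$, so $\rho_{F''}$ fixes $x$. Also $F' = F \cap A_1$ is disjoint from $A_2$ (since $F \cap A_1 \cap A_2 = \emptyset$), so $\rho_{F'}$ is disjoint from $A_2$ and hence fixes $x$. Therefore $\rho_F(x) = (\rho_{F'} \circ \rho_{F''})(x) = \rho_{F'}(\rho_{F''}(x)) = \rho_{F'}(x) = x$, using that $\rho_F = \rho_{F'} \circ \rho_{F''}$ because $F = F' \cupdot F''$ is a disjoint (hence symmetric-difference) union. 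This is the whole argument. The main thing to get right is the bookkeeping of which piece avoids which support — the key observation being that $F \cap A_1$ automatically avoids $A_2$ precisely because $F$ was assumed to avoid $A_1 \cap A_2$ — and the fact, already recorded in the excerpt, that $\Aut_{\CFI}(\GG)$ is Abelian with $\rho_F \circ \rho_{F'} = \rho_{F \triangle F'}$, so that disjoint edge-sets give composable flips. There is essentially no obstacle; the only subtlety is resisting the temptation to over-decompose $F$ when a two-way split $F = (F \cap A_1) \cupdot (F \setminus A_1)$ already does the job.
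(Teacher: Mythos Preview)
Your final argument is correct and uses essentially the same idea as the paper: decompose $F$ into pieces each of which avoids one of the two supports, and apply the support property to each piece. The paper does this by contradiction with a three-way split $F = (F \cap A_1) \cup (F \cap A_2) \cup (F \setminus (A_1 \cup A_2))$, whereas your two-way split $F = (F \cap A_1) \cupdot (F \setminus A_1)$ gives a slightly cleaner direct proof; the exploratory first paragraph can simply be deleted.
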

\begin{proof}
	Assume $A_1 \cap A_2$ was not a CFI-support of $x$. Then there is $F \subseteq E$ disjoint from $A_1 \cap A_2$ such that $\rho_F(x) \neq x$. Let $F_1 := F \cap A_1$ and $F_2 := F \cap A_2$. These sets are both non-empty, because: If $F$ did not intersect $A_1$, then $\rho_F(x) = x$ because $A_1$ is a CFI-support for $x$. Similarly for $A_2$. Also, by assumption, $F_1$ and $F_2$ are disjoint from $A_1 \cap A_2$, and therefore, also $F_1 \cap F_2 = \emptyset$. Furthermore, $F' := F \setminus (F_1 \cup F_2)$ is disjoint from $A_1 \cup A_2$ and therefore, $\rho_{F'}$ fixes $x$. It follows that flipping the edges in $F_1 \cup F_2$ moves $x$, because by assumption, $\rho_F(x) \neq x$. But this is a contradiction because $\rho_{F_1}(x) = x$ (since $F_1$ is disjoint from the support $A_2$), and analogously, $\rho_{F_2}(x) = x$.   
\end{proof}
This justifies the following definition:
\begin{definition}[Minimal CFI-support]
	\label{def:XOR_minCFIsupport}
	\noindent For $x \in \HF(\widehat{E})$,
	$\sup_{\CFI}(x) \subseteq E$ denotes
	the unique minimal subset of $E$ that is a CFI-support of $x$.
\end{definition}

As explained above, for general CPT-definable objects, the orbit size can only be bounded when the ``true'' automorphism group $\Aut_{\CFI}(\GG^S)$ is considered. Only when the object is additionally super-symmetric, also the $\Aut_{\CFI}(\GG)$-orbit size is polynomial.
However, depending on the structure of the base graph, we can sometimes bound the $\Aut_{\CFI}(\GG)$-orbit size as well, even if the object is not necessarily super-symmetric:
\begin{lemma}
	\label{lem:XOR_supNotBridge}
	Let $\mu \in \HF(\widehat{E})$ be a h.f.\ set over $\GG^S$. Let $k$ be the number of connected components in the graph $G - \sup_{\CFI}(\mu)$ (i.e.\ the base graph after removing the edges in the minimum CFI-support).
	Then $|\Orb_{E}(\mu)| \leq 2^{k^2}\cdot |\Orb_{\GG^S}(\mu)|$.
\end{lemma}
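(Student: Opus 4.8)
The plan is to compare the orbit of $\mu$ under the full flip group $\bbF_2^E \cong \Aut_{\CFI}(\GG)$ with the orbit under the smaller group $\Aut(\GG^S)$ (or at least its CFI-part $\Aut_{\CFI}(\GG^S)$, the cycle-space of $G$), and to bound the index of the relevant stabilisers. By Lemma~\ref{lem:XOR_stabiliserEqual}, $\Stab_E(\mu)$ is a subspace of $\bbF_2^E$, and since $\sup_{\CFI}(\mu)$ is a CFI-support, every flip $\rho_F$ with $F \subseteq E \setminus \sup_{\CFI}(\mu)$ fixes $\mu$; hence $\Stab_E(\mu)$ contains the subspace $\bbF_2^{E \setminus \sup_{\CFI}(\mu)}$. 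The orbit $|\Orb_E(\mu)|$ equals the index $[\bbF_2^E : \Stab_E(\mu)]$, which therefore equals $[\bbF_2^{\sup_{\CFI}(\mu)} : \Stab_E(\mu) \cap \bbF_2^{\sup_{\CFI}(\mu)}]$ — so only the behaviour of flips \emph{inside the minimal support} matters, and $|\Orb_E(\mu)| \le 2^{|\sup_{\CFI}(\mu)|}$ trivially. That crude bound is not what we want, though; we need the comparison with $\Orb_{\GG^S}(\mu)$.

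The key step is the following observation about cycle spaces. Write $S := \sup_{\CFI}(\mu)$ and let $G - S$ have connected components $V_1, \dots, V_k$. Consider the quotient (multi)graph $G/{\sim}$ obtained by contracting each $V_i$ to a single vertex; its edges are exactly the edges of $S$ that go between distinct components (edges of $S$ inside a single $V_i$ become loops, which we may discard for the cycle-space count or bound separately). I claim that the image of the cycle space of $G$ under the restriction map $\bbF_2^E \to \bbF_2^{S}$ has codimension at most $\binom{k}{2}$ inside $\bbF_2^{S}$ — equivalently, modulo flips that are automorphisms of $\GG^S$, the residual group of flips acting on $\mu$ has dimension at most $\binom{k}{2}$. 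To see this: a flip set $F \subseteq E$ is a cycle (i.e.\ $\rho_F \in \Aut_{\CFI}(\GG^S)$) iff every vertex has even $F$-degree; given an arbitrary $F_0 \subseteq S$, we want to adjust it by adding edges of $E \setminus S$ (which lie entirely inside the components $V_i$, and which fix $\mu$) and possibly further edges of $S$, so as to make all degrees even. Within each component $V_i$, which is connected in $G - S$, we can use a spanning tree of $V_i$ to fix the parities of \emph{all but one} vertex of $V_i$; the single leftover parity per component is a genuine invariant. So the obstruction to realising a given $F_0 \subseteq S$ as a cycle-space element (up to support-irrelevant flips) is captured by one $\bbF_2$-bit per component, i.e.\ a $k$-dimensional space — and since the all-ones vector (total degree) is always even, it is at most $(k-1)$-dimensional. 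Combined with the at most $\binom{k}{2}$ loop-edges of $S$ that may sit between/within the contracted structure, the total codimension is $O(k^2)$, giving $|\Orb_E(\mu)| \le 2^{O(k^2)} \cdot |\Orb_{\CFI}(\mu)| \le 2^{k^2}\cdot |\Orb_{\GG^S}(\mu)|$ after folding constants into the exponent as the statement allows (and using $|\Orb_{\CFI}(\mu)| \le |\Orb_{\GG^S}(\mu)|$, since $\Aut_{\CFI}(\GG^S) \le \Aut(\GG^S)$).

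Concretely the steps are: (1) reduce to flips supported on $S = \sup_{\CFI}(\mu)$ using that $\Stab_E(\mu) \supseteq \bbF_2^{E\setminus S}$; (2) show every $F \subseteq E$ differs from a cycle (element of $\Aut_{\CFI}(\GG^S)$) by a set whose non-$S$ part is irrelevant for $\mu$ and whose residual ambiguity lives in a space of dimension $O(k^2)$, via the spanning-tree parity argument inside each of the $k$ components of $G - S$; (3) invoke Orbit--Stabiliser / index multiplicativity to convert the codimension bound into the factor $2^{k^2}$ relating $|\Orb_E(\mu)|$ and $|\Orb_{\CFI}(\mu)|$, then bound $|\Orb_{\CFI}(\mu)| \le |\Orb_{\GG^S}(\mu)|$. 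The main obstacle I anticipate is bookkeeping in step (2): carefully distinguishing edges of $S$ that lie within a component $V_i$ from those between components, and verifying that the quotient-graph cycle space plus the per-component parity bits together account for \emph{exactly} the residual directions, so that the exponent genuinely comes out as $k^2$ (rather than, say, $|S|$ or something depending on degrees). Getting the constant in the exponent to be $1\cdot k^2$ rather than a larger multiple of $k^2$ will require being slightly economical — e.g.\ noting $\binom{k}{2} + (k-1) \le k^2$ for $k \ge 1$ — but no deeper idea seems needed beyond linear algebra over $\bbF_2$ and elementary graph connectivity.
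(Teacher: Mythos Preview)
Your approach is correct and follows the same skeleton as the paper --- reduce to flips supported on $S=\sup_{\CFI}(\mu)$ (since $\bbF_2^{E\setminus S}\subseteq\Stab_E(\mu)$), bound the codimension of the image $W$ of the cycle space $\Aut_{\CFI}(\GG^S)$ under the restriction $\bbF_2^E\to\bbF_2^S$, and then use Orbit--Stabiliser (together with $|\Orb_{\CFI}(\mu)|\le|\Orb_{\GG^S}(\mu)|$) to obtain $|\Orb_E(\mu)|/|\Orb_{\GG^S}(\mu)|\le[\bbF_2^S:W]$. Where you differ is in how you bound that codimension. The paper partitions $S$ into the at most $k^2$ buckets $\Aa_{(i,j)}$ of edges running between components $X_i$ and $X_j$, and notes that any two edges in a bucket $\Aa_{(i,j)}$ lie on a common cycle that otherwise stays in $E\setminus S$; hence every even subset of each bucket is realised, one dimension is lost per bucket, and the codimension is at most $k^2$. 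Your spanning-tree argument is more direct and in fact tighter: $F_0\subseteq S$ lies in $W$ iff every component $V_i$ is incident to an even number of $F_0$-edges leaving it, which is $k$ linear constraints whose sum is identically zero, so the codimension of $W$ is at most $k-1$. You therefore actually establish $|\Orb_E(\mu)|\le 2^{k-1}\cdot|\Orb_{\GG^S}(\mu)|$, which is stronger than the stated $2^{k^2}$.

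Two pieces of your write-up are noise and should be dropped. The worry about ``$\binom{k}{2}$ loop-edges'' is unfounded: an edge $e\in S$ with both endpoints in the same $V_i$ closes into a cycle using only edges of $G[V_i]\setminus S$, so such edges impose no constraint on membership in $W$. Likewise the phrase ``possibly further edges of $S$'' is not needed --- you want exactly the image of the restriction map, and the per-component parity constraints already characterise it. With those removed, your argument is cleaner than the paper's and yields the sharper exponent.
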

\begin{proof}
	Let $A = \sup_{\CFI}(\mu) \subseteq E$ be the smallest CFI-support of $\mu$. Then for every $\rho_F \in \Aut_{\CFI}(\GG)$ with $F \cap A = \emptyset$, it holds that $\rho_F(\mu) = \mu$. Thus, the effect of an edge-flip $\rho_F$ on $\mu$ depends only on $F \cap A$. So we have:
	\[
	|\Orb_{E}(\mu)| \leq 2^{|A|},
	\]
	because there are $2^{|A|}$ ways how any $F \subseteq E$ can intersect the support $S$, and if $F \cap A = F' \cap A$, then also $\rho_F(\mu) = \rho_{F'}(\mu)$.\\
	Now we compute a lower bound on  $|\Orb_{\GG^S}(\mu)|$ by analysing how many subsets of $A$ can occur as the intersection $F \cap A$ for an automorphism $\rho_F \in \Aut_{\CFI}(\GG^S)$. In contrast to the edge-flips in $\Aut_{\CFI}(\GG)$, these are the edge-flips along cycles in $G$. Let $X_1,...,X_k \subseteq V(G)$ denote the vertex-sets of the connected components in the graph $G - A$. We partition the edge-set $A$ into at most $k^2$ many subsets, according to the components that the edges connect. So for each pair $(i,j) \in [k]^2$, let $\Aa_{(i,j)} \subseteq A$ denote those edges in $A$ that run between the components $X_i$ and $X_j$. Now it can be seen that for every pair $i \neq j$, for every $B \subseteq \Aa_{(i,j)}$ of even cardinality, there exists some symmetric difference of cycles $C_B$ in $G$ whose intersection with $A$ is exactly $B$. This is because any two edges $e,e' \in \Aa_{(i,j)}$ lie on a cycle through the components $X_i$ and $X_j$. For $\Aa_{(i,i)} \subseteq A$, \emph{every} subset $B \subseteq \Aa_{(i,i)}$ can be generated by the symmetric difference of some cycles because the endpoints of every $e \in \Aa_{{i,i}}$ are in the same connected component (but for simplicity, we pretend that also in this case, only the even subsets of $\Aa_{(i,i)}$ can be hit by the symmetric difference of some cycles). Summing up these considerations, we have:
	\begin{align*}
		&|\{  B \subseteq A \mid \text{ there exists a } \rho_F \in \Aut_{\CFI}(\GG^S) \text{ such that } F \cap A = B \}|\\
		&\geq \prod_{(i,j) \in [k]^2} 2^{|\Aa_{(i,j)}|-1} = 2^{|A|-k^2}.
	\end{align*}
	Let $s$ denote the number of $B \subseteq A$ in the above set such that flipping $B$ stabilises $\mu$. Then by the Orbit-Stabiliser Theorem, we have $|\Orb_{\GG^S}(\mu)| \geq 2^{|A|-k^2}/s$ and $|\Orb_E(\mu)| \leq 2^{|A|}/s$. Putting these two inequalities together, we get the desired bound $|\Orb_{E}(\mu)| \leq 2^{k^2}\cdot |\Orb_{\GG^S}(\mu)|$.
\end{proof}

This lemma essentially says that it does not make a difference whether we consider orbit-sizes with respect to the group $\Aut_{\CFI}(\GG)$ of all edge flips or the group $\Aut_{\CFI}(\GG^S)$ of cycle edge flips, as long as the CFI-support of an object separates the base graph only into a small number of components:

\begin{corollary}
	\label{cor:XOR_relationOrbitSizesEdgeFlips}
	Fix a family $(\GG_n^S)_{n \in \bbN}$ of CFI-structures. Let $\mu_n \in \HF(\widehat{E}_n)$. If the number of connected components in $G_n - \sup_{\CFI}(\mu_n)$ is at most $\Oo(\sqrt{\log |\GG^S_n|})$, then $|\Orb_{E}(\mu_n)| \leq \text{\upshape poly}(|\GG^S_n|) \cdot |\Orb_{\GG^S}(\mu)|$.
\end{corollary}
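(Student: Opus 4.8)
The plan is to derive Corollary~\ref{cor:XOR_relationOrbitSizesEdgeFlips} directly from Lemma~\ref{lem:XOR_supNotBridge} by a straightforward substitution of the hypothesis on the number of connected components. First I would set, for each $n$, $k_n$ to be the number of connected components of $G_n - \sup_{\CFI}(\mu_n)$, and invoke Lemma~\ref{lem:XOR_supNotBridge} to obtain $|\Orb_E(\mu_n)| \leq 2^{k_n^2} \cdot |\Orb_{\GG^S}(\mu_n)|$. The only thing to check is that the factor $2^{k_n^2}$ is polynomially bounded in $|\GG^S_n|$ under the assumption $k_n \in \Oo(\sqrt{\log |\GG^S_n|})$.

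This is an elementary calculation: if $k_n \leq c\sqrt{\log |\GG^S_n|}$ for some constant $c$ and all sufficiently large $n$, then $k_n^2 \leq c^2 \log |\GG^S_n|$, so $2^{k_n^2} \leq 2^{c^2 \log |\GG^S_n|} = |\GG^S_n|^{c^2}$, which is polynomial in $|\GG^S_n|$. (One should be mildly careful about the base of the logarithm, but since changing the base only rescales $c$ by a constant factor, this does not affect the conclusion; I would just fix $\log$ to be base $2$ throughout, matching the way the bound $2^{k^2}$ arises in Lemma~\ref{lem:XOR_supNotBridge}.) Plugging this into the inequality from the lemma gives $|\Orb_E(\mu_n)| \leq \text{poly}(|\GG^S_n|) \cdot |\Orb_{\GG^S}(\mu_n)|$, which is exactly the claim.

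There is essentially no obstacle here; the content of the corollary is entirely in Lemma~\ref{lem:XOR_supNotBridge}, and the corollary merely repackages it with a concrete, usable hypothesis. The only point that deserves a sentence of care in the writeup is the implicit uniformity: the constant hidden in the $\Oo(\cdot)$ must be independent of $n$, and likewise the ``poly'' in the conclusion is a single polynomial (with exponent roughly $c^2$) working for all $n$; I would state this explicitly so the corollary can be cited cleanly later. Thus the proof is a two-line derivation: apply the lemma, then bound $2^{k_n^2}$ by a polynomial using the assumed growth rate of $k_n$.
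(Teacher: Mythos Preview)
Your proposal is correct and is exactly the intended derivation: the paper does not even spell out a proof for this corollary, presenting it as an immediate consequence of Lemma~\ref{lem:XOR_supNotBridge}, and your two-line argument (apply the lemma, then note $2^{k_n^2} \leq |\GG^S_n|^{c^2}$ when $k_n \leq c\sqrt{\log |\GG^S_n|}$) is precisely that consequence made explicit.
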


\section{CFI-symmetric hereditarily finite sets and algorithms}
\label{sec:CFIsymmetricSets}
The CFI-query is definable in CPT on instances that arise from linearly ordered base graphs, base graphs that come with a preorder with colour classes of logarithmic size, and base graphs of linear degree \cite{dawar2008, pakusaSchalthoeferSelman}. All these CPT-algorithms depend on the construction of a particular super-symmetric h.f.\ set $\mu \in\HF(\widehat{E})$ that encodes the parity of $|S|$, given an instance $\GG^S$. We isolate another property of these h.f.\ sets, besides super-symmetry, which is responsible for their small orbit size and suitability for encoding parities. We call this \emph{CFI-symmetry}. Intuitively, a set $\mu \in \HF(\widehat{E})$ is CFI-symmetric if its ``building blocks'' behave similarly as CFI-gadgets in CFI-graphs, in the sense that they are ``flipped'' whenever an even number of ``incident gadgets'' is flipped. These building blocks are the \emph{connected components} of sets. To define these, let a CFI-graph $\GG^S$ and a set $\mu \in \HF(\widehat{E})$ be fixed, and let $\sim_E$ be the following equivalence relation on the elements $x \in \tc(\mu)$: For $x,x' \in \tc(\mu)$, we write $x \sim_E x'$ iff there exists an edge-flip $\rho_F \in \Aut_{\CFI}(\GG)$ such that $x' = \rho_F(x)$. 
The $\sim_E$-equivalence class in $\tc(\mu)$ of an object $x \in \tc(\mu)$ is denoted $[x]_{\sim_E}$ or $[x]$.
The relation $\sim_E$ induces a partition $\Cc(x)$ on each $x \in \tc(\mu)$, namely $\Cc(x) := \{([z]_{\sim_E} \cap x) \mid z \in x\}$. In \cite{dawar2008}, the elements of $\Cc(x)$ are called the \emph{connected components} of $x$. Now in a \emph{CFI-symmetric} object, each connected component $\gamma \in \Cc(x)$, for each $x \in \tc(\mu)$, behaves like a CFI-gadget. That is, the component has exactly two images under $\Aut_{\CFI}(\GG)$: It can either be flipped or stabilised, and which of these two is the case, depends on the parity of flipped components in the elements of $\gamma$. Before we formalise this, we consider an example of a small ``parity-tracking'' h.f.\ set that is constructed similarly as in the algorithms from \cite{dawar2008} and \cite{pakusaSchalthoeferSelman}.
\begin{example}
	\label{ex:XOR_CFIsymmetricSet}
	Here is an example h.f.\ set $\mu_{\{e,f,g\}} \in \HF( \widehat{E} )$ with $E = \{e,f,g\}$. It tracks the parity of edge-flips for the edges $e,f,g$. For better readability, the set is printed in a structured form, so the sets $\mu_{\{f,g\}}$ and $\widetilde{\mu}_{\{f,g\}}$ are shown in the level below.
	\begin{figure}[H]
		\begin{tikzpicture}
			\node (mu) at (0,0) {$\mu_{\{e,f,g\}} = $};
			\node (muSet) [right = 0.1cm of mu] {$\Big\{ \  \{  \mu_{\{f,g\}} , e_0  \}, \{  \widetilde{\mu}_{\{f,g\}}  , e_1 \}  \ \Big\}$};
			\node (mufg) [below  = 0.4cm of muSet ] {$\{\{ f_0, g_0 \}, \{ f_1,g_1\} \}$  };
			\node (muTildefg) [below right = 0.4cm and 0.1cm of muSet] { $\{ \{ f_0, g_1 \}, \{ f_1,g_0\} \}$ };
			\node[draw=none] (muSetSouthWest) at (2,-0.2) {};
			\node[draw=none] (muSetSouthEast) at (4.5,-0.2) {};
			\draw (mufg) -- (muSetSouthWest);
			\draw (muTildefg) -- (muSetSouthEast);
		\end{tikzpicture}	
	\end{figure}	
	\noindent
	Each of the $\mu$-objects has only one connected component that consists of two sets which are related by $\sim_E$. For example, the two elements of $\mu_{\{e,f,g\}}$ are mapped to each other whenever an even number of edges is flipped.      These two elements of $\mu_{\{e,f,g\}}$ themselves have two connected components: Clearly, $e_0$ and $\mu_{\{f,g\}}$ cannot be mapped to each other by any edge-flip. The same goes for example for $f_0$ and $g_0$. They form distinct components of the set $\{f_0,g_0\}$, while $\{ \{f_0, g_0\}, \{ f_1, g_1 \} \}$ again only has one component that is stabilised if and only if an even number of edges in $\{f,g\}$ is flipped. This pattern of alternation between sets with two components and sets with one component is typical of the parity-tracking objects constructed by the known CFI-algorithms.\\
	Now we can observe that the objects which behave analogously to CFI-gadgets are the \emph{connected components} inside the sets, not the sets in $\tc(\mu_{\{e,f,g\}})$ themselves. For example, the sets $\{f_0, g_0\}$ and $\{ \mu_{\{f,g\}}, e_0 \}$ cannot be ``flipped'' between two states, like a CFI-gadget. Their orbit with respect to edge-flips has size four. But whenever these sets occur as elements of another set, they occur together with a counterpart from their orbit, which ensures that its connected component inside the parent set again has the ``CFI-property'': It has orbit-size two and is ``flipped'' if and only if an even number of elements are flipped. Note that the number of flipped elements is always the same in every member of a connected component. For example, in the component $\{ \{f_0, g_0 \}, \{f_1, g_1 \}  \}$, it is clear that $f_0$ is flipped iff $f_1$ is flipped and $g_0$ is flipped iff $g_1$ is; so $\{f_0, g_0  \}$ and  $\{f_1, g_1 \} $ are always affected by the same number of flips, and the same is true for $\{  \mu_{\{f,g\}} , e_0  \}$ and $\{  \widetilde{\mu}_{\{f,g\}}  , e_1 \}$. Therefore, it makes sense to view the connected components inside each set as analogues of CFI-vertex-gadgets, and the elements of each/any member of a component as its ``incident edges'', whose flips affect the ``vertex-gadget''. 
\end{example}	
Generalising the example, the structural pattern of such parity-tracking objects can be formalised like this:
\begin{definition}[CFI-symmetric components and objects]
	\label{def:XOR_CFIsymmetric}
	Let $\mu \in \HF(\widehat{E})$, $x \in \tc(\mu)$, and $\gamma \subseteq x$ be a connected component of $x$. Then we say that $\gamma$ is \emph{CFI-symmetric} if $|\Orb_{E}(\gamma)| = 2$ and for each $\rho_F \in \Aut_{\CFI}(\GG)$, it holds $\rho_F(\gamma) = \gamma$ iff for each/any $y \in \gamma$, the number of flipped components of $y$, that is $|\{ \gamma' \in \Cc(y) \mid \rho_F(\gamma') \neq \gamma'   \}|$, is even.\\
	\\
	The set $\mu$ is \emph{CFI-symmetric} if the following two conditions are satisfied:
	\begin{enumerate}
		\item For each $\rho_F \in \Aut_{\CFI}(\GG)$, it holds $\rho_F(\mu) = \mu$ iff the number of flipped components of $\mu$, that is, $|\{ \gamma \in \Cc(\mu) \mid \rho_F(\gamma) \neq \gamma  \}|$, is even.
		\item For every $x \in \tc(\mu)$, every connected component $\gamma \in \Cc(x)$ is CFI-symmetric.
	\end{enumerate}	
\end{definition}

We will never deal with objects $\mu \in \HF(\widehat{E})$ in which only some, but not all connected components of sets in $\tc(\mu)$ are CFI-symmetric. Therefore, when we speak of ``flipped components of $y$'' in the above definition, and denote these as $\{ \gamma' \in \Cc(y) \mid \rho_F(\gamma') \neq \gamma' \}$, the component $\rho_F(\gamma') \neq \gamma'$ really is the ``flip'' of $\gamma'$, because the orbit of $\gamma'$ has size exactly two.\\

We still have to show that the formulation ``each/any'' in Definition \ref{def:XOR_CFIsymmetric} is indeed justified, as we already indicated in the example.
\begin{lemma}
	\label{lem:XOR_CFIsymmetricWellDefined}
	Let $\mu \in \HF(\widehat{E})$, $x \in \tc(\mu)$, and $\gamma \subseteq x$ be a connected component of $x$. For any two $y,y' \in \gamma$ and every $\rho_F \in \Aut_{\CFI}(\GG)$, it holds 
	\[
	|\{ \gamma' \in \Cc(y) \mid \rho_F(\gamma') \neq \gamma'   \}| = |\{ \gamma' \in \Cc(y') \mid \rho_F(\gamma') \neq \gamma'   \}|.
	\]
\end{lemma}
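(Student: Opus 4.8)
The plan is to reduce the statement to the existence of a single bijection between the flipped connected components of $y$ and those of $y'$. Since $y$ and $y'$ belong to the same connected component $\gamma \subseteq x$, they are $\sim_E$-equivalent; and because $\Aut_{\CFI}(\GG)$ is a group of involutions, $\sim_E$ is genuinely an equivalence relation, so there is a single edge-flip $\rho_H \in \Aut_{\CFI}(\GG)$ with $\rho_H(y) = y'$. I would then show that $\rho_H$ itself provides the desired bijection.

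First I would note that $\rho_H$, being an automorphism of $\GG$, acts as a bijection on $\HF(\widehat{E})$, so the equation $\rho_H(y) = y'$ says precisely that $\rho_H$ restricts to a bijection of the element-sets $y \to y'$. The key step is then to verify that $\rho_H$ preserves the relation $\sim_E$ in both directions: if $w, z \in y$ with $z = \rho_F(w)$ for some $\rho_F \in \Aut_{\CFI}(\GG)$, then, since $\Aut_{\CFI}(\GG) \cong \bbF_2^E$ is Abelian (the same observation exploited in Lemma~\ref{lem:XOR_stabiliserEqual}), we get $\rho_H(z) = \rho_H \rho_F(w) = \rho_F \rho_H(w)$, hence $\rho_H(w) \sim_E \rho_H(z)$; the converse follows by symmetry, every $\rho_F$ being self-inverse. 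Therefore $\rho_H$ maps each $\sim_E$-class inside $y$ onto a $\sim_E$-class inside $y'$, and so induces a bijection $\Cc(y) \to \Cc(y')$ given by $\gamma' \mapsto \rho_H(\gamma')$. (The fact that $\sim_E$-classes are officially taken inside $\tc(\mu)$ is harmless: all elements of $y$ and of $y'$ lie in $\tc(\mu)$, and after intersecting with $y$ resp.\ $y'$ only the restriction of $\sim_E$ to these sets is relevant.)

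Finally I would check that this bijection sends flipped components to flipped components: for the fixed $\rho_F$ from the statement and every $\gamma' \in \Cc(y)$ we have $\rho_F(\rho_H(\gamma')) = \rho_H(\rho_F(\gamma'))$ by commutativity, and since $\rho_H$ is injective this equals $\rho_H(\gamma')$ if and only if $\rho_F(\gamma') = \gamma'$. Thus $\rho_H$ restricts to a bijection between $\{\gamma' \in \Cc(y) \mid \rho_F(\gamma') \neq \gamma'\}$ and $\{\delta \in \Cc(y') \mid \rho_F(\delta) \neq \delta\}$, which yields the claimed equality of cardinalities. I do not expect any real obstacle here; the only place that needs care is the claim that $\rho_H$ maps $\sim_E$-classes to $\sim_E$-classes, and this is exactly where the commutativity of $\Aut_{\CFI}(\GG)$ is used.
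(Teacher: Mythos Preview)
Your proof is correct and follows essentially the same approach as the paper: pick $\rho_H \in \Aut_{\CFI}(\GG)$ with $\rho_H(y)=y'$, use commutativity of $\Aut_{\CFI}(\GG)$ to see that $\rho_H$ induces a bijection $\Cc(y)\to\Cc(y')$ which preserves the property of being flipped by $\rho_F$. You spell out in more detail why $\rho_H$ maps $\sim_E$-classes to $\sim_E$-classes, but otherwise the argument is the same.
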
	
\begin{proof}
	Fix $\rho_F \in \Aut_{\CFI}(\GG)$. Further, let $\rho \in \Aut_{\CFI}(\GG)$ be an automorphism such that $\rho(y) = y'$. This exists because $y \sim_E y'$. Thus, $\rho$ induces a bijection from $\Cc(y)$ to $\Cc(y')$, as it maps each connected component of $y$ to a connected component of $y'$.
	We show that for each component $\gamma' \in \Cc(y)$ it holds: $\rho_F(\gamma') = \gamma'$ iff $\rho_F(\rho(\gamma')) = \rho(\gamma')$. If $\rho_F(\gamma') = \gamma'$, then we have (because $\Aut_{\CFI}(\GG)$ is Abelian): $\rho_F(\rho(\gamma')) = \rho(\rho_F(\gamma')) = \rho(\gamma)$. Conversely, if $\rho_F(\rho(\gamma')) = \rho(\gamma')$, then $\gamma' = \rho^{-1}(\rho(\gamma')) = \rho^{-1}(\rho_F(\rho(\gamma'))) = \rho_F(\gamma')$, where the last equality is again by commutativity. The lemma follows because $|\Cc(y)| = |\Cc(y')|$.
\end{proof}

\begin{definition}[CFI-symmetric and super-symmetric algorithms]
	\label{def:XOR_CFIsymmetricAlgorithms}
	A $\CPT$-program $\Pi$ that decides the CFI-query on a class $\Kk$ of base graphs is called \emph{CFI-symmetric} if it activates a CFI-symmetric h.f.\ set $\mu \in \HF(\widehat{E})$ on every input $\GG^S$ over a base graph $G \in \Kk$ (and this set $\mu$ is necessary for deciding the CFI-query). Similarly, $\Pi$ is called \emph{super-symmetric} if it necessarily activates a super-symmetric set. When $\Pi$ is both CFI- and super-symmetric, then the relevant h.f.\ set it activates satisfies both properties at the same time.
\end{definition}	
The condition that $\mu$ is \emph{necessary} to decide the query is supposed to mean that $\Pi$ could not succeed without the activation of $\mu$. This makes sense in light of Theorem \ref{thm:DRR_supportLowerBoundCFI}, which says that the activation of a h.f.\ set with \emph{large support} is necessary to define the CFI-query in CPT. Every currently known CPT-algorithm for the CFI-query is CFI-symmetric in this sense. We do not explicitly prove this here because this would also require to include a survey on the algorithms from \cite{dawar2008} and \cite{pakusaSchalthoeferSelman} in this already long article. However, it is not too difficult to verify that the h.f.\ sets defined in those two papers exactly satisfy the design pattern that we call CFI-symmetry here.

\section{Translating hereditarily finite sets to XOR-circuits}
\label{sec:XOR_circuits}
An XOR-circuit is a connected directed acyclic graph $C = (V_C,E_C)$ with a unique designated root $r$. Its internal nodes are understood as XOR-gates and its leafs correspond to the input gates of the circuit. If $(g,h) \in E_C$, then the output of gate $h$ is an input of gate $g$. Every XOR-circuit computes the Boolean XOR-function over a subset of its input bits.\\
Such circuits are the combinatorial objects that we will use to capture the structure of the CFI-symmetric h.f.\ sets in $\HF(\widehat{E})$. When we consider these h.f.\ sets, we always view them as objects over a given CFI-structure $\GG^S$ on some base graph $G = (V,E)$. Defining them in CPT requires to preserve the symmetries of the input structure $\GG^S$, so in particular, the automorphisms $\Aut(G)$ of the base graph. This symmetry will be reflected in the symmetry of the corresponding XOR-circuit. Therefore, we have to formalise how the automorphisms of a graph $G$ act on XOR-circuits:\\

We say that an XOR-circuit $C$ is a circuit \emph{over a graph} $G = (V,E)$, if the input gates of $C$ are labelled with the edges in $E$. More precisely, let $L \subseteq V_C$ be the leafs of $C$. There is an injective labelling function $\ell : L \lra E$ that relates the input gates with edges of $G$. To speak about the semantics of the circuit, we introduce a set of formal propositional variables $\Vv(G) := \{X_e \mid e \in E\}$. For every input gate $g \in L$, the input bit of this gate is the value of the variable $X_{\ell(g)}$.\\  

Since every internal gate is an XOR-gate, the function computed by it is the XOR over a subset of $\Vv(G)$. For our purposes, this subset is the main interesting property of a gate, and we call it $\Xx(g)$.
Formally, if $g \in L$, then $\Xx(g) := \{\ell(g)\} \subseteq E$.
If $g$ is an internal gate, then 
$
\Xx(g) := \bigtriangleup_{h \in gE_C} \Xx(h),
$
that is, the symmetric difference over the $\Xx(h)$ for all children of $g$. In other words, $\Xx(g) \subseteq E$ is precisely the set of edges in $E$ such that $g$ computes the Boolean function $\bigoplus_{e \in \Xx(g)} X_e$. The function computed by the circuit $C$ is the XOR over $\Xx(r) \subseteq E$, where $r$ is the root of $C$. An alternative way to think about this is to say that for any gate $g$, $\Xx(g)$ is the set of input bits to which the function computed by $g$ is \emph{sensitive}, that is, flipping a single input bit of the circuit changes the value of $g$ if and only if the flipped edge is in $\Xx(g)$.

\subsection{Symmetries of circuits}
A circuit $C$ over a graph $G$ is subject to the action of the automorphism group $\Aut(G) \leq \Sym(V)$. Any $\pi \in \Aut(G)$ changes the labels of the input gates in $L$. So let $g \in L$ with $\ell(g) = e = \{u,v\} \in E$. Then $\pi(g)$ is an input gate with $\ell(\pi(g)) = \pi(e) = \{\pi(u), \pi(v)\} \in E$. This extends to subcircuits of $C$ and to $C$ itself, so $\pi(C)$ is just $C$ with the input labels modified accordingly.
We say that $\pi$ \emph{extends to an automorphism} of $C$ if there exists a bijection $\sigma : V_C \lra V_{C}$ that is an automorphism of the graph $(V_C,E_C)$ and satisfies for each input gate $g \in V_C$: $\ell(\sigma(g)) = \pi(\ell(g))$. We write
\[
\Stab_G(C)  = \{ \pi \in \Aut(G) \mid \pi \text{ extends to an automorphism of } C \} \leq \Aut(G),
\]
and: $\Orb_G(C) = \{ \pi(C) \mid \pi \in \Aut(G) \}$.

\subsection{The parameter fan-in dimension}
\label{sec:XOR_fanInDimension}
The XOR-circuits we will construct from CFI-symmetric h.f.\ sets will satisfy a certain fan-in bound on the gates. However, this bound will not be -- as it is more common -- on the number of incoming wires of a gate but rather, on the ``linear algebraic complexity of incoming information'', so to say.
The subsets of $E$ form a Boolean vector space together with the symmetric difference operation. This space is isomorphic to $\bbF_2^E$.\\

With each internal gate $g$ of an XOR-circuit $C = (V_C,E_C)$, we can associate a Boolean matrix $M(g) \in \bbF_2^{gE_C \times E}$, that we call the \emph{gate matrix}: The row at index $h \in gE_C$ is defined as the characteristic vector of $\Xx(h) \subseteq E$, transposed, i.e. $M(g)_{h-} = \chi(\Xx(h))^T$. Here and in what follows, we write $\chi$ for the bijection from $\Pp(E)$ to $\bbF_2^E$ that associates with each subset of $E$ its characteristic Boolean vector. If $g$ is an input gate, then we define $M(g) \in \bbF_2^{[1] \times E}$ as the one-row matrix whose only row is $\chi(\Xx(g)))^T = \chi(\{\ell(g)\})^T$.
\begin{definition}[Fan-in dimension]
	\label{def:XOR_fanInDimension}
	The \emph{fan-in dimension} of a gate $g$ is the dimension of the row-space of $M(g)$, or equivalently, $\rk(M(g))$.\\
	The fan-in dimension of $g$, restricted to the space $\Aut_{\CFI}(\GG^S)$ (also called the \emph{restricted fan-in dimension}) is
	\[
	\dim (M(g) \cdot \Aut_{\CFI}(\GG^S)) = \dim \{ M(g) \cdot \mathbf{v} \mid \mathbf{v} \in \Aut_{\CFI}(\GG^S) \}.
	\]
	The (restricted) fan-in dimension of the circuit $C$ is the maximum (restricted) fan-in dimension of any of the gates in $C$.
\end{definition}
Thus, the fan-in dimension of a gate $g$ is the dimension of the subspace of $\bbF_2^E$ that is spanned by the characteristic vectors $\chi(\Xx(h)) \in \bbF_2^E$, for all children $h$ of $g$. One interpretation of $\rk(M(g))$ is that it tells us how many different patterns of incoming bits can occur at gate $g$: When we consider all $2^{|E|}$ possible inputs of the circuit, the number of distinct binary strings in $\{0,1\}^{gE_C}$ that can arise as the values of the children of $g$ is $2^{\rk(M(g))}$.\\
Sometimes we will also need the restricted fan-in dimension.
This describes how many different input patterns of the gate can occur if we only allow circuit input vectors $\mathbf{x} \in \bbF_2^E$ where the $1$-entries in $\mathbf{x}$ form a set of cycles in the base graph $G$ (i.e.\ input vectors from $\Aut_{\CFI}(\GG^S)$).\\

These notions of fan-in dimension are unusual circuit parameters but as we will show, they nicely capture the orbit size of the original h.f.\ set with respect to the groups of edge flips $\Aut_{\CFI}(\GG)$ and $\Aut_{\CFI}(\GG^S)$. In total, the benefit of the circuit-representation of h.f.\ objects over CFI-graphs is that this simplifies the effect of the complicated automorphism group $\Aut(\GG^S) \leq \Aut_{\CFI}(\GG) \rtimes \Aut(G)$: In order to show that the CFI-query is not in CPT, we would ultimately like to prove that certain h.f.\ sets necessarily have super-polynomial orbits w.r.t.\ $\Aut(\GG^S)$. By translating these objects into circuits, we can express the restrictions imposed by $\Aut_{\CFI}(\GG)$ in terms of fan-in dimension, and are left with the task of analysing the orbit size w.r.t.\ $\Aut(G)$.

\subsection{The circuit construction}
The following theorem is the core of the proof of Theorem \ref{thm:mainInformal}. We will prove it first and then explain in the next section which further steps are needed to conclude Theorem \ref{thm:mainInformal} from it.
\begin{theorem}
	\label{thm:XOR_mainCircuitTheorem}
	Fix a family $(G_n)_{n \in \bbN}$ of base graphs. 
	For every $n \in \bbN$, let $\GG_n^S$ be a CFI-graph over $G_n = (V_n,E_n)$ and let $\mu_n \in \HF(\widehat{E}_n)$ be a \emph{CFI-symmetric} h.f.\ set that is $\CPT$-definable on input $\GG^S_n$ (by the same $\CPT$-program for the whole family of graphs). Then for every $n \in \bbN$, there exists an XOR-circuit $C(\mu_n) = (V_C, E_C)$ over $G_n$ which satisfies:
	\begin{enumerate}
		\item The size of the circuit, i.e.\ $|V_C|$, is polynomial in $|\GG^S_n|$.
		\item The orbit-size $|\Orb_G(C(\mu_n))|$ of the circuit is polynomial in $|\GG^S_n|$. 
		\item $C(\mu_n)$ is \emph{sensitive} to an edge $e \in E_n$ if and only if $e \in \sup_{\CFI}(\mu_n)$.
		%\item TODO: For each gate, only a logarithmic number of children contributes to the true support that is somehow a subset of the sensitivity set?
		\item The \emph{fan-in dimension} of $C(\mu)$ is $\Oo(\log(\maxOrb_E(\mu)))$. The fan-in dimension \emph{restricted to the space} $\Aut_{\CFI}(\GG_n^S)$ is $\Oo(\log(\maxOrb_\CFI(\mu)))$. 
	\end{enumerate}
\end{theorem}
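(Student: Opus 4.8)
The plan is to construct $C(\mu_n)$ by a recursion on the $\in$-structure of $\tc(\mu_n)$, building one gate per \emph{connected component} of each set in $\tc(\mu_n)$ together with one gate per atom $e_i \in \widehat{E}_n$ occurring in $\mu_n$. The guiding intuition is the one from Example~\ref{ex:XOR_CFIsymmetricSet}: each CFI-symmetric connected component $\gamma$ behaves exactly like a CFI-vertex-gadget, whose ``flip state'' is the parity of the flips of its elements (which are themselves connected components one level down). This parity is precisely what an XOR-gate computes. So the recursion should be: for an atom $e_i$, create an input gate $g$ with $\ell(g) = e$ (hence $\Xx(g) = \{e\}$, matching the fact that $e_i$ is flipped by $\rho_F$ iff $e \in F$); for a connected component $\gamma \in \Cc(x)$, create an XOR-gate whose children are the gates associated to the components $\Cc(y)$ of some/any representative $y \in \gamma$ (well-defined up to renaming by Lemma~\ref{lem:XOR_CFIsymmetricWellDefined}, since $\Aut_{\CFI}(\GG)$ is Abelian and acts transitively on $\gamma$). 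The root $r$ is the gate for $\mu_n$ itself, which by condition~1 of Definition~\ref{def:XOR_CFIsymmetric} is flipped iff an even number of its components are flipped — this is exactly the semantics $\Xx(r) = \bigtriangleup_{\gamma \in \Cc(\mu_n)}\Xx(\gamma)$. One must check this is well-defined when the same component (up to $\sim_E$) is reachable along several paths, and that the DAG is acyclic (it is, since we only add edges from a set to the components of its elements, i.e.\ down the $\in$-hierarchy).

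For the four properties: \emph{Size} (item 1) follows because the number of gates is at most $|\tc(\mu_n)| \cdot \max_x|\Cc(x)| + |\widehat{E}_n| \le |\tc(\mu_n)|^2 + 2|E_n|$, which is polynomial since $\mu_n$ is CPT-definable. \emph{Orbit size} (item 2): an automorphism $\pi \in \Aut(G_n)$ sends $\mu_n$ to $\pi\mu_n \in \Orb_{G}(\mu_n)$, and a CPT-definable object has polynomial $\Aut(\GG^S)$-orbit, hence polynomial $\Orb_G$-orbit by Proposition~\ref{prop:orbitStabiliser} and the bound on $|\tc(\mu_n)|$; since the circuit is built canonically from $\mu_n$, $\pi$ extends to an automorphism of $C$ whenever $\pi$ fixes $\mu_n$, so $|\Orb_G(C(\mu_n))| \le |\Orb_G(\mu_n)|$, which is polynomial. \emph{Sensitivity} (item 3): $C$ is sensitive to $e$ iff $e \in \Xx(r)$; one shows $\Xx(r)$ equals exactly the set of edges $e$ such that $\rho_{\{e\}}(\mu_n) \ne \mu_n$ — more precisely, $\rho_F(\mu_n) = \mu_n$ iff $\bigoplus_{e \in F \cap \Xx(r)} 1 = 0$, i.e.\ iff $F \cap \Xx(r) = \emptyset$ is replaced by ``$|F \cap \Xx(r)|$ even''; combined with Definition~\ref{def:CFI_support} this gives $\Xx(r) = \sup_{\CFI}(\mu_n)$ (using Lemma~\ref{lem:XOR_minimalCFIsupports} for minimality). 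This identity ``$\Xx(g)$ = flip-sensitive edges of the component $g$'' should be the recursive invariant proved alongside the construction.

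\emph{Fan-in dimension} (item 4) is where the real work lies and is the main obstacle. For an internal gate $g$ associated to component $\gamma \in \Cc(x)$ with children the components of a representative $y$, the gate matrix $M(g)$ has rows $\chi(\Xx(h))^T$ for $h$ ranging over $\Cc(y)$; so $\rk(M(g))$ is the dimension of $\mathrm{span}\{\chi(\Xx(h)) : h \in \Cc(y)\} \subseteq \bbF_2^{E_n}$. The claim is that this dimension is $\Oo(\log \maxOrb_E(\mu_n))$. The key observation is that $\Xx(h)$ records exactly which edge-flips move the component $h$, i.e.\ $\Stab_E(h)$ is the annihilator (perp) of $\chi(\Xx(h))$... but this is for a \emph{single} component; for $\gamma$ (an orbit of size $2$) the relevant subspace is cut out by $\Xx(\gamma)$, and the map $F \mapsto (\text{flip pattern of }\Cc(y)\text{ under }\rho_F)$ is a linear map $\bbF_2^{E_n} \to \bbF_2^{\Cc(y)}$ whose matrix is exactly $M(g)^T$, so its rank is $\rk(M(g))$, and its image has size $2^{\rk(M(g))}$. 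But the number of distinct flip-patterns of the children of $g$ is at most the number of distinct tuples $(\rho_F h)_{h\in\Cc(y)}$, which is bounded by $\prod_h |\Orb_E(h)| \le (\maxOrb_E(\mu_n))^{|\Cc(y)|}$ — which is too weak. The fix is that by CFI-symmetry each $h$ has $|\Orb_E(h)| = 2$, so the flip pattern lives in $\{0,1\}^{\Cc(y)}$ and the honest bound on the image size is just $|\Orb_E(y)| \le \maxOrb_E(\mu_n)$: indeed two flips $\rho_F, \rho_{F'}$ induce the same pattern on $\Cc(y)$ iff they send $y$ to the same element (since $y$ is determined by its component decomposition together with their states), so $2^{\rk(M(g))} = |\text{image}| = |\Orb_E(y)| \le \maxOrb_E(\mu_n)$, giving $\rk(M(g)) \le \log_2 \maxOrb_E(\mu_n)$. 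The restricted version is identical, replacing $\Aut_{\CFI}(\GG)$ by $\Aut_{\CFI}(\GG^S_n)$ and $\maxOrb_E$ by $\maxOrb_\CFI$ throughout, noting that restricting the source of the linear map to the subspace $\Aut_{\CFI}(\GG^S_n) \le \bbF_2^{E_n}$ only shrinks the image. The delicate point to get right is the ``$y$ is determined by the states of its components'' step — one needs that within a CFI-symmetric set, distinct elements of a component genuinely differ in the flip-state of at least one sub-component, which follows from the components themselves being honest two-element orbits and the set being built without ``redundant'' duplication; this should be extractable from Definition~\ref{def:XOR_CFIsymmetric} with a short argument, and I expect it to be the single subtlest point in the whole proof.
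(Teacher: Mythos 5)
Your core intuition is exactly the paper's: CFI-symmetric connected components behave like XOR-gates, the gate matrix is the matrix of flip-sensitivity vectors, and the fan-in dimension is controlled via the rank-nullity theorem applied to the flip-pattern map. Your rank bound $\rk(M(g)) = \log|\Orb_E(y)| \le \log \maxOrb_E(\mu)$, obtained by counting the distinct patterns $\rho_F$ induces on $\Cc(y)$ (the "image side" of rank-nullity), is equivalent to the paper's kernel identity $\Ker(M[x]) = \Stab_E(x)$ plus the rank theorem; both are correct and amount to the same computation. Your worry at the end ("$y$ is determined by the states of its components") is fine and not delicate: a flip $\rho_F$ that fixes each component $[z]\cap y$ setwise fixes $y$ itself since $y = \bigcup_{\gamma' \in \Cc(y)}\gamma'$, and conversely if $\rho_F(y) = y$ then $\rho_F([z]\cap y) = [z]\cap y$ because $[z]$ is $\Aut_{\CFI}(\GG)$-invariant.

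There is, however, one real gap, and it sits exactly where you flag uncertainty. You take the gates to be connected components $\gamma \in \Cc(x)$ and wire $\gamma$ to "the gates for $\Cc(y)$ of some/any representative $y \in \gamma$". But Lemma \ref{lem:XOR_CFIsymmetricWellDefined} only shows that the \emph{number} of flipped components is independent of the representative $y$; it does \emph{not} identify the gate for $\gamma' \in \Cc(y)$ with the gate for $\rho_F(\gamma') \in \Cc(y')$ when $y' = \rho_F(y)$ — these are different components sitting inside different sets, hence different gates of your circuit. So the wiring genuinely depends on the choice of $y$, and any \emph{canonical} choice (e.g.\ lexicographic) would break the claim that $\Stab_G(\mu) \le \Stab_G(C(\mu))$, which you rely on to get item~2. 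The paper resolves this by a coarser quotient: its gates are the $\sim_E$-classes $[x]$ of $\tc(\mu)$ (not components), and the wire relation $([x],[y]) \in E_C \Leftrightarrow \exists y' \in [y]: y' \in x$ is shown in Lemma \ref{lem:XOR_inRelation} to be independent of the representative of $[x]$ — that lemma, not Lemma \ref{lem:XOR_CFIsymmetricWellDefined}, is the missing ingredient. The component-structure then reappears only at the level of rows of the gate matrix $M[x]$ (one row per child $[y]$, with kernel $\Stab_E([y]\cap x)$), which is where your flip-pattern argument correctly lives. So: replace "gate $=$ component" with "gate $=$ $\sim_E$-class" and invoke Lemma \ref{lem:XOR_inRelation} to make the wiring canonical; the rest of your argument then goes through essentially unchanged and matches the paper.
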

We now provide the construction of the circuit and prove several lemmas from which it follows that the circuit has the desired properties.
We fix $\mu \in \HF(\widehat{E})$ and denote by $C(\mu) = (V_C,E_C)$ the corresponding XOR-circuit that we are going to define. The gates of the circuit are the $\sim_E$-equivalence classes of the objects in $\tc(\mu)$. Recall that $\sim_E$-equivalence is the same-orbit-relation with respect to the edge-flips $\Aut_{\CFI}(\GG)$. Whenever we write $[x]$ for an $x \in \tc(\mu)$, we formally mean $[x] = \{ \rho_F(x) \mid \rho_F \in \Aut_{\CFI}(\GG) \text{ such that } \rho_F(x) \in \tc(\mu)  \}$. The circuit $C(\mu)$ is defined as follows:
\begin{itemize}
	\item $V_C := \tc(\mu)_{\sim_E} =  \{ [x] \mid x \in \tc(\mu)  \}$.
	\item $E_C := \{ ([x], [y]) \mid \text{there exists } y' \in [y] \text{ such that } y' \in x  \}$.
	\item By definition, the leafs of $C(\mu)$ correspond to $\sim_E$-classes of atoms in $\tc(\mu)$. The set of atoms is $\widehat{E}$, so any leaf of $C$ has the form $[e_0]$, for some $e \in E$. We let $\ell([e_0]) := e$.
	\item The root $r$ of $C(\mu)$ is $[\mu]$.
\end{itemize} 
In other words, the circuit is just the DAG $(\tc(\mu),\in)$, with the $\sim_E$-equivalence factored out.\\

First of all, we have to check that the set of edges $E_C$ can indeed be defined in this way, i.e.\ that whether or not there is an $E_C$-edge between $[x]$ and $[y]$ is independent of the choice of the representative of $[x]$ in the definition. In the following lemma, let $\in^\mu$ denote the element relation on $\tc(\mu)$ within the h.f.\ set $\mu$.
\begin{lemma}
	\label{lem:XOR_inRelation}
	Let $[x], [y] \subseteq \tc(\mu)$ be two $\sim_E$ classes. If there exists $y' \in [y]$ such that $y' \in^{\mu} x$, then for every $x' \in [x]$ there is a $y' \in [y]$ such that $y' \in^{\mu} x'$.
\end{lemma}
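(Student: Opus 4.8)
The statement is that membership of a $\sim_E$-class $[y]$ in a $\sim_E$-class $[x]$ (in the sense of having some representative of $[y]$ lying inside some representative of $[x]$) does not depend on which representative of $[x]$ we pick. The plan is to exploit that any two elements $x, x'$ of the same class $[x]$ differ by an edge-flip automorphism, and that edge-flip automorphisms respect the membership relation on $\tc(\mu)$.

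**The key step.**

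Concretely, suppose $y' \in [y]$ with $y' \in^\mu x$, and let $x' \in [x]$ be arbitrary. By definition of $\sim_E$, there is some $\rho_F \in \Aut_{\CFI}(\GG)$ with $\rho_F(x) = x'$ (and $x' \in \tc(\mu)$). Since $\rho_F$ is an automorphism of the DAG $(\HF(\widehat{E}), \in)$ — it just renames atoms — it maps the element $y'$ of $x$ to an element $\rho_F(y')$ of $x' = \rho_F(x)$; that is, $\rho_F(y') \in \rho_F(x) = x'$. Moreover $\rho_F(y') \in [y]$, because $y' \in [y]$ means $y' = \rho_G(y)$ for some $\rho_G$ with $y \in \tc(\mu)$, and then $\rho_F(y') = \rho_F \rho_G(y)$ is another image of $y$ under an edge-flip; we also have $\rho_F(y') \in \tc(\mu)$ since $\rho_F(y') \in x' \subseteq \tc(\mu)$. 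Hence $\rho_F(y')$ is the desired representative of $[y]$ lying in $x'$. Note that we are careful to write $\in^\mu$, i.e.\ membership witnessed within $\tc(\mu)$ — the point being that $\rho_F$ need not preserve $\tc(\mu)$ globally, but whenever both $x$ and $x'=\rho_F(x)$ are in $\tc(\mu)$, the witness $\rho_F(y')$ automatically is too, since it is an element of $x'$.

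**The main subtlety.**

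The only thing requiring care is the bookkeeping around ``$\in^\mu$'' versus ``$\in$''. An edge-flip $\rho_F$ applied to an arbitrary element of $\tc(\mu)$ may leave $\tc(\mu)$, so one cannot blindly say $\rho_F$ is an automorphism \emph{of} the finite DAG $(\tc(\mu), \in)$; it is an automorphism of the ambient $(\HF(\widehat{E}), \in)$. The resolution, as indicated above, is that we only ever need $\rho_F$ to carry one specific membership pair $(y', x)$ with $y' \in x$ and $x, x' \in \tc(\mu)$ to another such pair, and since $\rho_F(y') \in \rho_F(x) = x' \subseteq \tc(\mu)$, transitivity of $\tc(\mu)$ keeps the witness inside $\tc(\mu)$. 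So there is no real obstacle here — the lemma is essentially immediate once one observes that edge-flips act as $\in$-preserving bijections on the whole hereditarily finite universe — but stating it cleanly requires distinguishing the local membership relation $\in^\mu$ from the raw $\in$. This is also exactly what justifies, in the surrounding construction, that $E_C$ is well-defined independently of representatives, and an entirely symmetric argument (swapping the roles of $[x]$ and $[y]$, or rather reading the construction the other way) shows the companion fact that the choice of representative of $[y]$ does not matter either.
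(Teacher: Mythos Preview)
Your proof is correct and follows essentially the same approach as the paper: pick $\rho_F$ with $\rho_F(x)=x'$, push $y'$ forward to $\rho_F(y')\in x'$, and observe that $\rho_F(y')\in\tc(\mu)$ by transitivity since $x'\in\tc(\mu)$, so $\rho_F(y')\in[y]$. Your extended discussion of the $\in^\mu$ versus $\in$ subtlety is more explicit than the paper's, but the argument is the same.
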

\begin{proof}
	Let $y' \in [y]$ such that $y' \in^{\mu} x$. Now let $x' \in [x]$ be arbitrary, and let $\rho_F \in \Aut_{\CFI}(\GG)$ be such that $\rho_F(x) = x'$. Then $\rho_F(y') \in^\mu \rho_F(x)$  because the operation is applied element-wise. The set $\rho_F(x) \in [x]$ is an element of $\tc(\mu)$ because $[x]$ denotes the $\sim$-class inside $\tc(\mu)$.
	Therefore, we also have $\rho_F(y') \in \tc(\mu)$, and thus $\rho_F(y') \in [y]$. This proves the lemma.   
\end{proof}
\textbf{ Property 2} from Theorem \ref{thm:XOR_mainCircuitTheorem} states that the $\Aut(G)$-orbit of $C(\mu)$ is sufficiently small. We prove this by showing that this orbit cannot be larger than the $\Aut(G)$-orbit of the h.f.\ set $\mu$; and for $\mu$, we know that its orbit is polynomial in $|\GG^S|$, because it is CPT-definable by the assumptions of Theorem \ref{thm:XOR_mainCircuitTheorem}. 

\begin{lemma}
	\label{lem:XOR_stabiliserOfCircuit}
	Every $\pi \in \Stab_G(\mu) \leq \Sym(V)$ extends to an automorphism of the circuit $C(\mu)$, that is:
	$\Stab_G(\mu) \leq \Stab_G(C(\mu))$.
\end{lemma}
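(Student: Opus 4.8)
The plan is to exhibit, for each $\pi \in \Stab_G(\mu)$, an explicit automorphism $\sigma$ of the labelled DAG $C(\mu)$ extending $\pi$. The only natural candidate is the map induced by $\pi$ on $\sim_E$-classes, namely $\sigma([x]) := [\pi x]$ for $x \in \tc(\mu)$, where $\pi x$ denotes the usual action of $\pi$ on $\HF(\widehat{E})$ (so $\pi e_i = \pi(e)_i$, extended elementwise). Since $\pi \in \Stab_G(\mu)$ means precisely that $(\rho_\emptyset,\pi)$ restricts to a bijection of $\tc(\mu)$ onto itself preserving $\in$, the class $[\pi x]$ is again a vertex of $C(\mu)$, so $\sigma$ maps $V_C$ into $V_C$. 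All of the subsequent bookkeeping will rest on the fact that $\Aut(G)$ normalises $\Aut_{\CFI}(\GG)$, with $\pi \rho_F \pi^{-1} = \rho_{\pi(F)}$ for every $F \subseteq E$; this is immediate from the description of $\rho_e$ and of the action of $\pi$ on the atoms $\widehat{E}$ given in Section \ref{sec:CFI_automorphisms}.

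First I would check that $\sigma$ is well defined, i.e.\ independent of the chosen representative. If $x' = \rho_F(x)$ with $x,x' \in \tc(\mu)$, then $\pi x' = \pi \rho_F(x) = \rho_{\pi(F)}(\pi x)$, and $\pi x, \pi x' \in \tc(\mu)$ because $\pi$ preserves $\tc(\mu)$; hence $\pi x \sim_E \pi x'$ inside $\tc(\mu)$, i.e.\ $[\pi x] = [\pi x']$. Running the same argument with $\pi^{-1} \in \Stab_G(\mu)$ shows that $[x] \mapsto [\pi^{-1}x]$ is a well-defined two-sided inverse of $\sigma$, so $\sigma$ is a bijection of $V_C$.

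Next I would verify that $\sigma$ preserves the edge relation $E_C$. Suppose $([x],[y]) \in E_C$, and pick $y' \in [y]$ with $y' \in^{\mu} x$. Then $\pi y' \in^{\mu} \pi x$ since the action is elementwise and preserves $\tc(\mu)$; and writing $y' = \rho_F(y)$ gives $\pi y' = \rho_{\pi(F)}(\pi y)$, so $\pi y' \in [\pi y]$. Hence $([\pi x],[\pi y]) \in E_C$, i.e.\ $\sigma(E_C) \subseteq E_C$; as $\sigma$ is a bijection of a finite vertex set, this forces $\sigma(E_C) = E_C$, so $\sigma$ is a graph automorphism of $(V_C,E_C)$. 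Finally, for a leaf we have $\sigma([e_0]) = [\pi e_0] = [\pi(e)_0]$, hence $\ell(\sigma([e_0])) = \pi(e) = \pi(\ell([e_0]))$, which is exactly the labelling compatibility demanded of a circuit automorphism extending $\pi$; and $\sigma([\mu]) = [\pi\mu] = [\mu]$, so the root is fixed. Thus $\pi$ extends to an automorphism of $C(\mu)$, and $\Stab_G(\mu) \leq \Stab_G(C(\mu))$.

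I do not expect a serious obstacle here; it is essentially a diagram chase. The one point requiring care is that every set-theoretic manipulation ($y' \in^{\mu} x$, $x' = \rho_F(x)$, and so on) takes place \emph{inside} $\tc(\mu)$, so each time $\pi$ or $\rho_{\pi(F)}$ is applied one must re-confirm that the resulting object still lies in $\tc(\mu)$ — which holds precisely because $\pi\mu = \mu$. Lemma \ref{lem:XOR_inRelation} already isolates the analogous well-definedness bookkeeping for $E_C$ itself, and the present argument follows the same pattern.
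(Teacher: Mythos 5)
Your proof is correct and follows the same route as the paper's: define the map on $\sim_E$-classes by applying $\pi$ to representatives, then verify well-definedness, bijectivity, preservation of wires, and label compatibility at the leaves. The only difference is cosmetic: you make the conjugation identity $\pi\rho_F\pi^{-1} = \rho_{\pi(F)}$ explicit to justify well-definedness, whereas the paper tucks this under the terse remark "($\sigma$ is an automorphism of $\mu$)" — your version is arguably cleaner on that point, but the argument is the same.
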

\begin{proof}
 Let $\pi \in \Stab_G(\mu) \leq \Aut(G)$. That is, $\pi$ extends to an automorphism $\sigma : \tc(\mu) \lra \tc(\mu)$ of teh DAG $(\tc(\mu),\in^\mu)$. We define $\sigma' : V_C \lra V_C$ by letting $\sigma'([x]) = [\sigma(x)]$. This is well-defined because $x \sim_E x'$ if and only if $\sigma(x) \sim_E \sigma(x')$ ($\sigma$ is an automorphism of $\mu$). Now we check that $\sigma'$ is an automorphism of $C(\mu)$ induced by $\pi$.\\
Clearly, $\sigma'$ is a bijection on $V_C$, i.e.\ on the set of $\sim_E$-classes of $\tc(\mu)$: It is surjective because $\sigma$ is, and then it is already a bijection because it maps $V_C$ to $V_C$.  Let $[e_0] \in V_C$ be an input gate. Then $\ell([e_0]) = e$. We have $\sigma(e_0) = \pi(e)_0$. So $\sigma'([e_0]) = [\pi(e)_0]$. Hence, $\ell(\sigma'([e_0])) = \pi(\ell([e_0]))$, as desired.\\
Now let $([x],[y]) \in E_C$. Then there exists a $y' \in [y]$ such that $y' \in^\mu x$. Then because $\sigma$ is an automorphism, it also holds $\sigma(y') \in^\mu \sigma(x)$. Therefore, $([\sigma(x)], [\sigma(y')]) \in E_C$. It holds $([\sigma(x)], [\sigma(y')]) =  (\sigma'[x], \sigma'[y']) = (\sigma'[x], \sigma'[y])$, so $(\sigma'[x], \sigma'[y]) \in E_C$. 
In total, this means that $\pi \in \Aut(G)$ extends to the automorphism $\sigma'$ of the circuit $C(\mu)$.
\end{proof}

\begin{corollary}
	\label{cor:XOR_orbitOfCircuit}
	$|\Orb_G(C(\mu))| \leq |\Orb_G(\mu)|.$
\end{corollary}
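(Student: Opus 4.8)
The plan is to derive this directly from Lemma~\ref{lem:XOR_stabiliserOfCircuit} together with the Orbit--Stabiliser Theorem. Recall that $\Aut(G)$ acts on h.f.\ sets over $\GG^S$ (via $\pi \mapsto (\rho_\emptyset,\pi)$) and, by the definition in Section~\ref{sec:XOR_circuits}, also on XOR-circuits over $G$; in both cases the orbit of an object has size equal to the index of its stabiliser. So applying Orbit--Stabiliser (Proposition~\ref{prop:orbitStabiliser}, and its evident analogue for the action on circuits) gives
\[
|\Orb_G(\mu)| = [\Aut(G) : \Stab_G(\mu)], \qquad |\Orb_G(C(\mu))| = [\Aut(G) : \Stab_G(C(\mu))].
\]

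First I would invoke Lemma~\ref{lem:XOR_stabiliserOfCircuit}, which gives $\Stab_G(\mu) \leq \Stab_G(C(\mu))$. Since both are subgroups of $\Aut(G)$ and the index is anti-monotone under subgroup inclusion (if $H \leq K \leq \Aut(G)$ then $[\Aut(G):H] = [\Aut(G):K]\cdot[K:H] \geq [\Aut(G):K]$), we obtain $[\Aut(G):\Stab_G(C(\mu))] \leq [\Aut(G):\Stab_G(\mu)]$, i.e.\ $|\Orb_G(C(\mu))| \leq |\Orb_G(\mu)|$, which is the claim. There is essentially no obstacle here: the entire content of the corollary is already in Lemma~\ref{lem:XOR_stabiliserOfCircuit}, and the only thing worth a remark is that the Orbit--Stabiliser count applies verbatim to the action of $\Aut(G)$ on circuits, not merely on h.f.\ sets.

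Alternatively, one can argue purely combinatorially without invoking group-index arithmetic: the assignment $\pi(\mu) \mapsto \pi(C(\mu))$ is well-defined as a map $\Orb_G(\mu) \lra \Orb_G(C(\mu))$, because if $\pi_1(\mu) = \pi_2(\mu)$ then $\pi_2^{-1}\pi_1 \in \Stab_G(\mu) \leq \Stab_G(C(\mu))$ by Lemma~\ref{lem:XOR_stabiliserOfCircuit}, hence $\pi_1(C(\mu)) = \pi_2(C(\mu))$; and it is clearly surjective. A set admitting a surjection onto another cannot be smaller, so $|\Orb_G(\mu)| \geq |\Orb_G(C(\mu))|$.
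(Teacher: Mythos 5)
Your first argument is exactly the paper's proof: combine Lemma~\ref{lem:XOR_stabiliserOfCircuit} with the Orbit--Stabiliser Theorem for the action of $\Aut(G)$ on h.f.\ sets and on circuits. The alternative combinatorial surjection argument is a nice bonus but not needed; the main route matches the paper.
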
	
\begin{proof}
	Follows from Lemma \ref{lem:XOR_stabiliserOfCircuit} together with the Orbit-Stabiliser Theorem, which says that $|\Orb_G(C(\mu))| = | \Aut(G)  | / |\Stab_G(C(\mu))|$ and $|\Orb_G(\mu)| = | \Aut(G)  | / |\Stab_G(\mu)|$.
\end{proof}

Next, we would like to analyse the \emph{fan-in dimension} of $C(\mu)$, and the connection between $C(\mu)$ and $\sup_{\CFI}(\mu)$. The key for this is to establish a connection between the stabilisers $\Stab_E(x)$, for all $x \in \tc(\mu)$, and the kernels of the corresponding \emph{gate matrices}. For the definition of these matrices, we refer back to Section \ref{sec:XOR_fanInDimension}.
We start with the following observation that relates the stabilisers of objects in $\tc(\mu)$ with the stabilisers of their elements. 
\begin{proposition}
	\label{prop:XOR_stabiliserOfSet}
	For each $x \in \tc(\mu)$, it holds
	$
	\Stab_E(x) = \bigcap_{y \in x}  \Stab_E( [y] \cap x  ) =  \bigcap_{\gamma \in \Cc(x)} \Stab_E(\gamma).
	$
\end{proposition}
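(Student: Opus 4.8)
The plan is to prove the two stated equalities separately. The second equality, $\bigcap_{y \in x}\Stab_E([y]\cap x) = \bigcap_{\gamma \in \Cc(x)}\Stab_E(\gamma)$, is essentially a bookkeeping identity: by definition $\Cc(x) = \{([z]_{\sim_E}\cap x) \mid z\in x\}$, so the family $\{[y]\cap x \mid y\in x\}$ is literally the family $\Cc(x)$, possibly with repetitions (distinct $y,y'$ in the same $\sim_E$-class give the same set $[y]\cap x = [y']\cap x$). Taking the intersection over a multiset is the same as over the underlying set, so the two right-hand sides coincide immediately.

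The substance is the first equality $\Stab_E(x) = \bigcap_{y\in x}\Stab_E([y]\cap x)$. For the inclusion $\supseteq$: suppose $\rho_F\in\Aut_{\CFI}(\GG)$ fixes every set $[y]\cap x$ for $y\in x$. Since $\{[y]\cap x\mid y\in x\}$ partitions $x$ (it is $\Cc(x)$), and $\rho_F$ maps each block to itself setwise, $\rho_F$ maps $x = \bigcup_{\gamma\in\Cc(x)}\gamma$ to itself, i.e. $\rho_F(x)=x$, so $\rho_F\in\Stab_E(x)$. For the inclusion $\subseteq$: suppose $\rho_F(x)=x$. I need to show $\rho_F([y]\cap x)=[y]\cap x$ for each $y\in x$. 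Here is the key point: $\rho_F$ permutes the elements of $x$, and because $\rho_F\in\Aut_{\CFI}(\GG)$, it preserves the relation $\sim_E$ (if $z\sim_E z'$ then $\rho_F(z)\sim_E\rho_F(z')$, since $\sim_E$ is the same-orbit relation under the Abelian group $\Aut_{\CFI}(\GG)$ — indeed $z'=\rho_G(z)$ implies $\rho_F(z')=\rho_G(\rho_F(z))$). Therefore $\rho_F$ permutes the $\sim_E$-classes restricted to $x$, i.e. it permutes the blocks of $\Cc(x)$. But each block $[y]\cap x$ is contained in a single full $\sim_E$-class $[y]$, and moreover $\rho_F$ fixes every $\sim_E$-class \emph{setwise} (since $\rho_F(z)\sim_E z$ always, by definition of $\sim_E$). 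Hence $\rho_F$ cannot move a block to a different block — it must fix each block $[y]\cap x$ setwise. This gives $\rho_F\in\Stab_E([y]\cap x)$ for all $y$, completing the inclusion.

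The main obstacle, if any, is making precise the claim that $\rho_F$ fixes every $\sim_E$-class setwise and that this forces it to fix each intersection $[y]\cap x$ rather than merely permuting these intersections among themselves. The resolution is the tautology $\rho_F(z)\sim_E z$ for all $z\in\tc(\mu)$ with $\rho_F(z)\in\tc(\mu)$, which holds precisely because $\sim_E$ is defined as "same orbit under $\Aut_{\CFI}(\GG)$" (restricted to $\tc(\mu)$); combined with $\rho_F(x)=x$ this yields $\rho_F([y]\cap x)\subseteq[\rho_F(y)]\cap x = [y]\cap x$, and applying the same to $\rho_F^{-1}$ gives equality. One should also note in passing that $\rho_F(z)\in\tc(\mu)$ whenever $z\in\tc(\mu)$ and $\rho_F(x)=x$ with $z$ reachable inside $x$ — this is exactly the closure argument already used in Lemma \ref{lem:XOR_inRelation}, so it can simply be cited.
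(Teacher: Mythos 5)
Your proof is correct and takes essentially the same approach as the paper, which compresses the entire argument into one sentence (that $x$ is stabilised by $\rho_F$ iff every $\Aut_{\CFI}(\GG)$-orbit within $x$ is fixed setwise). You have simply unpacked the details, in particular the tautology $\rho_F(z) \sim_E z$ that makes the $\subseteq$ direction go through.
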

This is true simply because $x$ is stabilised by $\rho_F \in \Aut_{\CFI}(\GG)$ iff every $\Aut_{\CFI}(\GG)$-orbit within $x$ is fixed setwise by $\rho_F$.
\begin{lemma}
	\label{lem:XOR_kernelLemma}
	For every gate $[x] \in V_C$, and its gate matrix $M[x] \in \bbF_2^{[x]E_C \times E}$, it holds:
	\[
	\Ker(M[x]) = \Stab_E(x) = \Stab_E(x') \text{ for every } x' \in [x].
	\]
	For every row $M[x]_{[y]-}$, for $[y]\in[x]E_C$, it holds:
	\[
	\Ker(M[x]_{[y]-}) = \Stab_E([y] \cap x) \tag{$\star$}
	\]
\end{lemma}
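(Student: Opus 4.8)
The plan is to prove the second identity $(\star)$ by induction on the rank of $x$ — it is vacuous for atoms, so this is really an induction over the DAG $C(\mu)$ from the leaves up — and to read off the first identity $\Ker(M[x])=\Stab_E(x)$ from it. Note the first identity even makes sense only because $\Stab_E(\cdot)$ is constant on $\sim_E$-classes, which is Lemma~\ref{lem:XOR_stabiliserEqual}. If $x$ is an atom $e_i$ then $M[e_0]$ is the one-row matrix with row $\chi(\{e\})^T$, and $\Ker(M[e_0])=\{\chi(F)\mid e\notin F\}=\Stab_E(e_i)$ directly. If $x$ is a set, the reduction is pure bookkeeping: the rows of $M[x]$ are the vectors $\chi(\Xx([y]))^T$ for $[y]\in[x]E_C$, so $\Ker(M[x])=\bigcap_{[y]\in[x]E_C}\Ker(M[x]_{[y]-})$; the map $\gamma\mapsto[z_\gamma]$ (for any $z_\gamma\in\gamma$) is a bijection from $\Cc(x)$ onto $[x]E_C$ with $[z_\gamma]\cap x=\gamma$ (this is Lemma~\ref{lem:XOR_inRelation} together with the definition of connected components); hence $(\star)$ turns the intersection into $\bigcap_{\gamma\in\Cc(x)}\Stab_E(\gamma)$, which equals $\Stab_E(x)$ by Proposition~\ref{prop:XOR_stabiliserOfSet}.

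So it remains to prove $(\star)$. Fix a child $[y]\in[x]E_C$, put $\gamma:=[y]\cap x\in\Cc(x)$, pick a representative $z\in\gamma$, and let $\rho_F\in\Aut_{\CFI}(\GG)$ be arbitrary; the goal is the equivalence $\rho_F(\gamma)=\gamma \Leftrightarrow |\Xx([y])\cap F|$ is even. If $[y]$ is a leaf, then $z$ is an atom $e_i$; CFI-symmetry of $\gamma$ gives $|\Orb_E(\gamma)|=2$, which forces $\gamma=\{e_i\}$ to be a singleton (if $x$ contained both $e_0$ and $e_1$, the component $\{e_0,e_1\}$ would have orbit size one), and then $\rho_F(\{e_i\})=\{e_i\}$ iff $e\notin F$ iff $|\{e\}\cap F|=|\Xx([e_0])\cap F|$ is even.

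If $[y]$ is an internal gate, then $z$ is a set whose rank is strictly below that of $x$, and the children of $[y]$ correspond bijectively (as above) to the components $\Cc(z)$. Applying the induction hypothesis $(\star)$ to $z$: for each $\gamma'\in\Cc(z)$, with corresponding child $[w]$ of $[y]$, we get $\Stab_E(\gamma')=\Ker(\chi(\Xx([w]))^T)$, so $\rho_F(\gamma')\neq\gamma'$ exactly when $|\Xx([w])\cap F|$ is odd. Counting modulo $2$,
\[
\big|\{\gamma'\in\Cc(z)\mid\rho_F(\gamma')\neq\gamma'\}\big|\ \equiv\ \sum_{[w]\in[y]E_C}|\Xx([w])\cap F|\ \equiv\ \Big|\big(\bigtriangleup_{[w]\in[y]E_C}\Xx([w])\big)\cap F\Big|\ =\ |\Xx([y])\cap F|\pmod 2,
\]
where the middle congruence holds because an edge lies in $\bigtriangleup_{[w]}\Xx([w])$ iff it lies in an odd number of the sets $\Xx([w])$. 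By CFI-symmetry of $\gamma$, $\rho_F(\gamma)=\gamma$ iff the leftmost count above is even, i.e.\ iff $|\Xx([y])\cap F|$ is even. (The leaf case is in fact this same argument read under the convention that an atom $e_i$ carries one ``component'' which $\rho_F$ flips precisely when $e\in F$ — consistent with $\Xx([e_0])=\{e\}$.)

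The delicate point will be handling CFI-symmetry in the inductive step: one must use the biconditional in Definition~\ref{def:XOR_CFIsymmetric} in both directions; must invoke Lemma~\ref{lem:XOR_CFIsymmetricWellDefined} so that the count ``number of flipped subcomponents of $z$'' is independent of the choice of $z\in\gamma$ (so the equivalence we prove really is a statement about $\gamma$, not about $z$); and must notice that CFI-symmetry forbids a single set from containing both $e_0$ and $e_1$, so that atom components are singletons and the base case is clean. The remaining ingredients — the bijection between the children of a gate and the connected components of a representative, and the $\bbF_2$-fact that the kernel of a matrix is the intersection of the kernels of its rows — are routine.
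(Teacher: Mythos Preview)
Your proof is correct and follows essentially the same approach as the paper: prove $(\star)$ by induction over the circuit (base case for atoms, inductive step using CFI-symmetry to translate ``even number of flipped subcomponents'' into ``even Hamming weight of $M[y]\cdot\chi(F)$''), then derive $\Ker(M[x])=\Stab_E(x)$ from $(\star)$ via Proposition~\ref{prop:XOR_stabiliserOfSet}. Your treatment of the leaf-child case --- using $|\Orb_E(\gamma)|=2$ to rule out $\gamma=\{e_0,e_1\}$ --- is slightly more explicit than the paper's, which subsumes this into the matrix formulation $M[x]_{[y]-}=(1\ 1\ \cdots\ 1)\cdot M[y]$ without spelling out the atomic boundary.
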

\begin{proof}
	It holds $\Stab_E(x) = \Stab_E(x')$, for every $x' \in [x]$ and also $\Stab_E([y] \cap x) = \Stab_E([y] \cap x')$, for every $x' \in [x]$ (by Lemma \ref{lem:XOR_stabiliserEqual}). 
	Therefore, ($\star$) does not depend on the choice of representatives.
	From ($\star$) it immediately follows that $\Ker(M[x]) = \Stab_E(x)$, due to Proposition \ref{prop:XOR_stabiliserOfSet} and the fact that $\Ker(M[x])$ is the intersection over the kernels of the rows of $M[x]$.
	We now prove ($\star$) via induction from the input gates to the root. If $[x] = [e_0]$ is an input gate, then $M[x]$ has just one row, which is $\chi(e)^T$. The kernel of $\chi(e)^T$ is the set of all vectors in $\bbF_2^E$ which are zero at index $e$. This is precisely $\Stab_E(e_0) = \Stab_E(e_1)$, as desired.
	Now suppose $[x]$ is an internal gate, i.e.\ $x$ is a non-atomic h.f.\ set in $\tc(\mu)$. Each row of $M[x] \in \bbF_2^{[x]E_C \times E}$ is the characteristic vector of $\Xx[y] \subseteq E$, for a $[y] \in [x]E_C$. We have
	\[
	\Xx[y] = \bigtriangleup_{[w] \in [y]E_C} \Xx[w]. 
	\]
	In matrix-vector notation, we can write this as:
	\[
	M[x]_{[y]-} = \chi(\Xx[y])^T = \sum_{[w] \in [y]E_C} (M[y]_{[w]-})^T = (1 \ 1 \ ... \ 1) \cdot M[y].
	\]
	Let $\gamma \in \Cc(x)$ be the connected component such that $\gamma = [y] \cap x$. The equation above means that	$\Ker(M[x]_{[y]-}) = \mathcal{E}_y$, where $\mathcal{E}_y$ denotes the set of all vectors in $\bbF_2^E$ whose image under $M[y]$ has even Hamming weight. Thus we have to show that $\mathcal{E}_y = \Stab_E(\gamma)$.
	Each row $M[y]_{[w]-}$ corresponds to a connected component $\gamma' \in \Cc(y)$ with $w \in \gamma'$.\\  
	By the induction hypothesis, we have for each row $M[y]_{[w]-}$ and each $\mathbf{v} \in \bbF_2^E$ that $M[y]_{[w]-} \cdot \mathbf{v} = 1$ iff $\mathbf{v} \notin \Stab_E([w] \cap y)$. So $M[y] \cdot \mathbf{v}$ has even Hamming weight iff $\rho_{\chi^{-1}(\mathbf{v})} \in \Aut_{\CFI}(\GG)$ flips an even number of connected components of $y$. This is true iff $\rho_{\chi^{-1}(\mathbf{v})}$ flips an even number of components in every $y' \in \gamma$ (due to Lemma \ref{lem:XOR_CFIsymmetricWellDefined}). By definition of CFI-symmetry (Definition \ref{def:XOR_CFIsymmetric}), this is the case iff $\mathbf{v} \in \Stab_E(\gamma)$, because $\mu$ is CFI-symmetric, and thus, $\gamma$ is a CFI-symmetric component. In total, we have shown that $\mathbf{v} \in \Ee_y$ iff $\mathbf{v} \in\Stab_E(\gamma)$. This proves ($\star$) for every row of $M[x]$.
\end{proof}

As a consequence of this correspondence between kernels and stabilisers, we can bound the fan-in dimension of $C(\mu)$. This proves \textbf{Property 4} from Theorem \ref{thm:XOR_mainCircuitTheorem}.
\begin{lemma}
	\label{lem:XOR_dimensionOfCircuit}
	The \emph{fan-in dimension} of $C(\mu)$ is $\log (\maxOrb_E(\mu))$.
\end{lemma}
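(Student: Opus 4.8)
The plan is to reduce the claim to Lemma~\ref{lem:XOR_kernelLemma}, the Rank Theorem, and the Orbit–Stabiliser Theorem applied to the Abelian group $\Aut_{\CFI}(\GG) \cong \bbF_2^E$. First I would fix an arbitrary gate $[x] \in V_C$ with gate matrix $M[x] \in \bbF_2^{[x]E_C \times E}$. By Lemma~\ref{lem:XOR_kernelLemma}, $\Ker(M[x]) = \Stab_E(x)$, and this is a linear \emph{subspace} of $\bbF_2^E$, so the Rank Theorem gives $\rk(M[x]) = |E| - \dim(\Ker(M[x])) = |E| - \dim(\Stab_E(x))$. (For an input gate $[e_0]$ this is consistent with the definitions: $M[e_0] = \chi(e)^T$ has rank $1$, and indeed $|E| - \dim(\Stab_E(e_0)) = 1$.)

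Next I would apply the Orbit–Stabiliser Theorem (Proposition~\ref{prop:orbitStabiliser}) to the natural action of $\Aut_{\CFI}(\GG) \cong \bbF_2^E$ on $\HF(\widehat{E})$. Since $|\Aut_{\CFI}(\GG)| = 2^{|E|}$ and $|\Stab_E(x)| = 2^{\dim(\Stab_E(x))}$, we obtain
\[
|\Orb_E(x)| \;=\; \frac{2^{|E|}}{2^{\dim(\Stab_E(x))}} \;=\; 2^{\,|E| - \dim(\Stab_E(x))} \;=\; 2^{\rk(M[x])},
\]
so the fan-in dimension $\rk(M[x])$ of the gate $[x]$ equals $\log |\Orb_E(x)|$ (base $2$).

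Finally I would take the maximum over all gates. Since $V_C = \{[x] \mid x \in \tc(\mu)\}$, every $x \in \tc(\mu)$ contributes its $\sim_E$-class as a gate, and $|\Orb_E(x)|$ depends only on that class, because any two representatives of $[x]$ lie in a common $\Aut_{\CFI}(\GG)$-orbit and hence have equal orbit size. Therefore, by Definition~\ref{def:XOR_fanInDimension}, the fan-in dimension of $C(\mu)$ is
\[
\max_{[x] \in V_C} \rk(M[x]) \;=\; \log\Big( \max_{x \in \tc(\mu)} |\Orb_E(x)| \Big) \;=\; \log(\maxOrb_E(\mu)).
\]
I do not expect a genuine obstacle here: essentially all the work was done in Lemma~\ref{lem:XOR_kernelLemma}, and the only points requiring care are that $\Stab_E(x)$ is a bona fide linear subspace of $\bbF_2^E$ (so that the Rank Theorem applies) and that the maximum defining the circuit's fan-in dimension ranges over gates rather than over all of $\tc(\mu)$ — but these agree precisely because the gates are the $\sim_E$-classes.
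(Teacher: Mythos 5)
Your proof is correct and uses exactly the same ingredients as the paper's — Lemma~\ref{lem:XOR_kernelLemma}, the Rank Theorem, and the Orbit–Stabiliser Theorem for $\Aut_{\CFI}(\GG)\cong\bbF_2^E$. The only cosmetic difference is that you derive the exact equality $\rk(M[x]) = \log|\Orb_E(x)|$ at each gate before maximizing, whereas the paper first establishes the inequality $\rk(M[x]) \leq \log(\maxOrb_E(\mu))$ and then notes it is attained; these are the same argument.
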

\begin{proof}
	Let $x \in \tc(\mu)$. From the Orbit-Stabiliser Theorem and the fact that $|\Aut_{\CFI}(\GG)| = 2^{|E|}$, it follows that 
	\[
	\Orb_E(x) = \frac{2^{|E|}}{|\Stab_E(x)|} \leq \maxOrb_E(\mu).
	\]
	This means that
	$
	\log (\maxOrb_E(\mu)) \geq |E| - \dim \ \Stab_E(x).
	$
	By Lemma \ref{lem:XOR_kernelLemma}, $\Stab_E(x) = \Ker(M[x])$.\\
	With the Rank Theorem we get:
	$
	\rk(M[x]) = |E| - \dim \ \Stab_E(x) \leq \log (\maxOrb_E(\mu)).  
	$
	Since there is an object $x \in \tc(\mu)$ where $\maxOrb_E(\mu)$ is attained,  $\rk(M[x]) =\\ \log (\maxOrb_E(\mu))$ is indeed the maximum rank of any gate matrix of $C(\mu)$.
\end{proof}

\begin{lemma}
	\label{lem:XOR_restrictedDimensionOfCircuit}
	The \emph{fan-in dimension} of $C(\mu)$ with respect to the space $\Aut_{\CFI}(\GG^S)$ is $\log (\maxOrb_{\CFI}(\mu))$. That is, for every gate $[x]$ in $C(\mu)$, we have
	\[
	\dim(M[x] \cdot \Aut_{\CFI}(\GG^S)) \leq \log (\maxOrb_{\CFI}(\mu)).
	\]
\end{lemma}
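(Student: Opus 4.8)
The plan is to mimic the proof of Lemma~\ref{lem:XOR_dimensionOfCircuit}, replacing the Rank Theorem for the full space $\bbF_2^E$ by its ``relativised'' version for the subspace $W := \Aut_{\CFI}(\GG^S) \leq \bbF_2^E$ (where, as throughout, we identify a cycle-flip $\rho_F \in \Aut_{\CFI}(\GG^S)$ with its characteristic vector $\chi(F)$). The only linear-algebra fact I need is this: for a linear map $M : \bbF_2^E \to \bbF_2^m$ and any subspace $W \leq \bbF_2^E$, the restriction $M|_W \colon W \to \bbF_2^m$ has image $M \cdot W$ and kernel $W \cap \Ker M$, so the Rank Theorem applied to $M|_W$ gives $\dim(M \cdot W) = \dim W - \dim(W \cap \Ker M)$.

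First I would fix a gate $[x] \in V_C$ with gate matrix $M[x]$. By Lemma~\ref{lem:XOR_kernelLemma}, $\Ker(M[x]) = \Stab_E(x)$, and this is independent of the chosen representative $x' \in [x]$. Applying the identity above with $M = M[x]$ and $W = \Aut_{\CFI}(\GG^S)$ yields
\[
\dim\big(M[x] \cdot \Aut_{\CFI}(\GG^S)\big) = \dim \Aut_{\CFI}(\GG^S) - \dim\big(\Aut_{\CFI}(\GG^S) \cap \Stab_E(x)\big).
\]
The next point is to identify the intersection on the right with $\Stab_{\CFI}(x)$: a vector $\chi(F)$ lies in $\Aut_{\CFI}(\GG^S) \cap \Stab_E(x)$ iff $\rho_F$ is a cycle-flip that fixes $x$, which is exactly the definition of $\Stab_{\CFI}(x)$ as a subspace of $\bbF_2^E$. (Using commutativity of $\Aut_{\CFI}(\GG)$, as in Lemma~\ref{lem:XOR_stabiliserEqual}, one checks this is the same subspace for every $x' \in [x]$, so everything here is a property of the gate $[x]$.)

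Finally I would invoke the Orbit--Stabiliser Theorem for the group $\Aut_{\CFI}(\GG^S)$ — which is isomorphic to a subspace of $\bbF_2^E$ and hence has order $2^{\dim \Aut_{\CFI}(\GG^S)}$ — acting on $x$. This gives $|\Orb_{\CFI}(x)| = 2^{\dim \Aut_{\CFI}(\GG^S)}/2^{\dim \Stab_{\CFI}(x)}$, i.e.\ $\log |\Orb_{\CFI}(x)| = \dim \Aut_{\CFI}(\GG^S) - \dim \Stab_{\CFI}(x)$. Combining with the display above, $\dim(M[x] \cdot \Aut_{\CFI}(\GG^S)) = \log |\Orb_{\CFI}(x)| \leq \log \maxOrb_{\CFI}(\mu)$ for every gate $[x]$, and since $\maxOrb_{\CFI}(\mu)$ is attained at some $x \in \tc(\mu)$, the maximum over all gates of the circuit equals $\log \maxOrb_{\CFI}(\mu)$.

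There is no real obstacle here: this is the twin of Lemma~\ref{lem:XOR_dimensionOfCircuit} with $\Aut_{\CFI}(\GG)$ replaced by $\Aut_{\CFI}(\GG^S)$, and it reduces entirely to the restricted Rank Theorem plus Orbit--Stabiliser. The only step deserving a careful line is the identification $\Aut_{\CFI}(\GG^S) \cap \Stab_E(x) = \Stab_{\CFI}(x)$, since $\Stab_E$ and $\Stab_{\CFI}$ are defined via the two different (but compatible) edge-flip groups; once that is spelled out, the proof is immediate.
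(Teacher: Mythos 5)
Your proof is correct and follows essentially the same route as the paper's: both apply the Orbit--Stabiliser Theorem for $\Aut_{\CFI}(\GG^S)$ together with the (relativised) Rank Theorem for the gate matrix $M[x]$, relying on $\Ker(M[x]) = \Stab_E(x)$ from Lemma~\ref{lem:XOR_kernelLemma}. The only difference is that you note the exact identity $\Aut_{\CFI}(\GG^S) \cap \Ker(M[x]) = \Stab_{\CFI}(x)$ (turning the paper's inclusion $\Stab_{\CFI}(x) \subseteq \Ker(M[x])$ into an equality), which is a minor sharpening rather than a different argument.
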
	
\begin{proof}
	Let $x \in \tc(\mu)$. With the Orbit-Stabiliser Theorem we get
	\[
	\Orb_\CFI(x) = \frac{| \Aut_{\CFI}(\GG^S) |}{|\Stab_\CFI(x)|} = 2^{\dim  \Aut_{\CFI}(\GG^S) - \dim \Stab_\CFI(x)} \leq \maxOrb_\CFI(\mu).
	\]
	Thus,
	\[
	\log \maxOrb_\CFI(\mu) \geq \dim \Aut_{\CFI}(\GG^S) - \dim \Stab_\CFI(x).
	\]
	Using Lemma \ref{lem:XOR_kernelLemma}, we get $\Stab_\CFI(x) \subseteq \Stab_E(x) \subseteq \Ker(M[x])$. Therefore, 
	\[
	\dim (M[x] \cdot \Aut_{\CFI}(\GG^S)) \leq \dim \Aut_{\CFI}(\GG^S) - \dim \Stab_\CFI(x) \leq \log \maxOrb_\CFI(\mu).
	\]
\end{proof}

\begin{proof}[Proof of Theorem \ref{thm:XOR_mainCircuitTheorem}]
	First of all, since $\mu$ is by assumption CPT-definable in the structure $\GG^S$, the size $|\tc(\mu)|$ and the orbit $|\Orb_{\Aut(\GG^S)}(\mu)|$ are polynomial in $|\GG^S|$. Therefore, \textbf{Property 1} from Theorem \ref{thm:XOR_mainCircuitTheorem} clearly holds for $C(\mu)$, because $|V_C| \leq |\tc(\mu)|$. \textbf{Property 2} follows from the bound on $|\Orb_{\Aut(\GG^S)}(\mu)|$ together with Corollary \ref{cor:XOR_orbitOfCircuit}, and the fact that $| \Orb_G(\mu)  | \leq |\Orb_{\Aut(\GG^S)}(\mu)|$.\\
	\textbf{Property 4 } is proven in Lemmas \ref{lem:XOR_dimensionOfCircuit} and \ref{lem:XOR_restrictedDimensionOfCircuit}. 
	%To see that $\Oo(\log(\maxOrb_E(\mu))) = \Oo(\log |\GG^S|)$ if $G - \sup(\mu)$ has at most $\Oo(\sqrt{\log |V|})$ many connected components, we combine Lemma \ref{lem:XOR_maxOrbBound} and Corollary \ref{cor:XOR_relationOrbitSizesEdgeFlips} to get $|\maxOrb_E(\mu)| \leq \text{poly}(|\GG^S|) \cdot |\Orb_{\GG^S}(\mu)| \cdot |\tc(\mu)| $. Then we use again that $|\tc(\mu)|$ and $|\Orb_{\GG^S}(\mu)|$ are polynomial in $|\GG^S|$.\\
	Finally, \textbf{Property 3} can be seen as follows: Suppose $C(\mu)$ is sensitive to an edge $e \in E$. This means that $e \in \Xx(r)$, for the root $r = [\mu]$ of $C(\mu)$. This is the case iff $e \in \Xx([y])$ for an odd number of children $[y] \in [\mu]E_C$. This is the same as saying that the column $M[\mu]_{-e}$ has odd Hamming weight. By equation ($\star$) from Lemma \ref{lem:XOR_kernelLemma}, this holds if and only if $\chi(e) \notin \Stab_E([y] \cap x)$ for an odd number of children $[y] \in [\mu]E_C$. Since $\mu$ is CFI-symmetric, by Definition \ref{def:XOR_CFIsymmetric} this is the case if and only if $\rho_e(\mu) \neq \mu$. And this holds iff $e \in \sup_{\CFI}(\mu)$ (because $\sup_{\CFI}(\mu)$ is the smallest possible CFI-support of $\mu$).
\end{proof}

\subsection{Proving the main theorem}
So far, we have a translation of CFI-symmetric h.f.\ sets in $\HF(\widehat{E})$ into XOR-circuits with the properties mentioned in Theorem \ref{thm:XOR_mainCircuitTheorem}. What is missing in order to conclude Theorem \ref{thm:mainInformal} from this is to prove that any CPT-algorithm which is both super-symmetric and CFI-symmetric and decides the CFI-query must construct a h.f.\ set whose properties translate into the circuit properties from Theorem \ref{thm:mainInformal}. Fortunately, a result to this effect exists already. The following support lower bound for general CPT-programs deciding the CFI-query is due to Dawar, Richerby, and Rossman \cite{dawar2008}.
\begin{theorem}[implicit in the proof of Theorem 40 in \cite{dawar2008}]
	\label{thm:DRR_supportLowerBoundCFI}
	Let $(G_n)_{n \in \bbN}$ be a family of base graphs and let $\tw_n$ denote the treewidth of $G_n$. Let $\GG_n^S, \GG_n^{S'}$ denote two non-isomorphic CFI-structures over $G_n$. Let $f(n) \leq \tw_n$ be a function such that $\GG_n^{S}$ and $\GG_n^{S'}$ are $\Cc^{\tw_n}$-homogeneous for all tuples of length $\leq 2f(n)$.\\
Then any $\CPT$-program that distinguishes $\GG_n^S$ and $\GG_n^{S'}$ for all $n \in \bbN$ must activate on input $\GG_n^S$ a h.f.\ set $x$ whose smallest support has size at least $\Omega(f(n))$.
\end{theorem}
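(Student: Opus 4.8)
The plan is to prove the contrapositive. Suppose $\Pi$ is a $\CPT$-program that distinguishes $\GG_n^S$ from $\GG_n^{S'}$ for all $n$, and suppose, towards a contradiction, that there is a small fixed constant $c>0$ for which, for infinitely many $n$, \emph{every} h.f.\ set activated by $\Pi$ on $\GG_n^S$ has smallest support of size $<c\cdot f(n)$. Fixing such an $n$ and writing $k:=\tw_n$, I would use three facts: $\CPT$ is symmetry-invariant (Proposition \ref{prop:symmetryInvarianceCPT}), so every activated object is closed under its $\Aut(\GG_n^S)$-orbit and the whole run obeys a fixed polynomial bound $p$; $\GG_n^S\equiv_{\Cc^k}\GG_n^{S'}$ (Theorem \ref{thm:CFI_cfiTheorem}); and the stronger hypothesis that $\GG_n^S$ and $\GG_n^{S'}$ are $\Cc^k$-homogeneous for tuples of length $\le 2f(n)$. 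The goal is to show that under these assumptions $\Pi$ gives the same verdict on $\GG_n^S$ and on $\GG_n^{S'}$, contradicting that it distinguishes them.

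The core of the argument is a \emph{support-respecting correspondence} $\Phi$ between the objects activated by $\Pi$ on $\GG_n^S$ and the h.f.\ objects over $\GG_n^{S'}$, defined by induction on the $\in$-rank. The guiding principle, an immediate consequence of symmetry-invariance, is that an activated object $x$ is completely determined by any support $Y_x$ of it together with its ``$\Cc^k$-type relative to $Y_x$'', so that $x$ behaves like a tuple over the base structure of length $|Y_x|<c\,f(n)$. For atoms I would let $\Phi$ match $\Cc^k$-type-equivalent atoms, a partner existing by homogeneity; for a set $x$ with smallest support $Y_x$, I would fix a partial isomorphism $p_x$ matching $Y_x$ with a corresponding set $Y'_x$ in $\GG_n^{S'}$ and let $\Phi(x)$ be the unique object over $\GG_n^{S'}$ with support $Y'_x$ and with the $p_x$-transported type. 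One then checks, by induction, the four properties that make $\Phi$ useful: (i) $\Phi$ is injective; (ii) $\Phi$ respects the CFI-graph's edge relation on atoms; (iii) $\Phi$ preserves membership, i.e.\ $y\in x$ iff $\Phi(y)\in\Phi(x)$ for every activated $x$ and every $y\in x$, the latter being again activated, hence of small support under the standing assumption; and (iv) $\Phi$ commutes with every $\BGS$-term and $\BGS$-formula applied to activated arguments, in particular with $\Pair$, $\Union$, cardinality, and, the crucial case, comprehension. Properties (iii) and (iv) are where the factor $2$ is spent: comparing $x$ with an element $y\in x$, or evaluating a guard $\phi(y)$ in a comprehension term, requires a single partial isomorphism simultaneously extending the matchings on $Y_x$ and $Y_y$, i.e.\ defined on a tuple of length $<2c\,f(n)\le 2f(n)$, and homogeneity is exactly what supplies such an extension; while $\GG_n^S\equiv_{\Cc^k}\GG_n^{S'}$ makes $\phi$ hold of $y$ on one side iff it holds of $\Phi(y)$ on the other, since a $\Cc^k$-formula over the h.f.\ universe, restricted to small-support arguments, is controlled by bounded-length types in the base graph.

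Given such a $\Phi$, the rest is bookkeeping. By (iv), applying $\Pi_{\text{step}}$ on $\GG_n^{S'}$ yields exactly $\Phi$ of the iteration stage it yields on $\GG_n^S$; by induction over the stages, using (iii) to transport transitive closures, the two runs proceed in lock-step, so in particular they activate the same number of distinct objects and take the same number of steps. Hence $\Pi$ stays within its polynomial bound $p$ on $\GG_n^{S'}$ as well, $\Pi_{\text{halt}}$ fires at the same stage, and by (iv) applied to $\Pi_{\text{out}}$ the verdicts agree: $\GG_n^S\models\Pi$ iff $\GG_n^{S'}\models\Pi$ (it also follows that $\Phi$ is a bijection onto the $\GG_n^{S'}$-activated objects). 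This contradicts the assumption that $\Pi$ distinguishes $\GG_n^S$ and $\GG_n^{S'}$, so for every program distinguishing these structures, some object activated on $\GG_n^S$ has smallest support $\ge c\,f(n)=\Omega(f(n))$.

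I expect the main obstacle to be making $\Phi$ genuinely coherent, and there are two intertwined difficulties. First, $\Phi$ cannot be defined in a context-free way: since $\GG_n^S$ and $\GG_n^{S'}$ are \emph{non-isomorphic}, the correct way to match an object's support into $\GG_n^{S'}$ depends on where the single-edge discrepancy (``twist'') between the two structures is currently routed, so $\Phi$ must be constructed together with a coherent back-and-forth system of partial isomorphisms, equivalently a Duplicator strategy in the bijective $k$-pebble game, that keeps the twist off every support in play at once; this is precisely where the robber-evasion (cops-and-robbers) strategy witnessing the $\Cc^{\tw_n}$-equivalence of Theorem \ref{thm:CFI_cfiTheorem} is invoked, and it is what forces the role of $\tw_n$. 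Second, a set of small support can contain elements of large support, think of $x$ equal to a full orbit, so the inductive clause for membership-preservation must be applied only to the activated $y\in\tc(x)$, all of which do have small support under the standing assumption; verifying that comprehension respects $\Phi$ under this constraint, i.e.\ that a $\Cc^k$-definable property of h.f.\ objects, evaluated only at small-support arguments, still reduces to bounded-length types over $G_n$, is the delicate technical heart of the DRR argument.
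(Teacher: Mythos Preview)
The paper does not give its own proof of this theorem: it is stated with the attribution ``implicit in the proof of Theorem~40 in \cite{dawar2008}'' and then used as a black box. So there is nothing in the present paper to compare against directly.

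That said, your sketch is a faithful high-level reconstruction of the Dawar--Richerby--Rossman argument. The contrapositive set-up, the reduction of small-support h.f.\ objects to short tuples via their supports, the use of $\Cc^{\tw_n}$-homogeneity to match such tuples across $\GG_n^S$ and $\GG_n^{S'}$, the factor~$2$ arising from comparing two supports simultaneously, and the lock-step simulation of the run are all correct ingredients. You also correctly flag the two genuine technical difficulties: that the correspondence~$\Phi$ must be built coherently along a Duplicator strategy (routing the twist away from all supports in play), and that membership-preservation must be argued only for \emph{activated} elements, since an object of small support may well contain elements of large support. One phrasing to tighten: an activated object is not literally ``determined by its support together with its $\Cc^k$-type''; rather, the orbit of the support tuple under type-preserving maps determines the orbit of the object, and homogeneity is what upgrades type-equivalence of supports to automorphism-equivalence, which is what actually drives the simulation. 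With that caveat, your plan matches the DRR proof.
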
	
A structure $\GG^S_n$ is $\Cc^{\tw_n}$-\emph{homogeneous} if whenever two tuples $\bar{a}$ and $\bar{b}$ have the same $\Cc^{\tw_n}$-type in $\GG^S_n$, then there is an automorphism of $\GG^S_n$ that maps $\bar{a}$ to $\bar{b}$. The $\Cc^{\tw_n}$-type of a tuple $\bar{a}$ in $\GG^S_n$ is the collection of all $\Cc^{\tw_n}$-formulas that are true in $(\GG^S_n,\bar{a})$. By closer inspection of the entire proof in \cite{dawar2008}, one can see that homogeneity is actually only required in the weaker sense that for all tuples of some \emph{bounded length}, the $\Cc^{\tw_n}$-type partition coincides with the orbit partition -- hence the explicit restriction in the above theorem.\\ %We need this restriction in Section \ref{sec:chapterXOR2} to be able to prove the homogeneity condition for hypercube CFI-structures.\\

The homogeneity condition is satisfied by certain ordered CFI-graphs, as stated in \cite{dawar2008} and proved in \cite{ggpp}, and as we will show, the unordered CFI-graphs over hypercubes, which we use for the lower bound in Theorem \ref{thm:mainLowerBound}, satisfy it as well. Therefore, the homogeneity condition is not really a restriction in the cases that are of interest for us, which is why we omitted it in Theorem \ref{thm:mainInformal}. What we also omitted is the fact that we have to relate two different notions of support. Theorem \ref{thm:XOR_mainCircuitTheorem} refers to the minimum CFI-support of the h.f.\ sets, whereas the support lower bound above refers to the minimum $\Aut(\GG^S)$-support. Therefore, in order to formulate Theorem \ref{thm:mainInformal} correctly with all details, we have to speak about the ratio between these two supports.
\begin{definition}[CFI-support gap]
	\label{def:XOR_supportGap}
	Let $G = (V,E)$ be a base graph and $\GG^S$ a CFI-graph over it. Let $\mu \in \HF(\widehat{E})$. Denote by $s(\mu)$ the size of the smallest $\Aut(\GG^S)$-support of $\mu$ (while $\sup_{\CFI}(\mu)$ still denotes the smallest CFI-support).\\
	Then we call the ratio
	\[
	\alpha(\mu) = \frac{s(\mu)}{|\sup_{\CFI}(\mu)|}
	\]
	the \emph{CFI-support gap} of $\mu$ (with respect to $\GG^S$).
\end{definition}

Then the detailed version of Theorem \ref{thm:mainInformal} reads as follows:
\begin{restatable}{theorem}{restateCircuitLowerBound}
\label{thm:XOR_lowerboundProgram}
Let $(G_n = (V_n,E_n))_{n \in \bbN}$ be a sequence of base graphs. Let $\GG^S_n$ be a CFI-graph over $G_n$, let $\tw_n$ denote the treewidth of $G_n$. Let $f(n) \in \Oo(\tw_n)$ be a function such that every $\GG^S_n$ is $\Cc^{\tw_n}$-homogeneous, for all tuples of length $\leq 2f(n)$.
Let $g(n)$ be a function such that the \emph{CFI-support-gap} for every $\mu \in \HF(\widehat{E}_n)$ with minimum support $s(\mu) \in \Omega(f(n))$ is bounded by $g(n)$.\\
\\
If there exists a \emph{CFI-symmetric} $\CPT$-program $\Pi$ that decides the CFI-query on all $\GG_n^S$, then for every $G_n = (V_n,E_n)$, there exists an XOR-circuit $C_n$ over $G_n$ that satisfies the following ``instantiated properties'' from Theorem \ref{thm:XOR_mainCircuitTheorem}:
\begin{enumerate}
	\item The number of gates in $C_n$ is polynomial in $|\GG^S_n|$.
	\item The orbit-size $|\Orb_{G_n}(C_n)|$ of the circuit is polynomial in $|\GG^S_n|$. 
	\item $C_n$ is \emph{sensitive} to $\Omega(f(n)/g(n))$ many edges in $E_n$.
	%\item TODO: For each gate, only a logarithmic number of children contributes to the true support that is somehow a subset of the sensitivity set?
	\item The \emph{fan-in dimension} of $C_n$, \emph{restricted to the space} $\Aut_{\CFI}(\GG_n^S)$, is $\Oo(\log |\GG_n^S|)$.
	\item If the program $\Pi$ is \emph{super-symmetric} in addition to being CFI-symmetric, or if the base graph $G_n$ decomposes into at most $\Oo(\sqrt{\log |\GG^S|})$ many components when any $f(n)/g(n)$ many edges are removed, then also the (unrestricted) \emph{fan-in dimension} of $C_n$ is $\Oo(\log |\GG_n^S|)$. 
\end{enumerate}	
\end{restatable}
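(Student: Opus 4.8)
The plan is to feed the h.f.\ set that Theorem~\ref{thm:DRR_supportLowerBoundCFI} forces $\Pi$ to activate into the circuit construction of Theorem~\ref{thm:XOR_mainCircuitTheorem}, and then to translate its size, orbit and support data into the four circuit parameters. Fix $n$. Since $\Pi$ is a single CPT-program deciding the CFI-query on the family $(\GG_n^S)_n$, it in particular distinguishes $\GG_n^S$ from every non-isomorphic $\GG_n^{S'}$; as $\GG_n^S$ is $\Cc^{\tw_n}$-homogeneous for tuples of length $\le 2f(n)$ and $f(n)\le\tw_n$, Theorem~\ref{thm:DRR_supportLowerBoundCFI} applies and shows that $\Pi$ activates on input $\GG_n^S$ a h.f.\ set whose smallest $\Aut(\GG_n^S)$-support has size $\Omega(f(n))$. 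Because $\Pi$ is \emph{CFI-symmetric}, Definition~\ref{def:XOR_CFIsymmetricAlgorithms} gives a CFI-symmetric $\mu_n\in\HF(\widehat E_n)$ that $\Pi$ activates and that is necessary for deciding the query (working in $\HF(\widehat E_n)$ is justified by the recoding of atoms $v^X\in\widehat V$ by their neighbourhoods, as in Section~\ref{sec:CFIsupports}). The ``necessity'' clause, read in light of Theorem~\ref{thm:DRR_supportLowerBoundCFI}, is precisely what lets us take this CFI-symmetric $\mu_n$ to be the large-support object, so that its smallest $\Aut(\GG_n^S)$-support $s(\mu_n)$ has size $\Omega(f(n))$.

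Next, apply Theorem~\ref{thm:XOR_mainCircuitTheorem} to $\mu_n$: it is CFI-symmetric and CPT-definable on $\GG_n^S$ by the fixed program $\Pi$, so there is an XOR-circuit $C_n := C(\mu_n)$ over $G_n$ whose number of gates and whose $\Aut(G_n)$-orbit are polynomial in $|\GG_n^S|$ (this is exactly Properties~1 and~2), which is \emph{sensitive} to an edge $e$ if and only if $e\in\sup_{\CFI}(\mu_n)$, and whose fan-in dimension is $\Oo(\log\maxOrb_E(\mu_n))$, with the version restricted to $\Aut_{\CFI}(\GG_n^S)$ being $\Oo(\log\maxOrb_\CFI(\mu_n))$. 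For Property~3 I combine the sensitivity characterisation with the CFI-support gap: since $s(\mu_n)\in\Omega(f(n))$, Definition~\ref{def:XOR_supportGap} and the choice of $g(n)$ give $\alpha(\mu_n)\le g(n)$, hence $|\sup_{\CFI}(\mu_n)| = s(\mu_n)/\alpha(\mu_n) \ge s(\mu_n)/g(n)\in\Omega(f(n)/g(n))$, so $C_n$ is sensitive to $\Omega(f(n)/g(n))$ edges of $E_n$.

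For Property~4 I bound $\maxOrb_\CFI(\mu_n)$: CPT-definability makes $|\tc(\mu_n)|$ and $|\Orb_{\Aut(\GG_n^S)}(\mu_n)|$ polynomial in $|\GG_n^S|$, hence so is $|\Orb_\CFI(\mu_n)|$ (an orbit under a subgroup of $\Aut(\GG_n^S)$), and Lemma~\ref{lem:XOR_maxOrbBound} gives $\maxOrb_\CFI(\mu_n)\le|\Orb_\CFI(\mu_n)|\cdot|\tc(\mu_n)| = \mathrm{poly}(|\GG_n^S|)$; by Lemma~\ref{lem:XOR_restrictedDimensionOfCircuit} the restricted fan-in dimension equals $\log\maxOrb_\CFI(\mu_n)\in\Oo(\log|\GG_n^S|)$. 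Property~5 is the same argument with $\maxOrb_E$ in place of $\maxOrb_\CFI$ and Lemma~\ref{lem:XOR_dimensionOfCircuit} in place of Lemma~\ref{lem:XOR_restrictedDimensionOfCircuit}: the needed polynomial bound on $|\Orb_E(\mu_n)|$ follows either from super-symmetry of $\mu_n$ via Lemma~\ref{lem:XOR_maxOrbBoundSuperSymmetric} (when $\Pi$ is super-symmetric), or, when $G_n$ decomposes into $\Oo(\sqrt{\log|\GG_n^S|})$ components after deleting the $\Omega(f(n)/g(n))$ edges of $\sup_{\CFI}(\mu_n)$, from Corollary~\ref{cor:XOR_relationOrbitSizesEdgeFlips} together with $|\Orb_{\GG_n^S}(\mu_n)| = \mathrm{poly}(|\GG_n^S|)$.

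Each individual step is routine once the right auxiliary results are lined up; the real care goes into the bookkeeping between the two notions of support. Theorem~\ref{thm:DRR_supportLowerBoundCFI} controls $\Aut(\GG_n^S)$-supports, whereas Theorem~\ref{thm:XOR_mainCircuitTheorem} produces sensitivity with respect to $\sup_{\CFI}(\mu_n)$, so the reduction only closes with the factor $g(n)$ of the CFI-support gap inserted into Property~3; and one must argue, through the ``necessity'' clause of Definition~\ref{def:XOR_CFIsymmetricAlgorithms}, that the CFI-symmetric set the program builds is the very object whose support the Dawar--Richerby--Rossman argument lower-bounds, rather than some unrelated activated set. A secondary point of care is the second alternative of Property~5, where one must check that the hypothesis on $G_n$ indeed applies to deletion of $\sup_{\CFI}(\mu_n)$ and so delivers the component bound required by Corollary~\ref{cor:XOR_relationOrbitSizesEdgeFlips}.
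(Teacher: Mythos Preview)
Your proof is correct and follows essentially the same approach as the paper's own argument: obtain the large-support CFI-symmetric set $\mu_n$ via Theorem~\ref{thm:DRR_supportLowerBoundCFI} and Definition~\ref{def:XOR_CFIsymmetricAlgorithms}, feed it into Theorem~\ref{thm:XOR_mainCircuitTheorem}, convert the $\Aut(\GG_n^S)$-support bound to a CFI-support bound via the gap $g(n)$, and bound the (restricted) fan-in dimension using Lemma~\ref{lem:XOR_maxOrbBound} together with Lemma~\ref{lem:XOR_maxOrbBoundSuperSymmetric} or Corollary~\ref{cor:XOR_relationOrbitSizesEdgeFlips}. Your write-up is in fact more explicit than the paper's about the bookkeeping between the two support notions and about why the ``necessity'' clause lets one identify the CFI-symmetric object with the large-support object.
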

\begin{proof}
Assume such a CPT-program $\Pi$ exists. Let $\mu_n \in \HF(\widehat{E}_n)$ denote the CFI-symmetric h.f.\ set with large support that $\Pi$ activates on input $\GG^S_n$. Then by Theorem \ref{thm:DRR_supportLowerBoundCFI}, the smallest $\Aut(\GG^S_n)$-support of the object $\mu_n$ has size $\Omega(f(n))$. Since the CFI-support gap of $\mu_n$ in $\GG^S_n$ is at most $g(n)$, the size of the smallest CFI-support of $\mu_n$ is at least: $\sup_{\CFI}(\mu_n) \in \Omega(f(n)/g(n))$. Theorem \ref{thm:XOR_mainCircuitTheorem} applied to $\mu_n$ yields the XOR-circuit $C_n$. Property 3 from Theorem \ref{thm:XOR_mainCircuitTheorem} in combination with the bound $\sup_{\CFI}(\mu_n) \in \Omega(f(n)/g(n))$ means that $C_n$ is sensitive to $\Omega(f(n)/g(n))$ many edges in $E_n$. 
Property 4 from Theorem \ref{thm:XOR_mainCircuitTheorem} bounds the fan-in dimension and the restricted fan-in dimension in terms of $\log(\maxOrb_E(\mu_n))$ and $\log(\maxOrb_\CFI((\mu_n))$, respectively. Lemma \ref{lem:XOR_maxOrbBound} states that $\maxOrb_E(\mu_n) \leq |\Orb_E(\mu_n)| \cdot | \tc(\mu_n)|$ and $\maxOrb_\CFI(\mu_n) \leq |\Orb_\CFI(\mu_n)| \cdot | \tc(\mu_n)|$. Because $\mu_n$ is defined by the CPT-program $\Pi$ on input $\GG^S_n$, both $|\tc(\mu_n)|$ and $|\Orb_{\CFI}(\mu_n)|$ are polynomially bounded in $|\GG_n^S|$ (the orbit is bounded because $\Aut_{\CFI}(\GG^S)$ is a subgroup of $\Aut(\GG^S_n)$). This yields a polynomial bound on $\maxOrb_\CFI(\mu_n)$. Together with the $\log(\maxOrb_\CFI((\mu_n))$-bound on the restricted fan-in dimension, this gives us Property 4 from this theorem.\\
Property 5 follows then with Lemma \ref{lem:XOR_maxOrbBoundSuperSymmetric} if $\Pi$ is super-symmetric, and with Corollary \ref{cor:XOR_relationOrbitSizesEdgeFlips} in case that the base graph $G_n$ splits into a bounded number of components when $\sup_{\CFI}(\mu_n)$ is removed from it.
\end{proof}

\section{XOR-circuits for more general hereditarily finite sets}
\label{sec:generalisedCircuitConstruction}
So far, we have shown that \emph{CFI-symmetric} h.f.\ sets over CFI-structures $\GG^S$ can be quite easily transformed into XOR-circuits by factoring out the orbits under the edge-flip-group $\Aut_{\CFI}(\GG)$. Importantly, this construction automatically translates the relevant properties of the h.f.\ set, such as support size and symmetry, into more or less natural circuit-properties. As a consequence, we can -- in principle -- limit the power of \emph{CFI-symmetric} algorithms for the CFI-query by proving appropriate lower bounds for certain families of polynomial size symmetric XOR-circuits. Even though all currently known choiceless algorithms for the CFI-query \emph{are} CFI-symmetric, and it is not clear that non-CFI-symmetric algorithms are really more powerful, it would be much nicer if the circuit-translation were so general that it could be used to separate all of CPT from P, and not only the CFI-symmetric algorithms. In this subsection we explore to what extent the circuit construction can be generalised in that direction. We will present a modification of the construction above, that uses additional gadgets, works without the restriction to CFI-symmetric sets, and has almost all properties from Theorem \ref{thm:XOR_mainCircuitTheorem}. By ``almost all'' we mean that the additional gadgets we have to introduce in the circuit are of unknown size. Hence, we cannot be sure that the constructed circuit is always of polynomial size. However, we can formulate a condition on the h.f.\ sets, which generalises that of CFI-symmetry and guarantees polynomial size of the circuit. This condition concerns Boolean vector spaces with a permutation group acting on the index set. If certain subspaces of $\bbF_2^E$, which appear as stabiliser groups of the connected components of the sets in $\mu$ possess a basis that is (almost) invariant under the permutation group (which will be a subgroup of $\Aut(G)$), then the circuit constructed from $\mu$ has polynomial size. Here is the result of this section:
\begin{theorem}
	\label{thm:XOR_generalCircuitTheorem}
	Fix a family $(G_n)_{n \in \bbN}$ of base graphs. 
	For every $n \in \bbN$, let $\GG_n^S$ be a CFI-graph over $G_n = (V_n,E_n)$ and let $\mu_n \in \HF(\widehat{E}_n)$ be a h.f.\ set that is $\CPT$-definable on input $\GG^S_n$ (by the same $\CPT$-program for the whole family of graphs). Then for every $n \in \bbN$ there exists an XOR-circuit $\widehat{C}(\mu_n) = (V_C, E_C)$ over the edges of $G_n$ which satisfies:
	\begin{enumerate}
		\item The orbit-size $|\Orb_G(\widehat{C}(\mu_n))|$ of the circuit is polynomial in $|\GG^S_n|$. 
		\item $\widehat{C}(\mu_n)$ is \emph{sensitive} to at least $\frac{|\sup_{\CFI}(\mu)|}{\log(\maxOrb_E(\mu))}$ many edges in $E$.
		%\item TODO: For each gate, only a logarithmic number of children contributes to the true support that is somehow a subset of the sensitivity set?
		\item The \emph{fan-in dimension} of $\widehat{C}(\mu)$ is $\Oo(\log(\maxOrb_E(\mu)))$. %The \emph{fan-in dimension} of $\widehat{C}(\mu)$ \emph{restricted to the space} $\Aut_{\CFI}(\GG_n^S)$ is $\Oo(\log(\maxOrb_\CFI(\mu)))$. 
		\item If for every $x,y \in \tc(\mu)$ such that $[y] \cap x \neq \emptyset$, the space $\Stab_E([y] \cap x)$ has a \emph{symmetric basis} (see Definition \ref{def:XOR_symmetricBasis}), then the size $|V_C|$ is polynomial in $|\GG^S|$.
	\end{enumerate}
\end{theorem}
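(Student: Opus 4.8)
The plan is to adapt the circuit construction of Theorem~\ref{thm:XOR_mainCircuitTheorem}. Recall that there $C(\mu)$ is just the DAG $(\tc(\mu),\in)$ with the edge-flip-orbit relation $\sim_E$ factored out, and that CFI-symmetry entered its analysis in exactly one place, Lemma~\ref{lem:XOR_kernelLemma}: it guaranteed that the ``XOR-over-children'' row of a gate $[x]$ coming from a connected component $\gamma=[y]\cap x$ has kernel precisely $\Stab_E(\gamma)$. For general $\mu$ this is false, because the set-wise stabiliser $\Stab_E(\gamma)$ lies strictly between the pointwise stabiliser $\bigcap_{y'\in\gamma}\Stab_E(y')$ and $\Stab_E(y)$ for a single $y\in\gamma$, and $|\Orb_E(\gamma)|$ may exceed $2$. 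The idea is to interpose, between a gate and the children it reaches through a fixed component, a small \emph{gadget subcircuit} whose outputs are engineered to generate the right space directly, rather than hoping the recursive XOR structure does it for us.

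Concretely: keep the ``main gates'' $[x]$, $x\in\tc(\mu)$, and the input gates $[e_0]$ labelled by $e$, but for each pair $(x,\gamma)$ with $\gamma=[y]\cap x\in\Cc(x)$, replace the multi-edge $([x],[y])$ by a gadget $D_\gamma$ with $d_\gamma:=\log|\Orb_E(\gamma)|$ output gates whose $\Xx$-sets form a basis of the annihilator $\Stab_E(\gamma)^{\perp}\subseteq\bbF_2^E$ (this subspace is determined by $\mu$, so the required $\Xx$-sets are known), and let these $d_\gamma$ outputs be the children of $[x]$ contributed by $\gamma$. Then the rows of the new gate matrix $M[x]$ span $\sum_{\gamma\in\Cc(x)}\Stab_E(\gamma)^{\perp}=\big(\bigcap_{\gamma\in\Cc(x)}\Stab_E(\gamma)\big)^{\perp}=\Stab_E(x)^{\perp}$ by Proposition~\ref{prop:XOR_stabiliserOfSet}, so $\Ker(M[x])=\Stab_E(x)$, exactly as before. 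The Rank Theorem together with Orbit--Stabiliser ($|\Aut_{\CFI}(\GG)|=2^{|E|}$) then gives $\rk(M[x])=|E|-\dim\Stab_E(x)=\log|\Orb_E(x)|\le\log\maxOrb_E(\mu)$; gadget-internal gates are binary XOR-trees, so Property~3 follows. For the root: since $\{F : F\cap\sup_{\CFI}(\mu)=\emptyset\}\subseteq\Stab_E(\mu)$ and $\sup_{\CFI}(\mu)$ is minimal, one checks $\sup_{\CFI}(\mu)=\bigcup_{v\in\Stab_E(\mu)^{\perp}}\supp(v)$, i.e.\ $\sup_{\CFI}(\mu)$ is the union of the supports of any basis of $\Stab_E(\mu)^{\perp}$; as that space has dimension $d_\mu\le\log\maxOrb_E(\mu)$, some basis vector has support of size $\ge|\sup_{\CFI}(\mu)|/\log\maxOrb_E(\mu)$, and designating the corresponding output gate of the $[\mu]$-gadget as the root of $\widehat C(\mu)$ yields Property~2.

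For Properties~1 and~4 the gadgets must be chosen coherently. To get $|\Orb_G(\widehat C(\mu))|\le|\Orb_G(\mu)|$, which is polynomial in $|\GG^S_n|$ by CPT-definability (Property~1), it suffices, as in Lemma~\ref{lem:XOR_stabiliserOfCircuit}, to make the whole family of gadgets $\Stab_G(\mu)$-equivariant: if $\pi\in\Stab_G(\mu)$ maps $\gamma$ to $\gamma'$ inside $\tc(\mu)$, it must map $D_\gamma$ to $D_{\gamma'}$. Such a choice can always be made by processing $\tc(\mu)$ bottom-up and fixing one gadget per $\Stab_G(\mu)$-orbit of components (and a $\Stab_G(\mu)$-invariant spanning set of $\Stab_E(\mu)^{\perp}$ at the top), but in general the only obviously equivariant generating sets are full orbits of vectors, so $D_\gamma$ may be large. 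If, however, every $\Stab_E([y]\cap x)$ has a \emph{symmetric basis} in the sense of Definition~\ref{def:XOR_symmetricBasis} — a basis permuted, up to the allowed slack, by the relevant subgroup of $\Aut(G)$ — then each $D_\gamma$ can be realised by hard-wiring each of its $d_\gamma$ basis vectors as an XOR-tree over the input gates in its support, costing $\Oo(d_\gamma\cdot|E|)$ gates, with equivariance inherited from that of the bases; since $\tc(\mu)$ has at most $|\tc(\mu)|^2$ connected components, $|V_C|$ is then polynomial, giving Property~4.

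The main obstacle is this three-way tension in the gadget design: correctness forces the $d_\gamma$ gadget outputs to generate exactly $\Stab_E(\gamma)^{\perp}$; Property~1 forces the whole family of gadgets to be $\Stab_G(\mu)$-equivariant; and Property~4 wants each gadget polynomial. Without CFI-symmetry there is no reason a single equivariant \emph{basis} exists, which is precisely why the polynomial-size conclusion has to be made conditional on Definition~\ref{def:XOR_symmetricBasis}; showing that this condition is genuinely weaker than CFI-symmetry (as the section claims) is a further argument. A secondary technical point is the root: a maximum-support basis vector of $\Stab_E(\mu)^{\perp}$ need not be $\Stab_G(\mu)$-invariant, so one must either bound its $\Stab_G(\mu)$-orbit (so the circuit orbit stays polynomial even when the root moves) or build the root as an XOR over a carefully chosen symmetric subset of the $[\mu]$-bundle while retaining the $\Omega(|\sup_{\CFI}(\mu)|/\log\maxOrb_E(\mu))$ support bound — this is the origin of the $\log\maxOrb$ loss in Property~2. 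Everything else (well-definedness of $E_C$ across the gadgets, $\Xx$ being the symmetric difference over children through the gadget, no spurious sensitivities from gadget-internal gates) is routine XOR-tree bookkeeping mirroring Lemmas~\ref{lem:XOR_inRelation}--\ref{lem:XOR_restrictedDimensionOfCircuit}.
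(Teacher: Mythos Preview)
Your high-level plan is right and matches the paper: interpose a gadget between the gate bundle for $[x]$ and the one for $[y]$ so that the rows contributed by the component $\gamma=[y]\cap x$ span exactly $\Stab_E(\gamma)^{\perp}$, and make the family of gadgets $\Stab_G(\mu)$-equivariant. Your arguments for Properties~2 and~3 (pigeonhole on a basis of $\Stab_E(\mu)^{\perp}$ for the root, and $\rk M[x]=|E|-\dim\Stab_E(x)=\log|\Orb_E(x)|$ for fan-in dimension) are correct and coincide with Lemmas~\ref{lem:XOR_rootSensitivityCHat} and~\ref{lem:XOR_dimensionOfCircuitHat}.

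The gap is in your realisation of the gadgets. You propose that under the symmetric-basis hypothesis each $D_\gamma$ is a bundle of binary XOR-trees wired \emph{directly to the input gates}, one tree per basis vector of $\Stab_E(\gamma)^{\perp}$, and claim ``equivariance inherited from that of the bases''. That inheritance fails: a permutation $\pi$ fixing a basis vector $v$ setwise permutes $\supp(v)$, but a balanced binary tree on $\supp(v)$ is invariant only under the iterated wreath product of $\bbZ_2$'s, not under an arbitrary permutation of its leaves. So $\pi$ will generally not extend to an automorphism of the tree, and Property~1 is lost. The obvious repair --- replace each tree by a single wide XOR over $\supp(v)$ --- restores symmetry but makes the fan-in dimension of that gate equal to $|\supp(v)|$, which is unbounded, killing Property~3. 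This is precisely the ``three-way tension'' you name, but your construction does not resolve it.

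The paper's resolution is to keep the \emph{recursive} structure rather than flattening to the input layer. The gadget for $(x,[y])$ is encoded by a matrix $N[x][y]\in\bbF_2^{J_{[x][y]}\times I_{[y]}}$ whose columns are indexed by the rows of the already-constructed $M[y]$; the gate $\mathbf g_{i,[x],[y]}$ has as children those gates $\mathbf g_{j,[y],[z]}$ for which $N[x][y]$ has a $1$ in position $(i,j)$. The children's $\Xx$-sets are therefore rows of $M[y]$, whose span has dimension $\rk M[y]=\log|\Orb_E(y)|\le\log\maxOrb_E(\mu)$, so Property~3 holds for \emph{every} gate without any tree structure. Equivariance is achieved not by hoping the trees align, but by building, alongside the matrices, explicit group homomorphisms $g_{[x]}\colon\Stab_G(\mu)\to\Sym(I_{\Omega_{[x]}})$ and $h_{[x]}\colon\Stab_G(\mu)\to\Sym(J_{\Omega_{[x]}})$ that track the action on row indices (Lemmas~\ref{lem:XOR_matrixHomomorphismHcorrect} and~\ref{lem:XOR_homomorphismGcorrect}); each $N[x][y]$ is first chosen minimal subject to the kernel condition \emph{and} invariance under the induced action of $\Stab_G([y]\cap x)$, then transported equivariantly along its orbit. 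The symmetric-basis hypothesis enters only at the very end (Lemma~\ref{lem:XOR_NmatricesPolynomial}), to bound the number of rows of $N[x][y]$ after this closure: via Lemma~\ref{lem:XOR_symmetricSolutionLemma}, the dual vectors to $\Bb\setminus\Bb_\Gamma$ inherit the symmetry of $\Bb_\Gamma\subseteq\Bb$, so the orbit of each row under the closure has polynomial size. In short, the paper pushes the equivariance through the inductive matrix construction rather than through tree gadgets, and that is what lets fan-in dimension and symmetry coexist.
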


This theorem differs from Theorem \ref{thm:XOR_mainCircuitTheorem} for CFI-symmetric objects in two aspects. Firstly, the circuit $\widehat{C}(\mu)$ is not necessarily sensitive to all edges in $\sup_{\CFI}(\mu)$ but only to a logarithmic fraction of them. Secondly, we have no guarantees for the size of the circuit unless all spaces $\Stab_E([y] \cap x)$ admit a symmetric basis; we will introduce this concept formally in Section \ref{sec:XOR_sizeBound}. It should be noted that this fourth property mentioned in the theorem is -- as far as we know -- not an ``if and only if''. It may be that $\widehat{C}(\mu)$ has polynomial size even when the symmetric basis condition is not satisfied for $\mu$.\\

As a consequence, we have the following version of Theorem \ref{thm:XOR_lowerboundProgram} for non-CFI-symmetric CPT-programs that decide the CFI-query. The condition that the h.f.\ set with large support which is used to decide the CFI-query is CFI-symmetric is weakened to the symmetric-basis condition.
As we show later, in Lemma \ref{lem:XOR_CFIsymmetricHasSymmetricBasis}, every CFI-symmetric set also has a symmetric basis, and there are as well simple examples of non-CFI-symmetric sets with a symmetric basis (see Example \ref{ex:XOR_nonCFIsymmetricSet}). Thus, the symmetric basis condition is indeed a strict generalisation of CFI-symmetry.
\begin{theorem}
	\label{thm:XOR_lowerboundProgram2}
Let $(G_n = (V_n,E_n))_{n \in \bbN}$ be a sequence of base graphs. Let $\GG^S_n$ be a CFI-graph over $G_n$, let $\tw_n$ denote the treewidth of $G_n$, and let $f(n) \leq \tw_n$ be a function such that $\GG^S_n$ is $\Cc^{\tw_n}$-homogeneous for all tuples of length $\leq 2f(n)$.\\
Let $g(n)$ be a function such that the \emph{CFI-support-gap} for every $\mu \in \HF(\widehat{E}_n)$ with minimum support $s(\mu) \in \Omega(f(n))$ is bounded by $g(n)$.\\
\\
Let $\Pi$ be a $\CPT$-program that decides the CFI-query on all $\GG_n^S$ using a h.f.\ set $\mu_n \in \GG_n^S$ with sufficient support such that $\Stab_E([y] \cap x)$ has a \emph{symmetric basis} according to Definition \ref{def:XOR_symmetricBasis}, for all $x,y \in \tc(\mu_n)$ with $[y] \cap x \neq \emptyset$.\\
Assume additionally that $\Pi$ is \emph{super-symmetric} or that the base graph $G_n$ decomposes into at most $\Oo(\sqrt{\log |\GG^S|})$ many components when any $f(n)/g(n)$ many edges are removed.\\
Then for every $n \in \bbN$ there exists an XOR-circuit $C_n$ over $G_n$ that satisfies the following ``instantiated properties'' from Theorem \ref{thm:XOR_generalCircuitTheorem}:
\begin{enumerate}
	\item The number of gates in $C_n$ is polynomial in $|\GG^S_n|$.
	\item The orbit-size $|\Orb_{G_n}(C_n)|$ of the circuit is polynomial in $|\GG^S_n|$. 
	\item $C_n$ is \emph{sensitive} to $\Omega(f(n)/(g(n) \cdot \log | \GG_n^S|))$ many edges in $E_n$.
	\item The \emph{fan-in dimension} of $C_n$ is $\Oo(\log |\GG_n^S|)$.
\end{enumerate}	
\end{theorem}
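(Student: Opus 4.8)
The plan is to run essentially the same argument as in the proof of Theorem \ref{thm:XOR_lowerboundProgram}, but with Theorem \ref{thm:XOR_generalCircuitTheorem} replacing Theorem \ref{thm:XOR_mainCircuitTheorem} as the circuit-construction engine. First I would invoke Theorem \ref{thm:DRR_supportLowerBoundCFI}: since $\Pi$ decides the CFI-query on all $\GG_n^S$ and each $\GG_n^S$ is $\Cc^{\tw_n}$-homogeneous for tuples of length $\le 2f(n)$, the h.f.\ set $\mu_n$ that $\Pi$ necessarily activates has smallest $\Aut(\GG_n^S)$-support of size $s(\mu_n)\in\Omega(f(n))$. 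By the hypothesis on the CFI-support gap, $|\sup_{\CFI}(\mu_n)|\ge s(\mu_n)/g(n)\in\Omega(f(n)/g(n))$. By assumption $\Stab_E([y]\cap x)$ has a symmetric basis for all relevant $x,y\in\tc(\mu_n)$, so Property 4 of Theorem \ref{thm:XOR_generalCircuitTheorem} applies: the circuit $\widehat{C}(\mu_n)=:C_n$ has $|V_C|$ polynomial in $|\GG_n^S|$, giving Property 1 here. Property 2 (polynomial orbit) is exactly Property 1 of Theorem \ref{thm:XOR_generalCircuitTheorem}, using that $\mu_n$ is CPT-definable.

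For Property 3, Property 2 of Theorem \ref{thm:XOR_generalCircuitTheorem} says $C_n$ is sensitive to at least $|\sup_{\CFI}(\mu_n)|/\log(\maxOrb_E(\mu_n))$ many edges. The numerator is $\Omega(f(n)/g(n))$ by the above. For the denominator I need $\log(\maxOrb_E(\mu_n))\in\Oo(\log|\GG_n^S|)$; this is where the extra hypothesis (super-symmetry, or bounded component decomposition) enters. By Lemma \ref{lem:XOR_maxOrbBound}, $\maxOrb_E(\mu_n)\le|\Orb_E(\mu_n)|\cdot|\tc(\mu_n)|$, and $|\tc(\mu_n)|$ is polynomial by CPT-definability, so it suffices to bound $|\Orb_E(\mu_n)|$ polynomially. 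If $\Pi$ is super-symmetric this is immediate from Lemma \ref{lem:XOR_maxOrbBoundSuperSymmetric} (or directly from Definition \ref{def:XOR_superSymmetricObjects}); if instead $G_n-\sup_{\CFI}(\mu_n)$ has only $\Oo(\sqrt{\log|\GG_n^S|})$ components, then Corollary \ref{cor:XOR_relationOrbitSizesEdgeFlips} gives $|\Orb_E(\mu_n)|\le\text{poly}(|\GG_n^S|)\cdot|\Orb_{\GG^S}(\mu_n)|$, and the latter is polynomial since $\mu_n$ is CPT-definable. (Here one uses that removing the $\Omega(f(n)/g(n))$ edges of $\sup_{\CFI}(\mu_n)$ is subsumed by the ``any $f(n)/g(n)$ many edges'' clause — strictly, one should note the hypothesis is phrased for $f(n)/g(n)$ edges and $|\sup_{\CFI}(\mu_n)|$ may be somewhat larger; I would either assume the clause applies to the actual support or note that $|\sup_{\CFI}(\mu_n)|\in\Oo(f(n)/g(n))$ as well from the support-gap bookkeeping, so the two match up to constants.) Either way $\log(\maxOrb_E(\mu_n))\in\Oo(\log|\GG_n^S|)$, so $C_n$ is sensitive to $\Omega\!\big(f(n)/(g(n)\cdot\log|\GG_n^S|)\big)$ edges, which is Property 3.

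Finally, Property 4 follows directly from Property 3 of Theorem \ref{thm:XOR_generalCircuitTheorem}, which bounds the fan-in dimension of $\widehat{C}(\mu_n)$ by $\Oo(\log(\maxOrb_E(\mu_n)))$, combined with the same bound $\log(\maxOrb_E(\mu_n))\in\Oo(\log|\GG_n^S|)$ established above. So all four properties hold. The main obstacle I anticipate is not in this deduction itself — which is a fairly mechanical repackaging — but rather it is inherited from the proof of Theorem \ref{thm:XOR_generalCircuitTheorem}, specifically the construction of the additional gadgets realising the symmetric-basis condition and the verification that they keep the circuit polynomial-sized; that work is done in Section \ref{sec:generalisedCircuitConstruction} and Section \ref{sec:XOR_sizeBound} and is simply cited here. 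The only genuine care needed at this level is the bookkeeping matching $|\sup_{\CFI}(\mu_n)|$, the support gap $g(n)$, and the clause about how many edges may be removed, so that the ``bounded components'' hypothesis is applicable to the actual minimum CFI-support and not merely to a set of size $f(n)/g(n)$.
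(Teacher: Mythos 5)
Your proposal is correct and is exactly the argument the paper intends: indeed the paper explicitly omits the proof with the remark that it ``follows from Theorem~\ref{thm:XOR_generalCircuitTheorem} in the same way as Theorem~\ref{thm:XOR_lowerboundProgram} follows from Theorem~\ref{thm:XOR_mainCircuitTheorem},'' and you have spelled out that deduction faithfully, with the symmetric-basis hypothesis invoked for polynomial circuit size, the support lower bound and CFI-support gap giving the $\Omega(f(n)/g(n))$ sensitivity numerator, and the super-symmetry/bounded-components dichotomy supplying the $\log(\maxOrb_E(\mu_n)) \in \Oo(\log|\GG_n^S|)$ bound used in both Properties~3 and~4. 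Your parenthetical remark about the mismatch between ``any $f(n)/g(n)$ many edges'' in the hypothesis and the actual size of $\sup_{\CFI}(\mu_n)$ (for which only a lower bound $\Omega(f(n)/g(n))$ is available) is a genuine imprecision, but it is inherited verbatim from the proof of Theorem~\ref{thm:XOR_lowerboundProgram} in the paper, not something your argument introduces, and it is harmless in the intended instantiation where the support gap is $\Oo(1)$.
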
	
We omit the proof of this theorem because it follows from Theorem \ref{thm:XOR_generalCircuitTheorem} in the same way as Theorem \ref{thm:XOR_lowerboundProgram} follows from Theorem \ref{thm:XOR_mainCircuitTheorem}.\\

Now let us start with the proof of Theorem \ref{thm:XOR_generalCircuitTheorem}, which spans the rest of the section. It should be noted that Theorem \ref{thm:XOR_mainCircuitTheorem} is actually a special case of this, so we could have omitted the circuit construction for CFI-symmetric objects; however, the more general construction that we present now is not as natural as the one for CFI-symmetric objects and much harder to describe.\\
Fix again a base graph $G = (V,E)$, a CFI-graph $\GG^S$ over it, and an object $\mu \in \HF(\widehat{E})$. This time, $\mu$ need not be CFI-symmetric. In the previous subsection, we wrote $C(\mu)$ for the circuit obtained by factoring out the $\sim_E$ classes in $\tc(\mu)$. Now, we denote the constructed circuit by $\widehat{C}(\mu)$. Before we explain the construction, let us look at why $C(\mu)$ is not ``the circuit we want'' if $\mu$ is not CFI-symmetric. The only place where CFI-symmetry was required in the previous subsection is in the proof of Lemma \ref{lem:XOR_kernelLemma}, which relates the kernels of the gate matrices with the vector spaces $\Stab_E(x)$. This relationship is crucial because it leads to the connection between $\sup_{\CFI}(\mu)$ and the sensitivity of $C(\mu)$ to its input bits, and is also necessary to get a bound on the fan-in dimension of $C(\mu)$. Without such a bound, the construction would not be interesting because without fan-in restrictions, there always exist small symmetric XOR-circuits. Hence, we would like to ensure that the statement of Lemma \ref{lem:XOR_kernelLemma} still holds for $\widehat{C}(\mu)$, even if $\mu$ is not CFI-symmetric. Now take a look at the inductive proof of Lemma \ref{lem:XOR_kernelLemma} again. The key in this induction is that for any gate matrix $M[x]$ and any child $[y]$ of $[x]$, the row $M[x]_{[y]-}$ can be written as the product of another matrix and the child-gate-matrix $M[y]$: $M[x]_{[y]-} = (1 \ 1 \ ... \ 1) \cdot M[y]$. This equation holds because of the CFI-symmetry of $\mu$. Now in the general case, a similar equation will hold, namely: $M[x]_{[y]-} = N \cdot M[y]$, for some matrix $N$ that has to be chosen depending on $\Stab_E([y] \cap x)$. So our plan for this section is as follows: We will first of all define these $N$-matrices, that essentially ``repair'' the proof of Lemma \ref{lem:XOR_kernelLemma} in the non-CFI-symmetric case. Based on these matrices and on $C(\mu)$, we will construct $\widehat{C}(\mu)$ by introducing gadgets that simulate the effect of the chosen $N$-matrices. Then this circuit will be exactly such that the proof of Lemma \ref{lem:XOR_kernelLemma} goes through again, even if $\mu$ is not CFI-symmetric. In a sense, we can view the gadgets as corrections for local violations of CFI-symmetry. 

\subsection{Definition of the matrices}

Actually, we will not only define the said $N$-matrices, but also, for every $[x] \in \{ [x] \mid x \in \tc(\mu)\}$, a Boolean matrix $M[x]$. We keep the notation $M[x]$ from the previous subsection, even though, strictly speaking, $M[x]$ will not be the gate matrix of any gate in $\widehat{C}(\mu)$; rather, it will be a matrix that satisfies $\Ker(M[x]) = \Stab_E(x)$, and it will serve as a kind of construction specification for a gadget in $\widehat{C}(\mu)$. The $M[x]$-matrices that we are going to define depend on the $M$- and $N$-matrices of the connected components of $x$ in the object $\mu$. Therefore, the construction of these matrices proceeds inductively from the atoms of $\mu$ to the more deeply nested sets. At this point, our main objective is to build the matrices in such a way that their kernels correspond to the stabiliser spaces of the sets they belong to. Furthermore, the matrices should satisfy certain symmetry requirements with respect to the action of $\Stab_G(\mu) \leq \Aut(G)$. Only after the construction of the matrices, we will use them to construct the circuit $\widehat{C}(\mu)$ in such a way that the gate matrices of $\widehat{C}(\mu)$ are as desired. So the procedure in this subsection is the other way round as in the previous one, where the circuit came first, and we then analysed its gate matrices. Now we are specifying the matrices first, and then build the circuit so that the statement of Lemma \ref{lem:XOR_kernelLemma} holds for $\widehat{C}(\mu)$ by construction.
Along with the matrices, we will provide certain group homomorphisms that ensure symmetry.
To speak about symmetry of matrices, we introduce notation to express the effect of row- and column-permutations:\\

Let $M \in \bbF_2^{I \times J}$ be any Boolean matrix, and $\sigma : I \lra I', \pi : J \lra J'$ be bijections. Then $(\sigma,\pi)(M) \in \bbF_2^{I' \times J'}$ is the matrix with $(\sigma,\pi)(M)_{\sigma i, \pi j} = M_{i,j}$ for each $(i,j) \in I \times J$. In particular, if $\sigma \in \Sym(I), \pi \in \Sym(J)$, then $(\sigma,\pi)(M)$ is the matrix that arises from $M$ when the respective row- and column-permutations are applied.
\begin{proposition}
	\label{prop:XOR_matrixMultiplicationAndPermutation}
	Let $\mathbf{v}, \mathbf{w} \in \bbF_2^I$ be two Boolean vectors and $\pi \in \Sym(I)$ a permutation of its entries. Then
	\[
	\mathbf{v}^T \cdot \mathbf{w} = \pi(\mathbf{v}^T) \cdot \pi(\mathbf{w}). 
	\]
	Therefore, if $M \in \bbF_2^{I \times J}$ is a matrix and $\mathbf{w} \in \bbF_2^J$ is a vector, then
	\[
	\sigma(M \cdot \mathbf{w}) = (\sigma, \pi)M \cdot \pi(\mathbf{w}),
	\]
	for every $\sigma \in \Sym(I), \pi \in \Sym(J)$.
\end{proposition}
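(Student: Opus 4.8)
The statement is the elementary linear-algebra fact that matrix--vector multiplication commutes with simultaneous row- and column-permutations, so the proof is a direct computation. The plan is to prove the first identity about inner products and then derive the matrix statement componentwise from it.

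First I would prove $\mathbf{v}^T \cdot \mathbf{w} = \pi(\mathbf{v}^T) \cdot \pi(\mathbf{w})$ for $\mathbf{v}, \mathbf{w} \in \bbF_2^I$ and $\pi \in \Sym(I)$. By definition of the inner product, $\mathbf{v}^T \cdot \mathbf{w} = \sum_{i \in I} \mathbf{v}_i \cdot \mathbf{w}_i$, where the sum is taken in $\bbF_2$. Since $\pi$ is a bijection of $I$, re-indexing the summation via $i \mapsto \pi(i)$ gives $\sum_{i \in I} \mathbf{v}_i \mathbf{w}_i = \sum_{i \in I} \mathbf{v}_{\pi(i)} \mathbf{w}_{\pi(i)}$; but $\mathbf{v}_{\pi(i)}$ is exactly the entry of $\pi(\mathbf{v})$ at index $\pi(i)$ by the convention $\pi(\mathbf{v})_{\pi(i)} = \mathbf{v}_i$, and similarly for $\mathbf{w}$. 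Hence the right-hand side equals $\sum_{i\in I}\pi(\mathbf{v})_{\pi(i)}\,\pi(\mathbf{w})_{\pi(i)} = \pi(\mathbf{v}^T)\cdot\pi(\mathbf{w})$, which is the claim. (Nothing here is special to $\bbF_2$; the argument works over any commutative semiring.)

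Next I would deduce the matrix identity. Fix $\sigma \in \Sym(I)$, $\pi \in \Sym(J)$, $M \in \bbF_2^{I\times J}$ and $\mathbf{w} \in \bbF_2^J$. It suffices to check that both sides of $\sigma(M\cdot \mathbf{w}) = (\sigma,\pi)M \cdot \pi(\mathbf{w})$ agree in every coordinate $\sigma(i)$, $i \in I$. On the left, $\sigma(M\cdot\mathbf{w})_{\sigma(i)} = (M\cdot \mathbf{w})_i = M_{i-}\cdot \mathbf{w}$, the inner product of the $i$-th row of $M$ with $\mathbf{w}$. On the right, the row of $(\sigma,\pi)M$ indexed by $\sigma(i)$ is, by the definition $(\sigma,\pi)(M)_{\sigma i,\pi j}=M_{i,j}$, precisely $\pi$ applied to the row $M_{i-}$; so $\big((\sigma,\pi)M\cdot\pi(\mathbf{w})\big)_{\sigma(i)} = \pi(M_{i-})\cdot \pi(\mathbf{w})$. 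By the inner-product identity just established (applied with the index set $J$), $\pi(M_{i-})\cdot\pi(\mathbf{w}) = M_{i-}\cdot\mathbf{w}$, so the two coordinates coincide. Since $\sigma$ is a bijection this exhausts all coordinates of the target space $\bbF_2^{I'}$, and the identity follows.

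There is no real obstacle here; the only thing to be careful about is bookkeeping with the indexing convention for permuted matrices and vectors — in particular, remembering that $(\sigma,\pi)(M)$ places the old entry $M_{i,j}$ at the new position $(\sigma i,\pi j)$, so that the row of the permuted matrix at index $\sigma(i)$ is the $\pi$-permuted version of the original row $M_{i-}$. Once that convention is pinned down, both parts are one-line computations, and the second part is just the first part applied row by row.
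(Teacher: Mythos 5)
Your proof is correct and follows exactly the same route as the paper: establish the inner-product identity by re-indexing the sum, then apply it row by row to get the matrix statement. The paper simply states this more tersely, while you spell out the coordinate bookkeeping explicitly.
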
	
This proposition follows immediately from the definition of the scalar product of vectors because $\mathbf{v}^T \cdot \mathbf{w}$ and $\pi(\mathbf{v}^T) \cdot \pi(\mathbf{w})$ are the same sum of products, just summed in a different order. The statement about matrix-vector-multiplication then follows because this is just the scalar product of every row vector with $\mathbf{w}$. The proposition will sometimes be used without explicit reference in this section.\\

In the rest of this section, we will often speak about the following orbits and stabilisers. They differ from the ones from the previous section in so far as they concern the subgroup $\Stab_G(\mu) \leq \Aut(G)$, instead of $\Aut(G)$ itself. Thus, we override the notation from the previous section. Let $x \in \tc(\mu)$.
\begin{align*}
	\Orb_G([x]) &:= \{ \pi([x]) \mid \pi \in \Stab_G(\mu) \leq \Aut(G) \}.\\
	\Stab_G([x]) &:= \{ \pi \in \Stab_G(\mu) \mid \pi([x]) = [x]   \}.
\end{align*}
Similarly, for $x,y \in \tc(\mu)$, $\Stab_G([y] \cap x)$ refers to the stabiliser of the set $[y] \cap x$ in the group $\Stab_G(\mu)$.\\

Any orbit of a $\sim$-class is a set of $\sim$-equivalence classes:
\begin{lemma}
	\label{lem:XOR_orbitsOfSimClasses}
	For any $\sim_E$-class $[x] \subseteq \tc(\mu)$, and any $\pi \in \Stab_G(\mu)$, $\pi([x])$ is also a $\sim_E$-class of $\tc(\mu)$.
\end{lemma}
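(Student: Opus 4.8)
The plan is as follows. Let $\sigma$ be the bijection of $\tc(\mu)$ onto itself induced by $\pi \in \Stab_G(\mu)$; this is just the natural action of $\pi$ on $\HF(\widehat{E})$ restricted to $\tc(\mu)$ (it is an $\in^\mu$-automorphism of the DAG $(\tc(\mu), \in^\mu)$, uniquely determined by $\sigma(e_i) = \pi(e)_i$ on atoms), and by definition $\pi([x]) = \sigma([x]) = \{\sigma(y) \mid y \in [x]\}$. I will show $\sigma([x]) = [\sigma(x)]$, where $[\sigma(x)]$ denotes the $\sim_E$-class of $\sigma(x)$ inside $\tc(\mu)$; since the right-hand side is by definition a $\sim_E$-class of $\tc(\mu)$, this is exactly the claim.

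The single ingredient beyond bookkeeping is the commutation identity $\pi \circ \rho_F = \rho_{\pi(F)} \circ \pi$, valid as maps on $\HF(\widehat{E})$ for every $\pi \in \Aut(G)$ and $F \subseteq E$. This is a restatement of the semidirect-product structure $\Aut(\GG) \cong \Aut_{\CFI}(\GG) \rtimes \Aut(G)$ from Section~\ref{sec:CFI_automorphisms}; I would verify it directly by noting that on an atom $e_i$ both sides yield $\pi(e)_{i + |F \cap \{e\}|}$ (using that $\pi$ is a bijection of $E$, so $|\pi(F) \cap \{\pi(e)\}| = |F \cap \{e\}|$), and that both sides act element-wise on non-atomic sets, so the identity propagates by structural induction on rank. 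Note $\rho_{\pi(F)} \in \Aut_{\CFI}(\GG)$ because $\pi$ permutes $E$.

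With this, both inclusions are short. For $\sigma([x]) \subseteq [\sigma(x)]$: an arbitrary $y \in [x]$ has the form $y = \rho_F(x)$ with $\rho_F(x) \in \tc(\mu)$, hence $\sigma(y) = \pi(\rho_F(x)) = \rho_{\pi(F)}(\sigma(x))$, and since $\sigma$ maps $\tc(\mu)$ into $\tc(\mu)$ this element stays in $\tc(\mu)$; so $\sigma(y)$ arises from $\sigma(x)$ by an edge flip while remaining inside $\tc(\mu)$, i.e.\ $\sigma(y) \in [\sigma(x)]$. For the reverse inclusion I use that $\Stab_G(\mu)$ is a group, so $\pi^{-1} \in \Stab_G(\mu)$ with induced bijection $\sigma^{-1}$; applying the same computation to $z = \rho_F(\sigma(x)) \in [\sigma(x)]$ gives $\sigma^{-1}(z) = \rho_{\pi^{-1}(F)}(x) \in \tc(\mu)$, whence $\sigma^{-1}(z) \in [x]$ and therefore $z \in \sigma([x])$.

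I do not expect a genuine obstacle: the statement is essentially the assertion that conjugation by $\pi$ permutes the edge-flip orbits, specialised to orbits living inside $\tc(\mu)$. The only place requiring care is to keep track of the side condition ``$\rho_F(x) \in \tc(\mu)$'' that is built into the definition of a $\sim_E$-class inside $\tc(\mu)$, and to check that it is preserved under $\sigma$ and $\sigma^{-1}$ — which is precisely what the hypothesis $\pi \in \Stab_G(\mu)$ (rather than an arbitrary element of $\Aut(G)$) provides.
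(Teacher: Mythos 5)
Your proof is correct and takes essentially the same route as the paper's: both hinge on the conjugation identity $\rho_{\pi F} = \pi \circ \rho_F \circ \pi^{-1}$ and the fact that $\pi$ extends to an automorphism of $(\tc(\mu), \in^\mu)$. The paper phrases it as "$\pi$ preserves $\sim_E$ and $\not\sim_E$" while you phrase it as the two inclusions of $\sigma([x]) = [\sigma(x)]$, using $\pi^{-1} \in \Stab_G(\mu)$ for the reverse inclusion; these are the same argument.
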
	
\begin{proof}
	It is not hard to check that for any $\pi \in \Aut(G)$, and any $\rho_F \in \Aut_{\CFI}(\GG)$ it holds $\rho_{\pi F} = \pi \circ \rho_F \circ \pi^{-1}$.
	Let $x',x'' \in [x]$ and let $\rho_F \in \Aut_{\CFI}(\GG)$ be such that $\rho_F(x') = x''$.  Then $\rho_{\pi F}(\pi x') = \pi x''$ by the above equation. Thus, $\pi(x') \sim_E \pi(x'')$. Similarly one can show that if $x' \not\sim_E x''$, then $\pi(x') \not\sim_E \pi(x'')$. Therefore, $\pi([x])$ is again a $\sim_E$-class. The fact that $\pi([x]) \subseteq \tc(\mu)$ follows because $\pi$ extends to an automorphism of the h.f.\ set $\mu$. 
\end{proof}

\begin{corollary}
	\label{cor:XOR_stabYXinStabY}
	Let $x,y \in \tc(\mu)$ such that $[y] \cap x \neq \emptyset$. Then
	\[
	\Stab_G([y] \cap x) \leq \Stab_G([y]).
	\]
\end{corollary}
\begin{proof}
	The group $\Stab_G([y] \cap x)$ maps the set $[y] \cap x \subseteq [y]$ to itself, so it does not move this subset of the $\sim$-class $[y]$ into another $\sim$-class. Then by Lemma \ref{lem:XOR_orbitsOfSimClasses}, it must map the whole class $[y]$ to itself because the image of $[y]$ must again be a $\sim$-class.
\end{proof}

\begin{lemma}
	\label{lem:XOR_stabiliserEqualForComponents}
	For any $x,x',y \in \tc(\mu)$ such that $x \sim_E x'$ and $[y] \cap x \neq \emptyset$, it holds
	\[
	\Stab_E([y] \cap x) = \Stab_E([y] \cap x').
	\]	
\end{lemma}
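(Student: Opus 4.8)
The plan is to produce a single edge-flip that carries $[y]\cap x$ onto $[y]\cap x'$, and then invoke Lemma~\ref{lem:XOR_stabiliserEqual}, which says that two h.f.\ sets lying in a common $\Aut_{\CFI}(\GG)$-orbit have equal $\Stab_E$.

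First I would fix $\rho_F \in \Aut_{\CFI}(\GG)$ with $\rho_F(x) = x'$; this exists since $x \sim_E x'$. Recall that $\rho_F$ acts element-wise on h.f.\ sets and that it is an involution, i.e.\ $\rho_F^{-1} = \rho_F$ (as $\rho_F \circ \rho_F = \rho_\emptyset = \id$). The core claim is that $\rho_F\big([y]\cap x\big) = [y]\cap x'$. For the inclusion ``$\subseteq$'': take $z \in [y]\cap x$. Since $\rho_F$ is applied element-wise, $\rho_F(z) \in \rho_F(x) = x'$, and as $x' \in \tc(\mu)$ this gives $\rho_F(z) \in \tc(\mu)$. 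Moreover $z \sim_E y$ and $\rho_F(z) \sim_E z$, so $\rho_F(z) \sim_E y$, and combined with $\rho_F(z) \in \tc(\mu)$ this yields $\rho_F(z) \in [y]$ (recall the convention that $[y]$ denotes the $\sim_E$-class taken \emph{inside} $\tc(\mu)$). Hence $\rho_F(z) \in [y]\cap x'$. The reverse inclusion follows by applying the same argument to $\rho_F^{-1} = \rho_F$ together with the sets $x'$ and $x$. Therefore $[y]\cap x$ and $[y]\cap x'$ belong to the same $\Aut_{\CFI}(\GG)$-orbit.

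It then remains to apply Lemma~\ref{lem:XOR_stabiliserEqual} to the pair $[y]\cap x$, $[y]\cap x' = \rho_F([y]\cap x)$, which gives $\Stab_E([y]\cap x) = \Stab_E([y]\cap x')$, as claimed.

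I do not expect a genuine obstacle in this argument; the only point that needs a little care is to respect the convention that $[y]$ refers to the $\sim_E$-equivalence class restricted to $\tc(\mu)$, so that one must verify $\rho_F(z)$ stays inside $\tc(\mu)$ — but this is immediate, since $z$ is an element of the set $x$ whose $\rho_F$-image is the set $x' \in \tc(\mu)$, and membership is preserved by the element-wise action of $\rho_F$.
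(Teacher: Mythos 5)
Your argument is correct and is essentially the proof the paper gives: the paper simply notes that $[y]\cap x$ and $[y]\cap x'$ are related by an automorphism in $\Aut_{\CFI}(\GG)$ and then invokes Lemma~\ref{lem:XOR_stabiliserEqual}. You have just spelled out in detail why $\rho_F$ carries one set onto the other, including the check that membership in $\tc(\mu)$ is preserved, which the paper leaves implicit.
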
	
\begin{proof}
	The sets $[y] \cap x$ and $[y] \cap x'$ are related via an automorphism in $\Aut_{\CFI}(\GG)$. Therefore,
	Lemma \ref{lem:XOR_stabiliserEqual} applied to the set $[y] \cap x$ yields the desired statement.
\end{proof}	
Now we come to the inductive definition of the aforementioned $M$- and $N$-matrices. Here is the precise list of objects that we are going to define:\\
\\
\begin{enumerate}[(a)]
	\item For every $[x] \in \{[x] \mid x \in \tc(\mu) \}$:
	\begin{enumerate}[(i)]
		\item An index-set $I_{[x]}$.
		\item A matrix $M[x] \in \bbF_2^{I_{[x]} \times E}$ with the property that $\Ker(M[x]) = \Stab_E(x)$ (note that $\Stab_E(x) = \Stab_E(x')$ for every $x' \in [x]$ by Lemma \ref{lem:XOR_stabiliserEqual}).
	\end{enumerate}	
	Let $\Cc[x] := \{ [y] \mid [y] \cap x \neq \emptyset \}$ (note that this does not depend on the representative of $[x]$ -- see Lemma \ref{lem:XOR_inRelation}) denote the $\sim$-classes of the connected components of $x$.
	\item For every $[x] \in \{[x] \mid x \in \tc(\mu) \}$, where $x$ is a set,
	and every $[y] \in \Cc[x]$:
	\begin{enumerate}[(i)]
		\item An index-set $J_{[x][y]}$.
		\item A matrix $N[x][y] \in \bbF_2^{J_{[x][y]} \times I_{[y]}}$ with the property that $\Ker(N[x][y] \cdot M[y]) = \Stab_E([y] \cap x)$ (by Lemma \ref{lem:XOR_stabiliserEqualForComponents}, $\Stab_E([y] \cap x)$ is independent of the choice of representative of $[x]$).
		%	\item A group homomorphism $h_{[x][y]} : g_{[y]}(\Stab_G([y])) \lra \Sym(J_{[x][y]})$ such that for every $\sigma \in g_{[y]}(\Stab_G([y]))$, $(h_{[x][y]}(\sigma), \sigma)(N[x][y]) = N[x][y]$.
	\end{enumerate}	
	\item For every orbit $\Omega_{[x]} := \Orb_G([x])$, let $I_{\Omega_{[x]}} := \biguplus_{[x'] \in \Omega_{[x]}} I_{[x']}$. For every $\Omega_{[x]}$, we provide a group homomorphism $g_{[x]} : \Stab_G(\mu) \lra \Sym(I_{\Omega_{[x]}})$ such that for each $[x'] \in \Omega_{[x]}$ and each $\pi \in \Stab_G(\mu)$, it holds $g_{[x]}(\pi)(I_{[x']}) = I_{\pi[x']}$. $M\pi[x'] = (g_{[x]},\pi)M[x']$ for each $[x'] \in \Omega_{[x]}$ and $\pi \in \Stab_G(\mu)$.
	\item For every orbit $\Omega_{[x]}$, let 
	\[
	J_{\Omega_{[x]}} := \biguplus_{\stackrel{[x'] \in \Omega_{[x]}}{[y'] \in \Cc[x']}} J_{[x'][y']}.
	\]
	For every $\Omega_{[x]}$, we provide a group homomorphism $h_{[x]} : \Stab_G(\mu) \lra \Sym(J_{\Omega_{[x]}})$ such that for each $[x'] \in \Omega_{[x]}$ and each $[y'] \in \Cc[x']$, it holds $h_{[x]}(\pi)(J_{[x'][y']}) = J_{\pi[x']\pi[y']}$. Furthermore, $N\pi[x']\pi[y']= (h_{[x]}(\pi),g_{[x]}(\pi))N[x'][y']$ for each $[x'] \in \Omega_{[x]}$, $[y'] \in \Cc[x']$, and $\pi \in \Stab_G(\mu)$.
\end{enumerate}	

The role of the group homomorphisms is to ensure -- when we build the circuit from these matrices -- that every $\pi \in \Stab_G(\mu)$ that acts on the input gates indeed extends to an automorphism of $\widehat{C}(\mu)$. Before we actually construct anything, we have to verify that it is indeed possible to satisfy the symmetry conditions witnessed by the group homomorphisms and the conditions on the kernels of the matrices simultaneously. In other words, we have to show that the stabiliser spaces, which are supposed to be equal to the respective kernels, are mapped to each other by the permutations in $\Stab_G(\mu)$:
\begin{lemma}
	\label{lem:XOR_actionOfAutomorphismsOnStabiliserSpaces}
	Let $x \in \tc(\mu)$ and $\pi \in \Stab_G(\mu)$. Then 
	\[
	\Stab_E(\pi x) = \pi(\Stab_E(x)) = \{  \pi \mathbf{v} \mid \mathbf{v} \in \Stab_E(x)  \},
	\]
	where $\pi \in \Sym(E)$ acts on vectors in $\bbF_2^E$ by permuting their entries. Furthermore, for every $[y] \in \Cc[x]$, we have
	\[
	\Stab_E(\pi[y] \cap \pi x) = \pi(\Stab_E([y] \cap x)).
	\] 
\end{lemma}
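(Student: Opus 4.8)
The plan is to reduce everything to the "conjugation identity" $\rho_{\pi F} = \pi \circ \rho_F \circ \pi^{-1}$ that already appeared in the proof of Lemma \ref{lem:XOR_orbitsOfSimClasses}. First I would unwind the definitions. The stabiliser $\Stab_E(x)$ is identified (via $\chi$) with $\{\,\chi(F) \mid \rho_F \in \Aut_{\CFI}(\GG),\ \rho_F(x) = x\,\}$, and the action of $\pi \in \Aut(G)$ on a vector $\mathbf v = \chi(F) \in \bbF_2^E$ by permuting entries corresponds exactly to $\chi(\pi F)$, i.e.\ to passing from $\rho_F$ to $\rho_{\pi F}$. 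So the two claimed set equalities are really statements about which edge-flips fix $\pi x$ (resp.\ $\pi[y] \cap \pi x$).

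The core computation is a chain of equivalences. Since $\pi \in \Stab_G(\mu)$ extends to an automorphism of the DAG $(\tc(\mu),\in^\mu)$, I have $\pi x \in \tc(\mu)$, and for any $F \subseteq E$:
\[
\rho_{\pi F}(\pi x) = (\pi \circ \rho_F \circ \pi^{-1})(\pi x) = \pi(\rho_F(x)).
\]
Hence $\rho_{\pi F}(\pi x) = \pi x$ if and only if $\pi(\rho_F(x)) = \pi x$, and since $\pi$ is a bijection on $\tc(\mu)$, this holds if and only if $\rho_F(x) = x$. Translating back through $\chi$: $\chi(\pi F) \in \Stab_E(\pi x)$ iff $\chi(F) \in \Stab_E(x)$. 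As $F \mapsto \pi F$ is a bijection on subsets of $E$ (equivalently, $\mathbf v \mapsto \pi \mathbf v$ is a bijection on $\bbF_2^E$), this says precisely $\Stab_E(\pi x) = \pi(\Stab_E(x))$. For the second statement I would apply the same argument with $x$ replaced by the set $[y] \cap x$; the only extra point needed is that $\rho_F$ acts elementwise, so $\rho_F([y] \cap x) = \rho_F([y]) \cap \rho_F(x)$, and likewise $\pi([y] \cap x) = \pi[y] \cap \pi x$ with $\pi[y]$ again a $\sim_E$-class by Lemma \ref{lem:XOR_orbitsOfSimClasses}; then $\rho_{\pi F}(\pi([y] \cap x)) = \pi(\rho_F([y] \cap x))$ exactly as before, and the equivalence goes through verbatim.

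There is no real obstacle here — the lemma is essentially bookkeeping — but the one point that deserves care is making sure the conjugation identity $\rho_{\pi F} = \pi \rho_F \pi^{-1}$ is applied correctly as an identity of automorphisms of $\GG$ (not just of $\GG^S$), and that the membership $\pi x \in \tc(\mu)$ (and $\pi[y] \cap \pi x \neq \emptyset$) is genuinely used, since $\Stab_E$ of an object depends only on the object and not on $\tc(\mu)$, but the well-definedness of the $\sim_E$-classes $\pi[y]$ inside $\tc(\mu)$ does. I would also remark explicitly that, by Lemma \ref{lem:XOR_stabiliserEqual}, both sides are independent of the chosen representatives $x' \in [x]$ and $y' \in [y]$, so the statement is meaningful at the level of $\sim_E$-classes, which is what the subsequent construction of the group homomorphisms $g_{[x]}, h_{[x]}$ requires.
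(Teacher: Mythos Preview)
Your proposal is correct and follows essentially the same route as the paper's proof: both hinge on the conjugation identity $\rho_{\pi F} = \pi \circ \rho_F \circ \pi^{-1}$, compute $\rho_{\pi F}(\pi x) = \pi(\rho_F(x))$, and reduce the second statement to the first via $\pi([y] \cap x) = \pi[y] \cap \pi x$. Your additional remarks on representative-independence are fine but not needed for the lemma itself.
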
	 
\begin{proof}
	For any $\pi \in \Aut(G)$, and any $\rho_F \in \Aut_{\CFI}(\GG)$, it holds $\rho_{\pi F} = \pi \circ \rho_F \circ \pi^{-1}$. Thus, $\rho_{\pi F} \in \Stab_E(\pi x)$ if and only if $(\pi \circ \rho_F \circ \pi^{-1})(\pi x) = \pi x$. This holds if and only if $(\pi \circ \rho_F)(x) = \pi x$, which is the case iff $\rho_F(x) = x$. This proves the first part of the lemma since $\chi(\pi F) = \pi(\chi(F))$. The second part can be shown in the same way because $\pi[y] \cap \pi x = \pi([y] \cap x)$. This last equation holds since the action of $\pi$ on $\HF(\widehat{E})$ is a bijection from $\HF(\widehat{E})$ to itself, so $\pi[y] \cap \pi x \subseteq \pi([y] \cap x)$ (this would not necessarily be true if $\pi$ were not injective on $\HF(\widehat{E})$). 
\end{proof}	

\subsection*{Inductive construction}
\textbf{Base case:}\\
Let $x = e_i$, for $e \in E$ and $i \in \{0,1\}$, be an atom in $\tc(\mu)$. Then we set
\[
M[x] := \chi(e)^T.
\]
Formally, we define the row index set as $I_{[x]} := \{[x]\}$, but any singleton set that is distinct from all other index sets will do.\\
Now for every orbit $\Omega_{[x]}$, where $x$ is an atom in $\tc(\mu)$, we define the homomorphism $g_{[x]} : \Stab_G(\mu) \lra \Sym(I_{\Omega_{[x]}})$ by letting $g_{[x]}(\pi)([x']) := [\pi x']$ for every $\pi \in \Stab_G(\mu)$, $[x'] \in I_{\Omega_{[x]}}$ (note that by definition of the index-sets $I_{[x]}$, $I_{\Omega_{[x]}}$ is equal to the orbit $\Omega_{[x]}$).\\
\\
\textbf{Inductive step:}\\
We deal with the items from the above list in the order (b), (d), (a), (c).
Let $x \in \tc(\mu)$ be a non-atomic object, that is, a set. Assume that for every $[y] \in \Cc[x]$ and every $[y] \in \Cc[x']$, for every $[x'] \in \Omega_{[x]}$, the respective matrix $M[y]$ with index set $I_{[y]}$ has been constructed. Thus we also assume that for any such $[y]$, the homomorphism $g_{[y]}$ corresponding to $\Omega_{[y]}$ has been defined. We fix a $y$ such that $[y] \in \Cc[x]$. For this fixed pair $[x], [y]$ we will now construct the matrix $N[x][y]$. Then we will close it under the action of $\Stab_G(\mu)$. That is, given this matrix $N[x][y]$, we will symmetrically define $N[x'][y']$ for all $[y'] \in \Omega_{[y]}$, and all $[x'] \in \Omega_{[x]}$ such that $[y'] \in \Cc[x']$.\\
After this, there may still exist some components $[y'] \in \Cc[x]$ for which $N[x][y']$ has not been defined. In that case, we fix such a $[y'] \in \Cc[x]$, define the corresponding matrix $N[x][y']$ explicitly, and define the matrices for all $\Stab_G(\mu)$-images of $[x]$ and $[y']$ symmetrically, and so on. Hence, we first have to describe how to define the respective initial matrix from which we obtain the other ones by symmetry.\\

\paragraph*{Definition of the N-matrices}
Let $y \in \tc(\mu)$ be such that $[y] \in \Cc[x]$. %Let $\gamma \in \Cc(x)$ be the connected component of $x$ such that $y \in \gamma$.
We assume that $M[y]$, $I_{[y]}$ and $g_{[y]}$ have been constructed. %We now define $N[x][y]$, and afterwards, $M[x]$. The definition of these matrices will also lead to the definition of $N[x'][y']$ and $M[x']$ for every $[x'] \in \Omega_{[x]}$, and every $[y'] \in \Cc[x']$. So the definition of these matrices proceeds in fact orbit by orbit.\\
The matrix $N[x][y]$ is defined as the smallest Boolean matrix that satisfies the following two conditions:
\begin{enumerate}[(i)]
	\item
	$
	\Ker(N[x][y]) \cap \Im(M[y]) = M[y] \cdot \Stab_E([y] \cap x) = \{ (M[y] \cdot \mathbf{v}) \mid \mathbf{v} \in \Stab_E([y] \cap x)  \}.    
	$\\
	(For a matrix $M \in \bbF_2^{I \times J}$, $\Im(M)$ denotes the space that is the image of $\bbF_2^J$ under $M$).
	\item There exists a homomorphism $h$ from $g_{[y]}(\Stab_G([y] \cap x)) \leq \Sym(I_{[y]})$ into the symmetric group on the row index set of $N[x][y]$ such that for every $\sigma \in g_{[y]}(\Stab_G([y] \cap x))$, it holds $(h(\sigma), \sigma)(N[x][y]) = N[x][y]$.
\end{enumerate}	
In the second property, we abused notation and wrote $g_{[y]}(\Stab_G([y] \cap x))$ for a subgroup of $\Sym(I_{[y]})$, even though $g_{[y]}(\Stab_G([y] \cap x))$ is formally a subgroup of $\Sym(I_{\Omega_{[y]}})$. However, we know from Corollary \ref{cor:XOR_stabYXinStabY} that $\Stab_G([y] \cap x) \leq \Stab_G([y])$, so $g_{[y]}(\Stab_G([y] \cap x))$ indeed maps the row index set of $M[y]$, that is, $I_{[y]}$, to itself (see property (c) of $g_{[y]}$ that holds by the induction hypothesis).\\

By ``smallest'' matrix we mean one that satisfies (i) and (ii) and has the least number of rows. If there are multiple such matrices with the same minimal number of rows, we choose an arbitrary one of them for $N[x][y]$.\\

Let $m$ be the number of rows of $N[x][y]$. We define the row index set $J_{[x][y]}$ of $N[x][y]$ as an $m$-element set that is disjoint from all other index sets constructed so far. Formally, this can be achieved by letting $J_{[x][y]} := \{ (i,[x],[y]) \mid i \in [m]  \}$.\\
We have to show that there always exists a matrix that satisfies (a) and (b). A matrix satisfying (a) can be found with methods from linear algebra:
\begin{lemma}
	\label{lem:XOR_kernelMatrixExists}
	Let $\Gamma \leq \Delta \leq \bbF_2^I$ be Boolean vector spaces. Let $d$ be the dimension of $\Gamma$ and $k = \dim \Delta - d$ be the codimension of $\Gamma$ in $\Delta$. There exists a matrix $N \in \bbF_2^{[k] \times I}$ such that $\Ker(N) \cap \Delta = \Gamma$.
\end{lemma}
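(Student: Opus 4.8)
The plan is to build $N$ from a basis of $\bbF_2^I$ that is adapted to the chain $\Gamma \leq \Delta \leq \bbF_2^I$. First I would fix a basis $v_1,\dots,v_d$ of $\Gamma$, extend it to a basis $v_1,\dots,v_{d+k}$ of $\Delta$ (possible since $\dim \Delta = d+k$), and then extend once more to a basis $v_1,\dots,v_{|I|}$ of the whole space $\bbF_2^I$. All of this uses only that $\bbF_2$ is a field, so that every linearly independent set extends to a basis and every subspace has a complement.

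Next I would define a linear map $\varphi \colon \bbF_2^I \lra \bbF_2^{[k]}$ on this basis by $\varphi(v_i) = \mathbf{0}$ for $i \leq d$, by $\varphi(v_{d+j}) = e_j$ for $1 \leq j \leq k$ (where $e_1,\dots,e_k$ denote the standard basis vectors of $\bbF_2^{[k]}$), and by $\varphi(v_i) = \mathbf{0}$ for $i > d+k$. Let $N \in \bbF_2^{[k] \times I}$ be the matrix of $\varphi$ with respect to the standard basis of $\bbF_2^I$ on the domain and of $\bbF_2^{[k]}$ on the codomain; by construction it has exactly $k$ rows and $|I|$ columns, as required.

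It then remains to check that $\Ker(N) \cap \Delta = \Gamma$. The inclusion $\Gamma \subseteq \Ker(N) \cap \Delta$ is immediate, since $v_1,\dots,v_d \in \Delta$ are each sent to $\mathbf{0}$. For the converse, take $w \in \Delta$ and write $w = \sum_{i=1}^{d+k} a_i v_i$ with $a_i \in \bbF_2$ (possible because $v_1,\dots,v_{d+k}$ span $\Delta$). Then $N \cdot w = \varphi(w) = \sum_{j=1}^{k} a_{d+j}\, e_j$, which is $\mathbf{0}$ precisely when $a_{d+1} = \dots = a_{d+k} = 0$, i.e.\ when $w \in \mathrm{span}(v_1,\dots,v_d) = \Gamma$. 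Hence $\Ker(N) \cap \Delta = \Gamma$.

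I do not anticipate a real obstacle: the statement is finite-dimensional linear algebra over $\bbF_2$. The only point requiring a little care is that the ``complement part'' $\mathrm{span}(v_{d+k+1},\dots,v_{|I|})$ must have trivial intersection with $\Delta$, which is exactly why I extend a basis of $\Delta$ rather than picking an unrelated basis of $\bbF_2^I$; this keeps $\Ker(N) \cap \Delta$ from accidentally being larger than $\Gamma$. An equivalent intrinsic description, which I might mention as a remark, is to take $N$ to be the composition of a projection $\bbF_2^I \lra \Delta$ along a complement with the quotient map $\Delta \lra \Delta/\Gamma \cong \bbF_2^{[k]}$, but the explicit basis version above is the most direct route to an actual $k \times |I|$ matrix.
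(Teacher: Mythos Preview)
Your proof is correct and follows the same basic idea as the paper: choose a basis of $\Gamma$ and extend it by $k$ vectors $w_1,\dots,w_k$ to a basis of $\Delta$. The execution differs slightly. You extend once more to a basis of all of $\bbF_2^I$ and then define the linear map $\varphi$ directly on that basis; this is the cleanest route. The paper instead stops at a basis $\Bb$ of $\Delta$ and obtains the $i$-th row of $N$ as a solution to the linear system whose equations are $\langle v, x\rangle = 0$ for $v\in\Bb\setminus\{w_i\}$ and $\langle w_i,x\rangle = 1$. (Your choice amounts to selecting one particular such solution.) The reason for the paper's more roundabout formulation is that it is later reused in Lemma~\ref{lem:XOR_NmatricesPolynomial}, where the symmetry of $N$ under a permutation group is analysed via Lemma~\ref{lem:XOR_symmetricSolutionLemma} by tracking how permutations act on the row-sets of these equation systems; that argument is most naturally phrased in terms of solutions to systems rather than in terms of a fixed choice of extended basis.
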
	
\begin{proof}
	Each of the $k$ rows of $N$ can be obtained as the solution to a linear equation system. For $i \in [k]$, $j \in I$, let $n_{ij} := N_{ij}$ denote the sought entry in row $i$ and column $j$. Fix a basis $\Bb_{\Gamma}$ of $\Gamma$, and an extension of that basis $\Bb \supseteq \Bb_{\Gamma}$ such that $\Bb = \Bb_{\Gamma} \uplus \{ \mathbf{w}_1,...,\mathbf{w}_k \}$ is a basis of $\Delta$. For each $i \in [k]$, we define an equation system $A_i \cdot \mathbf{x} = \mathbf{b}_i$ whose unique solution vector is the desired row $(n_{i1}, n_{i2},...,n_{i|I|})$ of $N$. The system has $\dim \Delta$ many equations, where each equation is associated with a basis vector in $\Bb$. For every basis vector $\mathbf{v} \in \Bb \setminus \{ \mathbf{w}_i\}$, we have the equation 
	\[
	\sum_{j \in I}\mathbf{v}(j)\cdot \mathbf{x}(j) = 0
	\]
	in the system $A_i \cdot \mathbf{x} = \mathbf{b}_i$. For the basis vector $\mathbf{w}_i \in \Bb$, we have the equation 
	\[
	\sum_{j \in I}\mathbf{w}_i(j)\cdot \mathbf{x}(j) = 1
	\]
	in $A_i \cdot \mathbf{x} = \mathbf{b}_i$. In this way, we define $k$ equation systems, one for each $i \in [k]$. In fact, the coefficient matrix $A_i$ is the same for all of them. Its rows are the vectors in $\Bb$ (transposed). The vector $\mathbf{b}_i$ has a $1$-entry in the row containing $\mathbf{w}_i$, and is zero otherwise. The rank and the number of rows of every $A_i$ is $\dim \Delta$ because $\Bb$ is a basis of $\Delta$. Hence, each of the equation systems has a unique solution. If we define each entry $n_{ij}$ of $N$ to be the $j$-th entry of the solution vector to $A_i \cdot \mathbf{x} = \mathbf{b}_i$, then indeed, $\Ker(N) \cap \Delta = \Gamma$, by definition of the equation systems.
\end{proof}

This shows that a matrix satisfying condition (a) always exists. The matrix can be closed under the action of $g_{[y]}(\Stab_G([y] \cap x))$ so that it also satisfies condition (b). This requires that the vector space that we want as the kernel of $N[x][y]$ is invariant under that permutation group:    
\begin{lemma}
	\label{lem:XOR_imageOfStabiliserPermutationInvariant}
	The space $M[y] \cdot \Stab_E([y] \cap x) \leq \bbF_2^{I_{[y]}}$ is invariant under the action of the permutation group $g_{[y]}(\Stab_G([y] \cap x)) \leq \Sym(I_{[y]})$ on the entries of its vectors. That is, for every $\mathbf{v} \in M[y] \cdot \Stab_E([y] \cap x)$ and $\pi \in g_{[y]}(\Stab_G([y] \cap x))$, it holds $\pi(\mathbf{v}) \in M[y] \cdot \Stab_E([y] \cap x)$. 
\end{lemma}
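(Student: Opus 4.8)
The plan is to verify the invariance by a direct computation, unwinding the definitions and leaning on two facts established earlier: the symmetry of the already-constructed matrix $M[y]$ recorded in item (c) of the inductive hypothesis, and the description of how $\Stab_E$ transforms under $\Aut(G)$ from Lemma \ref{lem:XOR_actionOfAutomorphismsOnStabiliserSpaces}. Concretely, I would fix an arbitrary $\mathbf{v} = M[y]\cdot\mathbf{w}$ with $\mathbf{w}\in\Stab_E([y]\cap x)$, and an arbitrary element of $g_{[y]}(\Stab_G([y]\cap x))$, written as $g_{[y]}(\pi)$ for some $\pi\in\Stab_G([y]\cap x)$, and then show that $g_{[y]}(\pi)(\mathbf{v})$ again lies in $M[y]\cdot\Stab_E([y]\cap x)$.

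The first step would be to observe, via Corollary \ref{cor:XOR_stabYXinStabY}, that $\pi$ fixes the whole $\sim_E$-class $[y]$, not merely its subset $[y]\cap x$. Hence property (c) of $g_{[y]}$, applied with the class $[y]$ and this $\pi$, gives $M[y] = M\pi[y] = (g_{[y]}(\pi),\pi)M[y]$; that is, permuting the rows of $M[y]$ by $g_{[y]}(\pi)$ and its columns by $\pi$ leaves $M[y]$ unchanged. Applying Proposition \ref{prop:XOR_matrixMultiplicationAndPermutation} with $\sigma = g_{[y]}(\pi)$ then yields
\[
g_{[y]}(\pi)\bigl(M[y]\cdot\mathbf{w}\bigr) = (g_{[y]}(\pi),\pi)M[y]\cdot\pi(\mathbf{w}) = M[y]\cdot\pi(\mathbf{w}).
\]
So it only remains to check that $\pi(\mathbf{w})\in\Stab_E([y]\cap x)$, which I would get from the second part of Lemma \ref{lem:XOR_actionOfAutomorphismsOnStabiliserSpaces}: it gives $\Stab_E(\pi[y]\cap\pi x) = \pi(\Stab_E([y]\cap x))$, and since $\pi$ fixes the set $[y]\cap x$ and acts injectively on $\HF(\widehat{E})$ — so $\pi[y]\cap\pi x = \pi([y]\cap x) = [y]\cap x$ — the left-hand side equals $\Stab_E([y]\cap x)$. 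Therefore $\pi$ stabilises the space $\Stab_E([y]\cap x)$ setwise, $\pi(\mathbf{w})$ belongs to it, and $g_{[y]}(\pi)(\mathbf{v}) = M[y]\cdot\pi(\mathbf{w})$ is as required.

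I do not anticipate a genuine obstacle in this lemma; the care needed is purely bookkeeping. The one point I would be careful about is the mild abuse of notation by which $g_{[y]}(\Stab_G([y]\cap x))$ is regarded as a subgroup of $\Sym(I_{[y]})$ rather than of $\Sym(I_{\Omega_{[y]}})$: this is legitimate precisely because Corollary \ref{cor:XOR_stabYXinStabY} forces every such $\pi$ to fix $[y]$, so by the defining property of $g_{[y]}$ it maps the block $I_{[y]}$ of $I_{\Omega_{[y]}}$ to itself. The other thing to keep straight is that the row permutation applied to $M[y]$ is $g_{[y]}(\pi)\in\Sym(I_{[y]})$ while the column permutation is $\pi\in\Sym(E)$ itself; once this is in place, Proposition \ref{prop:XOR_matrixMultiplicationAndPermutation} does all the work.
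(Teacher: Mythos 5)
Your proof is correct and follows essentially the same route as the paper's: fix $\mathbf{v}=M[y]\cdot\mathbf{w}$, use Corollary \ref{cor:XOR_stabYXinStabY} together with item (c) of the inductive hypothesis to get $(g_{[y]}(\pi),\pi)M[y]=M[y]$, apply Proposition \ref{prop:XOR_matrixMultiplicationAndPermutation} to pull the permutation inside, and finish with Lemma \ref{lem:XOR_actionOfAutomorphismsOnStabiliserSpaces} to see that $\pi(\mathbf{w})\in\Stab_E([y]\cap x)$. The only difference is purely notational — the paper picks a preimage $\sigma\in g_{[y]}^{-1}(\pi)$ whereas you write the group element as $g_{[y]}(\pi)$ from the start.
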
	
\begin{proof}
	Let $\mathbf{v} \in M[y] \cdot \Stab_E([y] \cap x)$ and $\pi \in g_{[y]}(\Stab_G([y] \cap x))$. We can write $\mathbf{v} = M[y] \cdot \mathbf{w}$ for some $\mathbf{w} \in \Stab_E([y] \cap x)$. Fix a $\sigma \in g^{-1}_{[y]}(\pi)$, i.e.\ $\sigma \in \Stab_G([y] \cap x)$. By Proposition \ref{prop:XOR_matrixMultiplicationAndPermutation} it holds:
	\[
	(\pi, \sigma)(M[y]) \cdot \sigma(\mathbf{w}) = \pi(\mathbf{v}).
	\]
	From the inductive hypothesis we have that $(\pi, \sigma)(M[y]) = M[y]$ since $\sigma \in \Stab_G([y] \cap x) \leq \Stab_G([y])$ (see item (c) in the enumeration above). The fact that $\Stab_G([y] \cap x) \leq \Stab_G([y])$ is shown in Corollary \ref{cor:XOR_stabYXinStabY}. We conclude: $M[y] \cdot \sigma(\mathbf{w}) = \pi(\mathbf{v})$. If $\sigma(\mathbf{w}) \in \Stab_E([y] \cap x)$, then we are done and have that $\pi(\mathbf{v}) \in M[y] \cdot \Stab_E([y] \cap x)$, as desired. To show that $\sigma(\mathbf{w}) \in \Stab_E([y] \cap x)$, we apply Lemma \ref{lem:XOR_actionOfAutomorphismsOnStabiliserSpaces}: Since $\mathbf{w} \in \Stab_E([y] \cap x)$, we have $\sigma \mathbf{w} \in \sigma (\Stab_E([y] \cap x) )= \Stab_E(\sigma[y] \cap \sigma x)$. Finally, as mentioned in the proof of Lemma \ref{lem:XOR_actionOfAutomorphismsOnStabiliserSpaces}, we have $\sigma[y] \cap \sigma x = \sigma([y] \cap x)$, and it holds $\sigma([y] \cap x) = [y] \cap x$, because $\sigma \in \Stab_G([y] \cap x)$. 
\end{proof}	

Knowing this, we can see that it is indeed possible to satisfy both conditions (i) and (ii) at the same time.
\begin{lemma}
	\label{lem:XOR_existenceOfNMatrixWithConditionsAB}
	Let $[y] \in \Cc[x]$. There exists a Boolean matrix $N$ that satisfies conditions (i) and (ii) mentioned above, i.e.:
	\begin{enumerate}[(i)]
		\item
		$
		\Ker(N) \cap \Im(M[y]) = M[y] \cdot \Stab_E([y] \cap x).
		$
		\item There exists a homomorphism $h$ from $g_{[y]}(\Stab_G([y] \cap x)) \leq \Sym(I_{[y]})$ into the symmetric group on the row index set of $N$ such that for every $\sigma \in g_{[y]}(\Stab_G([y] \cap x))$, it holds $(h(\sigma), \sigma)(N) = N$.
	\end{enumerate}
\end{lemma}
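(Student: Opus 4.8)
The plan is to first construct, by elementary linear algebra, \emph{some} matrix satisfying condition (i), and then to symmetrise it under the permutation group that occurs in condition (ii), checking that the symmetrisation does not destroy (i). The two ingredients are Lemma~\ref{lem:XOR_kernelMatrixExists} (existence of a kernel-prescribing matrix) and Lemma~\ref{lem:XOR_imageOfStabiliserPermutationInvariant} (the target kernel space is permutation-invariant), glued together by the conventions of Proposition~\ref{prop:XOR_matrixMultiplicationAndPermutation}.

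\textbf{Construction.} First, apply Lemma~\ref{lem:XOR_kernelMatrixExists} to the inclusion of $\bbF_2$-vector spaces $\Gamma := M[y]\cdot\Stab_E([y]\cap x)\ \leq\ \Delta := \Im(M[y])$ inside $\bbF_2^{I_{[y]}}$; the inclusion $\Gamma\leq\Delta$ is clear because $\Stab_E([y]\cap x)\leq\bbF_2^E$ and $M[y]\cdot\bbF_2^E=\Im(M[y])$. This yields a matrix $N_0\in\bbF_2^{[k]\times I_{[y]}}$, with $k=\dim\Im(M[y])-\dim\Gamma$, satisfying $\Ker(N_0)\cap\Im(M[y])=M[y]\cdot\Stab_E([y]\cap x)$, hence condition (i). Next, set $P:=g_{[y]}\bigl(\Stab_G([y]\cap x)\bigr)$, regarded as a subgroup of $\Sym(I_{[y]})$: this makes sense because $\Stab_G([y]\cap x)\leq\Stab_G([y])$ by Corollary~\ref{cor:XOR_stabYXinStabY}, so by property~(c) of $g_{[y]}$ (which holds by the induction hypothesis) $g_{[y]}$ sends $\Stab_G([y]\cap x)$ into the setwise stabiliser of $I_{[y]}$ inside $\Sym(I_{\Omega_{[y]}})$. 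Now let $N$ be the matrix with row-index set $P\times[k]$ whose row at index $(\pi,i)$ is $\pi\bigl((N_0)_{i,-}\bigr)$, the $i$-th row of $N_0$ with its coordinates permuted by $\pi$; formally $N_{(\pi,i),\ell}:=(N_0)_{i,\pi^{-1}(\ell)}$ for $\ell\in I_{[y]}$.

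\textbf{Verification.} For (i): using Proposition~\ref{prop:XOR_matrixMultiplicationAndPermutation} (and that $P$ is a group) one checks $\Ker(N)=\bigcap_{\pi\in P}\pi(\Ker N_0)$, where $P$ acts coordinatewise on $\bbF_2^{I_{[y]}}$. Both $\Im(M[y])$ and $M[y]\cdot\Stab_E([y]\cap x)$ are $P$-invariant: the latter is exactly Lemma~\ref{lem:XOR_imageOfStabiliserPermutationInvariant}; for the former, given $\sigma\in\Stab_G([y]\cap x)\leq\Stab_G([y])$, property~(c) gives $M[y]=(g_{[y]}(\sigma),\sigma)(M[y])$, whence $g_{[y]}(\sigma)(M[y]\,\mathbf w)=M[y]\,\sigma(\mathbf w)\in\Im(M[y])$ by Proposition~\ref{prop:XOR_matrixMultiplicationAndPermutation}. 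Applying an arbitrary $\pi\in P$ to the identity $\Ker(N_0)\cap\Im(M[y])=M[y]\cdot\Stab_E([y]\cap x)$ and using $P$-invariance of the two factors on the right gives $\pi(\Ker N_0)\cap\Im(M[y])=M[y]\cdot\Stab_E([y]\cap x)$; intersecting over $\pi\in P$ and pulling $\Im(M[y])$ out of the intersection yields $\Ker(N)\cap\Im(M[y])=M[y]\cdot\Stab_E([y]\cap x)$, i.e.\ (i). For (ii): define $h:P\to\Sym(P\times[k])$ by $h(\sigma)(\pi,i):=(\sigma\pi,i)$, the left-translation action of $P$ on the first coordinate; this is a group homomorphism, and a one-line index computation from the definition ($N_{(\sigma\pi,i),\sigma(\ell)}=(N_0)_{i,(\sigma\pi)^{-1}\sigma(\ell)}=(N_0)_{i,\pi^{-1}(\ell)}=N_{(\pi,i),\ell}$) shows $(h(\sigma),\sigma)(N)=N$ for all $\sigma\in P$, which is (ii).

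\textbf{Main obstacle.} The only non-formal step is the argument that \emph{both} $\Im(M[y])$ and $M[y]\cdot\Stab_E([y]\cap x)$ — not just the latter — are invariant under $P=g_{[y]}(\Stab_G([y]\cap x))$, so that ``intersect with $\Im(M[y])$'' commutes with ``average over $P$''. This, together with the prerequisite that $P$ really acts on $I_{[y]}$ rather than merely on the larger orbit index set $I_{\Omega_{[y]}}$, rests entirely on the induction hypothesis — specifically on property~(c) of $g_{[y]}$ and on Corollary~\ref{cor:XOR_stabYXinStabY} — and it is exactly the point at which the present lemma feeds on the earlier items (a)--(d) of the inductive construction. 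Everything else is routine manipulation with the matrix-permutation conventions of Proposition~\ref{prop:XOR_matrixMultiplicationAndPermutation} and with Lemma~\ref{lem:XOR_kernelMatrixExists}.
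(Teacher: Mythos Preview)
Your proof is correct and follows essentially the same approach as the paper: obtain a matrix $N_0$ satisfying (i) via Lemma~\ref{lem:XOR_kernelMatrixExists}, symmetrise by adjoining all $P$-permuted rows, and verify that (i) survives using the permutation-invariance from Lemma~\ref{lem:XOR_imageOfStabiliserPermutationInvariant}. The only cosmetic differences are that you index rows by $P\times[k]$ (allowing repeated rows) whereas the paper uses the disjoint union of orbits, and your verification of (i) additionally invokes $P$-invariance of $\Im(M[y])$ to argue more uniformly, while the paper checks the two inclusions directly.
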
	
\begin{proof}
	Lemma \ref{lem:XOR_kernelMatrixExists} applied to $\Delta = \Im(M[y])$ and $\Gamma =  M[y] \cdot \Stab_E([y] \cap x)$ gives us a matrix $N' \in \bbF_2^{[k] \times I_{[y]}}$ that satisfies condition (i); here, $k = \dim (\Im \ M[y]) - \dim( M[y] \cdot \Stab_E([y] \cap x) )$. We can close $N'$ under the action of $g_{[y]}(\Stab_G([y] \cap x)) \leq \Sym(I_{[y]})$ so that condition (ii) is also satisfied: For each row $N'_{i-}$ of $N'$, let $\Orb(N'_{i-}) := \{ \pi(N'_{i-}) \mid \pi \in g_{[y]}(\Stab_G([y] \cap x)) \}$. Here, $\pi \in \Sym(I_{[y]}) $ permutes the columns, i.e.\ the entries of the respective row $N'_{i-}$. Now let $N$ be the Boolean matrix whose set of rows is the disjoint union $\biguplus_{i \in [k]} \Orb(N'_{i-})$. Clearly, there is a homomorphism $h$ from $g_{[y]}(\Stab_G([y] \cap x))$ into the symmetric group on the rows of $N$ (more precisely into $\Pi_{i \in [k]} \Sym(\Orb(N'_{i-}))$). This homomorphism is just the group action of $g_{[y]}(\Stab_G([y] \cap x))$ on the rows of $N$ (separately on the orbits $\Orb(N'_{i-})$).\\
	It remains to show that this symmetry-closed matrix $N$ still satisfies condition (i), i.e.\ that $\Ker(N) \cap \Im(M[y]) = M[y] \cdot \Stab_E([y] \cap x)$. We have $\Ker(N) \cap \Im(M[y]) \subseteq M[y] \cdot \Stab_E([y] \cap x)$ because for any vector $\mathbf{v} \notin M[y] \cdot \Stab_E([y] \cap x)$, either $\mathbf{v} \notin \Im(M[y])$, or if $\mathbf{v} \in \Im(M[y])$, then $N' \cdot \mathbf{v} \neq \mathbf{0}$ because $\Ker(N') \cap \Im(M[y]) = M[y] \cdot \Stab_E([y] \cap x)$. Since $N'$ is a submatrix of $N$, we also have $N \cdot \mathbf{v} \neq \mathbf{0}$, so $\mathbf{v} \notin \Ker(N)$. Therefore, $\Ker(N) \cap \Im(M[y]) \subseteq M[y] \cdot \Stab_E([y] \cap x)$. It remains to show: $M[y] \cdot \Stab_E([y] \cap x) \subseteq \Ker(N) \cap \Im(M[y])$. So let $\mathbf{v} \in M[y] \cdot \Stab_E([y] \cap x)$. Then $N' \cdot \mathbf{v} = \mathbf{0}$. We have to prove that for every row $N'_{i-}$ and every $\pi \in g_{[y]}(\Stab_G([y] \cap x))$, we have $\pi(N'_{i-}) \cdot \mathbf{v} = 0$. It holds (see Proposition \ref{prop:XOR_matrixMultiplicationAndPermutation}):
	\[
	\pi(N'_{i-}) \cdot \mathbf{v} = \pi^{-1}(\pi(N'_{i-})) \cdot \pi^{-1}\mathbf{v} = N'_{i-} \cdot \pi^{-1}\mathbf{v} = 0.
	\]
	The final equality holds because $\Ker(N') = M[y] \cdot \Stab_E([y] \cap x)$, and $\pi^{-1}\mathbf{v} \in M[y] \cdot \Stab_E([y] \cap x)$ by Lemma \ref{lem:XOR_imageOfStabiliserPermutationInvariant}. Since each row of $N$ is of the form $\pi(N'_{i-})$ for some row $i$ of $N'$ and $\pi \in g_{[y]}(\Stab_G([y] \cap x))$, we have shown that $M[y] \cdot \Stab_E([y] \cap x) \subseteq \Ker(N) \cap \Im(M[y])$.
\end{proof}	
This lemma shows that there exists a Boolean matrix satisfying conditions (i) and (ii), so it is indeed possible to pick a smallest one for $N[x][y]$. The trouble is that we do not know a priori how small it is. Therefore, the construction of $\widehat{C}(\mu)$ that we are describing does not come with a guaranteed size bound. Later on in Section \ref{sec:XOR_sizeBound} we will get back to the choice of $N[x][y]$ and show how we can bound its size in case that $\Stab_E([y] \cap x)$ has a symmetric basis. The idea will be that the size of the closure of $N'$ under the action of $g_{[y]}(\Stab_G([y] \cap x))$ can be bounded then, because the symmetries of the basis of $\Stab_E([y] \cap x)$ ``propagate'' through the equation systems $A_i \cdot \mathbf{x} = \mathbf{b}_i$ that are used to define the rows of $N'$.\\
\\
Since at least one matrix satisfying (i) and (ii) exists, $N[x][y]$ can indeed be defined as the smallest one. Remember that this definition was for a fixed $[y] \in \Cc[x]$. Now let $[y'] \in \Omega_{[y]}$, and $[x'] \in \Omega_{[x]}$ such that $[y'] \in \Cc[x']$ and such that there is a $\pi \in \Stab_G(\mu)$ with $\pi([y] \cap x) = [y'] \cap x'$ (it may be that $[x'] = [x]$). Set $\pi_{[x'][y']} := \pi$, so we can refer back to this particular permutation in the future.
We set $J_{[x'][y']} :=  \{ (i,[x'],[y']) \mid i \in [m]  \}$. Here, $m$ still denotes the number of rows of the previously defined $N[x][y]$. 
Let $N[x'][y']$ be the Boolean matrix in $\bbF_2^{J_{[x'][y']} \times I_{[y']}}$ such that $(N[x'][y'])_{(k,[x'],[y']),g_{[y]}(\pi_{[x'][y']})(i)} = (N[x][y])_{(k,[x],[y]),i}$ for every $i \in I_{[y]}$ and $k \in [m]$. We will usually write $N([x'][y'])_{k,-}$ instead of $(N[x'][y'])_{(k,[x'],[y']),-}$ to denote the $k$-th row of the matrix.
In this way, we define the matrices $N[x'][y']$ for all $[y'] \in \Omega_{[y]}$. We will call the $\sim$-class $[y] \in \Cc[x]$, that we arbitrarily chose as the first one in its orbit to define $N[x][y]$ with Lemma \ref{lem:XOR_existenceOfNMatrixWithConditionsAB}, the \emph{primer} of the orbit $\Omega_{[y]}$. We proceed to pick a new primer $[y] \in \Cc[x]$ for which $N[x][y]$ has not been defined so far, and repeat the construction for $[y]$ and its orbit $\Omega_{[y]}$. This is done until $N[x][y]$ is defined for every $[y] \in \Cc[x]$ and for every $[y'] \in \Cc[x']$, for every $[x'] \in \Omega_{[x]}$.\\

We show that the defined matrices have the desired properties: 
\begin{lemma}
	\label{lem:XOR_constructedKernelCorrect}
	Let $[x'] \in \Omega_{[x]}$ and $[y'] \in \Cc[x']$. Then 
	\[
	\Ker(N[x'][y'] \cdot M[y']) = \Stab_E([y'] \cap x').
	\]
\end{lemma}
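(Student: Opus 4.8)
The plan is to prove the identity by induction mirroring the inductive construction of the matrices, reducing every pair $([x'],[y'])$ to the \emph{primer} pair of the orbit $\Omega_{[y]}$ from which it was generated. So I would first establish the claim for a primer $[y]\in\Cc[x]$, and then transport it along the permutations $\pi_{[x'][y']}$.

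For the primer case, recall that $N[x][y]$ is chosen (via Lemma~\ref{lem:XOR_existenceOfNMatrixWithConditionsAB}) so that condition~(i) holds: $\Ker(N[x][y])\cap\Im(M[y]) = M[y]\cdot\Stab_E([y]\cap x)$. By the induction hypothesis on the $M$-matrices, $\Ker(M[y]) = \Stab_E(y)$. A short kernel chase then gives $\mathbf{v}\in\Ker(N[x][y]\cdot M[y])$ iff $M[y]\cdot\mathbf{v}\in\Ker(N[x][y])\cap\Im(M[y]) = M[y]\cdot\Stab_E([y]\cap x)$, i.e.\ iff $\mathbf{v}$ differs from some element of $\Stab_E([y]\cap x)$ by an element of $\Ker(M[y]) = \Stab_E(y)$; hence $\Ker(N[x][y]\cdot M[y]) = \Stab_E([y]\cap x) + \Stab_E(y)$. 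To finish, I would invoke the auxiliary fact $\Stab_E(y)\subseteq\Stab_E([y]\cap x)$: if $\rho_F$ fixes $y$ — equivalently every element of $[y]$, by Lemma~\ref{lem:XOR_stabiliserEqual} — then for $z\in[y]\cap x$ written as $z = \rho_G(y)$ we get $\rho_F(z) = \rho_F\rho_G(y) = \rho_G\rho_F(y) = z$ by commutativity of $\Aut_{\CFI}(\GG)$, so $\rho_F$ fixes $[y]\cap x$ pointwise, a fortiori setwise. This yields $\Ker(N[x][y]\cdot M[y]) = \Stab_E([y]\cap x)$, covering in particular the case $[x']=[x]$, $[y']=[y]$.

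For a general pair, $N[x'][y']$ is obtained from the primer $N[x][y]$ of $\Omega_{[y]}$ by a row-relabelling $\iota$ of its index set together with the column permutation $g_{[y]}(\pi)$, where $\pi := \pi_{[x'][y']}\in\Stab_G(\mu)$ satisfies $\pi([y]\cap x) = [y']\cap x'$ for the appropriate representatives; in particular $\pi[y] = [y']$, since $\pi([y]\cap x)$ is a nonempty subset of both $\pi[y]$ and $[y']$ and $\pi[y]$ is a $\sim_E$-class by Lemma~\ref{lem:XOR_orbitsOfSimClasses}. I would then combine: (1) $N[x'][y'] = (\iota, g_{[y]}(\pi))(N[x][y])$ by definition; (2) $M[y'] = M[\pi[y]] = (g_{[y]}(\pi),\pi)(M[y])$ by property~(c) of $g_{[y]}$; and (3) the elementary identity $((\iota,\tau)(A))\cdot((\tau,\pi)(B)) = (\iota,\pi)(A\cdot B)$, applied with $A = N[x][y]$, $\tau = g_{[y]}(\pi)$, $B = M[y]$. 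This gives $N[x'][y']\cdot M[y'] = (\iota,\pi)(N[x][y]\cdot M[y])$. Since applying a column permutation $\pi$ to a matrix sends its kernel to $\pi(\cdot)$ (Proposition~\ref{prop:XOR_matrixMultiplicationAndPermutation}), the primer case yields $\Ker(N[x'][y']\cdot M[y']) = \pi(\Ker(N[x][y]\cdot M[y])) = \pi(\Stab_E([y]\cap x))$, and by Lemma~\ref{lem:XOR_actionOfAutomorphismsOnStabiliserSpaces} this equals $\Stab_E(\pi[y]\cap\pi x) = \Stab_E(\pi([y]\cap x)) = \Stab_E([y']\cap x')$, as claimed. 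Rep-independence of $\Stab_E$ of a component (Lemma~\ref{lem:XOR_stabiliserEqualForComponents}) makes the choices of representatives throughout immaterial.

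The step I expect to be the main obstacle is the permutation bookkeeping in the general case: one must check that the \emph{same} permutation $g_{[y]}(\pi)$ plays the role of the column permutation in the definition of $N[x'][y']$ and of the row permutation relating $M[y']$ to $M[y]$ (this is exactly what makes the composition identity~(3) applicable), keep straight which index sets $\iota$, $g_{[y]}(\pi)$ and $\pi$ act on, and confirm that the representatives $x, x'$ can be picked so that $\pi([y]\cap x) = [y']\cap x'$ holds literally as an equality of h.f.\ sets, so that Lemma~\ref{lem:XOR_actionOfAutomorphismsOnStabiliserSpaces} applies verbatim. The linear-algebra content (the two kernel chases) and the abelian-group fact are routine once these alignments are in place.
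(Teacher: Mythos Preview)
Your proof is correct and follows essentially the same route as the paper's: establish the kernel identity for the primer pair and then transport it along $\pi_{[x'][y']}$ using property~(c) of $g_{[y]}$ and the definition of $N[x'][y']$. You spell out two things the paper leaves implicit --- the kernel chase deriving $\Ker(N[x][y]\cdot M[y]) = \Stab_E([y]\cap x)$ from condition~(i) and the containment $\Stab_E(y)\le\Stab_E([y]\cap x)$ --- and you package the permutation bookkeeping via the clean identity $((\iota,\tau)A)\cdot((\tau,\pi)B)=(\iota,\pi)(AB)$ rather than computing row by row, but the argument is the same.
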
	
\begin{proof}
	Let $[y] \in \Cc[x]$ be the primer of the orbit $\Omega_{[y']}$ that was used in the matrix construction. Then $[y'] = \pi_{[x'][y']} [y]$ and $[x'] = \pi_{[x'][y']} [x]$. If $[y'] = [y]$, then $\pi_{[x'][y']}$ is the identity permutation in $\Sym(V)$. For ease of notation, we write $\pi :=  \pi_{[x'][y']}$ in the following. By Lemma \ref{lem:XOR_actionOfAutomorphismsOnStabiliserSpaces}, we have
	\[
	\Stab_E([y'] \cap x') =  \Stab_E(\pi[y] \cap \pi x) = \pi(\Stab_E([y] \cap x)).
	\]
	By definition of $N[x][y]$, and because $\Ker(M[y]) = \Stab_E(y) \leq \Stab_E([y] \cap x)$, we have $\Ker(N[x][y] \cdot M[y]) = \Stab_E([y] \cap x)$. It holds $M[y'] = (g_{[y]}(\pi),\pi)M[y]$ (item (c) of the inductive hypothesis). So for any vector $\mathbf{v} \in \bbF_2^E$, we have $M[y'] \cdot \pi(\mathbf{v}) = g_{[y]}(\pi)( M[y] \cdot \mathbf{v})$. Here, $g_{[y]}(\pi)$ acts on a vector $\mathbf{w} \in \bbF_2^{I_{[y]}}$ by mapping it to a vector $\mathbf{w}' \in \bbF_2^{I_{[y']}}$ with $\mathbf{w}'(g_{[y]}(\pi)(i)) = \mathbf{w}(i)$ for every $i \in I_{[y]}$. By definition, we have for the $k$-th row of $N[x'][y']$: $N([x'][y'])_{k,-} = g_{[y]}(\pi)(N([x][y])_{k,-} )$. In total, this means that for every $\mathbf{v} \in \bbF_2^E$, it holds:
	\[
	N([x'][y'])_{k,-} \cdot M[y'] \cdot \pi(\mathbf{v}) = N([x][y])_{k,-} \cdot M[y] \cdot \mathbf{v}. 
	\]
	Therefore, $\Ker(N[x'][y'] \cdot M[y']) = \pi(\Ker(N[x][y] \cdot M[y])) = \pi(\Stab_E([y] \cap x)) = \Stab_E([y'] \cap x')$. 
\end{proof}

The proofs of the next lemmas are given in the appendix. They concern the symmetries of the constructed $N$-matrices, and essentially follow directly from the construction of the first $N$-matrix and the fact that the other $N$-matrices are symmetric to it. The formal proofs involve tedious calculations, though.

\begin{restatable}{lemma}{restateNMatricesPermuteWithStabiliser}
	\label{lem:XOR_NmatricesPermuteWithStabiliser}
	Let $[x'] \in \Omega_{[x]}$ and $[y'] \in \Cc[x']$. There exists a homomorphism $h'$ from $g_{[y']}(\Stab_G([y'] \cap x')) \leq \Sym(I_{[y']})$ into the symmetric group on the row index set of $N[x'][y']$ such that for every $\sigma \in g_{[y']}(\Stab_G([y'] \cap x'))$, it holds $(h'(\sigma), \sigma)(N[x'][y']) = N[x'][y']$.
\end{restatable}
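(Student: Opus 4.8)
The plan is to reduce the statement to property (ii) of the \emph{primer} matrix by a conjugation argument. Recall that, by construction, $[y'] \in \Cc[x']$ lies in the orbit $\Omega_{[y]}$ of some primer $[y] \in \Cc[x]$, where $[x]$ is the representative of $\Omega_{[x']} = \Omega_{[x]}$ that was processed, and that the construction fixed a permutation $\pi := \pi_{[x'][y']} \in \Stab_G(\mu)$ with $\pi([y] \cap x) = [y'] \cap x'$; as in the proof of Lemma~\ref{lem:XOR_constructedKernelCorrect} this forces $\pi[x] = [x']$ and $\pi[y] = [y']$. Since $\pi$ extends to an automorphism of $\mu$, conjugation by $\pi$ carries $\Stab_G([y] \cap x)$ onto $\Stab_G([y'] \cap x')$. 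Moreover $g_{[y']}$ and $g_{[y]}$ are the \emph{same} homomorphism, since both are attached to the single orbit $\Omega_{[y]} = \Omega_{[y']}$; I write $g := g_{[y]}$, and I note that $g(\pi)$ restricts to a bijection $I_{[y]} \to I_{[y']}$ by property (c). Unravelling the construction of $N[x'][y']$ shows that $N[x'][y'] = (\phi, g(\pi))(N[x][y])$, where $\phi : J_{[x][y]} \to J_{[x'][y']}$, $(i,[x],[y]) \mapsto (i,[x'],[y'])$, is the re-indexing bijection and $g(\pi)$ acts on the columns.

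If $([x'],[y']) = ([x],[y])$ is the primer, property (ii) gives the claim directly (with $h' = h$). Otherwise, I invoke property (ii) for the primer: there is a homomorphism $h : g(\Stab_G([y] \cap x)) \to \Sym(J_{[x][y]})$ with $(h(\sigma), \sigma)(N[x][y]) = N[x][y]$ for all $\sigma$ in the domain. I then define, for $\sigma' \in g(\Stab_G([y'] \cap x'))$,
\[
h'(\sigma') := \phi \circ h\bigl( g(\pi)^{-1}\, \sigma'\, g(\pi) \bigr) \circ \phi^{-1}.
\]
This is well-defined because conjugation by $g(\pi)$ maps $g(\Stab_G([y'] \cap x'))$ onto $g(\Stab_G([y] \cap x))$ (using $\Stab_G([y'] \cap x') = \pi\,\Stab_G([y] \cap x)\,\pi^{-1}$), so $h$ is applied to a legitimate argument, and $h'$ is a homomorphism as a composite of the conjugation homomorphism by $g(\pi)$, of $h$, and of conjugation by $\phi$. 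The invariance is then a short computation: writing $\sigma := g(\pi)^{-1}\sigma'g(\pi)$ (so $\sigma' g(\pi) = g(\pi)\sigma$) and using the composition law $(\alpha_1,\beta_1)\circ(\alpha_2,\beta_2) = (\alpha_1\alpha_2,\beta_1\beta_2)$ for the row/column action on matrices, which is immediate from the definition of $(\sigma,\pi)(M)$,
\begin{align*}
	(h'(\sigma'), \sigma')(N[x'][y']) &= (\phi\, h(\sigma)\, \phi^{-1},\ \sigma')\bigl( (\phi, g(\pi))(N[x][y]) \bigr)\\
	&= (\phi\, h(\sigma),\ g(\pi)\sigma)(N[x][y]) = (\phi, g(\pi))\bigl( (h(\sigma), \sigma)(N[x][y]) \bigr)\\
	&= (\phi, g(\pi))(N[x][y]) = N[x'][y'],
\end{align*}
where the last line uses property (ii) of the primer.

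I expect the main obstacle to be purely the index bookkeeping rather than any conceptual difficulty: carefully matching the entrywise definition $(N[x'][y'])_{(k,[x'],[y']),\, g(\pi)(i)} = (N[x][y])_{(k,[x],[y]),\, i}$ to the compact statement $N[x'][y'] = (\phi, g(\pi))(N[x][y])$ in the $(\sigma,\pi)(M)$-notation, keeping the directions of all permutations straight, and confirming the identifications $g_{[y']} = g_{[y]}$, $\pi[x] = [x']$, $\pi[y] = [y']$, $\Stab_G([y'] \cap x') = \pi\,\Stab_G([y] \cap x)\,\pi^{-1}$, and that $g(\pi)$ indeed maps $I_{[y]}$ bijectively onto $I_{[y']}$. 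This is exactly where the ``tedious calculations'' deferred to the appendix live.
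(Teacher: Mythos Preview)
Your proposal is correct and follows essentially the same approach as the paper: both define $h'$ by conjugating the primer's homomorphism $h$ via the fixed permutation $\pi_{[x'][y']}$ and the re-indexing bijection $\phi$, and both reduce the invariance of $N[x'][y']$ to property~(ii) of the primer. Your presentation is slightly cleaner than the paper's in that you use the composition law $(\alpha_1,\beta_1)\circ(\alpha_2,\beta_2)=(\alpha_1\alpha_2,\beta_1\beta_2)$ for the row/column action directly, whereas the paper carries out the same computation entrywise; but the construction of $h'$ and the logical structure are identical. One small remark: the claim $\pi[x]=[x']$ (which you cite from the proof of Lemma~\ref{lem:XOR_constructedKernelCorrect}) is not actually needed anywhere in your argument---only $\pi[y]=[y']$ is used, and that one does follow from $\pi([y]\cap x)=[y']\cap x'$ together with Lemma~\ref{lem:XOR_orbitsOfSimClasses}.
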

\noindent
\textit{Proof sketch.}
Let $[y] \in \Cc[x]$ be the primer of $\Omega_{[y']}$. Then for $N[x][y]$, the lemma holds by construction because we explicitly closed the rows of $N[x][y]$ under these symmetries. For $N[x'][y']$, the result follows by symmetry of the construction.
\hfill \qed \\
\\
Lemmas \ref{lem:XOR_constructedKernelCorrect} and \ref{lem:XOR_NmatricesPermuteWithStabiliser} assert that all the constructed matrices $N[x'][y']$ satisfy the properties (i) and (ii) that $N[x][y]$ has by construction.\\
Now let 
\[
J_{\Omega_{[x]}} := \bigcup_{\stackrel{[x'] \in \Omega_{[x]}}{[y'] \in \Cc[x']}} J_{[x'][y']}.
\]
We provide a group homomorphism $h_{[x]} : \Stab_G(\mu) \lra \Sym(J_{\Omega_{[x]}})$ such that for each $[x'] \in \Omega_{[x]}$ and each $[y'] \in \Cc[x']$, it holds $h_{[x]}(\pi)(J_{[x'][y']}) = J_{\pi[x']\pi[y']}$. Furthermore, we want that $(N\pi[x']\pi[y']) = (h_{[x]}(\pi),g_{[x]}(\pi))N[x'][y']$ for each $[x'] \in \Omega_{[x]}$, $[y'] \in \Cc[x']$, and $\pi \in \Stab_G(\mu)$.
For each triple $(k,[x'],[y']) \in J_{\Omega_{[x]}}$, we set:
\[
h_{[x]}(\pi)(k,[x'],[y']) := (\ell, \pi[x'],\pi[y']),
\]
where $\ell$ is defined as follows: Let 
\[
t := |\{ i \in \{1,2,...,k-1 \} \mid (N[x'][y'])_{i,-} = (N[x'][y'])_{k,-}  \}|.
\]
That is, for the $k$-th row of $N[x'][y']$, there are $t$ rows identical to it with a smaller index. Then $\ell$ is defined such that
\[
|\{ i \in \{1,2,...,\ell-1 \} \mid (N\pi[x']\pi[y'])_{i,-} = g_{[y']}(\pi)((N[x'][y'])_{k,-})  \}| = t,
\]
and such that $(N\pi[x']\pi[y'])_{\ell,-} = g_{[y']}(\pi)((N[x'][y'])_{k,-})$. In other words, $\ell$ is the $(t+1)$st row of $N\pi[x']\pi[y']$ which is equal to the $k$-th row of $N[x][y]$, up to a permutation of the columns given by $g_{[y']}(\pi)$.\\
We have to argue that $h_{[x]}(\pi)(k,[x'],[y'])$ is indeed well-defined:
\begin{restatable}{lemma}{restateHwellDefined}
	\label{lem:XOR_hWellDefined}
	Let $[x'] \in \Omega_{[x]}, [y'] \in \Cc[x']$. Let $\pi \in \Stab_G(\mu)$. Then for every $(k,[x'],[y']) \in J_{[x'][y']}$, the number of rows of $N[x'][y']$ which are equal to the $k$-th row is the same as the number of rows of $N\pi[x']\pi[y']$ that are equal to the $k$-th row of $N[x'][y']$, up to application of $g_{[y']}$ to the columns. Formally:
	\begin{align*}
		&|\{ (i,[x'],[y']) \in J_{[x'][y']} \mid (N[x'][y'])_{(i,[x'],[y']),-} = (N[x'][y'])_{k,-} \}| \\
		= &| \{ (i,\pi[x'],\pi[y']) \in J_{\pi[x']\pi[y']} \mid (N\pi[x']\pi[y'])_{(i,\pi[x'],\pi[y']),-} = g_{[y']}((N[x'][y'])_{k,-}) \}   |   
	\end{align*}
\end{restatable}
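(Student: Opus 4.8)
The plan is to reduce the claimed equality of cardinalities to a single equality of \emph{multisets} of row vectors, and then to feed into it the symmetry that Lemma~\ref{lem:XOR_NmatricesPermuteWithStabiliser} has already established for the constructed $N$-matrices. Let $R$ (resp.\ $R'$) denote the multiset of all rows of $N[x'][y']$ (resp.\ of $N\pi[x']\pi[y']$), and write $v := (N[x'][y'])_{k,-}$. The left-hand side of the claimed identity is exactly the multiplicity of $v$ in $R$, and the right-hand side is exactly the multiplicity of $g_{[y']}(\pi)(v)$ in $R'$, where $g_{[y']}(\pi)$ is read as the column bijection $I_{[y']}\to I_{\pi[y']}$ from the definition of $h_{[x]}(\pi)$. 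Since $g_{[y']}(\pi)$ is a bijection on vectors, it therefore suffices to prove the single multiset identity $R' = g_{[y']}(\pi)(R)$.

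First I would record how both matrices descend from one common primer. Let $[y]\in\Cc[x]$ be the primer of the orbit $\Omega_{[y']}=\Omega_{[y]}$ from which $N[x'][y']$ was built, let $N[x][y]$ be the matrix supplied by Lemma~\ref{lem:XOR_existenceOfNMatrixWithConditionsAB}, and abbreviate $g := g_{[y]}=g_{[y']}$ for the homomorphism attached to that orbit. Put $\alpha := \pi_{[x'][y']}$, so $\alpha([y]\cap x) = [y']\cap x'$. Since $\pi\in\Stab_G(\mu)$ we have $\pi[y']\cap\pi x' = \pi([y']\cap x')$ (as in the proof of Lemma~\ref{lem:XOR_actionOfAutomorphismsOnStabiliserSpaces}), and this set lies in the $\Stab_G(\mu)$-orbit of $[y]\cap x$ via $\pi\alpha$; hence the pair $(\pi[x'],\pi[y'])$ is handled by the construction under the \emph{same} primer $[y]$, so we may set $\beta := \pi_{\pi[x']\pi[y']}$ with $\beta([y]\cap x) = \pi([y']\cap x')$. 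By the defining equations for the $N$-matrices, for each $k$ we get $(N[x'][y'])_{k,-} = g(\alpha)\big((N[x][y])_{k,-}\big)$ and $(N\pi[x']\pi[y'])_{k,-} = g(\beta)\big((N[x][y])_{k,-}\big)$. Consequently, as multisets, $R' = g(\beta)\big(g(\alpha)^{-1}(R)\big) = g(\beta\alpha^{-1})(R)$.

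The crux is then to show that $g(\beta\alpha^{-1})$ and $g(\pi)$ have the same effect on $R$. Tracing $[y']\cap x'$ through $\alpha^{-1}$ to $[y]\cap x$, then through $\beta$ to $\pi([y']\cap x')$, then through $\pi^{-1}$ back to $[y']\cap x'$, we see $\pi^{-1}\beta\alpha^{-1}\in\Stab_G([y']\cap x')$ (all three permutations lie in $\Stab_G(\mu)$, and by Corollary~\ref{cor:XOR_stabYXinStabY} this subgroup fixes $[y']$, so $g$ of this element is a genuine permutation of $I_{[y']}$). Applying Lemma~\ref{lem:XOR_NmatricesPermuteWithStabiliser} to $N[x'][y']$ with $\sigma := g(\pi^{-1}\beta\alpha^{-1}) \in g_{[y']}(\Stab_G([y']\cap x'))$ yields a row permutation $h'(\sigma)$ with $(h'(\sigma),\sigma)(N[x'][y']) = N[x'][y']$; reading this entrywise says that the $h'(\sigma)(i)$-th row of $N[x'][y']$ equals $\sigma$ applied to its $i$-th row for every $i$, hence $\sigma$ fixes the multiset $R$. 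Using that $g$ is a homomorphism, $g(\beta\alpha^{-1}) = g(\pi)\,\sigma$, and therefore $g(\beta\alpha^{-1})(R) = g(\pi)(\sigma(R)) = g(\pi)(R)$. Combining this with $R' = g(\beta\alpha^{-1})(R)$ gives $R' = g(\pi)(R) = g_{[y']}(\pi)(R)$, which is the required multiset identity.

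The obstacle I expect to be most delicate is purely bookkeeping: one must check that each occurrence of $g(\cdot)$ is interpreted as the correct bijection between the relevant index sets $I_{[y]}$, $I_{[y']}$, $I_{\pi[y']}$ (via property (c)); that $(\pi[x'],\pi[y'])$ really descends from the same primer $[y]$ as $([x'],[y'])$, so that $\beta$ exists and both matrices are honest column-reshufflings of one and the same $N[x][y]$; and that the membership $\pi^{-1}\beta\alpha^{-1}\in\Stab_G([y']\cap x')$ is justified via $\pi[y']\cap\pi x' = \pi([y']\cap x')$. None of this is conceptually hard — which is presumably why the paper defers the formal argument to the appendix — the conceptual content being entirely the reduction to multisets together with the appeal to Lemma~\ref{lem:XOR_NmatricesPermuteWithStabiliser}.
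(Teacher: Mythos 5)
Your proof is correct and takes essentially the same route as the paper: both arguments observe that $N[x'][y']$ and $N\pi[x']\pi[y']$ are column-reshufflings of $N[x][y]$ by $g(\alpha)$ and $g(\beta)$ respectively, and both use Lemma~\ref{lem:XOR_NmatricesPermuteWithStabiliser} to absorb the discrepancy between $g(\beta\alpha^{-1})$ and $g(\pi)$ into a row permutation (you apply that lemma to $N[x'][y']$ via $\pi^{-1}\beta\alpha^{-1}\in\Stab_G([y']\cap x')$, while the paper applies it to $N\pi[x']\pi[y']$ via $\pi\alpha\beta^{-1}\in\Stab_G(\pi[y']\cap\pi x')$ — a symmetric choice). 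Your multiset reformulation is a slightly cleaner packaging of the same bookkeeping that the paper phrases as the existence of an explicit row permutation $\theta$.
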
	
\noindent
\textit{Proof sketch.} Follows again from the fact that $N[x'][y']$ and $N\pi[x']\pi[y']$ are by construction symmetric to each other. \hfill \qed\\

This lemma shows that we can indeed define $h_{[x]}$ as we did. Now one can verify that $h_{[x]}$ is a group homomorphism with the desired properties. Again, we prove this in the appendix; it follows from the definition of $h_{[x]}$ and Lemma \ref{lem:XOR_hWellDefined}.
\begin{lemma}
	\label{lem:XOR_matrixHomomorphismHcorrect}
	The mapping $h_{[x]} : \Stab_G(\mu) \lra \Sym(J_{\Omega_{[x]}})$ is a group homomorphism.
For every $[x'] \in \Omega_{[x]}$, $[y'] \in \Cc[x']$ and each $\pi \in \Stab_G(\mu)$, it holds $h_{[x]}(\pi)(J_{[x'][y']}) = J_{\pi[x']\pi[y']}$. Furthermore, $N\pi[x']\pi[y']= (h_{[x]}(\pi),g_{[y']}(\pi))N[x'][y']$ for each $[x'] \in \Omega_{[x]}$, $[y'] \in \Cc[x']$, and $\pi \in \Stab_G(\mu)$.
\end{lemma}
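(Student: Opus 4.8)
The plan is to establish the three claims in the order: first that $h_{[x]}(\pi)$ is a well-defined bijection of $J_{\Omega_{[x]}}$ sending each block $J_{[x'][y']}$ onto $J_{\pi[x']\pi[y']}$ (this is the second claim of the lemma), then the matrix identity, and finally the homomorphism property, which is where the actual content lies. Throughout I would use, as part of the inductive hypothesis on the children $[y'] \in \Cc[x']$, that $g_{[y']}$ is a group homomorphism with $g_{[y']}(\pi)(I_{[y']}) = I_{\pi[y']}$, and that $g_{[\pi y']} = g_{[y']}$, since both are by construction the homomorphism attached to the common orbit $\Omega_{[y']}$.

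First I would fix $\pi \in \Stab_G(\mu)$, $[x'] \in \Omega_{[x]}$, $[y'] \in \Cc[x']$ and a row index $k$ of $N[x'][y']$, and let $t$ be the number of rows of $N[x'][y']$ with index less than $k$ that coincide with the $k$-th row. Lemma~\ref{lem:XOR_hWellDefined} says that $N\pi[x']\pi[y']$ has exactly as many rows equal to $g_{[y']}(\pi)((N[x'][y'])_{k,-})$ as $N[x'][y']$ has rows equal to its $k$-th row; hence there is a \emph{unique} row index $\ell$ of $N\pi[x']\pi[y']$ whose row equals $g_{[y']}(\pi)((N[x'][y'])_{k,-})$ and which is preceded by exactly $t$ rows equal to it. So $h_{[x]}(\pi)$ is well defined, and on the block $J_{[x'][y']}$ it is the unique bijection onto $J_{\pi[x']\pi[y']}$ that preserves the ``copy index'' within each class of mutually equal rows. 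Since $\pi$ permutes the index set $\{([x'],[y']) : [x'] \in \Omega_{[x]},\ [y'] \in \Cc[x']\}$ bijectively --- using $\pi[x'] \in \Omega_{[x]}$ and $\pi[y'] \cap \pi x' = \pi([y'] \cap x') \neq \emptyset$, so $\pi[y'] \in \Cc[\pi x']$ --- and the blocks $J_{[x'][y']}$ partition $J_{\Omega_{[x]}}$, this yields $h_{[x]}(\pi) \in \Sym(J_{\Omega_{[x]}})$ with $h_{[x]}(\pi)(J_{[x'][y']}) = J_{\pi[x']\pi[y']}$, which is the second claim. The matrix identity then follows by unwinding the definition: by construction $(N\pi[x']\pi[y'])_{h_{[x]}(\pi)(k),-} = g_{[y']}(\pi)((N[x'][y'])_{k,-})$, i.e. $(N\pi[x']\pi[y'])_{h_{[x]}(\pi)(k),\,g_{[y']}(\pi)(i)} = (N[x'][y'])_{k,i}$ for every $i \in I_{[y']}$, and since $h_{[x]}(\pi)$ and $g_{[y']}(\pi)$ are bijections onto $J_{\pi[x']\pi[y']}$ and $I_{\pi[y']}$, this is exactly $N\pi[x']\pi[y'] = (h_{[x]}(\pi),g_{[y']}(\pi))N[x'][y']$.

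For the homomorphism property the plan is to track, for a triple $(k,[x'],[y'])$, the pair consisting of the row \emph{vector} that gets selected and its copy index. Applying $h_{[x]}(\pi_2)$ selects in $N\pi_2[x']\pi_2[y']$ the row with vector $g_{[y']}(\pi_2)((N[x'][y'])_{k,-})$ and with the copy index of $k$; applying $h_{[x]}(\pi_1)$ to the result selects in $N\pi_1\pi_2[x']\pi_1\pi_2[y']$ the row with vector $g_{[\pi_2 y']}(\pi_1)\big(g_{[y']}(\pi_2)((N[x'][y'])_{k,-})\big)$ and, again, the same copy index. Using $g_{[\pi_2 y']} = g_{[y']}$ and that $g_{[y']}$ is a homomorphism, the composed column action is $g_{[y']}(\pi_1\pi_2)$, so the row finally selected has vector $g_{[y']}(\pi_1\pi_2)((N[x'][y'])_{k,-})$ and the copy index of $k$ --- which is precisely the row picked out by $h_{[x]}(\pi_1\pi_2)(k,[x'],[y'])$, by uniqueness of the copy index within a class of equal rows. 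Together with the immediate $h_{[x]}(\id) = \id$ this gives $h_{[x]}(\pi_1)\circ h_{[x]}(\pi_2) = h_{[x]}(\pi_1\pi_2)$, so $h_{[x]}$ is a group homomorphism. I expect the main obstacle to be exactly this bookkeeping: the matrix identity alone does not pin down the row permutation when $N[x'][y']$ has repeated rows, so one must systematically refine ``row vector'' to ``row vector together with copy index'' everywhere, and Lemma~\ref{lem:XOR_hWellDefined} is precisely the fact that makes this refinement behave correctly under the $g_{[y']}(\pi)$-relabelling of the rows.
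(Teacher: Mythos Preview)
Your proposal is correct and follows essentially the same approach as the paper: you use Lemma~\ref{lem:XOR_hWellDefined} to show that $h_{[x]}(\pi)$ is a well-defined bijection on the blocks, you read off the matrix identity from the definition, and for the homomorphism property you track the pair (row vector, copy index) and use that $g_{[y']}$ is a group homomorphism (with $g_{[\pi_2 y']} = g_{[y']}$) so that the selected row after two steps is the $(t+1)$-st copy of $g_{[y']}(\pi_1\pi_2)((N[x'][y'])_{k,-})$, matching what $h_{[x]}(\pi_1\pi_2)$ produces. The paper differs only in the order of presentation (it proves the homomorphism property first and then notes that the block structure and matrix identity are immediate from the definition and Lemma~\ref{lem:XOR_hWellDefined}); your observation that the repeated-rows bookkeeping is the real content is exactly the point.
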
	
This finishes the construction and correctness proof of the matrices $N[x'][y']$ for all $[x'] \in \Omega_{[x]}$ and $[y'] \in \Cc[x']$ and of the associated homomorphism $h_{[x]} : \Stab_G(\mu) \lra J_{\Omega_{[x]}}$. Items (b) and (d) of the inductive step are thus covered. To complete the inductive step we still have to define the matrices $M[x']$ for all $[x'] \in \Omega_{[x]}$.\\

\paragraph*{Definition of the M-matrices}
Let $[x'] \in \Omega_{[x]}$. Informally, $M[x']$ is obtained by collecting all the rows of the matrices $(N[x'][y'] \cdot M[y'])$, for all $[y'] \in \Cc[x']$, and putting them together as the rows of $M[x']$. Formally, let 
\[
I_{[x']} := \biguplus_{[y'] \in \Cc[x']} J_{[x'][y']}. 
\] 
Note that by construction, the sets $J_{[x'][y']}, J_{[x'][y'']}$ are pairwise disjoint if $[y'] \neq [y'']$. Then the rows of $M[x'] \in \bbF_2^{I_{[x']} \times E}$ are defined as follows: For $[y'] \in \Cc[x']$ and $(i,[x'],[y']) \in J_{[x'][y']}$, we let
\[
M[x']_{(i,[x'],[y']),-} := (N[x'][y'] \cdot M[y'])_{(i,[x'],[y']),-}.
\]
This matrix has the desired kernel:
\begin{lemma}
	\label{lem:XOR_kernelMmatrixCorrect}
	For every $[x'] \in \Omega_{[x]}$, the matrix $M[x']$ defined as above satisfies:
	\[
	\Ker(M[x']) = \Stab_E(x').
	\]
\end{lemma}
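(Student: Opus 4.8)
The plan is to observe that $M[x']$ is a \emph{block matrix} obtained by vertically stacking the matrices $N[x'][y']\cdot M[y']$, one block for each $[y'] \in \Cc[x']$, and that the kernel of such a stack is simply the intersection of the kernels of its blocks. Concretely, since the row index set $I_{[x']} = \biguplus_{[y'] \in \Cc[x']} J_{[x'][y']}$ is a disjoint union, and the rows indexed by $(i,[x'],[y']) \in J_{[x'][y']}$ are exactly the rows of $N[x'][y']\cdot M[y']$, a vector $\mathbf{v} \in \bbF_2^E$ lies in $\Ker(M[x'])$ if and only if $(N[x'][y']\cdot M[y'])\cdot\mathbf{v} = \mathbf{0}$ for every $[y'] \in \Cc[x']$. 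Hence
\[
\Ker(M[x']) = \bigcap_{[y'] \in \Cc[x']} \Ker\big(N[x'][y']\cdot M[y']\big).
\]

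Next I would invoke Lemma~\ref{lem:XOR_constructedKernelCorrect}, which tells us that for each $[y'] \in \Cc[x']$ we have $\Ker(N[x'][y']\cdot M[y']) = \Stab_E([y'] \cap x')$. Substituting this into the displayed intersection yields
\[
\Ker(M[x']) = \bigcap_{[y'] \in \Cc[x']} \Stab_E([y'] \cap x').
\]

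Finally, the plan is to match this up with Proposition~\ref{prop:XOR_stabiliserOfSet}, which states $\Stab_E(x') = \bigcap_{\gamma \in \Cc(x')} \Stab_E(\gamma)$. Here I must be a little careful about the difference between $\Cc[x'] = \{[y] \mid [y] \cap x' \neq \emptyset\}$ (a set of $\sim_E$-classes) and $\Cc(x') = \{[z]_{\sim_E} \cap x' \mid z \in x'\}$ (the set of connected components of $x'$): the map $[y'] \mapsto [y'] \cap x'$ is a bijection from $\Cc[x']$ onto $\Cc(x')$, so the two families of subgroups $\{\Stab_E([y'] \cap x') : [y'] \in \Cc[x']\}$ and $\{\Stab_E(\gamma) : \gamma \in \Cc(x')\}$ coincide, and therefore so do their intersections. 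Combining the three displays gives $\Ker(M[x']) = \Stab_E(x')$, as desired. Since this works verbatim for every $[x'] \in \Omega_{[x]}$, the lemma follows.

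This is a short argument, and I do not expect a genuine obstacle; the only point requiring attention is the careful identification of $\Cc[x']$ with $\Cc(x')$ (ensuring no connected component is missed and none is double-counted), together with checking that Lemma~\ref{lem:XOR_constructedKernelCorrect} is applicable to \emph{every} $[y'] \in \Cc[x']$ and not just to the primer of each orbit — which it is, precisely because that lemma was stated for all $[x'] \in \Omega_{[x]}$ and all $[y'] \in \Cc[x']$.
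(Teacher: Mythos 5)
Your proof is correct and is essentially the paper's own argument: the kernel of the vertically stacked matrix is the intersection of the kernels of the blocks $N[x'][y']\cdot M[y']$, each of which equals $\Stab_E([y']\cap x')$ by Lemma~\ref{lem:XOR_constructedKernelCorrect}, and the intersection equals $\Stab_E(x')$ by Proposition~\ref{prop:XOR_stabiliserOfSet}. Your explicit remark about the bijection between $\Cc[x']$ and $\Cc(x')$, and the check that Lemma~\ref{lem:XOR_constructedKernelCorrect} covers all $[y']\in\Cc[x']$ rather than only the primers, are small points the paper glosses over but are both accurate.
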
	
\begin{proof}
	By definition of $M[x']$, a vector $\mathbf{v} \in \bbF_2^E$ is in $\Ker(M[x'])$ if and only if 
	$\mathbf{v} \in \Ker(N[x'][y'] \cdot M[y'])$ for all $[y'] \in \Cc[x']$. By Lemma \ref{lem:XOR_constructedKernelCorrect}, this is the case iff $\mathbf{v} \in \Stab_E([y'] \cap x')$, for all $[y'] \in \Cc[x']$. That is to say, 
	\[
	\mathbf{v} \in \bigcap_{[y'] \in \Cc[x']} \Stab_E([y'] \cap x').
	\]
	This is equivalent to $\mathbf{v} \in \Stab_E(x')$ because $x' = \bigcup_{[y'] \in \Cc[x']} ([y'] \cap x')$ (see Proposition \ref{prop:XOR_stabiliserOfSet}).
\end{proof}	

Finally, we have to provide the homomorphism $g_{[x]}: \Stab_G(\mu) \lra \Sym(I_{\Omega_{[x]}})$, where $I_{\Omega_{[x]}} = \biguplus_{[x'] \in \Omega_{[x]}} I_{[x']}$.
Note that $I_{\Omega_{[x]}} = J_{\Omega_{[x]}}$ by definition of the index sets $I_{[x']}$.   
%\todo[inline]{Perhaps we have to define the index-sets $I_{[x']}$ such that they are disjoint from $J_{\Omega_{[x]}}$, so put an additional tag there or so. But I think it is not necessary.}
Therefore we can simply set $g_{[x]} := h_{[x]}$. This homomorphism indeed satisfies the desired properties:
\begin{lemma}
	\label{lem:XOR_homomorphismGcorrect}
	For each $[x'] \in \Omega_{[x]}$ and each $\pi \in \Stab_G(\mu)$, it holds $g_{[x]}(\pi)(I_{[x']}) = I_{\pi[x']}$. Furthermore, $M\pi[x'] = (g_{[x]}(\pi),\pi)M[x']$.
\end{lemma}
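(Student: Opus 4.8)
The plan is to prove the two assertions separately, in both cases reducing everything to facts already secured about the $N$-matrices and the homomorphism $h_{[x]}$, since by construction $g_{[x]} := h_{[x]}$.

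First I would handle the index-set equality $g_{[x]}(\pi)(I_{[x']}) = I_{\pi[x']}$. Recall $I_{[x']} = \biguplus_{[y'] \in \Cc[x']} J_{[x'][y']}$, so that $g_{[x]}(\pi)(I_{[x']}) = \biguplus_{[y'] \in \Cc[x']} h_{[x]}(\pi)(J_{[x'][y']})$. By Lemma~\ref{lem:XOR_matrixHomomorphismHcorrect} each block is sent to $J_{\pi[x']\pi[y']}$, so this is $\biguplus_{[y'] \in \Cc[x']} J_{\pi[x']\pi[y']}$. It remains to observe that $[y'] \mapsto \pi[y']$ is a bijection from $\Cc[x']$ onto $\Cc[\pi x']$: since $\pi$ extends to an automorphism of the DAG $(\tc(\mu),\in^\mu)$ it maps connected components of $x'$ to connected components of $\pi x'$, and by Lemma~\ref{lem:XOR_orbitsOfSimClasses} it sends $\sim_E$-classes to $\sim_E$-classes, so the induced map on $\Cc[x']$ is a well-defined bijection. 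Hence $\biguplus_{[y'] \in \Cc[x']} J_{\pi[x']\pi[y']} = \biguplus_{[z'] \in \Cc[\pi x']} J_{\pi[x'][z']} = I_{\pi[x']}$.

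For the matrix identity $M\pi[x'] = (g_{[x]}(\pi),\pi)M[x']$, I would argue block by block over $[y'] \in \Cc[x']$, using that the rows of $M[x']$ indexed by $J_{[x'][y']}$ are exactly the rows of $N[x'][y'] \cdot M[y']$ (and likewise for $M\pi[x']$ over $J_{\pi[x']\pi[y']}$ with $N\pi[x']\pi[y'] \cdot M\pi[y']$). Now plug in the two available symmetry relations: Lemma~\ref{lem:XOR_matrixHomomorphismHcorrect} gives $N\pi[x']\pi[y'] = (h_{[x]}(\pi),g_{[y']}(\pi))N[x'][y']$, and item~(c) of the inductive hypothesis, applied to the already-constructed $M[y']$ (valid since $[y'] \in \Cc[x']$), gives $M\pi[y'] = (g_{[y']}(\pi),\pi)M[y']$. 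Using the elementary composition rule $(\sigma,\tau)A \cdot (\tau,\rho)B = (\sigma,\rho)(AB)$ for permutations that match on the shared index set — which follows directly from Proposition~\ref{prop:XOR_matrixMultiplicationAndPermutation} applied row-wise, with the shared set here being $I_{[y']}$ mapped by $g_{[y']}(\pi)$ — one obtains
\[
N\pi[x']\pi[y'] \cdot M\pi[y'] = (h_{[x]}(\pi),\pi)\bigl(N[x'][y'] \cdot M[y']\bigr).
\]
The right-hand side is the row-block $J_{[x'][y']}$ of $M[x']$, with rows permuted by $h_{[x]}(\pi) = g_{[x]}(\pi)$ (which carries $J_{[x'][y']}$ onto $J_{\pi[x']\pi[y']}$, again by Lemma~\ref{lem:XOR_matrixHomomorphismHcorrect}) and columns permuted by $\pi$. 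Since the blocks $J_{\pi[x']\pi[y']}$, as $[y']$ ranges over $\Cc[x']$, exhaust all row-blocks of $M\pi[x']$, taking the disjoint union of these block identities yields $M\pi[x'] = (g_{[x]}(\pi),\pi)M[x']$.

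I expect no genuine mathematical difficulty here; this lemma is the assembly point where the earlier lemmas (\ref{lem:XOR_constructedKernelCorrect}, \ref{lem:XOR_NmatricesPermuteWithStabiliser}, \ref{lem:XOR_hWellDefined}, \ref{lem:XOR_matrixHomomorphismHcorrect}) are combined. The only real obstacle is bookkeeping: one must verify carefully that the row index sets $J_{[x'][y']}$ partition $I_{[x']}$ compatibly with the action of $h_{[x]}(\pi)$, and that in the composition rule the intermediate column/row index sets ($I_{[y']}$ and $I_{\pi[y']}$) line up exactly under $g_{[y']}(\pi)$ so that the $(\sigma,\tau)$-decorated matrix product collapses as claimed. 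Once these index identifications are written out, the proof is a direct computation.
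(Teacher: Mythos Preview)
Your proposal is correct and follows essentially the same approach as the paper: both parts reduce to Lemma~\ref{lem:XOR_matrixHomomorphismHcorrect} together with the inductive hypothesis (item~(c)) for $M[y']$, after observing that $\pi$ bijects $\Cc[x']$ onto $\Cc[\pi x']$. The only difference is presentational---you package the second part as a block-level matrix identity via the composition rule $(\sigma,\tau)A\cdot(\tau,\rho)B=(\sigma,\rho)(AB)$, whereas the paper writes out the same computation entry by entry---but the underlying argument is identical.
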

\begin{proof}
	Let $\pi \in \Stab_G(\mu)$ and $[x'] \in \Omega_{[x]}$. Due to Lemma \ref{lem:XOR_orbitsOfSimClasses}, $\pi[x']$ is again a $\sim$-class in $\tc(\mu)$. Because $\pi$ extends to an automorphism of $\mu$, it also holds 
\[
\pi(\Cc[x']) = \{ \pi[y'] \mid [y'] \in \Cc[x']  \} = \Cc(\pi[x']).
\]
Hence, by Lemma \ref{lem:XOR_matrixHomomorphismHcorrect}, $h_{[x]}(\pi)$ maps the set $I_{[x']} = \biguplus_{[y'] \in \Cc[x']} J_{[x'][y']}$ to the set $I_{\pi[x']} = \biguplus_{\pi[y'] \in \Cc\pi[x']} J_{\pi[x']\pi[y']}$. %Furthermore, $(M\pi[x'])_{g_{[x]}(\pi)(i),\pi(e)} = M[x']_{i,e}$ for each $[x'] \in \Omega_{[x]}$, $\pi \in \Stab_G(\mu)$, $i \in I_{[x']}$, and $e \in E$.
It remains to show $(M\pi[x'])_{g_{[x]}(\pi)(i),\pi(e)} = M[x']_{i,e}$ for each $i = (j,[x'],[y']) \in I_{[x']}$, and $e \in E$.
By definition of the $M$-matrices, we have: 
\begin{align*}
	(M\pi[x'])_{g_{[x]}(\pi)(i),\pi(e)} = (M\pi[x'])_{(j',\pi[x'],\pi[y']),\pi(e)} &=  (N\pi[x']\pi[y'] \cdot M\pi[y'])_{(j',\pi[x'],\pi[y']),\pi(e)}\\
	&= (N[x'][y'] \cdot M[y'])_{(j,[x'],[y']),e}\\ 
	&= M[x']_{i,e}.
\end{align*}	
Here, $j'$ is such that $(j',\pi[x'],\pi[y']) = g_{[x]}(\pi)(j,[x'],[y']) = h_{[x]}(\pi)(j,[x'],[y'])$. The final equality holds for the following reason: $(N\pi[x']\pi[y'] \cdot M\pi[y'])_{(j',\pi[x'],\pi[y']),\pi(e)}$ is the product of row $j'$ of $N\pi[x']\pi[y']$ with column $\pi(e)$ of $M\pi[y']$. The lemma we are currently proving already holds for $M[y']$ by induction hypothesis, so $(M\pi[y'])_{-\pi(e)} = g_{[y']}(\pi)(M[y']_{-e})$. Also, we have $(N\pi[x']\pi[y'])_{j',-} = g_{[y']}(\pi)((N[x'][y'])_{j,-})$ by Lemma \ref{lem:XOR_matrixHomomorphismHcorrect}. Then Proposition \ref{prop:XOR_matrixMultiplicationAndPermutation} tells us that $(N\pi[x']\pi[y'])_{j',-} \cdot (M\pi[y'])_{-,\pi(e)} = (N[x'][y'])_{j,-}\cdot (M[y'])_{-e}$ because these vector products are really the same sums of products, where the summands are just reordered by $g_{[y']}(\pi)$.
\end{proof}	

\subsection{Construction of the circuit}
We construct the circuit $\widehat{C}(\mu) = (V_C,E_C)$ from the matrices that we have defined in the previous section.
The set of gates is 
\[
V_C := \biguplus_{\stackrel{x \in \tc(\mu)}{(i,[x],[y]) \in I_{[x]}} } \mathbf{g}_{i,[x],[y]}.
\]
So every row of any of the $M[x]$-matrices with index $(i,[x],[y])$ will correspond to a gate $\mathbf{g}_{i,[x],[y]}$. The construction will ensure that 
\[
\chi(\Xx(\mathbf{g}_{i,[x],[y]}))= M[x]_{(i,[x],[y]),-}^T.
\]
Thus, the gate will compute the XOR over precisely the input gates labelled with edges that have a $1$-entry in the row $M[x]_{(i,[x],[y]),-}$.\\
\\
For every atom $x \in \tc(\mu)$, we have $I_{[x]} = \{ [x] \}$. We define the corresponding gate $\mathbf{g}_{[x]}$ as an input gate of $\widehat{C}(\mu)$ with $\ell(\mathbf{g}_{[x]}) = e$, where $e \in E$ is the edge such that $[x] \subseteq \{e_0,e_1\}$.\\
\\
If $x \in \tc(\mu)$ is not an atom, then for every $(i,[x],[y]) \in I_{[x]}$, $\mathbf{g}_{i,[x],[y]}$ is an internal gate that has incoming wires from exactly those gates $\mathbf{g}_{j,[y],[z]}$ such that the matrix $N[x][y]$ has a $1$-entry in row $(i,[x],[y])$ and column $(j,[y],[z])$. Note that $(j,[y],[z]) \in I_{[y]}$, so $[z] \in \Cc[y]$. Formally, the set of children of $\mathbf{g}_{i,[x],[y]}$ is:
\[
\mathbf{g}_{i,[x],[y]}E_C := \{ \mathbf{g}_{j,[y],[z]}  \mid (j,[y],[z]) \in I_{[y]}, (N[x][y])_{(i,[x],[y]),(j,[y],[z])} = 1 \}.
\] 
It remains to specify the root of $\widehat{C}(\mu)$. Consider the matrix $M[\mu]$. Let $(i,[\mu],[y])$ be a row of $M[\mu]$ with a maximum number of one-entries. We define the root $r$ to be the gate $\mathbf{g}_{i,[\mu],[y]}$.\\

The figure below shows the matrices and XOR-gates for an object $x$ consisting of two connected components $[y] \cap x$ and $[y'] \cap x$. The matrices $N[x][y]$ and $N[x][y']$ have just one row each in this example, and we are assuming that $\Cc[y] = \{[z]\}$ and $\Cc[y'] = \{[z']\}$, and that $M[y] = N[y][z] \cdot M[z]$ has three rows and $M[y'] = N[y'][z'] \cdot M[z']$ has two rows.
\begin{figure}[H]
	\centering
	\begin{tikzpicture}
				
	\draw[radius=1.5,fill=yellow, opacity=0.3] (1.3+7,-6) circle;
	\draw[radius=1.5,fill=red, opacity=0.3] (1.5+10.5,-6) circle;
	
	\draw[radius=0.5,fill=yellow, opacity=0.3] (0.5,-6) circle;
	\draw[radius=0.5,fill=red, opacity=0.3] (5.5,-6) circle;
	
	\draw[radius=0.3,fill=yellow, opacity=0.3] (1.85,1.4) circle;
	\draw[radius=0.3,fill=red, opacity=0.3] (2.9,1.4) circle;

	\node at (2,2) {$x = \{\underbrace{ y_1, y_2}, \underbrace{y'_1, y'_2}  \}$};
	\node at (1.85,1.4) {$[y]$};
	\node at (2.9,1.4) {$[y']$};
	%	\node at (-1,-2*3) {$y$};	
	%	\node at (2,-2*3) {$y'$};		
	
	\node (Mx) at (2,0) {$M[x] = \left[   \begin{array}{c} (N[x][y] \cdot M[y]) \\ (N[x][y']\cdot M[y'])            \end{array}  \right] $};
	\node (My) at (0.5,-2*3) {$M[y]$};	
	\node (MyPrime) at (5.5,-2*3) {$M[y']$};		
	
	\node (Nxy) at (1,-2*1.5) {$N[x][y] = \left[ \begin{array}{ccc} 1 & 1& 1  \end{array} \right]$};	
	\node (NxyPrime) at (5,-2*1.5) {$N[x][y'] = \left[ \begin{array}{cc} 1 & 0 \end{array}  \right]$};		
	
	\node (xyGate) at (1.5+7,0.2) {$\oplus \mathbf{g}_{1,[x],[y]}$};
	\node (xyPrimeGate) at (1.5+9,-0.3) {$\oplus \mathbf{g}_{1,[x],[y']}$};
	
	\node (yGate1) at (1.5+6,-2*3) {$\oplus \mathbf{g}_{1,[y],[z]}$};
	\node  (yGate2)at (1.5+7,-2*2.5) {$\oplus \mathbf{g}_{2,[y],[z]}$};
	\node  (yGate3) at (1.5+8,-2*3) {$\oplus \mathbf{g}_{3,[y],[z]}$};
	\node (yPrimeGate) at (1.5+11,-2*2.5) {$\oplus \mathbf{g}_{1,[y'],[z']}$};
	\node at (13,-2*3) {$\oplus \mathbf{g}_{2,[y'],[z']}$};
	
	\draw[->] (xyGate.south)-- (yGate1);
	\draw[->] (xyGate.south)-- (yGate2);
	\draw[->] (xyGate.south)-- (yGate3);
	
	\draw[->] (xyPrimeGate.south)-- (yPrimeGate.west);
	
	\draw[dashed] (Mx.south)-- (Nxy);
	\draw[dashed] (Mx.south)-- (NxyPrime);
	\draw[dashed]  (Nxy.south)-- (My);
	\draw[dashed] (NxyPrime.south)-- (MyPrime);

	\end{tikzpicture}
	\caption{Example showing how the XOR-gates are connected according to the $N$-matrices.}
\end{figure}
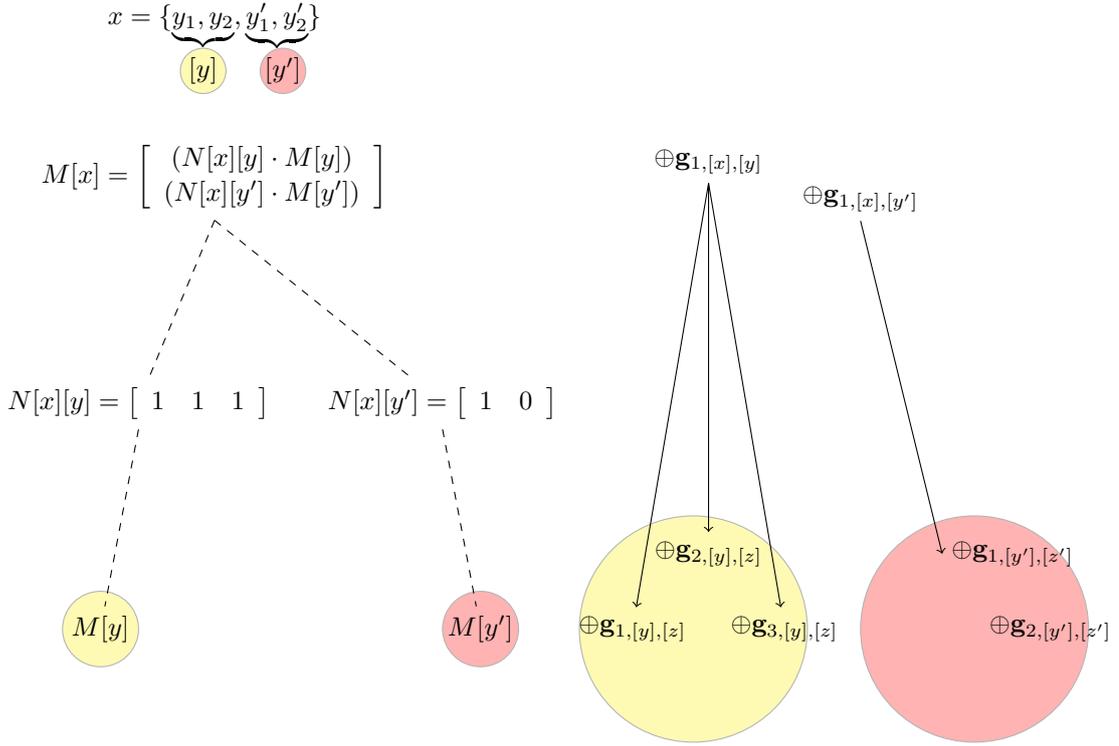	
Now we prove that $\widehat{C}(\mu)$ has the desired properties.
\begin{lemma}
	\label{lem:XOR_CHatIsSymmetric}
	Every $\pi \in \Stab_G(\mu)$ extends to an automorphism of the circuit $\widehat{C}(\mu)$, that is:
	\[
	\Stab_G(\mu) \leq \Stab_G(\widehat{C}(\mu)).
	\]
\end{lemma}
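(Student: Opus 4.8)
The plan is to follow the same template as the proof of Lemma~\ref{lem:XOR_stabiliserOfCircuit}, the analogue of this statement for $C(\mu)$: fix $\pi \in \Stab_G(\mu)$, write down an explicit bijection $\sigma\colon V_C \lra V_C$, and verify the two conditions in the definition of ``$\pi$ extends to an automorphism of $\widehat{C}(\mu)$'', namely that $\sigma$ is an automorphism of the directed graph $(V_C,E_C)$ and that $\ell(\sigma(g)) = \pi(\ell(g))$ for every input gate $g$. All the genuine work has already been done in the construction of the $M$- and $N$-matrices and the homomorphisms $g_{[x]}$, $h_{[x]}$; the lemma is essentially a bookkeeping step that reads $\sigma$ off the incidence pattern of the $N$-matrices.

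Concretely, I would define $\sigma$ block-wise along the orbit decomposition of the gate set. The gates of $\widehat{C}(\mu)$ are in bijection with $\biguplus_{[x]} I_{[x]} = \biguplus_{\Omega} I_{\Omega}$, where $\Omega$ ranges over the $\Stab_G(\mu)$-orbits of $\sim_E$-classes in $\tc(\mu)$, and $\pi$ fixes each such orbit setwise. For a gate $\mathbf{g}_a$ with index $a \in I_{[x]}$ I set $\sigma(\mathbf{g}_a) := \mathbf{g}_{g_{[x]}(\pi)(a)}$, using the base-case homomorphism $g_{[x]}$ when $x$ is an atom and $g_{[x]} = h_{[x]}$ when $x$ is a set. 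Since $g_{[x]}(\pi) \in \Sym(I_{\Omega_{[x]}})$ restricts to a bijection $I_{[x']} \lra I_{\pi[x']}$ for every $[x'] \in \Omega_{[x]}$ (this is exactly Lemma~\ref{lem:XOR_homomorphismGcorrect} together with $g_{[x]}(\pi)(J_{[x'][y']}) = J_{\pi[x']\pi[y']}$ from Lemma~\ref{lem:XOR_matrixHomomorphismHcorrect}), $\sigma$ is a well-defined bijection of $V_C$. The input-label condition is immediate from the base case of the matrix construction: an input gate has the form $\mathbf{g}_{[e_i]}$ with $\ell(\mathbf{g}_{[e_i]}) = e$, and $g_{[e_i]}(\pi)([e_i]) = [\pi(e)_i]$, so $\ell(\sigma(\mathbf{g}_{[e_i]})) = \pi(e) = \pi(\ell(\mathbf{g}_{[e_i]}))$; moreover $\pi$ extends to an automorphism of the $\in$-structure of $\mu$, hence maps atoms to atoms, so $\sigma$ preserves the split of $V_C$ into input and internal gates.

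The one computational step is edge preservation. Let $\mathbf{g}_a$ be an internal gate with $a = (i,[x],[y]) \in I_{[x]}$ and let $\mathbf{g}_b$ with $b = (j,[y],[z]) \in I_{[y]}$ be any gate; by construction $(\mathbf{g}_a,\mathbf{g}_b) \in E_C$ iff $(N[x][y])_{a,b} = 1$. Writing $a' := g_{[x]}(\pi)(a) = h_{[x]}(\pi)(a) \in J_{\pi[x]\pi[y]}$ and $b' := g_{[y]}(\pi)(b) \in J_{\pi[y]\pi[z]}$, the edge $(\sigma(\mathbf{g}_a),\sigma(\mathbf{g}_b))$ lies in $E_C$ iff $(N\pi[x]\pi[y])_{a',b'} = 1$. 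Now Lemma~\ref{lem:XOR_matrixHomomorphismHcorrect} gives $N\pi[x]\pi[y] = (h_{[x]}(\pi), g_{[y]}(\pi))\,N[x][y]$, which by the definition of the row/column-permutation action means precisely $(N\pi[x]\pi[y])_{h_{[x]}(\pi)(a),\, g_{[y]}(\pi)(b)} = (N[x][y])_{a,b}$, i.e.\ $(N\pi[x]\pi[y])_{a',b'} = (N[x][y])_{a,b}$. Hence the two edges are simultaneously present or absent; since input gates have no outgoing edges and $\sigma$ respects the input/internal split, $\sigma$ is an automorphism of $(V_C,E_C)$. Combining the three verifications shows $\pi \in \Stab_G(\widehat{C}(\mu))$, and as $\pi \in \Stab_G(\mu)$ was arbitrary, $\Stab_G(\mu) \leq \Stab_G(\widehat{C}(\mu))$.

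The real obstacle lies upstream, not in this lemma: everything hinges on the equivariance statements Lemma~\ref{lem:XOR_matrixHomomorphismHcorrect} and Lemma~\ref{lem:XOR_homomorphismGcorrect} (and behind them Lemmas~\ref{lem:XOR_NmatricesPermuteWithStabiliser} and~\ref{lem:XOR_hWellDefined}), i.e.\ on the fact that the chosen $N$-matrices and the derived $M$-matrices transform correctly under the row/column permutations induced by $\Stab_G(\mu)$. Given those, the circuit automorphism is forced by the way $\widehat{C}(\mu)$ was wired from the $N$-matrices, and the argument above is all that is left to say. I would not track the designated root: the notion of circuit automorphism used here asks only for a label-preserving graph automorphism, exactly as in the proof of Lemma~\ref{lem:XOR_stabiliserOfCircuit}.
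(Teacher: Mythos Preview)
Your proposal is correct and follows essentially the same approach as the paper: define $\sigma(\mathbf{g}_a) := \mathbf{g}_{g_{[x]}(\pi)(a)}$ using the constructed homomorphisms, check bijectivity via $g_{[x]}(\pi) \in \Sym(I_{\Omega_{[x]}})$, verify the label condition on atoms from the base case, and derive edge preservation from the equivariance $N\pi[x]\pi[y] = (h_{[x]}(\pi), g_{[y]}(\pi))\,N[x][y]$ of Lemma~\ref{lem:XOR_matrixHomomorphismHcorrect}. The paper's proof is the same argument with only cosmetic differences in bookkeeping.
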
	
\begin{proof}
	Let $\pi \in \Stab_G(\mu)$. We claim that the following mapping $\sigma : V_C \lra V_C$ is an automorphism of $\widehat{C}(\mu)$ that $\pi$ extends to. We let
	\[
	\sigma(\mathbf{g}_{i,[x],[y]}) := \mathbf{g}_{g_{[x]}(\pi)(i,[x],[y])}, 
	\]
	where $g_{[x]} : \Stab_G(\mu) \lra \Sym(I_{\Omega_{[x]}})$ is the group homomorphism for $\Omega_{[x]}$ from the construction of the matrices in the previous subsection. We have to show three things about $\sigma$. Firstly, that $\sigma \in \Sym(V_C)$. Secondly, that $\sigma$ maps wires to wires and non-wires to non-wires of $\widehat{C}(\mu)$. Finally, that for every input gate $\mathbf{g}_{[x]}$ it holds: $\ell(\sigma(\mathbf{g}_{[x]})) = \pi(\ell(\mathbf{g}_{[x]}))$. This last statement actually follows directly from the definition of $g_{[x]}$ for atoms $x \in \tc(\mu)$: If $[x] \subseteq \{e_0,e_1\}$, then $\pi(\ell(\mathbf{g}_{[x]})) = \pi(e)$, and $g_{[x]}(\pi)([x]) = \pi[x] \subseteq \{\pi(e_0), \pi(e_1)\}$, so $\ell(\sigma(\mathbf{g}_{[x]})) = \pi(e)$.\\
	The fact that $\sigma \in \Sym(V_C)$ follows because for each orbit $\Omega_{[x]}$, we have that $g_{[x]}(\pi) \in \Sym(I_{\Omega_{[x]}})$, and the set of gates $V_C$ can be partitioned into these orbits so that each part has the form $\{ \mathbf{g}_{i,[x'],[y]} \mid (i,[x'],[y]) \in I_{\Omega_{[x]}} \}$, for some orbit $\Omega_{[x]}$.\\
	It remains to prove that $\sigma$ preserves the wire structure of the circuit. For any two gates $\mathbf{g}_{i,[x],[y]}, \mathbf{g}_{j,[x'],[y']}$, we have 
	\[
	(\mathbf{g}_{i,[x],[y]}, \mathbf{g}_{j,[x'],[y']}) \in E_C \text{ if and only if } [y] = [x'] \text{ and }(N[x][y])_{(i,[x],[y]),(j,[y],[y'])} = 1.
	\]
	The latter equation holds if and only if 
	\[
	(N\pi[x]\pi[y])_{h_{[x]}(\pi)(i,[x],[y]),g_{[y]}(\pi)(j,[y],[y'])}  = 1.  
	\]
	This is true by Lemma \ref{lem:XOR_matrixHomomorphismHcorrect}. If $[x'] = [y]$, then we have $\sigma(\mathbf{g}_{j,[x'],[y']}) = \mathbf{g}_{g_{[y]}(\pi)(j,[y],[y'])}$. Furthermore, it holds $\sigma(\mathbf{g}_{i,[x],[y]}) = \mathbf{g}_{g_{[x]}(\pi)(i,[x],[y])}$. By the definition in the previous section, $g_{[x]} = h_{[x]}$, and it holds that $g_{[x]}(\pi)(i,[x],[y])$ is of the form $(\ell, \pi[x],\pi[y])$, and $g_{[y]}(\pi)(j,[y],[y'])$ is of the form $(\ell', \pi[y],\pi[y'])$, so these are indeed row- and column-indices of the matrix $N\pi[x]\pi[y]$. So altogether, the above equation is equivalent to:
	$
	(\sigma \mathbf{g}_{i,[x],[y]}, \sigma \mathbf{g}_{j,[x'],[y']}) \in E_C.
	$
\end{proof}	
As in the case of the circuit $C(\mu)$, this lemma implies that the orbit-size of $\widehat{C}(\mu)$ cannot be greater than the orbit-size of $\mu$ (here, $\Orb_G(\mu)$ again refers to the $\Aut(G)$-orbit and not to the $\Stab_G(\mu)$-orbit, which would be pointless).
\begin{corollary}
	\label{cor:XOR_orbitOfCircuitHat}
	\[
	|\Orb_G(\widehat{C}(\mu))| \leq |\Orb_G(\mu)|.
	\]
\end{corollary}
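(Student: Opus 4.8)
The plan is to obtain this corollary as an immediate consequence of Lemma~\ref{lem:XOR_CHatIsSymmetric} together with the Orbit-Stabiliser Theorem, in exactly the same way that Corollary~\ref{cor:XOR_orbitOfCircuit} followed from Lemma~\ref{lem:XOR_stabiliserOfCircuit}. The group acting here is $\Aut(G) \leq \Sym(V)$: it acts on the h.f.\ set $\mu$ (via automorphisms of the form $(\rho_\emptyset,\pi)$) and it acts on the XOR-circuit $\widehat{C}(\mu)$ over $G$ by relabelling the input gates. Note that, as the parenthetical remark preceding the statement emphasises, $\Orb_G(\cdot)$ here refers to the full $\Aut(G)$-orbit, not to the $\Stab_G(\mu)$-orbit used elsewhere in this section.

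First I would invoke Lemma~\ref{lem:XOR_CHatIsSymmetric}, which gives $\Stab_G(\mu) \leq \Stab_G(\widehat{C}(\mu))$; that is, every base-graph automorphism that extends to an automorphism of the h.f.\ set $\mu$ also extends to an automorphism of the constructed circuit. Since both stabilisers are subgroups of the common finite group $\Aut(G)$, the Orbit-Stabiliser Theorem (Proposition~\ref{prop:orbitStabiliser}, applied once to the circuit and once to the h.f.\ set) yields
\[
|\Orb_G(\widehat{C}(\mu))| = \frac{|\Aut(G)|}{|\Stab_G(\widehat{C}(\mu))|} \leq \frac{|\Aut(G)|}{|\Stab_G(\mu)|} = |\Orb_G(\mu)|,
\]
which is exactly the claim. (Equivalently, one can argue directly that the map $\pi(\mu) \mapsto \pi(\widehat{C}(\mu))$ is a well-defined surjection from $\Orb_G(\mu)$ onto $\Orb_G(\widehat{C}(\mu))$, where well-definedness is precisely the content of Lemma~\ref{lem:XOR_CHatIsSymmetric}.)

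There is essentially no obstacle left at this point: all the real work was already done in establishing Lemma~\ref{lem:XOR_CHatIsSymmetric}, which itself rested on the carefully chosen symmetry homomorphisms $g_{[x]}$ and $h_{[x]}$ from the matrix construction. The only thing to be careful about is purely notational --- making sure the two orbits being compared are taken with respect to the \emph{same} group $\Aut(G)$ (and not accidentally with respect to $\Stab_G(\mu)$, which would be vacuous), so that the index manipulation in the Orbit-Stabiliser argument applies verbatim.
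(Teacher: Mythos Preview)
Your proof is correct and matches the paper's approach exactly: the paper simply states that the proof is analogous to that of Corollary~\ref{cor:XOR_orbitOfCircuit}, which is precisely the Orbit-Stabiliser argument you spell out, with Lemma~\ref{lem:XOR_CHatIsSymmetric} playing the role of Lemma~\ref{lem:XOR_stabiliserOfCircuit}. Your observation about using the full $\Aut(G)$-orbit (rather than the $\Stab_G(\mu)$-orbit) is also correct and is the content of the parenthetical remark before the corollary.
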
	
\begin{proof}
	Analogous to the proof of Corollary \ref{cor:XOR_orbitOfCircuit}.
\end{proof}

\begin{lemma}
	\label{lem:XOR_gateSensitivityExactlyRowsOfMatrix}
	For every internal gate $\mathbf{g} := \mathbf{g}_{i,[x],[y]} \in V_C$ it holds:
	\[
	\Xx(\mathbf{g}) = \chi^{-1}(M[x]_{(i,[x],[y]),-}^T).
	\]
\end{lemma}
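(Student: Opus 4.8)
The plan is to establish the identity by induction along the DAG structure of $\widehat{C}(\mu)$ --- equivalently, by induction on the $\in^\mu$-rank of the objects $x \in \tc(\mu)$ --- proceeding from the input gates up to the root. The point of the lemma is only to verify that the circuit we assembled out of the matrices $M[x]$ and $N[x][y]$ really realises the promise made during the construction, namely $\chi(\Xx(\mathbf{g}_{i,[x],[y]})) = M[x]_{(i,[x],[y]),-}^{T}$ for every gate; the displayed statement then follows by applying $\chi^{-1}$ and transposing, since $\chi$ is a bijection between subsets of $E$ and vectors in $\bbF_2^E$ under which symmetric difference corresponds to addition.

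For the base case I would take an atom $x = e_i \in \tc(\mu)$, so $I_{[x]} = \{[x]\}$ and the associated gate $\mathbf{g}_{[x]}$ is an input gate with $\ell(\mathbf{g}_{[x]}) = e$. By definition of $\Xx$ on leaves, $\Xx(\mathbf{g}_{[x]}) = \{e\}$, hence $\chi(\Xx(\mathbf{g}_{[x]})) = \chi(\{e\}) = M[x]^{T}$, using $M[x] = \chi(e)^{T}$. For the inductive step, let $\mathbf{g} = \mathbf{g}_{i,[x],[y]}$ be an internal gate, so $x$ is a set and $[y] \in \Cc[x]$. Its children are precisely the gates $\mathbf{g}_{j,[y],[z]}$ with $(j,[y],[z]) \in I_{[y]}$ and $(N[x][y])_{(i,[x],[y]),(j,[y],[z])} = 1$; each such child is either an input gate (possible when $[y]$ is an atom) or again of that form for an internal gate, and in both cases the claim already holds for it (base case, resp.\ induction hypothesis), giving $\chi(\Xx(\mathbf{g}_{j,[y],[z]})) = M[y]_{(j,[y],[z]),-}^{T}$. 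Using $\Xx(\mathbf{g}) = \bigtriangleup_{h \in \mathbf{g}E_C} \Xx(h)$ and translating the symmetric difference into an $\bbF_2$-sum, I obtain
\[
\chi(\Xx(\mathbf{g})) = \sum_{\substack{(j,[y],[z]) \in I_{[y]}\\ (N[x][y])_{(i,[x],[y]),(j,[y],[z])} = 1}} M[y]_{(j,[y],[z]),-}^{T} = \bigl( N[x][y]_{(i,[x],[y]),-} \cdot M[y] \bigr)^{T},
\]
and by the definition of the $M$-matrices, $M[x]_{(i,[x],[y]),-} = (N[x][y] \cdot M[y])_{(i,[x],[y]),-}$, so the right-hand side is $M[x]_{(i,[x],[y]),-}^{T}$, as required.

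The only real care needed --- and the place where I expect to spend effort making the argument airtight --- is the bookkeeping of the index sets: one must check that the children of $\mathbf{g}_{i,[x],[y]}$ are in bijective correspondence with the $1$-entries of row $(i,[x],[y])$ of $N[x][y]$, that these columns are indexed by precisely $I_{[y]}$ (the row index set of $M[y]$), and hence that summing the corresponding rows of $M[y]$ is genuinely the matrix-vector product $N[x][y]_{(i,[x],[y]),-}\cdot M[y]$. Once this alignment between the combinatorial wire structure of $\widehat{C}(\mu)$ and the matrix products $N[x][y]\cdot M[y]$ is pinned down, together with the fact that $\chi$ intertwines symmetric difference with addition in $\bbF_2^E$, the induction is a one-line computation at each step. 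A degenerate case --- an all-zero row of $N[x][y]$, i.e.\ a childless internal gate --- causes no trouble, since then both sides are the empty set, resp.\ the zero vector.
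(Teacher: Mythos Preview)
Your proposal is correct and follows essentially the same approach as the paper: induction from the input gates upward, using the definition of the children of $\mathbf{g}_{i,[x],[y]}$ in terms of the $1$-entries of $N[x][y]$, translating the symmetric difference into an $\bbF_2$-sum to obtain $(N[x][y]_{(i,[x],[y]),-}\cdot M[y])^T$, and then invoking the definition $M[x]_{(i,[x],[y]),-} = (N[x][y]\cdot M[y])_{(i,[x],[y]),-}$. The paper's proof is slightly terser and does not separately discuss the index-set bookkeeping or the degenerate all-zero-row case, but the argument is the same.
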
	
\begin{proof}
	By induction. For each input gate $\mathbf{g}_{[x]}$ with $[x] \subseteq \{e_0,e_1\}$ we have $\Xx(\mathbf{g}_{[x]}) = \{ e\}$.\\
	Now let $\mathbf{g} = \mathbf{g}_{i,[x],[y]}$ be an internal gate. By definition of $\widehat{C}(\mu)$, it holds
	\[
	\mathbf{g}E_C = \{ \mathbf{g}_{j,[y],[z]}  \mid (j,[y],[z]) \in I_{[y]}, (N[x][y])_{(i,[x],[y]),(j,[y],[z])} = 1 \}.
	\]
	Thus,
	\begin{align*}
		\Xx(\mathbf{g}) &= \bigtriangleup_{ \stackrel{(j,[y],[z]) \in I_{[y]},} {(N[x][y])_{(i,[x],[y]),(j,[y],[z])} = 1 }} \Xx(\mathbf{g}_{j,[y],[z]})\\
		&= \chi^{-1}\Big( \sum_{ \stackrel{(j,[y],[z]) \in I_{[y]},} {(N[x][y])_{(i,[x],[y]),(j,[y],[z])} = 1 }} \chi(\Xx(\mathbf{g}_{j,[y],[z]})) \  \  \mod 2 \Big)\\
		&= \chi^{-1}(((N[x][y])_{(i,[x],[y]),-} \cdot M[y])^T).
	\end{align*}
	The last step uses the induction hypothesis and the fact that $I_{[y]}$ is the row-index-set of $M[y]$. By the definition of the matrix $M[x]$ in the previous subsection, we have $(N[x][y])_{(i,[x],[y]),-} \cdot M[y] = M[x]_{(i,[x],[y]),-}$. This finishes the proof.
\end{proof}

\begin{lemma}
	\label{lem:XOR_dimensionOfCircuitHat}
	The \emph{fan-in dimension} of $\widehat{C}(\mu)$ is at most $\log (\maxOrb_E(\mu))$.
\end{lemma}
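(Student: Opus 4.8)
The plan is to replay the argument of Lemma~\ref{lem:XOR_dimensionOfCircuit}, with the single modification that in $\widehat{C}(\mu)$ the gate matrix of an internal gate is no longer one of the matrices $M[x]$ itself, but a row-submatrix of the matrix $M[y]$ of one of its children's ``levels''. So the first step is to fix an internal gate $\mathbf{g} = \mathbf{g}_{i,[x],[y]} \in V_C$ and to identify its gate matrix $M(\mathbf{g})$. By the construction of $\widehat{C}(\mu)$, the children of $\mathbf{g}$ are exactly the gates $\mathbf{g}_{j,[y],[z]}$ with $(j,[y],[z]) \in I_{[y]}$ for which $(N[x][y])_{(i,[x],[y]),(j,[y],[z])} = 1$; in particular every child carries the middle index $[y]$, so all children lie in the single ``bundle'' of gates whose sensitivity sets are, by Lemma~\ref{lem:XOR_gateSensitivityExactlyRowsOfMatrix} (and by the base case for input children), precisely the rows of $M[y]$, namely $\chi(\Xx(\mathbf{g}_{j,[y],[z]}))^T = M[y]_{(j,[y],[z]),-}$. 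Hence each row of $M(\mathbf{g})$ is a row of $M[y]$, so the row space of $M(\mathbf{g})$ is contained in that of $M[y]$ and therefore $\rk(M(\mathbf{g})) \le \rk(M[y])$; that is, the fan-in dimension of $\mathbf{g}$ is at most $\rk(M[y])$.

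Next I would bound $\rk(M[y])$ exactly as in Lemma~\ref{lem:XOR_dimensionOfCircuit}, but invoking Lemma~\ref{lem:XOR_kernelMmatrixCorrect} (the analogue of Lemma~\ref{lem:XOR_kernelLemma} for the general construction) in place of Lemma~\ref{lem:XOR_kernelLemma}. Since $\Ker(M[y]) = \Stab_E(y)$, the Rank Theorem gives $\rk(M[y]) = |E| - \dim \Stab_E(y)$, and because $|\Aut_{\CFI}(\GG)| = 2^{|E|}$, the Orbit-Stabiliser Theorem yields $|\Orb_E(y)| = 2^{|E|}/|\Stab_E(y)| = 2^{\rk(M[y])}$, i.e.\ $\rk(M[y]) = \log |\Orb_E(y)|$. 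As $y \in \tc(\mu)$, the definition of $\maxOrb_E$ gives $|\Orb_E(y)| \le \maxOrb_E(\mu)$, hence $\rk(M[y]) \le \log \maxOrb_E(\mu)$, and so the fan-in dimension of every internal gate of $\widehat{C}(\mu)$ is at most $\log \maxOrb_E(\mu)$. For the input gates $\mathbf{g}_{[e_0]}$ the gate matrix is the one-row matrix $\chi(e)^T$, of rank at most $1$; and whenever $\widehat{C}(\mu)$ has an input gate at all, $\tc(\mu)$ contains an atom $e_0$ with $|\Orb_E(e_0)| = 2$, so that $\log \maxOrb_E(\mu) \ge 1$. Taking the maximum over all gates of $\widehat{C}(\mu)$ then gives the claimed bound on the fan-in dimension of the circuit.

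I do not anticipate a genuine obstacle here. The only point that needs care is the bookkeeping observation that, unlike for $C(\mu)$, the gate matrix of $\mathbf{g}_{i,[x],[y]}$ is a submatrix of $M[y]$ and not of $M[x]$ — which is exactly the place where the $N$-gadgets enter — together with using the general kernel identity of Lemma~\ref{lem:XOR_kernelMmatrixCorrect}; the linear-algebraic core (Rank Theorem together with Orbit-Stabiliser) is identical to that of Lemma~\ref{lem:XOR_dimensionOfCircuit}.
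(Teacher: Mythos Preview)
Your proposal is correct and follows essentially the same approach as the paper's proof: identify the children of $\mathbf{g}_{i,[x],[y]}$, observe via Lemma~\ref{lem:XOR_gateSensitivityExactlyRowsOfMatrix} that the gate matrix is a row-submatrix of $M[y]$, then bound $\rk(M[y])$ using Lemma~\ref{lem:XOR_kernelMmatrixCorrect} together with the Rank Theorem and Orbit--Stabiliser, exactly as in Lemma~\ref{lem:XOR_dimensionOfCircuit}. Your explicit treatment of input gates is a harmless addition that the paper omits.
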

\begin{proof}
	Consider any gate $\mathbf{g} := \mathbf{g}_{i,[x],[y]} \in V_C$. By definition of $\widehat{C}(\mu)$, the children of $\mathbf{g}$ are 
	\[
	\mathbf{g}E_C = \{ \mathbf{g}_{j,[y],[z]}  \mid (j,[y],[z]) \in I_{[y]}, (N[x][y])_{(i,[x],[y]),(j,[y],[z])} = 1 \}.
	\]
	By Lemma \ref{lem:XOR_gateSensitivityExactlyRowsOfMatrix}, we have for each of these children:
	\[
	\Xx(\mathbf{g}_{j,[y],[z]}) = \chi^{-1}(M[y]_{(j,[y],[z]),-}).
	\]
	Thus, the gate matrix $M\mathbf{g} \in \bbF_2^{\mathbf{g}E_C \times E}$ is a submatrix of $M[y]$: It consists precisely of those rows $(j,[y],[z])$ of $M[y]$ such that $(N[x][y])_{(i,[x],[y]),(j,[y],[z])} = 1$. Therefore, 
	$\rk(M\mathbf{g}) \leq \rk (M[y]).$
	Now we can argue as in the proof of Lemma \ref{lem:XOR_dimensionOfCircuit} in order to bound $\rk (M[y])$. Using the Orbit-Stabiliser Theorem, we obtain again:
	\[
	\log(\maxOrb_E(\mu)) \geq |E| - \dim \Stab_E(y).
	\] 
	By Lemma \ref{lem:XOR_kernelMmatrixCorrect}, we have $\Ker(M[y]) = \Stab_E(y)$. Therefore, with the Rank Theorem we obtain:
	\[
	\rk(M[y]) = |E| - \dim \ \Stab_E(y) \leq \log (\maxOrb_E(\mu)).  
	\]
	In total, we have $\rk(M\mathbf{g}) \leq  \log (\maxOrb_E(\mu))$. Since $\mathbf{g} \in V_C$ was arbitrary, this is a bound on the fan-in dimension of $\widehat{C}(\mu)$.  
\end{proof}

\begin{lemma}
	\label{lem:XOR_rootSensitivityCHat}
	For the root $r$ of $\widehat{C}(\mu)$, it holds
	\[
	| \Xx(r) | \geq \frac{|\sup_{\CFI}(\mu)|}{\log (\maxOrb_E(\mu))}.
	\]
\end{lemma}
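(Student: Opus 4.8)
The plan is to relate $\Xx(r)$, which by Lemma \ref{lem:XOR_gateSensitivityExactlyRowsOfMatrix} is (up to $\chi^{-1}$) a single row of the matrix $M[\mu]$, to the minimum CFI-support of $\mu$. First I would observe that $\sup_{\CFI}(\mu)$ is essentially the ``support complement'' of the space $\Stab_E(\mu)$: by Definition \ref{def:CFI_support}, an edge $e$ lies in every CFI-support of $\mu$ iff there is no CFI-support avoiding $e$, which (using minimality and Lemma \ref{lem:XOR_minimalCFIsupports}) means that $e \in \sup_{\CFI}(\mu)$ iff $\chi(\{e\}) \notin \Stab_E(\mu)^{\perp\perp}$ in the appropriate sense -- more concretely, iff some $\mathbf{v}$ with $v_e = 1$ fails to stabilise $\mu$. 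In linear-algebraic terms, since $\Stab_E(\mu) = \Ker(M[\mu])$ by Lemma \ref{lem:XOR_kernelMmatrixCorrect}, the set $\sup_{\CFI}(\mu)$ is exactly the set of columns $e$ of $M[\mu]$ that are \emph{not} identically zero, i.e.\ $|\sup_{\CFI}(\mu)|$ equals the number of non-zero columns of $M[\mu]$.

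The key counting step is then as follows. Each row of $M[\mu]$ has Hamming weight at most the number of non-zero columns is too crude; instead I want a lower bound on the weight of the \emph{heaviest} row. The matrix $M[\mu]$ has row space of dimension $\rk(M[\mu]) = \dim \Im(M[\mu]) \leq \log(\maxOrb_E(\mu))$ by the Orbit--Stabiliser and Rank Theorems (exactly as in the proof of Lemma \ref{lem:XOR_dimensionOfCircuitHat}). Every non-zero column of $M[\mu]$ is a non-zero vector in $\bbF_2^{I_{[\mu]}}$, and there are $|\sup_{\CFI}(\mu)|$ of them. But the column space of $M[\mu]$ also has dimension $\rk(M[\mu]) \leq \log(\maxOrb_E(\mu))$, so the columns of $M[\mu]$, viewed as vectors, span a space of dimension at most $\log(\maxOrb_E(\mu))$; hence there are at most $2^{\log(\maxOrb_E(\mu))} = \maxOrb_E(\mu)$ distinct columns. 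That bounds the number of \emph{distinct} column patterns, not quite what we need either. The cleanest route: restrict attention to the $|\sup_{\CFI}(\mu)|$ non-zero columns; these live in a $\rk(M[\mu])$-dimensional subspace, and by pigeonhole among the $2^{\rk(M[\mu])}$ possible vectors, but more useful is to bound the number of distinct \emph{rows}. Since the row space has dimension $\rk(M[\mu]) \le \log(\maxOrb_E(\mu))$, the matrix $M[\mu]$ has at most $2^{\rk(M[\mu])} \le \maxOrb_E(\mu)$ distinct rows; collapsing $M[\mu]$ to its distinct rows does not change which columns are non-zero, so we may assume $M[\mu]$ has at most $\log(\maxOrb_E(\mu))$ rows (after also removing redundant rows down to a row basis, which again preserves the set of non-zero columns). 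Now a matrix with at most $\log(\maxOrb_E(\mu))$ rows and $|\sup_{\CFI}(\mu)|$ non-zero columns must have a row of Hamming weight at least $|\sup_{\CFI}(\mu)| / \log(\maxOrb_E(\mu))$, since every non-zero column contributes a $1$ to at least one row, so the total number of $1$-entries is at least $|\sup_{\CFI}(\mu)|$, and it is distributed over at most $\log(\maxOrb_E(\mu))$ rows. The root $r$ was chosen to be the gate whose row of $M[\mu]$ has the maximum number of $1$-entries, so $|\Xx(r)| = \mathrm{wt}(M[\mu]_{(i,[\mu],[y]),-}) \geq |\sup_{\CFI}(\mu)| / \log(\maxOrb_E(\mu))$, as claimed.

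The main obstacle I anticipate is the first step -- making precise that $|\sup_{\CFI}(\mu)|$ equals the number of non-zero columns of $M[\mu]$ -- because this requires unwinding Definition \ref{def:CFI_support} together with Lemma \ref{lem:XOR_kernelMmatrixCorrect}: an edge $e$ is in the minimum CFI-support iff the single flip $\rho_e$ fails to fix $\mu$ (this is essentially what is argued in the proof of Theorem \ref{thm:XOR_mainCircuitTheorem}, Property 3), which is iff $\chi(\{e\}) \notin \Ker(M[\mu])$, which is iff the $e$-th column of $M[\mu]$ is non-zero. One subtlety is that in the general (non-CFI-symmetric) construction we do not have the column-wise parity characterisation via equation $(\star)$ of Lemma \ref{lem:XOR_kernelLemma}; instead the argument must go purely through $\Ker(M[\mu]) = \Stab_E(\mu)$ from Lemma \ref{lem:XOR_kernelMmatrixCorrect}, which gives directly: $e \notin \sup_{\CFI}(\mu) \iff$ every $\rho_F$ with $F \cap \{e\}$ arbitrary but $F$ otherwise avoiding $\sup_{\CFI}(\mu)$ fixes $\mu$; combined with minimality of $\sup_{\CFI}(\mu)$ this forces the $e$-column to be zero, and conversely a zero $e$-column means $\chi(\{e\})$ and more generally any $F$ containing $e$ acts the same as $F \setminus \{e\}$, so $e$ can be dropped from any CFI-support. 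The rest is elementary linear algebra and pigeonhole, so I expect it to be routine once the column characterisation is nailed down.

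\begin{proof}
	By Lemma \ref{lem:XOR_gateSensitivityExactlyRowsOfMatrix}, $\chi(\Xx(r)) = M[\mu]_{(i,[\mu],[y]),-}^T$ for the row $(i,[\mu],[y])$ of $M[\mu]$ that the root $r = \mathbf{g}_{i,[\mu],[y]}$ was defined to realise, and by the choice of $r$ this is a row of $M[\mu]$ with a maximum number of $1$-entries. Thus $|\Xx(r)|$ equals the maximum Hamming weight of any row of $M[\mu]$.

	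We first claim that the number of columns of $M[\mu]$ that are not identically zero is at least $|\sup_{\CFI}(\mu)|$. By Lemma \ref{lem:XOR_kernelMmatrixCorrect}, $\Ker(M[\mu]) = \Stab_E(\mu)$. Let $e \in E$ and suppose the $e$-th column of $M[\mu]$ is the zero column. Then for every $\mathbf{v} \in \bbF_2^E$, we have $M[\mu] \cdot \mathbf{v} = M[\mu] \cdot (\mathbf{v} + \chi(\{e\}))$, so $\rho_F(\mu) = \rho_{F \triangle \{e\}}(\mu)$ for every $F \subseteq E$; hence $e$ can be removed from any CFI-support of $\mu$, and in particular $e \notin \sup_{\CFI}(\mu)$ by minimality of $\sup_{\CFI}(\mu)$. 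By contraposition, every $e \in \sup_{\CFI}(\mu)$ corresponds to a non-zero column of $M[\mu]$, which proves the claim.

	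Now let $M'$ be a submatrix of $M[\mu]$ consisting of a maximal linearly independent set of rows of $M[\mu]$, so that $M'$ has exactly $\rk(M[\mu])$ rows and the same row space, hence the same set of non-zero columns, as $M[\mu]$. Arguing as in the proof of Lemma \ref{lem:XOR_dimensionOfCircuitHat}: by the Orbit--Stabiliser Theorem and $|\Aut_{\CFI}(\GG)| = 2^{|E|}$ together with $\Ker(M[\mu]) = \Stab_E(\mu)$ and the Rank Theorem, we get
	\[
	\rk(M[\mu]) = |E| - \dim \Stab_E(\mu) \leq \log(\maxOrb_E(\mu)).
	\]
	By the previous claim, $M'$ has at least $|\sup_{\CFI}(\mu)|$ non-zero columns. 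Each non-zero column of $M'$ contains at least one $1$-entry, so the total number of $1$-entries in $M'$ is at least $|\sup_{\CFI}(\mu)|$. Since $M'$ has at most $\log(\maxOrb_E(\mu))$ rows, some row of $M'$ has Hamming weight at least $|\sup_{\CFI}(\mu)| / \log(\maxOrb_E(\mu))$. As $M'$ is a submatrix of $M[\mu]$ by rows, this is also a row of $M[\mu]$, so the maximum row weight of $M[\mu]$ is at least $|\sup_{\CFI}(\mu)| / \log(\maxOrb_E(\mu))$. Therefore $|\Xx(r)| \geq |\sup_{\CFI}(\mu)| / \log(\maxOrb_E(\mu))$.
\end{proof}
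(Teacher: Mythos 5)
Your proof is correct and follows essentially the same route as the paper: restrict $M[\mu]$ to a row basis of size $\rk(M[\mu]) \leq \log(\maxOrb_E(\mu))$, observe that every $e \in \sup_{\CFI}(\mu)$ yields a non-zero column (via $\Ker(M[\mu]) = \Stab_E(\mu)$ and minimality of $\sup_{\CFI}(\mu)$), and apply pigeonhole to the at least $|\sup_{\CFI}(\mu)|$ ones distributed over at most $\log(\maxOrb_E(\mu))$ rows. The only cosmetic difference is that the paper argues the contrapositive more tersely (``$\chi(e) \notin \Stab_E(\mu)$ by minimality, else a smaller support exists, contradicting $\Ker(M[\mu]) = \Stab_E(\mu)$''), whereas you spell out why a zero $e$-column lets $e$ be dropped from any CFI-support.
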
	
\begin{proof}
	By definition of $\widehat{C}(\mu)$, we have $r = \mathbf{g}_{i,[\mu],[y]}$ for some $(i,[\mu],[y]) \in I_{[\mu]}$ such that $M[\mu]_{(i,[\mu],[y]),-}$ is a row with a maximum number of $1$-entries. We have
	\[
	\Xx(r) = \chi^{-1}(M[\mu]_{(i,[\mu],[y]),-}^T)
	\]
	according to Lemma \ref{lem:XOR_gateSensitivityExactlyRowsOfMatrix}. Therefore, we have to show that $M[\mu]$ has a row with at least $\frac{|\sup_{\CFI}(\mu)|}{\log (\maxOrb_E(\mu))}$ many $1$-entries.\\
	\\
	\textbf{Claim:} For every $e \in \sup_{\CFI}(\mu)$ there is a row of $M[\mu]$ which is non-zero in column $e$.\\
	\textit{Proof of claim:} Since $\sup_{\CFI}(\mu)$ is the minimal CFI-support of $\mu$, it holds that $\chi(e) \notin \Stab_E(\mu)$. Otherwise there is a smaller support not containing $e$. Suppose for a contradiction that all rows of $M[\mu]$ are zero in column $e$. Then $\chi(e) \in \Ker(M[\mu])$. But this contradicts the fact that $\Ker(M[\mu]) = \Stab_E(\mu)$ (Lemma \ref{lem:XOR_kernelMmatrixCorrect}). This proves the claim.\\
	Now take a subset $\Bb$ of the rows of $M[\mu]$ that forms a basis of the row space of $M[\mu]$. By Lemma \ref{lem:XOR_dimensionOfCircuitHat}, we have $|\Bb| \leq \log (\maxOrb_E(\mu))$. For every $e \in E$ such that some row of $M[\mu]$ is non-zero in column $e$, there must also be a row in $\Bb$ that is non-zero in column $e$ (else $\Bb$ does not generate the whole row space). So by the claim and by the size bound on $\Bb$, there is a row in $\Bb$ with at least $\frac{|\sup_{\CFI}(\mu)|}{\log (\maxOrb_E(\mu))}$ many $1$-entries.
\end{proof}	

So far we have shown \textbf{Properties} \textbf{1}, \textbf{2} and \textbf{3} from Theorem \ref{thm:XOR_generalCircuitTheorem}. \textbf{Property 1} is proved by Corollary \ref{cor:XOR_orbitOfCircuitHat}, \textbf{Property 2} by Lemma \ref{lem:XOR_rootSensitivityCHat}, and \textbf{Property 3} by Lemma \ref{lem:XOR_dimensionOfCircuitHat}. 
It remains to estimate the size of the circuit under the assumption that every space $\Stab_E([y] \cap x)$ has a symmetric basis.

\subsection{Bounding the size of the circuit}
\label{sec:XOR_sizeBound}
\begin{lemma}
	\label{lem:XOR_circuitSize}
	Let $\text{Atoms}(\mu) \subseteq \tc(\mu)$ denote the set of atoms in $\tc(\mu)$, and $\text{Sets}(\mu) := \tc(\mu) \setminus \text{Atoms}(\mu)$.
	The size of $\widehat{C}(\mu)$ is
	\[
	|V_C| = |\{e \in E \mid e_0 \text{ or } e_1 \in \text{Atoms}(\mu)\}| + \sum_{\stackrel{([x],[y]), x \in \text{Sets}(\mu),}{[y] \in \Cc[x]}} |J_{[x][y]}|.
	\]
	In other words: The size of $\widehat{C}(\mu)$ is determined by the total number of rows of all $N[x][y]$-matrices.
\end{lemma}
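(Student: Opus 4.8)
The plan is to evaluate $|V_C|$ by simply unfolding the definition of the gate set of $\widehat{C}(\mu)$ and partitioning it into input gates and internal gates.

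First I would recall that, by construction, $V_C = \biguplus_{[x]} \biguplus_{(i,[x],[y]) \in I_{[x]}} \mathbf{g}_{i,[x],[y]}$, where $[x]$ ranges over the $\sim_E$-classes of $\tc(\mu)$. The key preliminary observation is that this union really is disjoint: each gate carries its class $[x]$ (and, for internal gates, also the class $[y]$) inside its index triple, and the index sets were defined so that $I_{[x]} \cap I_{[x']} = \emptyset$ whenever $[x] \neq [x']$; moreover the root is taken to be one of the already-constructed gates, so it introduces nothing new. Hence $|V_C| = \sum_{[x]} |I_{[x]}|$, and it remains to evaluate this sum.

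Next I would split the classes $[x]$ according to whether they consist of atoms or of sets — a legitimate dichotomy, since every edge flip $\rho_F \in \Aut_{\CFI}(\GG)$ maps atoms to atoms and sets to sets, so each $\sim_E$-class is purely of one kind. For an atom class, $I_{[x]}$ is a singleton by construction, contributing a single input gate; since $\rho_e$ identifies $e_0$ with $e_1$, the atom classes of $\tc(\mu)$ are in bijection with the edges $e \in E$ such that $e_0 \in \text{Atoms}(\mu)$ or $e_1 \in \text{Atoms}(\mu)$, which yields the first summand. For a set class $[x]$, the construction sets $I_{[x]} = \biguplus_{[y] \in \Cc[x]} J_{[x][y]}$, again a disjoint union (the triples in $J_{[x][y]}$ record $[y]$), so $|I_{[x]}| = \sum_{[y] \in \Cc[x]} |J_{[x][y]}|$; summing over all set classes produces exactly $\sum_{([x],[y]),\, x \in \text{Sets}(\mu),\, [y] \in \Cc[x]} |J_{[x][y]}|$. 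Adding the two contributions gives the stated formula, and the closing ``in other words'' follows because $J_{[x][y]}$ was defined to have as many elements as $N[x][y]$ has rows.

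Since everything here is bookkeeping there is no real obstacle; the only points that need care are the two disjointness claims above (which are read off directly from the explicit index-set definitions) and the mild abuse of notation in the statement, where ``$x \in \text{Sets}(\mu)$'' must be understood as ``$[x]$ ranges over the $\sim_E$-classes of non-atomic elements of $\tc(\mu)$'', so that each pair $([x],[y])$ is counted once rather than once per representative $x$.
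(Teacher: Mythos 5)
Your proposal is correct and follows essentially the same route as the paper's proof: expand $|V_C| = \sum_{[x]} |I_{[x]}|$, split into atom classes (each contributing one input gate) and set classes (each contributing $\sum_{[y]\in\Cc[x]} |J_{[x][y]}|$), and sum. You are a bit more explicit than the paper about the disjointness of the index sets and the bijection between atom classes and edges, but the argument is the same.
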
	
\begin{proof}
	By definition of $\widehat{C}(\mu)$, $|V_C| = \sum_{[x], x \in \tc(\mu)} | I_{[x]} |$. For each $x \in \tc(\mu) \cap \text{Sets}(\mu)$, we have $|I_{[x]}| = \sum_{[y] \in \Cc[x]} |J_{[x][y]}|$. Every set $J_{[x][y]}$ is exclusively associated with the pair $([x],[y])$, so 
	\[
	\sum_{[x], x \in \tc(\mu)} | I_{[x]} | = \sum_{\stackrel{([x],[y]), x \in \text{Sets}(\mu),}{[y] \in \Cc[x]}} |J_{[x][y]}|.
	\]
	For each atomic $x \in \tc(\mu)$, $|I_{[x]}| = 1$.
\end{proof}

Hence, in order to bound $|V_C|$, we have to bound the number of rows of each of the $N[x][y]$-matrices. To do so, we revisit the proofs of Lemmas \ref{lem:XOR_kernelMatrixExists} and \ref{lem:XOR_existenceOfNMatrixWithConditionsAB}. We will see that the matrix that is constructed in these lemmas can be chosen to have polynomial size if a symmetric basis for $\Stab_E([y] \cap x)$ exists. The first step in the construction of $N[x][y]$ in Lemma \ref{lem:XOR_kernelMatrixExists} is to solve a family of linear equation systems. We now show that the symmetries of such systems correspond to symmetries of their solutions.
\begin{lemma}
	\label{lem:XOR_symmetricSolutionLemma}
	Let $I,J$ be abstract index sets and $A,A' \in \bbF_2^{I \times J}, \mathbf{b}, \mathbf{b}' \in \bbF_2^I$ such that the linear equation systems $A \cdot \mathbf{x} = \mathbf{b}$ and $A' \cdot \mathbf{x} = \mathbf{b}'$ each have a unique solution $\mathbf{s}, \mathbf{s'}$, respectively. Let $\mathbf{G} \leq \Sym(J)$ be a permutation group acting on the columns of $A$. Let $\pi \in \mathbf{G}$ a permutation that maps the row-set of the extended coefficient matrix $(A|\mathbf{b})$ 
	\[
	R = \{ (A_{i-},\mathbf{b}_i) \mid i \in I \}
	\]
	to the row-set $R'$ of $(A'|\mathbf{b}')$. Here, the action of $\pi$ on $R$ is $\pi(A_{i-},\mathbf{b}_i) := (\pi(A_{i-}),\mathbf{b}_i)$. Then $\pi(\mathbf{s}) = \mathbf{s}'$.
\end{lemma}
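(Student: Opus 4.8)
The plan is to exploit the uniqueness of the solutions together with the fact that a permutation of the columns is an orthogonality-preserving relabelling of coordinates, in the sense of Proposition \ref{prop:XOR_matrixMultiplicationAndPermutation}. First I would record the basic fact that for any row index $i \in I$ we have $A_{i-} \cdot \mathbf{s} = \mathbf{b}_i$, since $\mathbf{s}$ solves $A \cdot \mathbf{x} = \mathbf{b}$. Applying $\pi$ to both the row vector and the candidate solution vector, Proposition \ref{prop:XOR_matrixMultiplicationAndPermutation} gives $\pi(A_{i-}) \cdot \pi(\mathbf{s}) = A_{i-} \cdot \mathbf{s} = \mathbf{b}_i$, because a scalar product is invariant under a simultaneous permutation of the entries of both vectors. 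So $\pi(\mathbf{s})$ satisfies the equation indexed by the row $(\pi(A_{i-}), \mathbf{b}_i) \in R$.

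Next I would use the hypothesis that $\pi$ maps the row-set $R$ onto the row-set $R'$ of $(A'|\mathbf{b}')$. This means there is a surjection (in fact a bijection, or at least a surjection on the level of rows) $\tau : I \lra I$ such that $(\pi(A_{i-}), \mathbf{b}_i) = (A'_{\tau(i)-}, \mathbf{b}'_{\tau(i)})$ for every $i \in I$, and that every row of $(A'|\mathbf{b}')$ arises this way. Combining this with the previous paragraph, for every $i$ we get $A'_{\tau(i)-} \cdot \pi(\mathbf{s}) = \mathbf{b}'_{\tau(i)}$. Since $\tau$ ranges over all rows of $(A'|\mathbf{b}')$, the vector $\pi(\mathbf{s})$ satisfies every equation of the system $A' \cdot \mathbf{x} = \mathbf{b}'$; that is, $\pi(\mathbf{s})$ is a solution of $A' \cdot \mathbf{x} = \mathbf{b}'$. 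By the assumed uniqueness of the solution $\mathbf{s}'$ of that system, we conclude $\pi(\mathbf{s}) = \mathbf{s}'$, as claimed.

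I do not expect any real obstacle here; the only point requiring a little care is making sure that ``$\pi$ maps $R$ to $R'$'' is used in the direction that gives a surjection onto the rows of $(A'|\mathbf{b}')$ (so that \emph{every} equation of the primed system is hit), rather than merely an injection into it — but since both index sets are $I$ and the row multisets have the same size, this is automatic. One should also note that repeated rows cause no trouble: the argument only uses that the \emph{set} of equations satisfied by $\pi(\mathbf{s})$ contains the set of equations of the primed system, and uniqueness of the solution then forces equality. Everything else is a direct application of Proposition \ref{prop:XOR_matrixMultiplicationAndPermutation} and linear-algebraic bookkeeping.
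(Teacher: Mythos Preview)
Your proposal is correct and follows essentially the same argument as the paper: show that $\pi(\mathbf{s})$ satisfies every equation of the primed system via Proposition \ref{prop:XOR_matrixMultiplicationAndPermutation} and the row-set bijection, then invoke uniqueness of $\mathbf{s}'$. The paper names the induced row map $\sigma$ where you write $\tau$, but otherwise the proofs are the same.
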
	
\begin{proof}
	We only have to show that $\pi(\mathbf{s})$ is a solution of $A' \cdot \mathbf{x} = \mathbf{b}'$. Then we have $\pi(\mathbf{s}) = \mathbf{s}'$ by uniqueness of the solution. 
	If $\pi \in \mathbf{G}$ maps $R$ to $R'$, then there is a permutation $\sigma \in \Sym(I)$ that is induced by the action of $\pi$, i.e.\ $\pi(A_{i-},\mathbf{b}_i) = (A'_{\sigma i,-}, \mathbf{b}'_{\sigma i})$ for all $i \in I$. For every $i \in I$, it holds $A_{i-} \cdot \mathbf{s} = \mathbf{b}_i$, because $\mathbf{s}$ is a solution to the equation system. Since (by Proposition \ref{prop:XOR_matrixMultiplicationAndPermutation}) $A_{i-} \cdot \mathbf{s} = \pi(A_{i-}) \cdot \pi(\mathbf{s}) = A'_{\sigma i -} \cdot  \pi(\mathbf{s})$, and $\mathbf{b}_i = \mathbf{b}'_{\sigma i}$, $\pi(\mathbf{s})$ is a solution to the equation $A'_{\sigma i -} \cdot \mathbf{x} = \mathbf{b}'_{\sigma i}$. Since every row of $(A'|\mathbf{b}')$ has such a preimage under $\pi$, it follows that $\pi(\mathbf{s})$ is a solution for every equation in $A' \cdot \mathbf{x} = \mathbf{b}'$. By assumption, the equation system has a unique solution. Therefore, $\pi(\mathbf{s}) = \mathbf{s}'$.
\end{proof}

The next lemma will become interesting once we are dealing with symmetric bases of vector spaces. It shows that the permutation invariance of a set of vectors (for example a symmetric basis) is preserved under linear maps and appropriate group homomorphisms. 
\begin{lemma}
	\label{lem:XOR_symmetricBasisPreservedByLinearMap}
	Let $M \in \bbF_2^{I \times J}$ and let $\mathbf{G} \leq \Sym(J)$ be a permutation group acting on the columns of $M$. Let $B \subseteq \bbF_2^J$ be a set of vectors and $S := \Stab_{\mathbf{G}}(B)$. Let $g : \mathbf{G} \lra \Sym(I)$ be a group homomorphism such that for all $\pi \in \mathbf{G}$, $(g(\pi),\pi)M = M$.\\
	Then $g(S)$ stabilises the set $M\cdot B = \{ M \cdot \mathbf{v} \mid \mathbf{v} \in B \}$.
\end{lemma}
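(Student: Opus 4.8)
The plan is to reduce the whole statement to a one-line computation via Proposition~\ref{prop:XOR_matrixMultiplicationAndPermutation}, which is exactly the tool that commutes matrix--vector multiplication past simultaneous row- and column-permutations. First I would fix an arbitrary $\sigma \in S = \Stab_{\mathbf{G}}(B)$ and aim to prove $g(\sigma)(M \cdot B) = M \cdot B$. Since $S$ is a subgroup of $\mathbf{G}$ and $g$ is a homomorphism, $g(S)$ is a subgroup of $\Sym(I)$; hence, establishing that each $g(\sigma)$ maps $M \cdot B$ onto itself immediately gives that $g(S)$ stabilises $M \cdot B$ setwise, which is the claim.

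The key step is the following computation. For any vector $\mathbf{v} \in \bbF_2^J$, Proposition~\ref{prop:XOR_matrixMultiplicationAndPermutation} gives $g(\sigma)(M \cdot \mathbf{v}) = (g(\sigma), \sigma)(M) \cdot \sigma(\mathbf{v})$. By the hypothesis on $g$, we have $(g(\sigma), \sigma)(M) = M$, so this collapses to $g(\sigma)(M \cdot \mathbf{v}) = M \cdot \sigma(\mathbf{v})$. Now if $\mathbf{v} \in B$, then $\sigma(\mathbf{v}) \in \sigma(B) = B$ because $\sigma$ lies in the setwise stabiliser of $B$, and therefore $g(\sigma)(M \cdot \mathbf{v}) = M \cdot \sigma(\mathbf{v}) \in M \cdot B$. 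This proves the inclusion $g(\sigma)(M \cdot B) \subseteq M \cdot B$. To upgrade it to an equality I would apply the same argument to $\sigma^{-1} \in S$, obtaining $g(\sigma)^{-1}(M \cdot B) \subseteq M \cdot B$, i.e.\ $M \cdot B \subseteq g(\sigma)(M \cdot B)$; alternatively, one can simply note that $g(\sigma)$ is a bijection of the finite set $\bbF_2^I$ that maps $M \cdot B$ into itself, hence onto itself.

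I do not anticipate a genuine obstacle: the only points requiring minor care are that $S$ really does act on $B$ (immediate from the definition of the setwise stabiliser) and that we pass from an inclusion to an equality, which the use of $\sigma^{-1}$ or a finiteness/injectivity observation settles routinely. This lemma is then exactly what is needed later to control the number of rows of the matrices $N[x][y]$ when $\Stab_E([y] \cap x)$ admits a symmetric basis, since it shows that such symmetry is preserved when the basis is pushed forward through $M[y]$.
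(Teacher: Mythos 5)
Your proof is correct and essentially the same as the paper's: both hinge on Proposition~\ref{prop:XOR_matrixMultiplicationAndPermutation} combined with $(g(\pi),\pi)M=M$ to conclude $g(\sigma)(M\cdot\mathbf{v})=M\cdot\sigma(\mathbf{v})$, and then use $\sigma(B)=B$. The paper runs the computation row-by-row and lands on the inverse $g(\pi)^{-1}$, while you apply the proposition directly at the vector level and make the inclusion-to-equality step explicit, but these are cosmetic differences, not a different route.
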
	
\begin{proof}
	Let $\pi \in S$ and $\mathbf{v} \in B$. Let $\mathbf{w} = M \cdot \mathbf{v}$. We show that $(g(\pi))^{-1}(\mathbf{w}) = M \cdot \pi(\mathbf{v})$. For each $i \in I$, we have 
	\[
	M_{i,-} \cdot \pi(\mathbf{v}) = \pi(M_{(g(\pi))(i),-}) \cdot  \pi(\mathbf{v}) = M_{(g(\pi))(i),-} \cdot \mathbf{v}.
	\] 
	The first equality holds because $(g(\pi),\pi)M = M$ by assumption and the second one is due to Proposition \ref{prop:XOR_matrixMultiplicationAndPermutation}. It follows that $g(\pi)(M \cdot \pi(\mathbf{v})) = M \cdot \mathbf{v} = \mathbf{w}$. So $(g(\pi))^{-1}(\mathbf{w}) = M \cdot \pi(\mathbf{v})$. We have $\pi(\mathbf{v}) \in B$ because $\pi \in S$. Therefore, $(g(\pi))^{-1}(\mathbf{w}) \in M \cdot B$. Because $\mathbf{w} = M \cdot \mathbf{v}$ was arbitrary, we know that $(g(\pi))^{-1} = g(\pi^{-1}) \in g(S)$ stabilises the set $M \cdot B$. Since $\pi \in S$ was also arbitrary, $g(S)$ stabilises $M \cdot B$.
\end{proof}	

Finally, we provide the exact definition of what we mean by a \emph{symmetric basis}. This definition is tailored to the spaces $\Stab_E([y] \cap x)$ that occur for the objects in $\tc(\mu)$. When we say ``symmetric basis'', we actually mean two bases: We require that both the basis of $\Stab_E([y] \cap x)$ as well as its extension to a  basis of the ambient space be symmetric. Symmetry is meant in the sense that the orbit must have polynomial size.
\begin{definition}
	\label{def:XOR_symmetricBasis}
	Let $x \in \tc(\mu)$ and $[y] \in \Cc[x]$. We say that the vector space $\Gamma := \Stab_E([y] \cap x) \leq \bbF_2^E$ has a \emph{symmetric basis} if there exist two bases
	\[
	\Bb_\Gamma \subseteq \Bb
	\]
	such that $\Bb_\Gamma$ is a basis of $\Gamma=\Stab_E([y] \cap x)$ and $\Bb$ is a basis of $\bbF_2^E$, and such that: The group
	\begin{align*}
		\Stab_{\Stab_G([y] \cap x)}&(\Bb_\Gamma) \cap \Stab_{\Stab_G([y] \cap x)}(\Bb) =\\
		&\{ \pi \in \Stab_G([y] \cap x) \mid \pi(\Bb_{\Gamma}) = \Bb_{\Gamma} \text{ and } \pi(\Bb) = \Bb \} \leq \Stab_G([y] \cap x)
	\end{align*}
	has index $\leq \text{\upshape poly}(|\GG^S|)$ in $\Stab_G([y] \cap x)$. 
\end{definition}
In the above definition, the polynomial $\text{poly}( \cdot )$ is of course meant to be fixed for the whole family of CFI-instances that we are considering in Theorem \ref{thm:XOR_generalCircuitTheorem}. 
Now let us continue with the main lemma that bounds the size of $N[x][y]$ assuming the existence of a symmetric basis.
\begin{lemma}
	\label{lem:XOR_NmatricesPolynomial}
	Let $x \in \tc(\mu)$ and $[y] \in \Cc[x]$. If $\Stab_E([y] \cap x)$ has a \emph{symmetric basis}, then the number of rows of $N[x][y]$ is polynomial in $|\GG^S|$.
\end{lemma}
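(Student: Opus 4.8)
The plan is to revisit the construction of $N[x][y]$ from Lemmas~\ref{lem:XOR_kernelMatrixExists} and~\ref{lem:XOR_existenceOfNMatrixWithConditionsAB} and to track symmetries through it. Recall that $N[x][y]$ was defined as a matrix with the \emph{smallest} number of rows among all Boolean matrices satisfying conditions (i) and (ii); hence it suffices to exhibit \emph{some} matrix with these two properties and a polynomial number of rows. By the proof of Lemma~\ref{lem:XOR_existenceOfNMatrixWithConditionsAB}, such a matrix is produced by first invoking Lemma~\ref{lem:XOR_kernelMatrixExists} to obtain a matrix $N' \in \bbF_2^{[k] \times I_{[y]}}$ with $\Ker(N') \cap \Im(M[y]) = M[y]\cdot\Stab_E([y]\cap x)$ and
\[
k \;=\; \rk(M[y]) - \dim\big(M[y]\cdot\Stab_E([y]\cap x)\big) \;\leq\; \rk(M[y]) \;\leq\; |E|,
\]
and then closing the rows of $N'$ under the column action of $G := g_{[y]}(\Stab_G([y]\cap x)) \leq \Sym(I_{[y]})$, which yields a matrix with $\sum_{i=1}^{k}|\Orb_G(N'_{i-})|$ rows. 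Since $k$ is polynomial in $|\GG^S|$, the task reduces to choosing $N'$ -- i.e.\ choosing the bases used in Lemma~\ref{lem:XOR_kernelMatrixExists} -- so that every row of $N'$ has polynomial $G$-orbit.

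The first step is to transport the assumed symmetric basis of $\Stab_E([y]\cap x) \leq \bbF_2^E$ across $M[y]$ into a symmetric nested pair of bases $\Bb_{\Gamma'} \subseteq \Bb_\Delta$ of $\Gamma' := M[y]\cdot\Stab_E([y]\cap x)$ inside $\Delta := \Im(M[y]) \leq \bbF_2^{I_{[y]}}$. Let $\Bb_\Gamma \subseteq \Bb$ be the bases from Definition~\ref{def:XOR_symmetricBasis}. Using Corollary~\ref{cor:XOR_stabYXinStabY} (so that $\Stab_G([y]\cap x) \leq \Stab_G([y])$) and item~(c) of the inductive construction, we have $(g_{[y]}(\pi),\pi)M[y] = M[y]$ for every $\pi \in \Stab_G([y]\cap x)$; therefore Lemma~\ref{lem:XOR_symmetricBasisPreservedByLinearMap} applies with $M := M[y]$, $g := g_{[y]}$ and shows that the spanning sets $M[y]\cdot\Bb$ of $\Delta$ and $M[y]\cdot\Bb_\Gamma$ of $\Gamma'$ are each stabilised by a subgroup of $G$ of polynomial index, hence have polynomial $G$-orbit. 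I would then extract from these a nested pair of bases with $\Bb_\Delta$ still of polynomial $G$-orbit; this is transparent once $\Bb$ is chosen so as to restrict to a basis of $\Ker(M[y]) = \Stab_E(y)$ (which is legitimate since $\Stab_E(y) \subseteq \Stab_E([y]\cap x)$), because then $M[y]\cdot\Bb$ and $M[y]\cdot\Bb_\Gamma$ consist of the zero vector together with a linearly independent family, and deleting the zero vectors -- an operation commuting with the $G$-action -- produces the desired bases.

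With $\Bb_\Delta = \Bb_{\Gamma'} \uplus \{\mathbf{w}_1,\dots,\mathbf{w}_k\}$ in hand, I would run Lemma~\ref{lem:XOR_kernelMatrixExists}: its $i$-th row of $N'$ is the unique solution $\mathbf{s}_i$ of a linear system whose extended coefficient matrix has row set $R_i = \{(\mathbf{v},\,[\mathbf{v}=\mathbf{w}_i]) : \mathbf{v}\in\Bb_\Delta\}$, with $G$ acting on the columns. Every $\pi \in G$ with $\pi(\Bb_\Delta) = \Bb_\Delta$ and $\pi(\mathbf{w}_i) = \mathbf{w}_i$ stabilises $R_i$ setwise, hence fixes $\mathbf{s}_i$ by Lemma~\ref{lem:XOR_symmetricSolutionLemma}; consequently
\[
|\Orb_G(\mathbf{s}_i)| \;\leq\; |\Orb_G(R_i)| \;\leq\; |\Orb_G(\Bb_\Delta)| \cdot |\Bb_\Delta| \;\leq\; \mathrm{poly}(|\GG^S|)\cdot|E|,
\]
where the second inequality factors the index of $\Stab_G(R_i)$ as the $G$-orbit of $\Bb_\Delta$ times the orbit of $\mathbf{w}_i$ under the stabiliser of $\Bb_\Delta$, which permutes the at most $\dim\Delta \leq |E|$ vectors of $\Bb_\Delta$. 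Summing over the $k \leq |E|$ rows of $N'$ then bounds the number of rows of the orbit closure, and therefore of $N[x][y]$, by a polynomial in $|\GG^S|$.

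The hard part will be the basis-extraction step in the second paragraph: promoting the $G$-symmetric \emph{spanning} sets $M[y]\cdot\Bb$ and $M[y]\cdot\Bb_\Gamma$ to genuine \emph{bases} without enlarging the orbit. The clean route above needs the symmetric basis $\Bb$ of $\bbF_2^E$ to be chosen compatibly with $\Ker(M[y]) = \Stab_E(y)$, so the real work is to justify this refinement -- e.g.\ by arguing, via $\Stab_E(y) = \bigcap_{[z]\in\Cc[y]}\Stab_E([z]\cap y)$ together with the inductive hypothesis, that the symmetric basis guaranteed by Definition~\ref{def:XOR_symmetricBasis} may be assumed to contain a symmetric basis of $\Stab_E(y)$, or else by removing nonzero linear dependencies among $M[y]\cdot\Bb_\Gamma$ through a further equivariant reduction.
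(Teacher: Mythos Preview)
Your approach is essentially the paper's: transport the symmetric basis through $M[y]$ via Lemma~\ref{lem:XOR_symmetricBasisPreservedByLinearMap}, run Lemma~\ref{lem:XOR_kernelMatrixExists} with those transported sets, and bound each row's orbit via Lemma~\ref{lem:XOR_symmetricSolutionLemma}. The orbit bookkeeping you sketch is fine and matches the paper up to inessential details.

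The one place you diverge is the ``hard part'' at the end, and there you have manufactured an obstacle that isn't there. You do \emph{not} need to extract a genuine basis $\Bb_{\Gamma'}$ from $M[y]\cdot\Bb_\Gamma$, nor do you need to impose any extra compatibility of $\Bb$ with $\Stab_E(y)$. The paper's observation is that Lemma~\ref{lem:XOR_kernelMatrixExists} still yields a unique solution for each system if the rows coming from the inner space form a mere \emph{generating set} of $\Gamma'=M[y]\cdot\Stab_E([y]\cap x)$ (redundant equations do no harm), provided the extension vectors $\mathbf{w}_1,\dots,\mathbf{w}_k$ remain linearly independent. And that independence is automatic: set $\Bb'_\Gamma:=M[y]\cdot\Bb_\Gamma\setminus\{\mathbf{0}\}$ and $\Bb':=M[y]\cdot\Bb\setminus\{\mathbf{0}\}$; if some subset $K\subseteq\Bb'\setminus\Bb'_\Gamma$ summed to $0$, then the sum of the corresponding preimages in $\Bb\setminus\Bb_\Gamma$ would lie in $\Ker(M[y])=\Stab_E(y)\leq\Stab_E([y]\cap x)=\langle\Bb_\Gamma\rangle$, contradicting that $\Bb$ is a basis. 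So $\Bb'\setminus\Bb'_\Gamma$ is already linearly independent without any additional choice, and the rest of your argument goes through unchanged. Your proposed workarounds (forcing $\Bb$ to contain a basis of $\Stab_E(y)$, or an ``equivariant reduction'') are unnecessary and, in the first case, would require strengthening Definition~\ref{def:XOR_symmetricBasis} beyond what is assumed.
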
	
\begin{proof}
We show that there is a Boolean matrix $N$ with a polynomial number of rows which satisfies conditions (i) and (ii) from Lemma \ref{lem:XOR_existenceOfNMatrixWithConditionsAB}. This proves the lemma because $N[x][y]$ is defined as the smallest such matrix.\\
Consider the proof of Lemma \ref{lem:XOR_kernelMatrixExists}. When applied to $\Gamma = M[y] \cdot \Stab_E([y] \cap x)$ and $\Delta = M[y] \cdot \bbF_2^E = \Im(M[y])$, the proof shows that there is a matrix $N \in \bbF_2^{[k] \times I_{[y]}}$, for $k = \dim (\Im \ M[y]) - \dim (M[y] \cdot \Stab_E([y] \cap x))$, such that $\Ker(N) \cap \Im(M[y])= M[y] \cdot \Stab_E([y] \cap x)$. In the proof, $k$ linear equation systems $A_i \cdot \mathbf{x} = \mathbf{b}_i$ are defined, each one with a unique solution. Then, for $i \in [k]$, the $i$-th row of $N$ is defined as the unique solution to the equation system $A_i \cdot \mathbf{x} = \mathbf{b}_i$. The matrix $A_i$ is the same for every $i \in [k]$.
It depends on the choice of a basis for $M[y] \cdot \Stab_E([y] \cap x)$ and an extension to a basis of $\Im(M[y])$. More precisely, let $\{\mathbf{w}_1,...,\mathbf{w}_k\}$ be the vectors that extend the basis of $M[y] \cdot \Stab_E([y] \cap x)$ to a basis of $\Im(M[y])$. 
The rows of $A_i$ are the basis vectors of $\Im(M[y])$, and the vector $\mathbf{b}_i$ has exactly one $1$-entry in the row corresponding to $\mathbf{w}_i$. In this way, $\mathbf{b}_i$ is defined for every $i \in [k]$.\\
One can check that this proof of Lemma \ref{lem:XOR_kernelMatrixExists} still goes through if one uses a \emph{generating set} for the space $M[y] \cdot \Stab_E([y] \cap x)$ instead of a basis for the rows of $A_i$ -- as long as the extension $\{\mathbf{w}_1,...,\mathbf{w}_k\}$ to a basis of the full space $\Im(M[y])$ is a linearly independent set of vectors. This changes nothing and in particular, each equation system $A_i \cdot \mathbf{x} = \mathbf{b}_i$ will still have a unique solution because we have just added some redundant equations.\\

We choose appropriate bases now. Since $\Stab_E([y] \cap x)$ has a symmetric basis by assumption, there are bases $\Bb_\Gamma \subseteq \Bb$ of $\Stab_E([y] \cap x)$ and $\bbF_2^E$, respectively, such that $\Stab_{\Stab_G([y] \cap x)}(\Bb_\Gamma) \cap \Stab_{\Stab_G([y] \cap x)}(\Bb)$ has small index in $\Stab_G([y] \cap x)$. Let $\Bb'_\Gamma := M[y] \cdot \Bb_\Gamma \setminus \{\mathbf{0}\}$ and $\Bb' := M[y] \cdot \Bb \setminus \{\mathbf{0}\}$. Then $\Bb'_\Gamma$ is a generating set for $M[y] \cdot \Stab_E([y] \cap x)$, and $\Bb' \setminus \Bb'_\Gamma$ extends this generating set to a basis of $\Im(M[y])$. Importantly, $\Bb' \setminus \Bb'_\Gamma$ is a linearly independent set of vectors (while $\Bb'_\Gamma$ may be linearly dependent). This is because $\Ker(M[y]) = \Stab_E(y)$ (Lemma \ref{lem:XOR_kernelMmatrixCorrect}), and $\Stab_E(y) \leq \Stab_E([y] \cap x)$. Thus, if there were a subset $K \subseteq \Bb' \setminus \Bb'_\Gamma$ such that $\sum K = 0$, then the sum of the $M[y]$-preimages of the vectors in $K$ would be in $\Ker(M[y]) = \Stab_E(y)$. This cannot be the case because $\Bb_\Gamma$ is a basis for $\Stab_E([y] \cap x) \geq \Stab_E(y)$, so no linear combination of vectors in $\Bb \setminus \Bb_\Gamma$ can be in $\Stab_E(y)$.\\

Now apply Lemma \ref{lem:XOR_symmetricBasisPreservedByLinearMap} to the matrix $M = M[y] \in \bbF_2^{I_{[y]} \times E}$, $\mathbf{G} = \Stab_G([y] \cap x), B = \Bb_\Gamma$ and the homomorphism $g$ defined like this: $g : \Stab_G([y] \cap x) \lra \Sym(I_{[y]})$ maps any $\pi \in \Stab_G([y] \cap x)$ to the $I_{[y]}$-restriction of the permutation $g_{[y]}(\pi) \in \Sym(I_{\Omega_{[y]}})$. This is well-defined and $g(\pi) \in \Sym(I_{[y]})$ because $\Stab_G([y] \cap x) \leq \Stab_G([y])$ by Corollary \ref{cor:XOR_stabYXinStabY}, and $g_{[y]}(\Stab_G([y]))$ maps $I_{[y]}$ to itself by Lemma \ref{lem:XOR_homomorphismGcorrect}.\\
Now it follows with Lemma \ref{lem:XOR_symmetricBasisPreservedByLinearMap} that $g(\Stab_{\Stab_G([y] \cap x)}(\Bb_\Gamma)) \leq \Sym(I_{[y]})$ stabilises the set $M[y] \cdot \Bb_{\Gamma} \subseteq \bbF_2^{I_{[y]}}$. It holds $\Bb'_\Gamma = M[y] \cdot \Bb_{\Gamma}  \setminus \{\mathbf{0}\}$, and the zero-vector forms a singleton orbit with respect to permutations of the entries, so also $\Bb'_\Gamma$ is stabilised by $g(\Stab_{\Stab_G([y] \cap x)}(\Bb_\Gamma))$.
Similarly, by applying Lemma \ref{lem:XOR_symmetricBasisPreservedByLinearMap} to $B = \Bb$, we get that $g(\Stab_{\Stab_G([y] \cap x)}(\Bb))$ stabilises $\Bb'$.\\
Every permutation $\pi \in \Sym(I_{[y]})$ that stabilises the sets $\Bb'_\Gamma$ and $\Bb'$ induces a permutation $\sigma \in \Sym(\Bb')$, whose restriction to $\Bb'_\Gamma$ is a permutation in $\Sym(\Bb'_\Gamma)$. Since the rows of each coefficient matrix $A_i$ are the vectors in $\Bb'$, every $\pi \in \Sym(I_{[y]})$ that fixes $\Bb'$ setwise reorders the rows of $A_i$. The right hand side $\mathbf{b}_i$ has only one $1$-entry in the row corresponding to $\mathbf{w}_i \in \Bb' \setminus \Bb'_\Gamma$. So if $\pi$ also stabilises $\Bb'_\Gamma$, the action of $\pi$ on $(A_i|\mathbf{b}_i)$ moves the row vector $\pi^{-1}(\mathbf{w}_i) \in \Bb' \setminus \Bb'_\Gamma$ to the row where $\mathbf{b}_i$ has its $1$-entry. Up to a reordering of rows, this yields one of the other linear equation systems $\{ (A_i|\mathbf{b}_i) \mid i \in [k] \}$, because in some linear equation system, the equation with coefficient vector $\pi^{-1}(\mathbf{w}_i)$ has a $1$ on the right hand side. So any $\pi$ that stabilises both $\Bb'_\Gamma$ and $\Bb'$ induces a permutation on the set of linear equation systems $\{ (A_i|\mathbf{b}_i) \mid i \in [k] \}$ in the sense of Lemma \ref{lem:XOR_symmetricSolutionLemma} (with the action of column permutations on the row set of an equation system as defined there).
Let
\[
\Ss := g(\Stab_{\Stab_G([y] \cap x)}(\Bb_\Gamma)) \cap g(\Stab_{\Stab_G([y] \cap x)}(\Bb)).
\]
As we argued above, this group fixes both $\Bb'$ and $\Bb'_\Gamma$.  
Therefore, it induces a permutation on the equation systems and so  Lemma \ref{lem:XOR_symmetricSolutionLemma} tells us that $\Ss$ also induces a corresponding permutation on the set
\[
\{ \mathbf{s}_i \in \bbF_2^{I_{[y]}} \mid \mathbf{s}_i \text{ is the unique solution to } A_i \cdot \mathbf{x} = \mathbf{b}_i \}
\]
(where $\Ss$ acts on these vectors by permuting the entries). These solution vectors form exactly the rows of the matrix $N \in \bbF_2^{[k] \times I_{[y]}}$ that is being constructed in the proof of Lemma \ref{lem:XOR_kernelMatrixExists}. Therefore, the group $\Ss$ acting on the columns of $N$ induces corresponding permutations on the rows of $N$. In other words, for every $\pi \in \Ss$ there is a $\sigma \in \Sym_k$ such that $(\sigma, \pi)N = N$. Now we close the rows of $N$ under the action of $g(\Stab_G([y] \cap x))$, exactly like in the proof of Lemma \ref{lem:XOR_existenceOfNMatrixWithConditionsAB}. This yields a matrix satisfying the desired conditions (a) and (b). We now argue that for each row of $N$, only $\text{poly}(k+|\GG^S|)$ many rows are added to form the closure under $g_{[y]}(\Stab_G([y] \cap x))$.\\

To show this, we have to bound the size of
$\Orb(N_{i-}) = \{ \pi(N_{i-}) \mid \pi \in g(\Stab_G([y] \cap x)) \}$. Let $\Orb(N)$ denote the $g(\Stab_G([y] \cap x))$-orbit of the set of rows of $N$, and $\Stab(N) \leq g(\Stab_G([y] \cap x))$ the setwise stabiliser of the set of rows. It holds $|\Orb(N_{i-}) | \leq k \cdot |\Orb(N) |$ because any image of the row $N_{i-}$ is an element of at least one of the $k$-element row sets in $\Orb(N)$. Together with the Orbit-Stabiliser Theorem, we get
\[
|\Orb(N_{i-}) | \leq k \cdot |\Orb(N) | = k \cdot \frac{|g(\Stab_G([y] \cap x)) |}{| \Stab(N) |}.
\]
By what we argued above, we have
$\Ss \leq \Stab(N)$ and thus
\begin{align*}
	\frac{|g(\Stab_G([y] \cap x)) |}{| \Stab(N) |} &\leq \frac{|g(\Stab_G([y] \cap x)) |}{| \Ss |}\\ &\leq \frac{|g(\Stab_G([y] \cap x)) |}{| g(\Stab_{\Stab_G([y] \cap x)}(\Bb_\Gamma) \cap \Stab_{\Stab_G([y] \cap x)}(\Bb)) |}.  
\end{align*}
The last inequality holds because $g(\Stab_{\Stab_G([y] \cap x)}(\Bb_\Gamma) \cap \Stab_{\Stab_G([y] \cap x)}(\Bb)) \leq \Ss$.
It is not difficult to prove that the application of a group homomorphism can only decrease the index of $H$ in $G$, i.e.\ $[h(G) : h(H)] \leq [G : H]$. According to Definition \ref{def:XOR_symmetricBasis}, the index of $\Stab_{\Stab_G([y] \cap x)}(\Bb_\Gamma) \cap \Stab_{\Stab_G([y] \cap x)}(\Bb)$ in $\Stab_G([y] \cap x)$ is polynomially bounded in$ | \GG^S|$. As $g$ is a group homomorphism, this is also a bound for the index of the image under $g$, which is equal to the fraction above.
In total, we have shown:
\[
|\Orb(N_{i-}) | \leq k \cdot \text{poly}(| \GG^S|) \leq \text{poly}(k+| \GG^S|).
\]
Now this orbit bound applies to each of the $k$ rows of $N$, so when closing the rows of $N$ under the action of $g(\Stab_G([y] \cap x))$, we add at most $\text{poly}(k+| \GG^S|)$ many new rows to the matrix. Because $k  = \dim (\Im \ M[y]) - \dim (M[y] \cdot \Stab_E([y] \cap x)) \leq \dim (\Im \ M[y]) \leq |E|$ (this holds because $M[y]$ is a linear map defined on the $|E|$-dimensional space $\bbF_2^E$), $\text{poly}(k+|\GG^S|) = \text{poly}(|\GG^S|)$. The resulting matrix is a candidate for $N[x][y]$, so this shows that $N[x][y]$ has at most $\text{poly}(|\GG^S|)$ many rows.
\end{proof}

\begin{lemma}
	If for all $x \in \tc(\mu)$ and all $[y] \in \Cc[x]$, $\Stab_E([y] \cap x)$ has a symmetric basis, then $|V_C|$ has size polynomial in $|\GG^S|$.
\end{lemma}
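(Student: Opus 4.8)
The plan is straightforward: combine the counting formula for $|V_C|$ in Lemma~\ref{lem:XOR_circuitSize} with the size bound for individual $N$-matrices in Lemma~\ref{lem:XOR_NmatricesPolynomial}, and use the polynomial bound on $|\tc(\mu)|$ coming from the CPT-definability of $\mu$.

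First I would invoke Lemma~\ref{lem:XOR_circuitSize}, which expresses $|V_C|$ as $|\{e \in E \mid e_0 \text{ or } e_1 \in \text{Atoms}(\mu)\}|$ plus the sum of the $|J_{[x][y]}|$ over all pairs $([x],[y])$ with $x \in \text{Sets}(\mu)$ and $[y] \in \Cc[x]$. The first summand is bounded by $|E|$, which is polynomial in $|\GG^S|$. For the main sum, recall that $|J_{[x][y]}|$ is by construction precisely the number of rows of the matrix $N[x][y]$.

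Second, I would apply Lemma~\ref{lem:XOR_NmatricesPolynomial} to each pair $([x],[y])$ appearing in the sum: since by hypothesis every space $\Stab_E([y] \cap x)$ has a symmetric basis, that lemma gives $|J_{[x][y]}| \le p(|\GG^S|)$ for a single fixed polynomial $p$, where uniformity of $p$ over the pairs and over the whole family of CFI-instances is exactly what Definition~\ref{def:XOR_symmetricBasis} guarantees. Finally I would bound the number of summands: each $[x]$ occurring in the sum is a $\sim_E$-class of $\tc(\mu)$, and for each such $[x]$ the set $\Cc[x]$ of connected-component classes of $x$ is itself a set of $\sim_E$-classes, so there are at most $|\tc(\mu)|^2$ pairs $([x],[y])$ in total. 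Because $\mu$ is CPT-definable on input $\GG^S$, $|\tc(\mu)|$ is polynomial in $|\GG^S|$, so altogether $|V_C| \le |E| + |\tc(\mu)|^2 \cdot p(|\GG^S|)$, which is polynomial in $|\GG^S|$, as claimed.

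Since all the ingredients have already been established, there is essentially no obstacle left in this final step; the only point to watch is that the polynomial of Lemma~\ref{lem:XOR_NmatricesPolynomial} — equivalently, the one provided by the symmetric-basis condition of Definition~\ref{def:XOR_symmetricBasis} — is uniform over all occurring pairs $([x],[y])$ and over $n$, which is how that definition is phrased.
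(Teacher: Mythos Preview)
Your proposal is correct and follows essentially the same approach as the paper's proof, which simply states that the result follows directly from Lemma~\ref{lem:XOR_circuitSize}, Lemma~\ref{lem:XOR_NmatricesPolynomial}, and the polynomial bound on $|\tc(\mu)|$ from CPT-definability. You have filled in the details (bounding the number of pairs $([x],[y])$ by $|\tc(\mu)|^2$ and noting the uniformity of the polynomial from Definition~\ref{def:XOR_symmetricBasis}) that the paper leaves implicit.
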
	
\begin{proof}
	This follows directly from Lemma \ref{lem:XOR_circuitSize} and Lemma \ref{lem:XOR_NmatricesPolynomial} (remember that $|J_{[x][y]}|$ is the number of rows of $N[x][y]$), and from the fact that $|\tc(\mu)|$ is polynomial in $|\GG^S|$ because $\mu$ is CPT-definable in $\GG^S$.
\end{proof}	
This proves \textbf{Property 4} from Theorem \ref{thm:XOR_generalCircuitTheorem}.

\subsection{Which vector spaces have symmetric bases?}
We have shown that the size of $\widehat{C}(\mu)$ can be polynomially bounded if for all $x \in \tc(\mu)$ and all $[y] \in \Cc[x]$, the stabiliser space $\Stab_E([y] \cap x)$ has a \emph{symmetric basis}. If this is not the case, then we do not know anything about the size of $\widehat{C}(\mu)$. There may be other ways to bound it but a priori we have to assume that it is super-polynomial then. This makes these symmetric XOR-circuits less useful for deriving lower bounds against CPT than in the CFI-symmetric case, unless, of course, we know that the object $\mu$ satisfies the symmetric basis property.\\

This leads to the question what the class of h.f.\ sets with the symmetric basis property looks like. We do not have a definitive answer but we can show: All CFI-symmetric objects admit symmetric bases, there are easy examples of objects with symmetric bases which are not CFI-symmetric, and there also exist objects which do not admit symmetric bases but may a priori be CPT-definable. This latter result shows that there is unfortunately little hope to generally prove that \emph{all} CPT-definable objects over CFI-graphs admit symmetric bases.\\

\begin{comment}
In the following, whenever we have a Boolean vector space $\bbF_2^I$ over some index-set $I$, then $\widetilde{\bbF}_2^{I}$ denotes the \emph{even subspace} of $\bbF_2^I$, i.e.\ the subspace consisting of all vectors with even Hamming weight. Moreover, $\bbF_2^I \oplus \bbF_2^J$ denotes the direct sum of the two vector spaces. If $I$ and $J$ are disjoint, then $\bbF_2^I \oplus \bbF_2^J$ is simply equal to $\bbF_2^{I \cup J}$.
\end{comment}

\begin{restatable}{lemma}{restateCFIsymmetricHasSymmetricBasis}
	\label{lem:XOR_CFIsymmetricHasSymmetricBasis}
	Let $\mu \in \HF(\widehat{E})$ be CFI-symmetric. Then $\mu$ satisfies the symmetric basis condition from Definition \ref{def:XOR_symmetricBasis}.
\end{restatable}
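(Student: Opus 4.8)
The plan is to recall what the construction of the matrices $M[x]$ and $N[x][y]$ looks like when $\mu$ is CFI-symmetric, and then exhibit, for every relevant connected component, an explicit pair of bases $\Bb_\Gamma \subseteq \Bb$ that is stabilised by essentially all of $\Stab_G([y]\cap x)$. The starting point is Lemma \ref{lem:XOR_kernelLemma}: when $\mu$ is CFI-symmetric, the natural circuit $C(\mu)$ has gate matrices whose rows are exactly the characteristic vectors $\chi(\Xx[w])$ for the children $[w]$, so each row $M[x]_{[y]-}$ equals $(1\ 1\ \cdots\ 1)\cdot M[y]$. Translating this into the language of Section \ref{sec:generalisedCircuitConstruction}, in the CFI-symmetric case one may simply take $N[x][y]$ to be the single all-ones row. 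Consequently the spaces that need a symmetric basis are the stabiliser spaces $\Stab_E([y]\cap x) = \Ker(M[x]_{[y]-}) = \Ee_y$, the set of vectors in $\bbF_2^E$ whose image under $M[y]$ has even Hamming weight (see the proof of Lemma \ref{lem:XOR_kernelLemma}).

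First I would fix $x \in \tc(\mu)$, a component $\gamma = [y]\cap x \in \Cc(x)$, and set $\Gamma := \Stab_E(\gamma) \le \bbF_2^E$. By CFI-symmetry $\gamma$ is a CFI-symmetric component, so $|\Orb_E(\gamma)| = 2$, hence $\Gamma$ has codimension exactly $1$ in $\bbF_2^E$: there is a single ``parity functional'' $\lambda_\gamma \in (\bbF_2^E)^*$ (namely $\mathbf{v}\mapsto$ the Hamming-weight parity of $M[y]\cdot\mathbf{v}$) with $\Gamma = \Ker(\lambda_\gamma)$. The key point is that $\lambda_\gamma$ is \emph{invariant} under $\Stab_G(\gamma)$: for $\pi \in \Stab_G([y]\cap x)$, Lemma \ref{lem:XOR_actionOfAutomorphismsOnStabiliserSpaces} gives $\pi(\Gamma) = \Stab_E(\pi\gamma) = \Stab_E(\gamma) = \Gamma$, so $\pi$ fixes the unique codimension-one subspace $\Gamma$, hence fixes $\lambda_\gamma$. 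Now build $\Bb$ as follows. Pick any edge $e^\star \in E$ with $\lambda_\gamma(\chi(e^\star)) = 1$ (such $e^\star$ exists since $\lambda_\gamma \ne 0$). For every other edge $e \in E$, let $\mathbf{b}_e := \chi(e)$ if $\lambda_\gamma(\chi(e)) = 0$, and $\mathbf{b}_e := \chi(e) + \chi(e^\star)$ otherwise. Then $\Bb_\Gamma := \{\mathbf{b}_e : e \in E\setminus\{e^\star\}\}$ is a basis of $\Gamma$ (it is linearly independent because modulo the single coordinate $e^\star$ it is just the standard basis on $E\setminus\{e^\star\}$, and it has the right dimension $|E|-1$), and $\Bb := \Bb_\Gamma \cup \{\chi(e^\star)\}$ is a basis of $\bbF_2^E$.

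The remaining step is to bound the index of the common stabiliser of $\Bb_\Gamma$ and $\Bb$ inside $\Stab_G([y]\cap x)$. Let $H := \Stab_{\Stab_G(\gamma)}(\chi(e^\star)) = \{\pi \in \Stab_G(\gamma) : \pi(e^\star) = e^\star\}$. I claim $H$ stabilises both $\Bb$ and $\Bb_\Gamma$: if $\pi \in H$ and $e \ne e^\star$, then since $\lambda_\gamma$ is $\pi$-invariant we have $\lambda_\gamma(\chi(\pi(e))) = \lambda_\gamma(\pi(\chi(e))) = \lambda_\gamma(\chi(e))$, so $\mathbf{b}_e$ and $\mathbf{b}_{\pi(e)}$ are defined by the same case; applying $\pi$ (which fixes $\chi(e^\star)$) sends $\mathbf{b}_e$ to $\mathbf{b}_{\pi(e)}$, and $\pi$ permutes $E\setminus\{e^\star\}$, so $\pi(\Bb_\Gamma) = \Bb_\Gamma$ and $\pi(\Bb) = \Bb$. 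Hence $H \le \Stab_{\Stab_G(\gamma)}(\Bb_\Gamma) \cap \Stab_{\Stab_G(\gamma)}(\Bb)$, so it suffices to bound $[\Stab_G(\gamma) : H]$. But $H$ is the stabiliser of the single edge $e^\star$ inside $\Stab_G(\gamma) \le \Aut(G)$, so by Orbit–Stabiliser its index equals the size of the $\Stab_G(\gamma)$-orbit of $e^\star$, which is at most $|E| \le |\GG^S|$, certainly polynomial. This establishes the symmetric basis condition for $\Gamma = \Stab_E([y]\cap x)$, and since $x$ and $\gamma$ were arbitrary, the lemma follows.

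I expect the main subtlety to be the bookkeeping around which automorphisms act on which index sets — in particular making sure that $\Stab_G([y]\cap x)$ genuinely acts on $\bbF_2^E$ in the way required by Definition \ref{def:XOR_symmetricBasis} and that the reduction ``$N[x][y]$ is the all-ones row in the CFI-symmetric case'' is legitimate, i.e.\ that it yields the same stabiliser spaces as in Lemma \ref{lem:XOR_kernelLemma}. Once that is pinned down, the argument is essentially the codimension-one observation above: a hyperplane fixed setwise forces its defining functional to be fixed, and then one only pays the orbit of a single distinguished edge.
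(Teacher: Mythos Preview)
Your proposal is correct and is essentially the paper's own argument: both use that CFI-symmetry forces $\Gamma = \Stab_E([y]\cap x)$ to have codimension one, fix a single distinguished edge ($e^\star$, the paper's $f$), and build $\Bb_\Gamma$ by taking unit vectors for edges on which the defining functional vanishes and pairing the remaining edges with $e^\star$ --- the resulting bases are literally identical and the index bound (at most $|E|$, via the orbit of the fixed edge) is the same. Your closing worry about the $N[x][y]$-matrices and index-set bookkeeping is unnecessary, since Definition~\ref{def:XOR_symmetricBasis} is phrased purely in terms of $\Stab_E([y]\cap x)\le\bbF_2^E$ and $\Stab_G([y]\cap x)\le\Aut(G)$; the matrices from Section~\ref{sec:generalisedCircuitConstruction} never enter the proof.
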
	
\noindent
\textit{Proof sketch.} Let $x \in \tc(\mu)$ and $[y] \in \Cc[x]$. Let $\Gamma := \Stab_E([y] \cap x) \leq \bbF_2^E$. We have to define two bases $\Bb_\Gamma \subseteq \Bb$ of $\Gamma$ and of $\bbF_2^E$, respectively, such that the group $\Stab_{\Stab_G([y] \cap x)}(\Bb_\Gamma) \cap \Stab_{\Stab_G([y] \cap x)}(\Bb)$ has polynomial index in $\Stab_G([y] \cap x)$. Since $\mu$ is CFI-symmetric, by Definition \ref{def:XOR_CFIsymmetric}, the $\Aut_{\CFI}(\GG)$-orbit of $[y] \cap x$ has size exactly two. One can prove that then, the stabiliser space $\Stab_E([y] \cap x)$ is a direct sum of a subspace on some coordinate set $I \subseteq E$ containing exactly the vectors with even Hamming weight and the full Boolean space on coordinates $E \setminus I$. It is relatively easy to construct a basis for such a space whose orbit has only linear size: For the even subspace of $\bbF_2^I$ one can fix one coordinate $i \in I$ and take the basis $\{ \chi(i)+\chi(e) \mid e \in I \setminus \{i \} \}$, which is symmetric up to the choice of $i$. For the space $\bbF_2^{E \setminus I}$, we can simply take the canonical basis consisting of the unit vectors. Details are in the appendix. \hfill \qed\\

Thus, all CFI-symmetric h.f.\ sets satisfy the symmetric basis property from Definition \ref{def:XOR_symmetricBasis}. But are there any other objects that have symmetric bases? The answer is affirmative. To keep things simple, we do not give a fully specified example but only sketch how a family of \emph{non-CFI-symmetric} h.f.\ sets \emph{with symmetric bases} may look like. In the following, we always write $\widetilde{\bbF}_2^{I}$ for the subspace of $\bbF_2^I$ that consists of all vectors with even Hamming weight.
\begin{example}
	\label{ex:XOR_nonCFIsymmetricSet}
	For $n \in \bbN$, let $E_n$ denote an $n$-element set of base edges. We do not fix a specific family $(G_n = (V_n,E_n))_{n \in \bbN}$ of base graphs. Let $A_n \uplus B_n \uplus \{e\} = E_n$ be an arbitrary partition of the edge set such that one part is a singleton. Consider again Example \ref{ex:XOR_CFIsymmetricSet}. There, we defined the CFI-symmetric object $\mu_{\{e,f,g\}} = \{ \{ \mu_{\{f,g\}}, e_0  \},   \{ \widetilde{\mu}_{\{f,g\}}  , e_1 \} \}$. In this construction,  $\mu_{\{f,g\}}$ and its automorphic image $\widetilde{\mu}_{\{f,g\}}$ are sets that are stabilised by every $\rho \in \Aut_{\CFI}(\GG)$ that flips an even number of edges in $\{f,g\}$. Such objects can be defined more generally for any set of edges. This is done in the CFI-algorithms from \cite{dawar2008, pakusaSchalthoeferSelman}. So let $\mu_{B_n}, \widetilde{\mu}_{B_n}$ denote two sets that together form an $\Aut_{\CFI}(\GG_n)$-orbit and are stabilised by any $\rho \in \Aut_{\CFI}(\GG_n)$ flipping an even number of edges in $B_n$. In other words, $\Stab_E(\mu_{B_n}) = \bbF_2^{A_n \cup \{e\}} \oplus \widetilde{\bbF}_2^{B_n}$.
	Now for every $n \in \bbN$, let
	\[
	\mu_n := \{ \{ \mu_{B_n}, e_0    \} \} \in \HF(\widehat{E}_n).
	\]
	This object is similar to the one from Example \ref{ex:XOR_CFIsymmetricSet}, with the difference that the connected component of $\{ \mu_{B_n}, e_0    \}$ contains just this set itself, and therefore, the $\Aut_{\CFI}(\GG_n)$-orbit of this component has size four instead of two. Thus, $\mu_n$ is \emph{not} CFI-symmetric. However, it does satisfy the symmetric basis property (assuming that $\mu_{B_n}$ does -- which is possible since $\mu_{B_n}$ could e.g.\ be CFI-symmetric). Let $y = \{ \mu_{B_n}, e_0    \}$ and $x = \mu_n$. Then $[y] \cap x = y$, and hence $\Stab_E([y] \cap x) = \Stab_E(y) = \Stab_E(\{ \mu_{B_n}, e_0    \})$. It is not hard to construct a symmetric basis for this space. By the properties of $\mu_{B_n}$, we have 
	\[
	\Stab_E(y) \cong \bbF_2^{A_n} \oplus \widetilde{\bbF}_2^{B_n}.
	\]
	In other words, this space contains every vector that has even Hamming weight on $B_n$ and a zero entry at coordinate $e$. A basis for this can be defined as in the proof of Lemma \ref{lem:XOR_CFIsymmetricHasSymmetricBasis}: Fix some $f \in B_n$. Then include in the basis $\Bb_\Gamma$ every unit vector $\chi(g)$ for $g \in A_n$ and the vector $\chi(\{f,g\})$ for each $g \in B_n \setminus \{f \}$. Let $\Bb := \Bb_\Gamma \cup \{ \chi(e), \chi(f) \}$.
	We have not specified the base graphs exactly, so we have not made any assumptions on $\Aut(G)$.  Suppose now that $A_n$ and $B_n$ lie in different orbits of $\Stab_G([y] \cap x)$. 
	This makes sense because otherwise, $\mu_{B_n}$ would not necessarily be stabilised. Then $\Stab_{\Stab_G([y] \cap x)}(\Bb_\Gamma) \cap \Stab_{\Stab_G([y] \cap x)}(\Bb)$ is the pointwise stabiliser of $\{e,f\}$ in $\Stab_G([y] \cap x)$. This has index $\leq n^2$, which is polynomial.
\end{example}	

Objects with symmetric bases are therefore indeed a strict generalisation of CFI-symmetric objects. Nonetheless, there currently exists no choiceless algorithm for the CFI query that requires the construction of objects which go beyond the CFI-symmetric ones.\\ 

Finally, the most important question is whether there also exist objects that are neither CFI-symmetric nor have symmetric bases. 
Ideally, we would like the answer to be that \emph{every} CPT-definable object $\mu$ satisfies the symmetric basis condition. Then, super-polynomial size lower bounds against suitable circuits would actually separate CPT from \ptime \ because they would rule out \emph{any} CPT-algorithm for the CFI-query, not just special algorithms like the CFI-symmetric ones. We do not know if this ideal situation is in fact reality. However, we have an example that suggests it is not.\\

The line of thought is this: An obvious way to show that every CPT-definable object $\mu$ has the symmetric basis property would be to try and exploit the fact that for every $x \in \tc(\mu)$ and every $[y] \in \Cc[x]$, $\Stab_E([y] \cap x)$ must have a polynomial index in $\Aut_{\CFI}(\GG^S) \leq \bbF_2^E$ (otherwise, the orbit of $[y] \cap x \subseteq \tc(\mu)$ would be super-polynomial, so $\mu$ would not be CPT-definable). This is perhaps the most obvious consequence that follows from the CPT-definability of $\mu$. To simplify things a bit, let us assume that  $\Aut_{\CFI}(\GG^S) = \bbF_2^E$. Then in terms of vector spaces, $[\bbF_2^E : \Stab_E([y] \cap x)]$ being polynomial means that the codimension of $\Stab_E([y] \cap x)$ in $\bbF_2^E$, i.e.\ $|E| - \dim \ \Stab_E([y] \cap x)$, is logarithmic. What we also know by Lemma \ref{lem:XOR_actionOfAutomorphismsOnStabiliserSpaces} is that the space $\Stab_E([y] \cap x)$ is invariant under the action of the permutation group $\Stab_G([y] \cap x)$. This leads to the question if these two restrictions on $\Stab_E([y] \cap x)$ are sufficient to show that $\Stab_E([y] \cap x)$ necessarily has a symmetric basis in the sense of Definition \ref{def:XOR_symmetricBasis}? Unfortunately, the answer is no. There is a family of Boolean vector spaces together with permutation groups on their index sets such that the spaces are invariant under the permutations, have at most logarithmic codimension in the ambient space, and do not admit a symmetric basis. We construct such an example in Lemma \ref{lem:XOR_counterExample} below. From this it does not follow directly that there are actually families of CFI-graphs and CPT-definable h.f.\ sets over them which do not have the symmetric basis property. It just means that we cannot show the symmetric basis property for general CPT-definable sets with arguments that are only based on the obvious properties of vector spaces which can occur as $\Stab_E([y] \cap x)$ in CPT-definable objects.

\begin{restatable}{lemma}{restateCounterExample}
	\label{lem:XOR_counterExample}
	There exists a family of Boolean vector spaces $(\Gamma_n)_{n \in \bbN}$, a function $t(n) \in \Theta(n)$ with $\Gamma_n \leq \bbF_2^{t(n)}$, and a family of permutation groups $(\mathbf{G}_n)_{n \in \bbN}$ with $\mathbf{G}_n \leq \Sym_{t(n)}$ such that
	\begin{enumerate}
		\item $\Gamma_n$ is $\mathbf{G}_n$-invariant.
		\item The codimension of $\Gamma_n$ in $\bbF_2^{t(n)}$ is $\Oo(\log n)$.
		\item For any pair of bases $\Bb_\Gamma \subseteq \Bb$ such that $\Bb_\Gamma$ is a basis of $\Gamma_n$ and $\Bb$ is a basis of $\bbF_2^{t(n)}$, $[\mathbf{G}_n : \Stab_{\mathbf{G}_n}(\Bb)] \geq \Big(\frac{n}{(\log n)^2}\Big)^{\log n}$, which is super-polynomial in $n$.   
	\end{enumerate}	
\end{restatable}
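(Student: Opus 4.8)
The plan is to construct $\Gamma_n$, $t(n)$, and $\mathbf{G}_n$ explicitly from a "long thin" combinatorial design and then verify the three properties. Let me pick $k := \lceil \log n \rceil$ and take the index set to be a disjoint union of $k$ blocks $I_1, \dots, I_k$, each of size roughly $n/k$, so that $t(n) = \sum_i |I_i| \in \Theta(n)$. The permutation group $\mathbf{G}_n$ will be $\prod_{j=1}^k \Sym(I_j)$ acting blockwise (possibly together with a group permuting the blocks among themselves, but the direct product already suffices). The space $\Gamma_n$ will be the set of all vectors $\mathbf{v} \in \bbF_2^{t(n)}$ whose restriction to \emph{each} block $I_j$ has even Hamming weight, i.e.\ $\Gamma_n = \bigoplus_{j=1}^k \widetilde{\bbF}_2^{I_j}$. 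Then $\mathbf{G}_n$-invariance of $\Gamma_n$ is immediate (Property 1), since permuting within a block preserves the parity of that block's weight, and the codimension of $\widetilde{\bbF}_2^{I_j}$ in $\bbF_2^{I_j}$ is exactly one, so $\Gamma_n$ has codimension exactly $k = \Oo(\log n)$ in $\bbF_2^{t(n)}$ (Property 2).

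The heart of the argument is Property 3: showing that for \emph{any} pair of bases $\Bb_\Gamma \subseteq \Bb$ of $\Gamma_n$ and $\bbF_2^{t(n)}$, the stabiliser $\Stab_{\mathbf{G}_n}(\Bb)$ has super-polynomial index. The key observation is that $\Bb \setminus \Bb_\Gamma$ consists of exactly $k$ vectors $\mathbf{w}_1, \dots, \mathbf{w}_k$ which, together with $\Gamma_n$, must span the whole space; in particular, for each block $I_j$, at least one $\mathbf{w}_i$ must have \emph{odd} Hamming weight on $I_j$ (otherwise the span stays inside the parity-constraint on $I_j$). By a counting/pigeonhole argument over the $k$ blocks and $k$ extension vectors, one can show that some $\mathbf{w}_i$ has odd weight on several blocks; more carefully, I would argue that in fact, since the $k$ vectors must between them "hit" all $k$ blocks with odd parity and $\Bb$ is a basis, there must be a vector $\mathbf{w}_i$ whose support restricted to some block $I_j$ is a proper nonempty subset of $I_j$ of size strictly between $1$ and $|I_j| - 1$ (this uses that $k$ is much smaller than each $|I_j|$, so the odd-parity requirements cannot all be met by "full-block" or "singleton" vectors without creating linear dependencies). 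Any $\sigma \in \Stab_{\mathbf{G}_n}(\Bb)$ must in particular fix this vector $\mathbf{w}_i$ setwise-as-an-element-of-$\Bb$, hence, since $\Bb$ is finite and $\sigma$ permutes it, $\sigma$ maps $\mathbf{w}_i$ to some other basis vector; restricting attention to the subgroup that fixes $\mathbf{w}_i$ pointwise, $\sigma$ must preserve the support of $\mathbf{w}_i$ inside each block, and a support of size $s$ with $1 < s < |I_j|-1$ in a block of size $m \approx n/k \approx n/\log n$ has a $\Sym(I_j)$-orbit of size $\binom{m}{s} \geq \binom{m}{2} \geq (n/(\log n)^2)^{\Omega(1)}$. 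Iterating this across the $\Theta(\log n)$ blocks — each extension vector is "caught" having nontrivial support structure in a new place — yields a lower bound of the shape $\big(n/(\log n)^2\big)^{\log n}$ on the orbit of $\Bb$, hence on the index of its stabiliser.

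The main obstacle I anticipate is making the combinatorial dichotomy in the previous paragraph fully rigorous: a priori, an adversary choosing $\Bb$ could try to use extension vectors that are supported \emph{entirely} on one block each (odd-size full or near-full supports), or try to concentrate the "damage" so that only one block has a vector with an intermediate-size support while the rest are tame. I would handle this by a careful linear-algebra argument: consider the quotient map $\bbF_2^{t(n)} \to \bbF_2^{t(n)}/\Gamma_n \cong \bbF_2^k$ sending a vector to the tuple of its block-parities; the images of $\mathbf{w}_1, \dots, \mathbf{w}_k$ form a basis of $\bbF_2^k$, so up to re-indexing we may assume $\mathbf{w}_i$ has odd parity precisely on block $I_i$ (after replacing $\Bb$ by a row-equivalent basis, which does not change the relevant index bounds by more than a polynomial factor — this reduction step itself needs a short justification). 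Then each $\mathbf{w}_i$ has odd, hence nonempty and proper, support on $I_i$; if that support has size $1$ or $|I_i|-1$ we still get an $\Sym(I_i)$-orbit of size at least $|I_i| - 1 \geq n/(2\log n)$, and multiplying these $k = \log n$ contributions gives at least $(n/(2\log n))^{\log n}$, which dominates $(n/(\log n)^2)^{\log n}$ for large $n$. I would also need to verify that the stabilisers across distinct blocks behave independently — this is clean because $\mathbf{G}_n = \prod_j \Sym(I_j)$ is a direct product, so $\Stab_{\mathbf{G}_n}(\Bb) \leq \prod_j \Stab_{\Sym(I_j)}(\{\mathbf{w}_i|_{I_j} : i\})$ and the index multiplies. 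The only genuinely delicate point is the basis-reduction step, which I would phrase as: replacing $\Bb_\Gamma$ and the $\mathbf{w}_i$ by another valid basis pair can only shrink the stabiliser's index by a bounded amount is \emph{false} in general, so instead I would argue directly that \emph{every} choice of $\mathbf{w}_i$'s forces, for each block, some $\mathbf{w}_i$ restricting to a nonempty proper subset, and take the product of orbit sizes over a system of distinct representatives — which exists by Hall's theorem applied to the bipartite "extension vector hits block with odd parity" incidence structure, whose perfect matching is guaranteed precisely because the block-parity images form a basis of $\bbF_2^k$.
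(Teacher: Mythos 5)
Your construction is essentially the one the paper uses: partition the index set into $\Theta(\log n)$ roughly equal blocks, take $\Gamma_n$ to be the direct sum of the even-weight subspaces on each block, and pick $\mathbf{G}_n$ to preserve the partition. Properties 1 and 2 are then immediate exactly as you say. You opt for the direct product $\prod_j \Sym(I_j)$ where the paper uses the full partition stabiliser (which also permutes blocks); since the statement only asks for \emph{some} group with the stated property, and the smaller group still gives the desired index lower bound, this is a legitimate simplification.

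There is one genuine gap, and it is in the step you rely on most. You claim that each $\mathbf{w}_i$ has ``odd, hence nonempty and \emph{proper}, support'' on its block. Odd Hamming weight implies nonempty, but it implies \emph{proper} only when the block has even size: on an odd-sized block, the all-ones vector has odd weight and its $\Sym(I_j)$-orbit has size one, contributing nothing to the index. An adversary choosing $\Bb$ could exploit this (e.g.\ on an odd block $I_j$, pick a $\mathbf{w}_i$ that is all-ones on $I_j$), and your bound collapses. You never impose any parity constraint on the block sizes, so this is unaddressed. The paper explicitly takes $t(n)$ to be the next even number $\geq n$ and partitions it into \emph{even-sized} blocks precisely to rule this case out — the $(|P|-1)!$ bound on the per-block pointwise stabiliser uses that $\mathbf{w}_i|_P$ is neither all-zero nor all-one. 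You need to add the same stipulation.

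Two smaller remarks on the argument for Property 3. First, the row-reduction / re-indexing manoeuvre you propose and then reject, and the Hall/system-of-distinct-representatives argument you fall back on, are both unnecessary detours. The cleaner route — and the one the paper takes — is to bound the \emph{pointwise} stabiliser of the small set $\Bb \setminus \Bb_\Gamma$ directly: for each block $I_j$ there is \emph{some} $\mathbf{w}_i$ with a nontrivial restriction to $I_j$, and since the group is a direct product, the constraint that $\sigma_j$ fixes $\mathbf{w}_i|_{I_j}$ cuts $|\Sym(I_j)|$ down to at most $(|I_j|-1)!$, independently per block; it does not matter if the same $\mathbf{w}_i$ is used for several blocks, so no matching is needed. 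One then pays a single $(\log n)!$ factor to pass from the pointwise to the setwise stabiliser of the $\log n$-element set $\Bb\setminus\Bb_\Gamma$, and uses $\Stab(\Bb)\leq\Stab(\Bb\setminus\Bb_\Gamma)$ (which holds because $\mathbf{G}_n$-invariance of $\Gamma_n$ forces $\Bb_\Gamma = \Bb \cap \Gamma_n$ to be preserved setwise). Second, your claim $\Stab_{\mathbf{G}_n}(\Bb) \leq \prod_j \Stab_{\Sym(I_j)}(\{\mathbf{w}_i|_{I_j}:i\})$ needs care with multiplicities and the coherence of the index permutation across blocks; the pointwise-stabiliser route avoids these issues entirely.
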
	
\noindent
\textit{Proof sketch.} We define $\Gamma_n$ as the direct sum of $\log n$ many even spaces $\widetilde{\bbF}_2^{P_i}$, for $i \in [\log n]$. The index-sets $P_i$ are pairwise disjoint and form a partition of $[n]$. The group $\mathbf{G}_n$ is defined as the setwise stabiliser of the partition $\{P_1,...,P_{\log n}\}$. Thus, the space $\Gamma_n$ is $\mathbf{G}_n$-invariant. The fact that the codimension of $\Gamma_n$ in $\bbF_2^n$ is $\Oo(\log n)$ is not difficult to see. The lower bound on the orbit size of any pair of bases $\Bb_\Gamma \subseteq \Bb$ holds because extending any $\Bb_\Gamma$ to a basis of $\bbF_2^E$ requires to choose at least one point in each part $P_i$, for $i \in [\log n]$. These are $\log n$ choices, each from a set of size $\frac{n}{\log n}$. The orbit size of this tuple of choices is super-polynomial in $n$. The detailed proof can be found in the appendix. \hfill \qed \\

We do not know if there actually exist CFI-graphs $(\GG_n^S)_{n \in \bbN}$ and h.f.\ sets $(\mu_n)_{n \in \bbN}$ over them in which $\Gamma_n \cong \Stab_E([y] \cap x)$ for some $x,y \in \tc(\mu_n)$, and $\mathbf{G}_n \cong \Stab_G([y] \cap x)$. This could a priori be the case. Anyway, we can conclude that the question whether a CPT-definable object $\mu$ over some CFI-instance $\GG^S$ admits symmetric bases for all relevant spaces $\Stab_E([y] \cap x)$ cannot be answered without using further information about $\GG^S$ and $\mu$: It seems that CPT-definability of the objects is not sufficient to infer the existence of the required symmetric bases (or this requires more sophisticated techniques than just using the logarithmic bound on the codimension). It should be remarked, though, that making further progress in this direction seems only useful once we have a strong enough lower bound for these circuits, which would separate the CFI-symmetric algorithms from $\ptime$. Then, as a next step, one could try to see in how far this generalises to all choiceless algorithms.

\section{Application to hypercube CFI-structures}
Our second main result is a lower bound against symmetric XOR-circuits over $n$-dimensional hypercubes. It shows that the circuits corresponding to CFI-symmetric h.f.\ sets over hypercube CFI-structures do not exist if we make the circuit parameters slightly more restrictive than in Theorem \ref{thm:XOR_lowerboundProgram} (so note that we are returning to the \emph{CFI-symmetric} setting now). First, however, we have to check that these hypercube CFI-structures  indeed satisfy the preconditions of Theorem \ref{thm:XOR_lowerboundProgram}.
The theorem mainly depends on three parameters of the base graphs: The \emph{treewidth} of the graph, the \emph{CFI-support gap} of the h.f.\ sets, and the fact that the CFI-graphs over the base graphs are $\Cc^{\tw_n}$-homogeneous.\\

The \emph{$n$-dimensional hypercube} $\H_n$ is the undirected graph with universe $\{0,1\}^n$ in which there is an edge between any two words with Hamming distance exactly one. We let $E_n$ denote this edge relation and $V_n = \{0,1\}^n$ the vertex set of $\H_n$. Its automorphism group is the semi-direct product $\bbF_2^n \rtimes \Sym_n$, where $\Sym_n$ acts on the positions of the binary words and $\bbF_2^n$ acts on $\{0,1\}^n$ via the bitwise XOR-operation  \cite{harary2000automorphism}. We will in the following pretend that $\Aut(\H_n) \cong \Sym_n$, i.e.\ we ignore teh action of $\bbF_2^n$. This makes things easier and besides, if $\Sym_n$-symmetric XOR-circuits with the necessary properties do not exist, then this is ``even more true'' for the larger symmetry group $\bbF_2^n \rtimes \Sym_n$. Precisely, the group action is given by $\pi(v_1...v_n) = v_{\pi^{-1}(1)}v_{\pi^{-1}(2)}...v_{\pi^{-1}(n)}$, for every word $v \in \{0,1\}^n$ and $\pi \in \Sym_n$. In the hypercube, this corresponds to applying the same permutation $\pi$ to the neighbourhood of every vertex; this preserves the graph structure of the hypercube.\\ 

In the following, when we speak about CFI-structures over hypercubes, we do not distinguish between isomorphic ones, so we only consider \emph{the} even and \emph{the} odd CFI-structure over $\H_n$ and denote them $\HH_n^0$ and $\HH_n^1$, respectively. The size $|\HH_n^i|$ of these CFI-structures is polynomial in $2^n = |\H_n|$, because the CFI-construction increases the size of the graph exponentially in the maximum degree. This maximum degree in $\H_n$ is $n$, so the size increase by a factor of $2^n$ is still polynomial in $|\H_n|$. Now let us check the relevant properties of the hypercubes and their CFI-structures.

\paragraph*{Treewidth of hypercubes}
\begin{lemma}[Theorem 5 in \cite{treewidthHypercubes}]
	\label{lem:CFI_treewidthHypercubes}
	The \emph{treewidth} of the $n$-dimensional hypercube $\H_n$ is a function in $\Theta(2^{n}/\sqrt{n})$. 
\end{lemma}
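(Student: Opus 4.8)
The bound is quoted from \cite{treewidthHypercubes}, so for the paper it suffices to cite it; but the argument is short enough to sketch, and the plan is to prove the two directions of $\tw_n = \Theta(2^n/\sqrt n)$ separately, in both cases exploiting the vertex-isoperimetric behaviour of $\H_n$.

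For the \emph{lower bound} I would use the standard fact that a graph of treewidth $k$ has a $\tfrac{2}{3}$-balanced vertex separator of size at most $k+1$: in any width-$k$ tree decomposition some bag, when removed, splits the remaining vertices into parts of size at most $\tfrac{2}{3}\cdot 2^n$ each. Hence it is enough to show that every such separator $S$ of $\H_n$ has $|S| = \Omega(2^n/\sqrt n)$. We may assume $|S| < 2^{n-1}$, since otherwise $|S| = \Omega(2^n)$ and we are done. Write $V_n \setminus S = A \uplus B$ with no edge between $A$ and $B$, $|A| \ge |B|$, and $|A|, |B| \le \tfrac{2}{3}\cdot 2^n$; then $2|A| \ge |A| + |B| = 2^n - |S| > 2^{n-1}$, so $2^{n-2} < |A| \le \tfrac{2}{3}\cdot 2^n$, i.e.\ $|A|$ is a constant fraction of $2^n$. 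Since $A$ and $B$ are non-adjacent, the entire outer vertex boundary of $A$ is contained in $S$, so $|S| \ge |\partial A|$. By Harper's vertex-isoperimetric inequality for the hypercube, the minimum vertex boundary of a set of size $m$ in $\H_n$ is attained by (initial segments of) Hamming balls, and for $m$ a constant fraction of $2^n$ the extremal boundary is a Hamming layer at distance $O(\sqrt n)$ from the middle layer; its size is $\Theta(\binom{n}{\lfloor n/2 \rfloor}) = \Theta(2^n/\sqrt n)$ by Stirling. Therefore $\tw_n + 1 \ge |S| = \Omega(2^n/\sqrt n)$.

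For the \emph{upper bound} I would exhibit an explicit path decomposition of $\H_n$, which bounds the treewidth since $\tw_n$ is at most the pathwidth. One fixes a suitable linear ordering $v_0, \dots, v_{2^n - 1}$ of the vertices and takes as the $i$-th bag the vertex $v_i$ together with all $v_j$, $j < i$, that still have a neighbour among $\{v_i, \dots, v_{2^n-1}\}$; consecutive bags overlap by construction, so this is a valid path decomposition. For an appropriate choice of ordering (the one used in \cite{treewidthHypercubes}) a counting argument shows that every bag has size at most $\sum_{i=0}^{n-1} \binom{i}{\lfloor i/2 \rfloor}$, which by Stirling's formula is within a constant factor of its largest term $\binom{n-1}{\lfloor (n-1)/2 \rfloor}$, hence $\Theta(2^n/\sqrt n)$. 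Thus the pathwidth, and with it $\tw_n$, is $O(2^n/\sqrt n)$, and combining with the previous paragraph yields $\tw_n = \Theta(2^n/\sqrt n)$. The main subtlety is the upper bound: the generic implication ``recursive balanced separators of size $s$ imply treewidth $O(s\log n)$'' would introduce a spurious logarithmic factor, so one genuinely needs the explicit decomposition together with the careful binomial estimate in order to land at a tight $\Theta$.
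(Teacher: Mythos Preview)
The paper does not prove this lemma at all; it simply quotes it as Theorem~5 of \cite{treewidthHypercubes} and moves on. You correctly anticipated this in your first sentence. Your sketch is sound and indeed mirrors the argument in the cited source (Chandran--Kavitha): the lower bound via balanced separators plus Harper's vertex-isoperimetric inequality, and the upper bound via an explicit path decomposition whose width is $\sum_{i=0}^{n-1}\binom{i}{\lfloor i/2\rfloor} = \Theta(2^n/\sqrt{n})$. Your closing remark about why the generic ``recursive separators $\Rightarrow$ treewidth $O(s\log n)$'' bound would lose a $\log$ factor, so that an explicit decomposition is genuinely needed for the tight upper bound, is a nice observation that the paper does not make.
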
	
This is close to being linear in $2^n = |\H_n|$, so it is sufficiently large to translate into a meaningful lower bound on the input sensitivity of the resulting circuits. In particular, it is super-constant, and thus, the hypercube CFI-query is undefinable in fixed-point logic with counting.

\paragraph*{Homogeneity of hypercubes}
Recall that a structure $\AA$ is $\Cc^k$-homogeneous for all tuples of some length $\ell \leq k$ if for all tuples $\bar{a}, \bar{b}$ of length $\leq \ell$ it holds: If $\bar{a}$ and $\bar{b}$ have the same $\Cc^k$-type in $\AA$, then there is an automorphism of $\AA$ that moves $\bar{a}$ to $\bar{b}$. 
The next lemma is a technical ingredient that we need for the homogeneity result for hypercube CFI-structures. It shows the homogeneity condition for specific tuples. The full proof is given in the appendix.\\  
We say that a tuple $\bar{\alpha}$ in $V(\HH_n^i)$ \emph{contains a star} if there is a centre $c \in V(\H_n)$ such that for each incident edge $e \in E_n(c)$, there is an entry of $\alpha$ in the edge gadget $e^*$.
\begin{restatable}{lemma}{restateMainHomogeneity}
	\label{lem:XOR2_MainHomogeneity}
	Let $\tw_n \in \Theta(2^n/\sqrt{n})$ denote the treewidth of $\H_n$. Let $\bar{\alpha}$ be a tuple in $V(\HH_n^i)$ that contains a star and has length at most $(\tw_n/n)-2$. Let $\gamma, \gamma' \in V(\HH_n^i)$ and let $\tp(\bar{\alpha}\gamma)$ denote the $\Cc^{\tw_n}$-type of this extended tuple.
	%$\Cc^{n+(|\alpha|/2)}$-type of the tuple in $\HH_n^i$, for $i \in \{0,1\}$. 
	If $\tp(\bar{\alpha}\gamma) = \tp(\bar{\alpha}\gamma')$, then there is an automorphism $\rho \in \Aut(\HH_n^i)$ such that $\rho(\gamma) = \gamma'$ and $\rho(\bar{\alpha}) = \bar{\alpha}$.	
\end{restatable}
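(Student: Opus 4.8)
\textbf{Proof plan for Lemma~\ref{lem:XOR2_MainHomogeneity}.}
The plan is to show that the $\Cc^{\tw_n}$-type of $\gamma$ over $\bar\alpha$ determines $\gamma$ up to an automorphism that fixes $\bar\alpha$ pointwise. The main tool is that the hypercube $\H_n$ has a large set of edge-disjoint cycles through any fixed vertex, together with the description of $\Aut(\HH_n^i)$ from Section~\ref{sec:CFI_automorphisms} as a semidirect product of the CFI-flip group (flips along cycles, which act by moving ``twists'' around) and $\Aut(\H_n)\cong\Sym_n$. First I would reduce to the case where $\gamma$ and $\gamma'$ live in gadgets over the same base vertex or edge: from $\tp(\bar\alpha\gamma)=\tp(\bar\alpha\gamma')$, a $\Cc^2$-argument already forces $\gamma$ and $\gamma'$ to be in gadgets of the same ``shape'' (same colour class: vertex-gadget versus edge-gadget), and a $\Cc^k$-argument using the star in $\bar\alpha$ as a rigid reference frame forces the underlying base-graph elements to be $\Aut(\H_n)$-equivalent \emph{rel}\ the centre $c$ of the star. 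Since the star pins down $c$ and all of its incident edges, and $\H_n$ is connected of diameter $n$, the pointwise stabiliser of the star in $\Aut(\H_n)$ is already trivial (a permutation of coordinates fixing a vertex and its whole neighbourhood is the identity); hence in fact the base-graph elements carrying $\gamma$ and $\gamma'$ must literally coincide, and so $\gamma,\gamma'$ lie in the \emph{same} gadget.

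Given that $\gamma$ and $\gamma'$ are in the same gadget, the remaining freedom is a CFI-flip. If $\gamma,\gamma'$ are both of the form $e_0,e_1$ (or $v^X,v^{X'}$) in one gadget, then $\gamma'=\rho_F(\gamma)$ for a suitable $F\subseteq E_n$; I must choose $F$ so that $\rho_F$ is an automorphism of $\HH_n^i$ (i.e.\ $F$ is an even edge-set at every vertex, a ``cycle space'' element) and additionally $\rho_F(\bar\alpha)=\bar\alpha$. This is where the treewidth/length bound enters: the tuple $\bar\alpha$ together with $\gamma$ has length at most $\tw_n/n-1$, so it touches at most $\tw_n/n-1$ gadgets, i.e.\ at most that many base vertices; since $\tw_n\in\Theta(2^n/\sqrt n)$ is far larger than $n\cdot(\tw_n/n-1)$ is small relative to the number of available edge-disjoint cycles, there is a cycle $Z$ in $\H_n$ through the base edge $e$ (or through two edges at $v$) that avoids all base vertices and edges appearing in $\bar\alpha$. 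Then $F=Z$ gives $\rho_Z\in\Aut(\HH_n^i)$ with $\rho_Z(\bar\alpha)=\bar\alpha$, and $\rho_Z(\gamma)=\gamma'$ after possibly composing with a flip supported entirely inside the single gadget of $\gamma$ (which never moves $\bar\alpha$ because that gadget contains no entry of $\bar\alpha$ other than possibly $\gamma$ itself, and we can arrange the star to be disjoint from $e$). The $\Cc^{\tw_n}$-type equality is what guarantees that the ``local'' data of $\gamma$ relative to $\bar\alpha$ is compatible with such a flip; concretely, $\gamma$ and $\gamma'$ differ from each other only in a way that a single twist movable along an $\bar\alpha$-avoiding path can realise, and the pebble game witnessing equal types exactly certifies the existence of such a path (this is the standard Duplicator strategy behind Theorem~\ref{thm:CFI_cfiTheorem}).

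The main obstacle I expect is the careful bookkeeping in the reduction step: one must verify that $\tp(\bar\alpha\gamma)=\tp(\bar\alpha\gamma')$ really does force $\gamma,\gamma'$ into the same gadget (and not merely into isomorphic gadgets), which requires exhibiting enough $\Cc^{\tw_n}$-definable invariants — distances in $\HH_n^i$ to the star centre, the induced edge-colouring, and the ``which edge-gadget'' information — all of which are available only because the star in $\bar\alpha$ rigidifies the base graph and $\tw_n$ is large enough to count these distances. The second delicate point is producing the $\bar\alpha$-avoiding cycle with the correct parity behaviour: I would argue via the cops-and-robbers characterisation of treewidth (as in the discussion after Theorem~\ref{thm:CFI_cfiTheorem}) that removing the $O(\tw_n/n)\cdot O(1)$ base vertices hit by $\bar\alpha$ leaves $\H_n$ still highly connected near $e$, so a suitable cycle through $e$ survives; the parity constraint is automatic because any cycle is an even edge-set at every vertex, hence $\rho_Z$ genuinely is an automorphism of $\HH_n^i$ and not merely of $\HH_n$. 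Everything else is routine composition of automorphisms. The full details of both obstacles are deferred to the appendix, as stated.
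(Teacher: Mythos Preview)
Your first half---using the star in $\bar\alpha$ to rigidify the base graph and force $\gamma,\gamma'$ into the same gadget---is essentially the paper's argument; the paper makes it explicit by writing down a $\Cc^5$-formula $\phi_u(x)$ that defines each vertex gadget $u^*$ from the star parameters via shortest-path distances.

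The gap is in the second half. You try to argue \emph{directly} that an $\bar\alpha$-avoiding cycle through the base edge $e=\{u,v\}$ always exists, on connectivity/counting grounds. This is false in general: nothing prevents $\bar\alpha$ from containing one element in every edge gadget $f^*$ for $f\in E_n(u)\setminus\{e\}$ (only $n-1$ entries, well within the length bound), in which case \emph{no} cycle through $e$ avoids $\bar\alpha$. Your fallback---that ``the pebble game witnessing equal types exactly certifies the existence of such a path''---has the logic reversed: Duplicator's winning strategy does not produce an automorphism (that is precisely what CFI demonstrates fails in general), and Theorem~\ref{thm:CFI_cfiTheorem} is a statement about Duplicator, not about extracting paths from her strategy.

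The paper runs the contrapositive instead. Assume no free cycle through $e$ exists; then the component $X_u$ reachable from $u$ via free paths in $E_n\setminus\{e\}$ is separated from $X_v$, and the hypercube's Cheeger constant (which is $\geq 1$) gives $|X_u|<|\bar\alpha\gamma|\cdot n\leq \tw_n-n$. Hence $\H_n[X_u]$ has treewidth $<\tw_n-n$, and \emph{Spoiler} wins the bijective $\tw_n$-pebble game on the induced substructures with parameters $\bar\alpha\gamma$ versus $\bar\alpha\gamma'$: he first spends $n$ pebbles on a star (forcing Duplicator's bijections to be pure edge-flips), then simulates the Cops' winning strategy on $\H_n[X_u]$ with the remaining $\tw_n-n$ pebbles to pin the twist down at $u$. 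This contradicts $\tp(\bar\alpha\gamma)=\tp(\bar\alpha\gamma')$. The isoperimetric bound and the Spoiler-side pebble argument are the ingredients your plan is missing.
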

\noindent
\textit{Proof sketch.} The proof is mostly standard and works similarly as the homogeneity proofs in \cite{ggpp} and \cite{svenja}. Firstly, one can show that if $\tp(\bar{\alpha}\gamma) = \tp(\bar{\alpha}\gamma')$, then $\gamma$ and $\gamma'$ must be in the same gadget of the CFI-structure. This holds even though the gadgets are not identifiable by means of a preorder: Using the parameters $\bar{\alpha}$, the gadgets of $\gamma$ and $\gamma'$ can be defined with a constant number of variables. Thus, we know that $\gamma$ and $\gamma'$ are in the same, say, edge gadget $e^*$ (the case where they are in a vertex gadget is analogous). If $\gamma = \gamma'$, then there is nothing to show. Else, we need to find an automorphism in $\Aut_{\CFI}(\HH_n^i)$ that flips $e^*$ and does not move any element of $\bar{\alpha}$. This is possible if there exists a cycle in $\H_n$ passing through $e$ and through none of the edges in $\bar{\alpha}$. If such a cycle does not exist, then the tuple $\bar{\alpha}\gamma$ marks the boundary of some sufficiently small subgraph of $\HH_n^i$, which, importantly, has treewidth $< \tw_n$. Then one can show that $\gamma$ and $\gamma'$ are definable in $\Cc^{\tw_n}$ using the parameters $\bar{\alpha}$. This works by using a transfer of the Cops' winning strategy from the cops and robber game to a winning strategy for Spoiler in the bijective $\tw_n$-pebble game, as in \cite{atseriasBulatovDawar}. Thus, the desired cycle and hence automorphism must exist because otherwise, $\tp(\bar{\alpha}\gamma) \neq \tp(\bar{\alpha}\gamma')$. \hfill \qed \\

\begin{lemma}
	\label{lem:XOR2_HomogeneityOfHypercubes}
	The structures $\HH_n^0$ and $\HH_n^1$ are homogeneous in the following sense:\\
	Let $\tw_n$ denote the treewidth of $\H_n$.
	For any tuples $\bar{\alpha}, \bar{\alpha}'$ in $V(\HH_n^i)$ of length $|\bar{\alpha}| = |\bar{\alpha}'| \leq \tw_n/n-n-1$ it holds: If $\bar{\alpha}$ and $\bar{\alpha}'$ have the same $\Cc^{\tw_n}$-type in $\HH_n^i$, then there is an automorphism of $\HH_n^i$ that maps $\bar{\alpha}$ to $\bar{\alpha}'$.\\
	%In particular, this holds if $\bar{\alpha}$ and $\bar{\alpha}'$ have the same $\Cc^{f(n)}$-type.
\end{lemma}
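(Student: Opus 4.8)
The plan is to bootstrap the general homogeneity statement from the special case already established in Lemma~\ref{lem:XOR2_MainHomogeneity}, which handles the situation where the tuple $\bar\alpha$ \emph{contains a star}. The key observation is that in the hypercube $\H_n$, every vertex $c$ has exactly $n$ incident edges, so a star is ``cheap'': it costs only $n$ additional pebbles/parameters. Hence, given an arbitrary short tuple $\bar\alpha$ whose length is bounded by roughly $\tw_n/n - n - 1$, we may extend it by a star centred at (say) $0^n$, obtaining an augmented tuple $\bar\beta := \bar\alpha\, s_1^* s_2^* \cdots s_n^*$ of length at most $\tw_n/n - 1$, which still fits inside the length bound $(\tw_n/n) - 2$ required by Lemma~\ref{lem:XOR2_MainHomogeneity} (after accounting for the one extra slot for the element being moved). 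The first step would therefore be to set up this extension carefully and check the arithmetic on the tuple lengths.

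The second step is the induction itself. Given $\bar\alpha$ and $\bar\alpha'$ with $\tp_{\Cc^{\tw_n}}(\bar\alpha) = \tp_{\Cc^{\tw_n}}(\bar\alpha')$, I would move them into coincidence one entry at a time, using Lemma~\ref{lem:XOR2_MainHomogeneity} at each step. Write $\bar\alpha = \bar\delta\gamma$ and $\bar\alpha' = \bar\delta'\gamma'$ where, inductively, an automorphism has already been found mapping the first $|\bar\delta|$ entries of $\bar\alpha$ onto those of $\bar\alpha'$; after applying it we may assume $\bar\delta = \bar\delta'$ and the remaining task is to fix $\bar\delta$ pointwise and move $\gamma$ to $\gamma'$. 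For this we first adjoin a star $\bar{s}^*$ centred somewhere \emph{far from} the support of $\bar\delta$ — this is where the degree-$n$ bound and the length slack are used — so that the automorphism provided by Lemma~\ref{lem:XOR2_MainHomogeneity} fixes $\bar\delta\bar{s}^*$ and hence in particular $\bar\delta$. A small subtlety: before invoking Lemma~\ref{lem:XOR2_MainHomogeneity} we need to know that $\tp_{\Cc^{\tw_n}}(\bar\delta\bar{s}^*\gamma) = \tp_{\Cc^{\tw_n}}(\bar\delta\bar{s}^*\gamma')$; this is not immediate from $\tp(\bar\delta\gamma) = \tp(\bar\delta\gamma')$ because we have added parameters. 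I would obtain it by first using the $\Cc^{\tw_n}$-type equality of $\bar\delta$ alone to find \emph{some} automorphism $\rho_0$ fixing $\bar\delta$ that maps one valid star to another (stars are interchangeable under $\Aut(\H_n)$ since the automorphism group is vertex-transitive and acts suitably on neighbourhoods), then apply $\rho_0$ to $\gamma'$ and argue that the resulting configuration has the right type — essentially reducing to the case where the star is already ``part of the data''. Iterating over all entries of $\bar\alpha$ and composing the automorphisms obtained at each stage yields the desired automorphism mapping $\bar\alpha$ to $\bar\alpha'$.

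An alternative, perhaps cleaner, route is to avoid the entry-by-entry induction and instead argue directly: adjoin a single star $\bar{s}^*$ to \emph{both} $\bar\alpha$ and $\bar\alpha'$ simultaneously, show that the extended tuples still have equal $\Cc^{\tw_n}$-type (using the homogeneity for the star-only tuple to first align the stars via an automorphism, which does not disturb the type equality of the original tuples), and then apply a ``tuple version'' of Lemma~\ref{lem:XOR2_MainHomogeneity}. But Lemma~\ref{lem:XOR2_MainHomogeneity} as stated moves only a single extra element $\gamma$, so one still needs an induction on the number of non-star entries to upgrade it to a full tuple statement; the two approaches amount to the same work. Either way, the length budget $\tw_n/n - n - 1$ is exactly what is needed: the ``$-n$'' pays for one star, the ``$-1$'' pays for the single moved element, and what remains is $\le (\tw_n/n) - 2$ as demanded by Lemma~\ref{lem:XOR2_MainHomogeneity}.

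The main obstacle I anticipate is the bookkeeping around \emph{which} star to add at each inductive step and the verification that adding it preserves $\Cc^{\tw_n}$-type equality — i.e.\ the claim that from $\tp(\bar\delta) = \tp(\bar\delta')$ one can choose stars on both sides so that the star-augmented tuples still have equal type. This requires knowing that, modulo an automorphism fixing $\bar\delta$, the ``position'' of a star relative to $\bar\delta$ is $\Cc^{\tw_n}$-definable, which in turn leans on the same distance-based arguments (shortest paths in $\H_n$ are $\Cc^3$-expressible, gadgets are definable from a star, etc.) that appear in the proof sketch of Lemma~\ref{lem:XOR2_MainHomogeneity}. Once that is in place, the composition of automorphisms and the length arithmetic are routine.
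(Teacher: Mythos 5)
Your proposal takes essentially the same route as the paper: first reduce to the case where both tuples have a star prefix, then run an entry-by-entry induction using Lemma~\ref{lem:XOR2_MainHomogeneity} (this is the paper's Claim~1). Where the proposal is vague is precisely where the paper does real work, namely in \emph{how} to get from arbitrary type-equal tuples $\bar\alpha, \bar\alpha'$ to type-equal star-augmented tuples without circularity.

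Your sketch says to ``adjoin a single star to both $\bar\alpha$ and $\bar\alpha'$ simultaneously'' and then ``align the stars via an automorphism, which does not disturb the type equality of the original tuples.'' But an automorphism that moves a star around while fixing $\bar\alpha'$ pointwise is close to the homogeneity you are trying to prove, and adjoining the \emph{same} literal tuple $\bar{s}$ to both $\bar\alpha$ and $\bar\alpha'$ generally breaks type equality, since $\bar{s}$ sits in different positions relative to the two base tuples. The paper sidesteps this by inverting the order: it fixes one star tuple $\bar{s}$ on the $\bar\alpha$ side and then uses the $\Cc^k$ back-and-forth extension property (Lemma~34 in \cite{dawar2008}, which applies because $(\HH_n^i,\bar\alpha) \equiv_{\Cc^{\tw_n}} (\HH_n^i,\bar\alpha')$ and the length budget leaves $n$ free slots) to produce a \emph{different} tuple $\bar{s}'$ on the $\bar\alpha'$ side with $\tp(\bar{s}\bar\alpha) = \tp(\bar{s}'\bar\alpha')$. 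It then proves separately (the paper's Claim~2) that $\bar{s}$ and $\bar{s}'$ are related by \emph{some} automorphism $\sigma$ of $\HH_n^i$ --- not one that fixes anything else --- by finding a base-graph automorphism taking one star to the other and then correcting the even number of $0/1$ mismatches in the edge gadgets by an edge-flip along disjoint cycles. Applying $\sigma$ to the $\bar\alpha$ side (not the $\bar\alpha'$ side) gives $\tp(\bar{s}'\sigma(\bar\alpha)) = \tp(\bar{s}'\bar\alpha')$, at which point Claim~1 finishes. So the two concrete tools you were missing are: (i) the standard back-and-forth/extension property of $\Cc^k$-equivalence to manufacture $\bar{s}'$, and (ii) the explicit parity argument showing two type-compatible stars in a CFI structure are automorphic. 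Your intuition that this step hinges on distance arguments and definability of gadgets from a star is correct, but the cleaner mechanism is back-and-forth rather than trying to realize the alignment as an automorphism over $\bar\alpha$.
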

\begin{proof}
	First of all, we show the following statement via induction on $|\bar{\alpha}|$.\\
	\textbf{Claim 1:}
	Let $\bar{s}$ be a tuple of length $n$ that contains a star. If $\tp(\bar{s}\bar{\alpha}) = \tp(\bar{s}\bar{\alpha}')$, then there is an automorphism of $\HH_n^i$ that maps the tuple $\bar{s}\bar{\alpha}$ to $\bar{s}\bar{\alpha}'$.\\
	\textit{Proof of claim:} In the base case, $|\bar{\alpha}| = 0$, there is nothing to show because the identity permutation is the desired automorphism then.\\
	For the inductive step, let $\bar{\alpha} = \bar{\beta}\gamma$ where $|\bar{\beta}| = |\bar{\alpha}|-1$, and similarly, write $\bar{\alpha}' = \bar{\beta}'\gamma'$. Since $\tp(\bar{s}\bar{\alpha}) = \tp(\bar{s}\bar{\alpha}')$, we also have $\tp(\bar{s}\bar{\beta}) = \tp(\bar{s}\bar{\beta}')$. Therefore, the induction hypothesis gives us an automorphism $\pi \in \Aut(\HH_n^i)$ such that $\pi(\bar{s}\bar{\beta}) = \bar{s}\bar{\beta}'$. Since automorphisms preserve types, we have $\tp(\bar{s}\bar{\beta}\gamma) = \tp(\pi(\bar{s}\bar{\beta}\gamma)) = \tp(\bar{s}\bar{\beta}'\pi(\gamma)) = \tp(\bar{s}\bar{\beta}'\gamma')$.
	%These types are $\Cc^{2n+|\bar{\alpha}|/2)}$-types. This is a greater number of variables than $n+(n+|\bar{\alpha}|)/2)$, which is the type in Lemma \ref{lem:XOR2_MainHomogeneity} when it is applied to the tuple $\bar{s}\bar{\alpha}$ of length $n+|\bar{\alpha}|$. 
	The length of the tuples $\bar{s}\bar{\beta}'\pi(\gamma)$ and $\bar{s}\bar{\beta}'\gamma'$ is at most $\tw_n/n-1$, so we can apply Lemma \ref{lem:XOR2_MainHomogeneity} to them. This gives us another automorphism $\sigma$ such that $\sigma(\bar{s}\bar{\beta}'\pi(\gamma)) =  \bar{s}\bar{\beta}'\gamma'$. In total, $\sigma \circ \pi \in \Aut(\HH_n^i)$ is the desired automorphism that maps $\bar{s}\bar{\alpha}$ to $\bar{s}\bar{\alpha}'$.\\
	
	Now we will use Claim 1 to prove the lemma. Let $\bar{\alpha}$ and $\bar{\alpha}'$ be as in the lemma. Let $\bar{s}$ be a tuple of length $n$ that contains a star (and only a star). By Lemma 34 in \cite{dawar2008}, there exists a tuple $\bar{s}'$ such that $\tp(\bar{s}\bar{\alpha}) = \tp(\bar{s}'\bar{\alpha}')$.
	This holds because $(\HH_n^i,  \bar{\alpha}) \equiv_{\Cc^{\tw_n}} (\HH_n^i,\bar{\alpha}')$ (Theorem \ref{thm:CFI_cfiTheorem}), and so for any extension of the tuple $\bar{\alpha}$ (which has length at most $\tw_n - n$) by only $n$ elements, there is an extension of $\bar{\alpha}'$ that preserves the type.\\
	\\
	\textbf{Claim 2:} There exists an automorphism $\sigma \in \Aut(\HH_n^i)$ such that $\sigma(\bar{s}) = \bar{s}'$.\\
	\textit{Proof of claim:} Let $s_1,...,s_n \in E_n$ be the edges of the star that is covered by $\bar{s}$ and $s'_1,...,s'_n \in E_n$ be the edges of the star of $\bar{s}'$. There is an automorphism $\sigma' \in \Aut(\H_n)$ such that $s'_i = \sigma'(s_i)$, for every $i$. This is easy to see because we can map the centre $c$ of one star to the centre $c'$ of the other, and apply the right permutation to its incident edges. In $\HH_n^i$, the gadgets $c^*$ and $c'^*$ are either both even or both odd, in relation to the tuple $\bar{s}$.
	That is, if we pretend that the vertices $\bar{s}_i$ are the $1$-vertices in their respective edge-gadgets, then the vertex-gadgets $c^*$ and $c'^*$ have the same parity: This is because we can express in, say, $\Cc^{5}$ that every vertex in $c^*$ is connected with an even number of vertices in $\bar{s}$ (and $\bar{s}'$, respectively).\\ 
	We say that there is a mismatch between $\sigma'(\bar{s})$ and $\bar{s}'$ at position $i$ if $\sigma'(\bar{s})_i$ is the $0$-vertex in its edge gadget, and  $\bar{s}'_i$ the $1$-vertex, or vice versa. By what we just argued, the number of mismatches between $\sigma'(\bar{s})$ and $\bar{s}'$ is even. This can be corrected with an automorphism $\rho_F \in \Aut_{\CFI}(\HH_n^i)$ that flips edges along $\ell$ disjoint cycles originating in $c'$, where $\ell$ is half the number of mismatches. Then $\sigma = \rho_F \circ \sigma'$ is an automorphism that takes $\bar{s}$ to $\bar{s}'$. This proves the claim.\\
	\\
	With Claim 2, we get that $\tp(\bar{s}'\sigma(\bar{\alpha})) = \tp(\bar{s}'\bar{\alpha}')$ because automorphisms preserve types. The fact that $\bar{s}$ contains a star is easily definable in counting logic, so $\bar{s}'$ also contains a star. Therefore, the lemma now follows from Claim 1, which gives us an automorphism $\pi$ that maps $\sigma(\bar{\alpha})$ to $\bar{\alpha}'$. Then $\pi \circ \sigma$ is the automorphism whose existence is claimed in the lemma.
\end{proof}

In total, the hypercube CFI-structures satisfy the homogeneity condition required by Theorem \ref{thm:XOR_lowerboundProgram} for all tuples of length bounded by $\tw_n /n - n -1$, which is in $\Theta(\tw_n/n)$. 

\paragraph*{CFI-support gap of hypercube CFI-structures}
~\\
Recall from Definition \ref{def:XOR_supportGap} that the CFI-support gap of a h.f.\ set $\mu$ is $\alpha(\mu) = \frac{s(\mu)}{|\sup_{\CFI}(\mu)|}$ where $s(\mu)$ denotes the size of the smallest support; this depends on the structure. 
Let again $\HH_n^i$ denote the odd/even CFI-structure over the $n$-dimensional hypercube, and let $E_n$ be the edge set of that hypercube. Let $\HH_n$ denote the full CFI-graph over $\H_n$ (see Section \ref{sec:CFI}). We would like to prove an upper bound on the ratio $\alpha(\mu)$ over all h.f.\ sets $\mu \in \HF(\widehat{E}_n)$. Here, $s(\mu)$ denotes the size of the smallest $\Aut(\HH_n^i)$-support of $\mu$, and $\sup_{\CFI}(\mu)$ the size of the smallest CFI-support of $\mu$. Recall from Definition \ref{def:CFI_support} that a CFI-support of $\mu$ is a set of edges $S \subseteq E$ such that any edge-flip $\rho_F \in \Aut_{\CFI}(\HH_n)$ that fixes all edges in $S$ also fixes $\mu$. The edge-flips considered here are not necessarily automorphisms of $\HH_n^i$, so they include all combinations of flipped edges and not only cycles.
\begin{lemma}
	\label{lem:XOR2_fromSupportToCFIsupport}
	Let $n \in \bbN$ and $\mu \in \HF(\widehat{E}_n)$. Let $S \subseteq E$ be a smallest CFI-support of $\mu$. Then there exists an $\Aut(\HH_n^i)$-support of $\mu$ of size at most $|S|+n$.
\end{lemma}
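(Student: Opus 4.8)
The plan is to enlarge a smallest CFI-support of $\mu$ by a single \emph{star} of the hypercube. Let $S := \sup_{\CFI}(\mu) \subseteq E_n$ be the smallest CFI-support of $\mu$, fix a vertex $c \in V(\H_n)$ (e.g.\ $c = 0^n$), and let $\mathrm{St}(c) := E_n(c)$ denote the set of the $n$ edges of $\H_n$ incident to $c$. I would show that $T := S \cup \mathrm{St}(c)$ is an $\Aut(\HH_n^i)$-support of $\mu$, i.e.\ that every automorphism of $\HH_n^i$ fixing each atom $e_0,e_1$ with $e \in T$ also fixes $\mu$. Since $|\mathrm{St}(c)| = n$, this $T$ has size at most $|S|+n$, which is exactly the bound claimed, so the whole proof reduces to verifying this single statement.

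To verify it, I would take such a $\tau$, write it as a pair $(\rho_F,\pi)$ with $\rho_F \in \Aut_{\CFI}(\HH_n)$ and $\pi \in \Aut(\H_n)$, and recall that membership $\tau \in \Aut(\HH_n^i)$ forces $F$ to be a symmetric difference of cycles of $\H_n$. For each $e = \{u,v\} \in T$, the condition $\tau(e_0) = e_0$ decomposes into $e \notin F$ and $\pi(\{u,v\}) = \{u,v\}$. Ranging over all $e \in T$, the first part gives $F \cap T = \emptyset$, hence $F \cap S = \emptyset$; by the definition of a CFI-support (Definition~\ref{def:CFI_support}) this alone yields $\rho_F(\mu) = \mu$. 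So the only thing left is to use the second part, applied to the star $\mathrm{St}(c)$, to conclude that $\pi$ is the identity of $\Aut(\H_n)$; once this is established, $\tau = \rho_F$ and we are done.

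The main obstacle is therefore the purely group-theoretic claim that a $\pi \in \Aut(\H_n)$ fixing all $n$ edges incident to a vertex $c$ (each as a set) must be the identity. I would argue it in two steps. First, for $n \ge 2$, any other incident edge $\{c,w\}$ rules out $\pi(c)$ being a neighbour $w_0 \ne w$ of $c$ (it would force $\{w_0,\pi(w)\} = \{c,w\}$, impossible as $w_0 \notin \{c,w\}$), so $\pi(c) = c$; since each star edge is then fixed setwise with $c$ fixed, $\pi$ fixes every neighbour of $c$ as well (the cases $n \le 1$ being immediate). Second, I would invoke the structure $\Aut(\H_n) \cong \bbF_2^n \rtimes \Sym_n$ (\emph{a fortiori} the convention $\Aut(\H_n) \cong \Sym_n$ used in this section): the coordinate-permutation part of $\pi$ carries the neighbour of $c$ obtained by flipping coordinate $j$ to the one obtained by flipping coordinate $\pi(j)$, so fixing all neighbours forces this part to be trivial, after which the translation part must fix $c$ and is therefore trivial too. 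Putting everything together shows $T$ is an $\Aut(\HH_n^i)$-support of size at most $|S|+n$. The one place to be careful about conventions is at the very start, namely to record that ``fixing the edge $e$'' in a support means fixing $e_0$ and $e_1$ individually --- equivalently $e \notin F$ --- so that the two reductions above really do match Definition~\ref{def:CFI_support}.
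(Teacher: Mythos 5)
Correct, and essentially the same as the paper's proof: enlarge the CFI-support by a star of the hypercube, then observe that an automorphism of $\HH_n^i$ fixing (one atom per edge of) the resulting set must have trivial $\Aut(\H_n)$-component and hence reduces to a CFI-flip that already fixes $\mu$. The paper leaves the claim that a non-identity $\pi \in \Aut(\H_n)$ moves every star as an unproved assertion, whereas you spell out the elementary two-step argument ($\pi(c)=c$, then $\pi=\mathrm{id}$); this is the only meaningful difference, apart from the minor conventional point that the $\Aut(\HH_n^i)$-support should formally be the set of atoms $\{e_0 \mid e \in T\}$ (matching the paper's tuple $\bar\beta$) rather than the edge set $T$ itself, though the size bound $\le |S|+n$ is the same either way.
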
	
\begin{proof}
	Every $\rho_F \in \Aut_{\CFI}(\HH_n)$ such that $F \cap S = \emptyset$ fixes $\mu$. This holds in particular for every such $\rho_F \in \Aut_{\CFI}(\HH_n^i) \leq \Aut_{\CFI}(\HH_n)$. Thus, let $\bar{\alpha}$ be an arbitrary tuple in $\widehat{E}_n$ that contains exactly one vertex $e_i$ from every edge $e \in S$. So $|\bar{a}| = |S|$. Then any automorphism of the form $(\rho_F, \text{id}) \in \Aut(\HH_n^i)$ that fixes $\bar{\alpha}$ also fixes $\mu$. Now extend $\bar{\alpha}$ to a tuple $\bar{\beta}$ that contains a star. This is always possible such that $|\bar{\beta}| \leq |\bar{\alpha}|+n$. Now any automorphism in $\Aut(\HH_n^i)$ that fixes $\bar{\beta}$ must have $\text{id}$ as its second component, because any $(\rho_F, \pi) \in \Aut(\HH_n^i)$ with $\pi \neq \text{id}$ moves every star in the hypercube. So in total, every automorphism that fixes $\bar{\beta}$ fixes $\mu$, and the length of $\bar{\beta}$ is at most $|S| + n$.
\end{proof}

Let $\tw_n \in \Theta(2^n/\sqrt{n})$ denote the treewidth of the $n$-dimensional hypercube. Let $\alpha(n) := \max_{\stackrel{\mu \in \HF(\widehat{E}_n),}{s(\mu) \in \Omega(\tw_n/n)}} \alpha(\mu)$ be the maximum CFI-support gap that can occur for any object in $\HF(\widehat{E}_n)$ with minimum support size at least $\Omega(\tw_n/n)$.
\begin{corollary}
	\label{cor:XOR2_hypercubeSupportGapAsymptoticallyOne}
	There is a function $g(n) \in \Oo(1)$ which is an upper bound for $\alpha(n)$.
\end{corollary}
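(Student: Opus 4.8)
The plan is to combine the lemma just proved (Lemma~\ref{lem:XOR2_fromSupportToCFIsupport}) with the treewidth bound (Lemma~\ref{lem:CFI_treewidthHypercubes}) to get a uniform constant bound on the support gap. First I would recall the definitions: for $\mu \in \HF(\widehat{E}_n)$, the CFI-support gap is $\alpha(\mu) = s(\mu)/|\sup_{\CFI}(\mu)|$, where $s(\mu)$ is the size of the smallest $\Aut(\HH_n^i)$-support and $|\sup_{\CFI}(\mu)|$ is the size of the smallest CFI-support. By Lemma~\ref{lem:XOR2_fromSupportToCFIsupport}, if $S$ is a smallest CFI-support of $\mu$, then there is an $\Aut(\HH_n^i)$-support of size at most $|S| + n$, which gives $s(\mu) \leq |\sup_{\CFI}(\mu)| + n$, since $s(\mu)$ is by definition the size of the \emph{smallest} such support.

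The second ingredient is the restriction to objects $\mu$ with $s(\mu) \in \Omega(\tw_n/n)$. Since $\tw_n \in \Theta(2^n/\sqrt n)$ by Lemma~\ref{lem:CFI_treewidthHypercubes}, we have $\tw_n/n \in \Theta(2^n/n^{3/2})$, which grows without bound; in particular, for the relevant $\mu$ there is a constant $c > 0$ and an $n_0$ such that $s(\mu) \geq c \cdot \tw_n/n \geq n$ for all $n \geq n_0$ (the last inequality because $2^n/n^{3/2}$ dominates $n$). Hence for $n \geq n_0$ we get $n \leq s(\mu)$, and combining with $s(\mu) \leq |\sup_{\CFI}(\mu)| + n \leq |\sup_{\CFI}(\mu)| + s(\mu)$ we can instead argue more carefully: from $s(\mu) \leq |\sup_{\CFI}(\mu)| + n$ and $n \leq \tfrac{1}{c}\cdot\tfrac{n}{\tw_n}\cdot s(\mu) \cdot \tfrac{\tw_n}{n} $... — cleaner is to note $|\sup_{\CFI}(\mu)| \geq s(\mu) - n$, so
\[
\alpha(\mu) = \frac{s(\mu)}{|\sup_{\CFI}(\mu)|} \leq \frac{s(\mu)}{s(\mu) - n}.
\]
Since $s(\mu) \geq c\cdot \tw_n/n$ and $\tw_n/n \to \infty$, for all sufficiently large $n$ we have $s(\mu) \geq 2n$, whence $s(\mu)-n \geq s(\mu)/2$ and $\alpha(\mu) \leq 2$. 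For the finitely many small $n$ below this threshold, $\alpha(\mu)$ is trivially bounded by a constant (the edge set $E_n$ is finite, so $s(\mu) \leq |E_n|$ and $|\sup_{\CFI}(\mu)| \geq 1$ for any $\mu$ that is not fixed by all edge flips, and for $\mu$ fixed by all edge flips the gap is degenerate/can be taken as $1$). Defining $g(n)$ to be this overall constant gives the claim.

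I do not expect any genuine obstacle here — this is a routine consequence of the additive ``$+n$'' gap from Lemma~\ref{lem:XOR2_fromSupportToCFIsupport} together with the superpolynomial (in particular super-linear) growth of the treewidth. The only point requiring a little care is handling the edge cases: objects $\mu$ with $|\sup_{\CFI}(\mu)| = 0$ (i.e. fixed by all edge flips), which are excluded implicitly because the hypothesis $s(\mu)\in\Omega(\tw_n/n)$ forces $s(\mu)$ large and hence $|\sup_{\CFI}(\mu)| \geq s(\mu) - n > 0$; and making sure the constant $g(n)$ is chosen uniformly over $n$, which is immediate once the asymptotic argument pins down $\alpha(\mu) \leq 2$ for large $n$. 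Thus $g(n) \in \Oo(1)$, indeed one may take $g(n) = 2$ for $n$ large enough and absorb the finitely many exceptions into the constant.
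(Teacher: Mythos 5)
Your proof is correct and follows essentially the same route as the paper: both invert Lemma~\ref{lem:XOR2_fromSupportToCFIsupport} to get $|\sup_{\CFI}(\mu)| \geq s(\mu) - n$, then observe that $s(\mu)/(s(\mu)-n)$ is bounded because $s(\mu)\in\Omega(\tw_n/n)\in\Theta(2^n/n^{1.5})$ dominates $n$. The paper states the slightly sharper observation that the ratio tends to $1$, whereas you settle for the bound $2$; the difference is immaterial for the $\Oo(1)$ conclusion, and your extra care with small-$n$ edge cases is unobjectionable but not needed.
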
	
\begin{proof}
	Let $\mu \in \HF(\widehat{E}_n)$ be an object whose minimum $\Aut(\HH_n^i)$-support size $s(\mu)$ is at least $\Omega(\tw_n/n)$. By Lemma \ref{lem:XOR2_fromSupportToCFIsupport}, its smallest CFI-support $\sup_{\CFI}(\mu)$ must have size at least $s(\mu) - n$. Thus, the CFI-support gap $\frac{s(\mu)}{|\sup_{\CFI}(\mu)|}$ is at most $\frac{s(\mu)}{s(\mu) -n} = 1 + \frac{n}{s(\mu) -n}$. Since $s(\mu) \in \Theta(2^n/n^{1.5})$, this expression is asymptotically equal to $1$.
\end{proof}	
Consequently, in the setting of Theorem \ref{thm:XOR_lowerboundProgram}, we can take a constant function $g(n)$ for the upper bound of the CFI support gap. This is convenient because it means that Theorem \ref{thm:XOR_lowerboundProgram} yields XOR-circuits that are sensitive to as many edges as possible, namely $\Omega(\tw_n/n)$ many. We are aiming to prove that the symmetric circuits given by the theorem cannot exist, so it is good that the support gap does not loosen the constraints on the circuits here.\\

In total, we can summarise our result for hypercube CFI-structures like this:
\begin{theorem}
	\label{thm:XOR_lowerboundProgramInstantiatedHypercubes}
	If there exists a \emph{super-symmetric} and \emph{CFI-symmetric} $\CPT$-program $\Pi$ that decides the CFI-query on the family of all hypercube CFI-structures $(\HH_n^i)_{n \in \bbN}$, then for every $\H_n = (V_n,E_n)$, there exists an XOR-circuit $C_n$ over $\H_n$ that satisfies:
	\begin{enumerate}
		\item The number of gates in $C_n$ is polynomial in $2^n$.
		\item The orbit-size $|\Orb_{\H_n}(C_n)|$ of the circuit is polynomial in $2^n$. 
		\item $C_n$ is \emph{sensitive} to $\Omega(2^n/n^{1.5})$ many edges in $E_n$.
		%\item TODO: For each gate, only a logarithmic number of children contributes to the true support that is somehow a subset of the sensitivity set?
		%\item The \emph{fan-in dimension} of $C_n$, \emph{restricted to the space} $\Aut_{\CFI}(\GG_n^S)$, is $\Oo(\log |\GG_n^S|)$.
		\item The (unrestricted) \emph{fan-in dimension} of $C_n$ is $\Oo(n)$. 
	\end{enumerate}	
\end{theorem}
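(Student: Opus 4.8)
\textbf{Proof plan for Theorem~\ref{thm:XOR_lowerboundProgramInstantiatedHypercubes}.}
The plan is to assemble this theorem as a direct instantiation of Theorem~\ref{thm:XOR_lowerboundProgram} with the family of base graphs chosen to be the hypercubes $(\H_n)_{n\in\bbN}$ and with the parameters $f(n)$ and $g(n)$ pinned down by the three preliminary lemmas established in this section. First I would set $f(n) := \tw_n/n$, which is in $\Theta(2^n/n^{1.5})$ by Lemma~\ref{lem:CFI_treewidthHypercubes}; this is admissible as the ``support parameter'' of Theorem~\ref{thm:XOR_lowerboundProgram} because Lemma~\ref{lem:XOR2_HomogeneityOfHypercubes} shows that $\HH_n^i$ is $\Cc^{\tw_n}$-homogeneous for all tuples of length $\leq \tw_n/n - n - 1 \in \Theta(\tw_n/n)$, so in particular for all tuples of length $\leq 2f(n)$ after adjusting $f$ by a constant factor. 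Next I would take $g(n)$ to be the constant function supplied by Corollary~\ref{cor:XOR2_hypercubeSupportGapAsymptoticallyOne}, which bounds the CFI-support gap of every $\mu\in\HF(\widehat{E}_n)$ whose minimum $\Aut(\HH_n^i)$-support is in $\Omega(f(n))$.

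With these choices, Theorem~\ref{thm:XOR_lowerboundProgram} applies verbatim. Its conclusion gives, for each $n$, an XOR-circuit $C_n$ over $\H_n$ with: polynomially many gates in $|\HH_n^i|$ (Property~1), polynomial $\Aut(\H_n)$-orbit (Property~2), sensitivity to $\Omega(f(n)/g(n)) = \Omega(2^n/n^{1.5})$ edges (Property~3, using that $g$ is constant), and restricted fan-in dimension $\Oo(\log|\HH_n^i|)$ (Property~4). To upgrade Property~4 to the \emph{unrestricted} fan-in dimension bound stated here, I would invoke Property~5 of Theorem~\ref{thm:XOR_lowerboundProgram}: since $\Pi$ is assumed \emph{super-symmetric} in addition to being CFI-symmetric, Lemma~\ref{lem:XOR_maxOrbBoundSuperSymmetric} gives that $\maxOrb_E(\mu_n)$ is polynomially bounded in $|\HH_n^i|$, hence the unrestricted fan-in dimension is $\Oo(\log|\HH_n^i|)$ as well. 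Finally I would translate all size bounds from $|\HH_n^i|$ to $2^n$: as noted in the text, the CFI-construction on $\H_n$ increases the size by a factor $2^{\Oo(n)}$ because the maximum degree is $n$, so $|\HH_n^i|$ is polynomial in $2^n$, and conversely $2^n \leq |\HH_n^i|$. Therefore ``polynomial in $|\HH_n^i|$'' and ``polynomial in $2^n$'' are interchangeable, and $\log|\HH_n^i| \in \Oo(n)$, which yields the stated Property~4.

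There is essentially no new mathematical content here beyond careful bookkeeping; the genuine work is already done in Theorem~\ref{thm:XOR_lowerboundProgram} and in the three structural lemmas (treewidth, homogeneity, support gap). The one point that needs slight care is the interaction between the homogeneity range ($\Theta(\tw_n/n)$) and the requirement ``homogeneous for all tuples of length $\leq 2f(n)$'' in Theorem~\ref{thm:XOR_lowerboundProgram}: I would simply choose the hidden constant in $f(n)=\Theta(\tw_n/n)$ small enough that $2f(n) \leq \tw_n/n - n - 1$ for all large $n$, which is possible since $\tw_n/n - n - 1 = \Theta(\tw_n/n)$. Everything else is a matter of substituting $g(n)=\Oo(1)$ into the $\Omega(f(n)/g(n))$ bound and rewriting $\log|\HH_n^i|$ and $|\HH_n^i|$ in terms of $2^n$.
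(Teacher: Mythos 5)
Your proposal is correct and matches the paper's approach: the paper presents this theorem as a direct instantiation of Theorem~\ref{thm:XOR_lowerboundProgram} with the hypercube parameters supplied by Lemma~\ref{lem:CFI_treewidthHypercubes}, Lemma~\ref{lem:XOR2_HomogeneityOfHypercubes}, and Corollary~\ref{cor:XOR2_hypercubeSupportGapAsymptoticallyOne}, followed by the translation $|\HH_n^i| = \mathrm{poly}(2^n)$. You have correctly filled in all the bookkeeping (the choice of $f$, $g$, the use of super-symmetry via Property~5 to get the unrestricted fan-in bound, and the constant-factor adjustment so that $2f(n)$ falls within the homogeneity range) that the paper's one-sentence proof leaves implicit.
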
	
This is simply Theorem \ref{thm:XOR_lowerboundProgram}, instantiated with the hypercube CFI-structures. The bounds come from the fact that $|\HH_n^i| = \text{poly}(2^n)$ and because the homogeneity condition for $\HH_n^i$ holds for all tuples of length at most $\Oo(\tw_n/n) = \Oo(2^n/n^{1.5})$.

\section{Lower bounds for families of symmetric XOR-circuits over hypercubes }
\label{sec:chapterXOR2}
If we could successfully show that circuits with the properties from Theorem \ref{thm:XOR_lowerboundProgramInstantiatedHypercubes} do not exist, then this would imply that no super- and CFI-symmetric choiceless algorithm can solve the hypercube CFI-problem.
Unfortunately, we only manage this to a certain extent. We impose slightly stronger constraints on the circuits and then show that such circuit families over hypercubes indeed cannot exist. Concretely, we strengthen the symmetry condition on the circuits and assume that they are stabilised by \emph{all} automorphisms of the base graphs (i.e.\ $n$-dimensional hypercubes), so their orbit size is one. Theorem \ref{thm:XOR_lowerboundProgramInstantiatedHypercubes} states only that the orbit-size of the circuits has to be polynomial. Moreover, we impose the condition that the (orbit-wise) number of children and parents of every gate has to be logarithmically bounded. This may be related to the logarithmic bound on the fan-in dimension that we get from Theorem \ref{thm:XOR_lowerboundProgramInstantiatedHypercubes}, but it is probably a stronger restriction. For circuits with these properties over the $n$-dimensional hypercubes, we show that they are not sensitive to enough input gates and hence violate Property 3 from Theorem \ref{thm:XOR_lowerboundProgramInstantiatedHypercubes}.\\
Our lower bound is inspired by an ``almost right'' construction of circuits satisfying the properties from Theorem \ref{thm:XOR_lowerboundProgramInstantiatedHypercubes}. The most difficult part about constructing such circuits seems to be the condition that they should have polynomial orbit size with respect to the action of $\Sym_n$ on $\{0,1\}^n$. A first idea would be to use some tree with logarithmic degree whose leafs are labelled with the elements of $\{0,1\}^n$. This would satisfy all properties except (maybe) the orbit-size. Actually, we do not have a proof that tree-like circuits with the required orbit size do not exist, but we suspect that trees are not symmetric enough. Our result from this section also points in that direction, as we will explain later.\\

Instead of tree-like circuits, there is another more or less obvious construction idea, that could be considered the opposite of trees: In order to build a circuit that is guaranteed to be symmetric under the hypercube automorphisms, we can simply use the hypercube itself: Cut the hypercube in the middle, and use one half of the hypercube as the circuit. The output gate will then be, for example, the string $0^n$, and the input gates are labelled with the strings of Hamming weight $(n/2)$, which are located in the ``middle slice'' of the hypercube. This construction is visualised below.
\begin{figure}[H]
	\centering
	\begin{tikzpicture}[dot/.style={draw,circle,minimum size=1.5mm,inner sep=0pt,outer sep=0pt,fill=blue},circ/.style={draw,circle,minimum size=2.5mm,inner sep=0pt, fill=red},
		circY/.style={draw,circle,minimum size=2.5mm,inner sep=0pt, fill=yellow}]
		
		\node[dot,label=below:$1100$] (gate1100) at (-4,0) {};
		\node[dot,label=below:$1010$] (gate1010) at (-2,0) {};
		\node[dot,label=below:$1001$] (gate1001) at (0,0) {};		
		
		\node[dot,label=below:$0110$] (gate0110) at (2,0) {};
		\node[dot,label=below:$0101$] (gate0101) at (4,0) {};
		
		\node[dot,label=below:$0011$] (gate0011) at (6,0) {};

		\node[dot,fill=red,label=left:$1000$] (gate1000) at (-3+1,2) {};
		\node[dot,fill=red,label=left:$0100$] (gate0100) at (0,2) {};		
		\node[dot,fill=red,label=right:$0010$] (gate0010) at (2,2) {};
		\node[dot,fill=red,label=right:$0001$] (gate0001) at (4,2) {};		
		
		\node[dot,fill=red,label=above:$0000$] (gate0000) at (1.1,3.5) {};
		
		\draw[->, thick] (gate0000) edge (gate1000);
		\draw[->, thick] (gate0000) edge (gate0100);
		\draw[->, thick] (gate0000) edge (gate0010);		
		\draw[->, thick] (gate0000) edge (gate0001);
		
		\draw[->, thick] (gate1000) edge (gate1100);
		\draw[->, thick] (gate1000) edge (gate1010);
		\draw[->, thick] (gate1000) edge (gate1001);		
		\draw[->, thick] (gate0100) edge (gate0110);		
		\draw[->, thick] (gate0100) edge (gate0101);
		\draw[->, thick] (gate0100) edge (gate1100);
		\draw[->, thick] (gate0010) edge (gate1010);		
		\draw[->, thick] (gate0010) edge (gate0110);
		\draw[->, thick] (gate0010) edge (gate0011);
		\draw[->, thick] (gate0001) edge (gate1001);		
		\draw[->, thick] (gate0001) edge (gate0101);
		\draw[->, thick] (gate0001) edge (gate0011);		
		
	\end{tikzpicture}
	\caption{The $4$-dimensional hypercube cut in half. Red nodes are XOR-gates, blue nodes input gates.}
\end{figure}
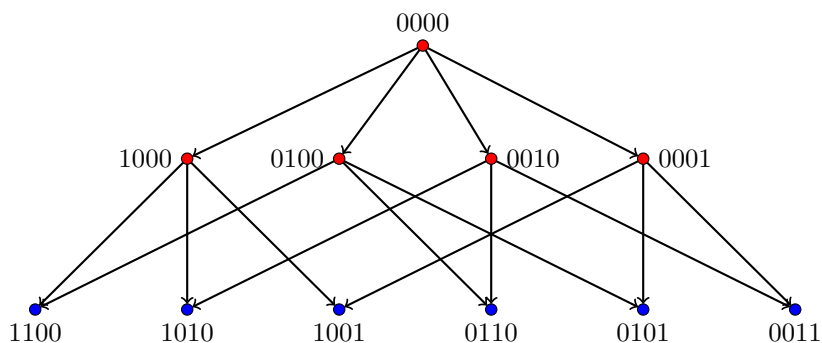
The circuit satisfies the correct size and orbit size bounds (namely, the orbit size of the circuit is one by construction), and also, it has degree $n$, which is logarithmic in $|\HH_n^i|$. Since the degree is an upper-bound for the fan-in dimension, the logarithmic fan-in bound is satisfied as well. However, we can observe that the circuit is actually not sensitive to \emph{any} input bit at all. Already in the second layer, counted from the input layer, all inputs cancel out (in the above example, this is in the root): Each input bit arrives at the root an even number of times. So we learn from this construction that it is possible to build highly symmetric circuits by making them very DAG-like, but the high number of distinct paths that any given input bit can take through the circuit can easily lead to the input bit cancelling itself at some point in the circuit. This happens whenever this number of paths is even. So roughly speaking, tree-like circuits and these ``halved-hypercube circuits'' are two extremes: With trees, the sensitivity condition (Property 3 in Theorem \ref{thm:XOR_lowerboundProgram}) is clearly satisfied, but the symmetry condition is not obvious. Conversely, in the halved hypercubes, the symmetry is satisfied by construction but the circuits are not sensitive to any input bit. The question is: What is in between these two extremes? We will now show that the input-cancellation effect from the halved-hypercube circuits actually occurs in a larger class of symmetric circuits. In some sense, this rules out all circuits that are too similar to the halved hypercube. Roughly speaking, these include all circuits over the $n$-dimensional hypercube that are stabilised by all permutations in $\Sym_n$ and have ``orbit-wise'' logarithmic in- and out-degree.\\

Fix a family $(C_n)_{n \in \bbN}$ of XOR-circuits such that the input gates of $C_n$ are labelled with edges of the $n$-dimensional hypercube $\H_n$. As in the previous section, the circuits are connected DAGs with a designated unique output gate, the root. 
We define 
\begin{align*}
	\Aut(C_n) := &\{  \sigma \in \Sym(V_{C_n}) \mid \sigma \text{ is an automorphism of the rooted DAG } (V_C,E_C,r) \\
	&\text{and there exists a } \pi \in \Sym_n \text{ s.t. } \ell(\sigma(g)) = \pi(\ell(g)) \text{ for every input gate } g  \}. 
\end{align*}
Note that in particular, every automorphism must fix the root of the circuit and must permute the leafs in a way that complies with a permutation in $\Sym_n$ acting on the labels of leafs.
We say that a permutation $\pi \in \Sym_n$ \emph{extends} to an automorphism $\sigma \in \Aut(C_n)$ if $\sigma$ maps the input gates $g$ such that $\ell(\sigma(g)) = \pi(\ell(g))$ is satisfied. It may be that the identity permutation in $\Sym_n$ extends to non-trivial circuit automorphisms in $C_n$. In this case, the circuit is not rigid and every permutation in $\Sym_n$ has multiple circuit automorphisms that it extends to. 
For a gate $g$ in $C_n$ and a parent $h$ of $g$, we let \[
\Orbit_{(g)}(h) := \{ \sigma(h) \mid \sigma \in \Aut(C_n), \sigma(g) = g  \}.
\]
Similarly, 
\[
\Orbit_{(h)}(g) := \{ \sigma(g) \mid \sigma \in \Aut(C_n), \sigma(h) = h  \}.
\]
So these are the orbits of $g$, $h$, respectively, with respect to those circuit automorphisms that fix the child $g$, or the parent $h$, respectively. Note that $\Orbit_{(g)}(h) \subseteq E_C g$, and $\Orbit_{(h)}(g) \subseteq hE_C$ because circuit automorphisms preserve the wires and if one endpoint of a wire is fixed, then the image of the other endpoint must still be connected with the fixed gate. In the rest of this section, we prove:
\begin{theorem}
	\label{thm:XOR2_symmetricCircuitsDoNotCompute}
	Let $(C_n)_{n \in \bbN}$ be a family of XOR-circuits over the $n$-dimensional hypercubes such that for all $n \in \bbN$ it holds:
	\begin{enumerate}
		\item The size $|V_{C_n}|$ is polynomial in $2^n$ (and thus polynomial in $|\HH_n^i|$).
		%\item There is an injection $\ell_n$ that labels each leaf of $C_n$ with an edge in $E(\H_n)$.
		\item \emph{Every} permutation $\pi \in \Sym_n$ acting on $\{0,1\}^n$ extends to a circuit-automorphism of $C_n$. Thus, the $\Sym_n$-orbit of $C_n$ has size one.
		\item There exists a function $f(n) \in \Oo(n)$ such that for all large enough $n$, for every gate $g$ and every parent $h$ of $g$ in $C_n$, both $|\Orbit_{(g)}(h)|$ and $|\Orbit_{(h)}(g)|$ are at most $f(n)$.
	\end{enumerate}	 
	Then for any constant $\epsilon > 0$, it holds for all large enough $n$: The circuit $C_n$ can only be sensitive to an input gate $g$ if $\ell(g) = \{u,v\}$ is an edge such that the zero-one-split in the binary strings $u,v \in \{0,1\}^n$ is more imbalanced than $\epsilon n$ vs. $(1-\epsilon)n$. In other words, only if the number of $1$s or the number of $0$s in $u$ and $v$ is $< \epsilon n$, then this input gate contributes to the output of $C_n$.
\end{theorem}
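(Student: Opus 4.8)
## Proof plan

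The plan is to track, for each gate $g$ of $C_n$ and each input edge $e = \{u,v\}$, the number $N_g(e)$ of directed paths from $g$ down to the input gate labelled $e$, taken modulo $2$; the gate $g$ is sensitive to $e$ if and only if $N_g(e)$ is odd, and $C_n$ is sensitive to $e$ iff $N_r(e)$ is odd for the root $r$. So the whole statement reduces to showing: if $e=\{u,v\}$ has a zero-one split more balanced than $\epsilon n$ vs.\ $(1-\epsilon)n$, then $N_r(e)$ is even. The key leverage is symmetry. Since every $\pi \in \Sym_n$ extends to an automorphism of $C_n$ (Property 2), and automorphisms preserve the root and preserve path counts, we get $N_r(\pi(e)) = N_r(e)$ for all $\pi \in \Sym_n$. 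Thus $N_r$ is constant on each $\Sym_n$-orbit of edges of $\H_n$, and an edge-orbit is determined by the multiset of Hamming weights of its two endpoints, i.e.\ by a pair $(k, k+1)$ with $0 \le k < n$. The real content is to show this common value is even whenever $k$ is in the ``bulk'' range $\epsilon n \le k \le (1-\epsilon)n$.

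To do this I would fix such a $k$, fix one representative edge $e$ with endpoints of weights $k$ and $k+1$, and consider the setwise stabiliser $\Gamma_e \le \Sym_n$ of the \emph{pair of coordinate sets} determined by $e$ (the common support and the one extra bit) — so $\Gamma_e \cong \Sym_{k} \times \Sym_{n-k-1} \times \Sym_1$ roughly, a group of size $\Theta(k!\,(n-k-1)!)$, which is superpolynomial in $2^n$ when $k$ is in the bulk. Now I would count, modulo $2$, the contribution to $N_r(e)$ by grouping paths according to the $\Gamma_e$-action. Pick the input gate $g_e$ labelled $e$. Each path from $r$ to $g_e$ passes through a unique child of $g_e$'s... rather, through a unique sequence of gates; I would argue via the following cancellation mechanism, generalising the ``halved hypercube'' example: walk up from $g_e$ one edge at a time. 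Let $h$ be a parent of $g_e$. Property 3 says $|\Orbit_{(h)}(g_e)| \le f(n) \in \Oo(n)$, i.e.\ the orbit of $g_e$ under circuit-automorphisms fixing $h$ is small; dually $|\Orbit_{(g_e)}(h)| \le f(n)$, so the orbit of $h$ under automorphisms fixing $g_e$ is small. The subgroup of $\Aut(C_n)$ fixing $g_e$ contains (a lift of) $\Gamma_e$ up to a polynomial-size kernel coming from non-rigidity and from the stabiliser of $g_e$ in the larger group; since $|\Gamma_e|$ is superpolynomial while $|\Orbit_{(g_e)}(h)| \le f(n)$ is only linear, the stabiliser inside $\Gamma_e$ of the parent $h$ has index at most $\text{poly}(2^n)\cdot f(n)$, still a proper and in fact ``large-index-complement'' situation — more precisely, by orbit-stabiliser the number of parents of $g_e$ in a single $\Gamma_e$-orbit is superpolynomially smaller than $|\Gamma_e|$, forcing each such orbit of parents to have even size once $n$ is large (because a group of the form $\Sym_k \times \Sym_{n-k-1}$ with $k$ in the bulk has no subgroup of small odd index: any subgroup of index $< $ the smaller of $k,n-k-1$ must contain the alternating part, hence has even index, by the classical bound on indices of proper subgroups of symmetric groups). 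Then the path count through each such orbit of parents contributes an even number to $N_{g_e}$-as-seen-from-above, and summing, $N_r(e)$ is even.

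The induction/cancellation has to be set up carefully: I would actually phrase it as a statement about $N_g(e)$ for \emph{every} gate $g$ that is ``$\Gamma_e$-generic'' (meaning $\Gamma_e$, or a finite-index subgroup of it, fixes $g$), proved by induction from the root downward: $N_g(e) = \sum_{h \in gE_C} N_h(e)$, the children of such a $g$ split into $\Gamma_e$-orbits, each orbit either consists of non-$\Gamma_e$-generic gates whose path counts to $e$ pair up and cancel (because $\Gamma_e$ permutes them freely enough, using that the orbit of a child $h$ under the stabiliser of $g$ has superpolynomially-smaller size than $\Gamma_e$ forces that orbit to have even length when $n$ is large, via the symmetric-group index bound), or consists of $\Gamma_e$-generic gates, and there is exactly one chain of $\Gamma_e$-generic gates from $r$ down to $g_e$... wait — this needs the genericity to be "linearly ordered" which is not automatic. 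The cleaner route, which I would take, is: define $P(e)$ to be the multiset of gates $g$ with $N_g(e)$ odd; show $\Gamma_e$ acts on $P(e)$; show that every $\Gamma_e$-orbit in $P(e)$ other than $\{g_e\}$ and $\{r\}$ would have superpolynomial size (because $\Gamma_e$ can only fix a gate if that gate is ``pinned'' by $g_e$ and $r$, and Property 3 bounds how much pinning one parent/child relation provides — a gate far from both $g_e$ and $r$ in the DAG has superpolynomial $\Gamma_e$-orbit), but $P(e) \subseteq V_{C_n}$ has only polynomial size; hence $P(e) \subseteq \{g_e, r\}$ plus possibly small-orbit debris that must have even total multiplicity; and then a direct parity count of the edges between these survivors, using the index bound in $\Sym_k \times \Sym_{n-k-1}$ once more, gives $r \notin P(e)$, i.e.\ $N_r(e)$ even.

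The main obstacle I anticipate is precisely controlling the interaction between Property 3 (which only bounds \emph{orbit-wise} in/out-degrees, per single parent–child pair, not globally) and the global symmetry, so as to rule out ``medium-sized'' $\Gamma_e$-orbits of gates where path counts might conspire not to cancel; making the downward induction well-founded despite the DAG (not tree) structure is the delicate point, and I expect it to require the group-theoretic input that a symmetric group $\Sym_m$ (and products thereof) has no proper subgroup of index between $2$ and $m-1$, together with a careful argument that any gate on an odd-multiplicity path from $r$ to $g_e$ is stabilised by a subgroup of $\Gamma_e$ of index at most $\text{poly}(2^n)$, which combined with $|\Gamma_e|$ being superpolynomial and $\Gamma_e \cong \Sym_k \times \Sym_{n-k-1}$ with both factors of size $\omega(\log^2 2^n)$ forces that stabiliser to contain the product of alternating groups and hence its index — the orbit size — to be even, driving the cancellation. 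This is the group-theoretic heart and is where I expect to invoke techniques in the spirit of \cite{dawarAnderson}.
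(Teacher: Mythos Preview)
Your proposal has the right starting point (path-counting modulo $2$ and symmetry), but the central cancellation mechanism has a genuine gap that the paper's proof handles quite differently.

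The flaw is in the repeated claim that small $\Gamma_e$-orbits must have \emph{even} size. You argue: a subgroup of $\Gamma_e \cong \Sym_k \times \Sym_{n-k-1}$ of small index must contain $\Alt_k \times \Alt_{n-k-1}$, ``hence has even index''. But this only forces the index to divide $4$, i.e.\ to lie in $\{1,2,4\}$; index $1$ is perfectly possible and odd. In your Approach~B you try to escape this by asserting that every $\Gamma_e$-orbit in $P(e)$ other than $\{g_e\}$ and $\{r\}$ is superpolynomial, but nothing prevents intermediate gates from being $\Gamma_e$-fixed. Indeed, in the halved-hypercube example \emph{every} internal gate is fixed by all of $\Sym_n$ (hence by $\Gamma_e$), so orbit size is $1$ throughout, yet cancellation still occurs. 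Your argument, which tries to extract evenness from a single gate's $\Gamma_e$-orbit, cannot see this cancellation.

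The paper's proof is structurally different. Rather than fixing a single global group $\Gamma_e$, it tracks for \emph{every} gate $g$ the coarsest \emph{alternating supporting partition} $\Sp_A(g)$ of $\Stab_n(g)\le\Sym_n$, and in particular its \emph{size profile} $\zeta[g]$. The key technical lemma (Lemma~\ref{lem:XOR2_mainPartSizeLemma}) shows that between a child $g$ and a parent $h$, the sizes of large parts can change by at most one, and identifies exactly which transitions force $|\Orbit_{(g)}(h)|$ to be even. Paths are then grouped by \emph{orbit profile}: the number of paths with a given profile is a product $\prod_i |\Orbit_{(h_i)}(h_{i-1})|$ (Lemma~\ref{lem:XOR2_numPathsWithProfile}). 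The punchline (Corollary~\ref{cor:XOR2_numberOfLargePartsStaysHigh} and Lemma~\ref{lem:XOR2_finalLemma}) is that if all these orbit sizes along a path were odd, the number of large parts in $\Sp_A$ would be monotone non-decreasing from the input gate to the root---but the input gate has two large parts (when the edge is balanced) while the root has only one, a contradiction. So some factor in the product is even, and the path count with that profile is even. This extracts evenness from the \emph{global trajectory} of supporting partitions along a path, not from any single orbit, which is exactly what your approach is missing.
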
	
In particular, these circuits do not satisfy Property 3 from Theorem \ref{thm:XOR_lowerboundProgram} and are therefore ruled out:
\begin{corollary}
	\label{cor:XOR2_symmetricCircuitsDoNotSatisfySensitivity}
	Let $(C_n)_{n \in \bbN}$ be a circuit family as in Theorem \ref{thm:XOR2_symmetricCircuitsDoNotCompute}. Then for all large enough $n$, the circuit $C_n$ is sensitive to strictly less than 
	$o(2^n/n^{1.5})$ many inputs. 
\end{corollary}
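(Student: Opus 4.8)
The plan is to read off Corollary~\ref{cor:XOR2_symmetricCircuitsDoNotSatisfySensitivity} from Theorem~\ref{thm:XOR2_symmetricCircuitsDoNotCompute} by a purely combinatorial counting estimate. First I would fix a constant $\epsilon$ with $0 < \epsilon < \tfrac12$, for instance $\epsilon := \tfrac13$, and apply Theorem~\ref{thm:XOR2_symmetricCircuitsDoNotCompute} with this $\epsilon$. The theorem then tells us: for all large enough $n$, whenever $C_n$ is sensitive to an input gate $g$ with label $\ell(g) = \{u,v\} \in E_n$, then either both of $u,v$ have fewer than $\epsilon n$ many $1$-entries, or both have fewer than $\epsilon n$ many $0$-entries. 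Hence the set of input gates to which $C_n$ can be sensitive is contained in the set of edges of $\H_n$ that are incident to the ``low'' vertex set $L := \{\, w \in \{0,1\}^n \mid \mathrm{wt}(w) < \epsilon n \,\}$ or to the ``high'' vertex set $U := \{\, w \in \{0,1\}^n \mid \mathrm{wt}(w) > (1-\epsilon)n \,\}$, where $\mathrm{wt}(w)$ is the Hamming weight. So it suffices to bound the number of such edges.

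Next I would carry out the count. The complementation map $w \mapsto \bar{w}$ is a bijection between $L$ and $U$, so $|U| = |L|$. Every vertex of $\H_n$ has degree exactly $n$, and so the number of edges with at least one endpoint in $L \cup U$ is at most $\sum_{v \in L \cup U} \deg(v) = n\,(|L| + |U|) = 2n\,|L|$. To estimate $|L| = \sum_{k < \epsilon n} \binom{n}{k}$ I would invoke the standard binomial-tail (entropy) bound: for $\epsilon \le \tfrac12$ one has $\sum_{k \le \epsilon n} \binom{n}{k} \le 2^{c(\epsilon)\, n}$, where $c(\epsilon) := -\epsilon \log_2\epsilon - (1-\epsilon)\log_2(1-\epsilon)$ is the binary entropy of $\epsilon$, and crucially $c(\epsilon) < 1$ for every $\epsilon \in (0,\tfrac12)$. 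Thus the number of input gates to which $C_n$ is sensitive is at most $2n \cdot 2^{c(\epsilon)\, n}$.

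Finally, writing $\delta := 1 - c(\epsilon) > 0$, this bound is $2n \cdot 2^{c(\epsilon) n} = 2^{n} \cdot 2n \cdot 2^{-\delta n}$, and
\[
\frac{2n \cdot 2^{c(\epsilon)\,n}}{\,2^n / n^{1.5}\,} \;=\; 2\, n^{2.5}\, 2^{-\delta n} \;\xrightarrow[n\to\infty]{}\; 0,
\]
since the exponential factor $2^{-\delta n}$ dominates the polynomial factor $n^{2.5}$. Therefore $C_n$ is sensitive to $o(2^n/n^{1.5})$ many inputs, which is exactly the statement of the corollary. I do not expect a genuine obstacle in this step: it is a routine combination of the structural output of Theorem~\ref{thm:XOR2_symmetricCircuitsDoNotCompute} with an elementary estimate on binomial tails. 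The only point to be attentive to is to choose $\epsilon$ strictly below $\tfrac12$, so that the entropy $c(\epsilon)$ is strictly less than $1$ and the resulting exponential gap $2^{-\delta n}$ beats the polynomial denominator $n^{1.5}$ of the target bound.
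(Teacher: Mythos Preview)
Your proposal is correct and follows essentially the same approach as the paper's proof: both apply Theorem~\ref{thm:XOR2_symmetricCircuitsDoNotCompute}, bound the number of relevant edges by $2n$ times the number of vertices of Hamming weight below $\epsilon n$, invoke the entropy bound on the binomial tail, and compare against $2^n/n^{1.5}$ using that $H(\epsilon)<1$ for $\epsilon<\tfrac12$. Your version is slightly cleaner in that you use only the standard upper bound $\sum_{k\le\epsilon n}\binom{n}{k}\le 2^{H(\epsilon)n}$, whereas the paper cites a sharper asymptotic from \emph{Concrete Mathematics}; either suffices.
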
	
\begin{proof}
	According to Exercise 9.42 on page 492 of \cite{concreteMathematics}, it holds for any $\alpha \leq \frac{1}{2}$:
	\[
	\sum_{k \leq \alpha n} \binom{n}{k} = 2^{nH(\alpha)-\frac{1}{2}\log n + \Oo(1)},
	\]
	where $H(\alpha) = \alpha \log(\frac{1}{\alpha}) + (1-\alpha)\log(\frac{1}{1-\alpha})$. According to Theorem \ref{thm:XOR2_symmetricCircuitsDoNotCompute}, the only edges of $\H_n$ that $C_n$ can be sensitive to are between binary strings with less than $\epsilon n$ many one- or zero-entries, for any $\epsilon > 0$. The number of potential endpoints of such edges is twice the above sum, for $\alpha = \epsilon$. The degree of $\H_n$ is $n$, so in total, $C_n$ is sensitive to at most $2n \cdot \sum_{k \leq \epsilon n} \binom{n}{k}$ many edges of $\H_n$. We can calculate that this is in $o(2^n/n^{1.5})$, for any $0 < \epsilon < \frac{1}{2}$:
	\begin{align*}
		&\lim_{n\to\infty} \frac{2^n}{n^{1.5} \cdot 2n \cdot 2^{nH(\epsilon)-\frac{1}{2}\log n + \Oo(1)}} = \\
		&\lim_{n\to\infty} \frac{2^{(1-H(\epsilon))n +\frac{1}{2}\log n - \Oo(1)}}{2n \cdot n^{1.5}} = \infty
	\end{align*}	
	In the last step, we used that $H(\epsilon) < 1$, which holds as long as $\epsilon$ is chosen to be strictly less than $\frac{1}{2}$.
\end{proof}	
This proves Theorem \ref{thm:mainLowerBound}.
This result does not yet completely rule out the existence of a symmetric circuit family as required by 
Theorem \ref{thm:XOR_lowerboundProgramInstantiatedHypercubes}: Firstly, we assume the circuits here to be \emph{fully symmetric}, i.e.\ they are stabilised by \emph{every} permutation in $\Sym_n$; in Theorem \ref{thm:XOR_lowerboundProgramInstantiatedHypercubes}, the circuits need only have a polynomial orbit with respect to the automorphisms of the base graph, so they are stabilised by many, but not necessarily by \emph{all} these automorphisms. Secondly, in Theorem \ref{thm:XOR_lowerboundProgramInstantiatedHypercubes}, we only have a logarithmic bound on the \emph{fan-in dimension}, but it is not clear that this also entails a bound on the orbit-wise number of children and parents of each gate as in Property 3 above. Nevertheless, we hope that this negative result for the existence of fully symmetric bounded-degree XOR-circuits is a useful starting point to rule out further circuit classes over hypercubes, and eventually defeat all circuits from Theorem \ref{thm:XOR_lowerboundProgramInstantiatedHypercubes}.

\subsection{Supporting partitions of permutation groups}
The proof of the theorem relies on group-theoretic techniques, which allow us to approximate every permutation group of index $\text{poly}(2^n)$ by a product of large alternating groups. This idea comes from Anderson and Dawar's paper on symmetric circuits and fixed-point logic \cite{dawarAnderson}. There, it is shown that any group $G \leq \Sym_n$ has a unique \emph{coarsest supporting partition} $\Sp(G)$. This is the coarsest partition of $[n]$ such that every $\pi \in \Sym_n$ which stabilises each part setwise is in $G$. Here, we extend this concept and define \emph{alternating supporting partitions}. The alternating group $\Alt_n \leq \Sym_n$ is the group containing only permutations with even sign. A permutation has even sign if it inverts the order of an even number of pairs in $[n]$.
\begin{definition}[Alternating supporting partition]
	\label{def:G_altSupportingPartition}
	Let $A$ be a set and $\G \leq \Sym(A)$ be a permutation group acting on $A$. An \emph{alternating supporting partition} of $\G$ is a partition $\Pp$ of $A$ such that 
	\[
	\prod_{\stackrel{P \in \Pp}{|P| < 5 }} \Sym(P) \times \prod_{\stackrel{P \in \Pp}{|P| \geq 5 }} \Alt(P) \leq \G.
	\]
\end{definition}
The difference to the ``standard'' supporting partitions is that the odd permutations within the parts of an alternating supporting partition need not be contained in the supported group $\G$. On parts of size $< 5$, we require the full symmetric group to be in $\G$, because otherwise, the proof of the next lemma is problematic, and in our application, constant-size parts will not play a big role anyway.
We have to verify that alternating supporting partitions work just like the original supporting partitions from \cite{dawarAnderson}. Concretely, the desirable properties are that there always exists a unique coarsest supporting partition of a group $G$ and that $G$ is sandwiched between the pointwise and the setwise stabiliser of that partition.
\begin{lemma}
	\label{lem:G_uniqueAlternatingSupportingPartition}
	Each permutation group $\G \leq \Sym(A)$ has a unique \emph{coarsest alternating supporting partition}, denoted $\Sp_A(\G)$.
\end{lemma}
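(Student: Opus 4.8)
The plan is to mimic the proof of the analogous statement for ordinary supporting partitions in Anderson and Dawar~\cite{dawarAnderson}, where the crucial point is a \emph{join-closure} property: if $\Pp_1$ and $\Pp_2$ are both alternating supporting partitions of $\G$, then so is their join $\Pp_1 \vee \Pp_2$ in the partition lattice (the coarsest common refinement ``from above'', i.e.\ the partition whose classes are the connected components of the graph on $A$ whose edges connect any two elements lying in a common class of $\Pp_1$ or of $\Pp_2$). Once we know the set of alternating supporting partitions of $\G$ is closed under finite joins, the unique coarsest one is simply the join of all of them (the set is finite since $A$ is finite), and it is well-defined. So the real content is the join-closure lemma.

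To prove join-closure, first I would reduce to the case of two partitions $\Pp_1, \Pp_2$ that differ by ``merging two classes'': a standard lattice argument shows that $\Pp_1 \vee \Pp_2$ can be obtained from $\Pp_1$ by a finite sequence of steps, each of which merges two classes that are linked via a class of $\Pp_2$, so it suffices to show that merging preserves the alternating-support property. Concretely, suppose $\Pp$ is an alternating supporting partition of $\G$ and $P, Q \in \Pp$ are two distinct classes such that there is a class $R$ of another alternating supporting partition with $R \cap P \neq \emptyset \neq R \cap Q$; I want to show $\Pp' := (\Pp \setminus \{P,Q\}) \cup \{P \cup Q\}$ is still an alternating supporting partition of $\G$. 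The point is that the full symmetric (or alternating) group on $P \cup Q$ is generated by $\Sym(P) \times \Sym(Q)$ (resp.\ $\Alt(P)\times\Alt(Q)$) together with a single transposition (resp.\ $3$-cycle or double transposition) moving elements between $P$ and $Q$; all the ``within-block'' generators lie in $\G$ because $\Pp$ is supporting, and a suitable ``crossing'' generator can be found inside $\G$ because the second supporting partition's class $R$ straddles both $P$ and $Q$ and hence contributes permutations of $\G$ that move points of $P$ to points of $Q$. The $|P| \geq 5$ / $|P| < 5$ case split from Definition~\ref{def:G_altSupportingPartition} enters here: on blocks of size $< 5$ we insisted the \emph{full} symmetric group is in $\G$, which is exactly what makes the generation argument go through uniformly (the alternating group on a $\leq 4$-element set is not generated the way we need, hence the convention). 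For the $\geq 5$ case one uses that $\Alt(m)$ for $m \geq 5$ is simple and that the normal closure of $\Alt(P) \times \Alt(Q)$ together with one crossing $3$-cycle inside $\Sym(P\cup Q)$ already contains $\Alt(P \cup Q)$.

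With join-closure in hand, the proof of Lemma~\ref{lem:G_uniqueAlternatingSupportingPartition} is short: the discrete partition (all singletons) is trivially an alternating supporting partition of any $\G$, so the set $\mathcal{S}$ of alternating supporting partitions of $\G$ is non-empty; it is finite and closed under binary joins, hence under arbitrary joins; therefore $\Sp_A(\G) := \bigvee \mathcal{S}$ is itself in $\mathcal{S}$ and is by construction coarser than every element of $\mathcal{S}$, so it is the unique coarsest alternating supporting partition. I would phrase this last bit in one or two sentences.

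The step I expect to be the main obstacle is verifying the generation/crossing-element argument in the merging lemma with full rigour, in particular handling the interaction between the parity restriction ($\Alt$ vs.\ $\Sym$) on large blocks and the small-block convention: one must make sure that when $|P|$ and $|Q|$ are both $\geq 5$ the crossing permutation available from the second supporting partition can be chosen to be an \emph{even} permutation of $P \cup Q$ (or that an even one can be manufactured from it by composing with within-block permutations already known to be in $\G$), and that when one of the blocks is small the full symmetric group convention compensates. This is essentially a careful bookkeeping of signs and of which alternating/symmetric groups are generated by which sets of cycles; it is routine group theory but needs to be done attentively, and it is the only place where anything could genuinely go wrong. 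I would isolate it as a separate sublemma about generating $\Alt(m)$ and $\Sym(m)$ from block-diagonal subgroups plus one cross edge, prove that first, and then apply it.
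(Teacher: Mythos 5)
Your high-level plan — reduce to join-closure of the set of alternating supporting partitions and then verify join-closure by a merging argument — is the same as the paper's, and the final paragraph (discrete partition is supporting, join of a finite nonempty join-closed family is its maximum) is fine. The generation sublemma you want to isolate (``$\Alt(P \cup Q)$ is generated by $\Alt(P)\times\Alt(Q)$ plus one crossing $3$-cycle, with $\Sym$ on small blocks'') is also a correct and classical fact.

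The gap is in the decomposition step. You propose to reach $\Pp_1 \vee \Pp_2$ from $\Pp_1$ by repeatedly merging two classes $P,Q \in \Pp_1$ that are linked by some class $R \in \Pp_2$, and at each step to show $\Alt(P\cup Q)$ sits inside $\G$ (fixing $A\setminus(P\cup Q)$ pointwise) by extracting a crossing generator from $\Alt(R)$ or $\Sym(R)$. But the crossing generator has to itself be a permutation of $P\cup Q$ alone, i.e.\ fix $R \setminus (P\cup Q)$ pointwise, and $R$ may stick out of $P\cup Q$: if $R$ meets three or more classes of $\Pp_1$, then $R\setminus(P\cup Q)\neq\emptyset$. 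In the worst case $|R\cap P| = |R\cap Q| = 1$ with $|R| \geq 5$, so the only permutations in $\Alt(R)$ fixing $R\setminus(P\cup Q)$ pointwise are the even permutations of a $2$-element set, i.e.\ just the identity — there is no crossing generator available at all, and your sublemma has nothing to apply to. (The intermediate claim $\Alt(P\cup Q) \leq \G^{(A\setminus(P\cup Q))}$ may still happen to be true, but proving it in this generality would require a Jordan-type primitivity theorem, which is far heavier machinery than you intend and not what your generation sublemma delivers.) The fix, and what the paper actually does, is to merge \emph{across} the two partitions: take $P\in\Pp$ and $P'\in\Pp'$ with $P\cap P'\neq\emptyset$ and show $\Alt(P\cup P') \leq \G$. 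Because $P$ and $P'$ genuinely overlap, an element $z\in P\cap P'$ serves as a pivot, and every needed double transposition on $P\cup P'$ factors as a product of even permutations of $P$ and of $P'$ separately (padding with dummy transpositions inside $P$ or $P'$ to control parity, which is exactly where the $|P|<5$ convention earns its keep). One then iterates this ``union of two overlapping supported sets is supported'' step to build up all parts of the join. You should reformulate your merging sublemma to have that shape — $\Alt(X), \Alt(Y) \leq \G$ with $X\cap Y \neq \emptyset$ implies $\Alt(X\cup Y)\leq\G$ — rather than trying to bridge two disjoint blocks through a third.
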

\begin{proof}
	The proof is similar to the one of Lemma 1 in \cite{dawarAnderson}. We need to prove that for any two alternating supporting partitions $\Pp, \Pp'$, the finest partition of which both of them are refinements is still an alternating supporting partition. Then the lemma follows directly. 
	This ``finest common coarsification'' of $\Pp$ and $\Pp'$, denoted $\Ee(\Pp,\Pp')$ is defined like this: Let $\sim \subseteq A^2$ be the transitive closure of the relation ``$a$ and $b$ occur together in some part $P$ of $\Pp$ or $\Pp'$''. The equivalence classes of $\sim$ are the parts of $\Ee(\Pp,\Pp') =: \Ee$. We want to show that any even permutation within any part $Q$ of $\Ee$ (that pointwise fixes everything outside of $Q$) is also in $\G$. We do this by proving: If $P \in \Pp, P' \in \Pp'$ have non-empty intersection, then $\Alt(P \cup P') \leq \G$.  
	Since $\Alt(P \cup P') $ is generated by pairs of transpositions $(x y)(x' y')$, it suffices to show that all such pairs are in $\G$. So consider $(x y)(x' y')$, for $x,y,x',y' \in P \cup P'$ pairwise distinct. We distinguish the following cases: If $x,x',y,y'$ are all in $P$ (or analogously, in $P'$), then $(x y)(x' y') \in \Alt(P) \leq \G$ (using that $\Pp$ is an alternating supporting partition of $\G$).\\ 
	The next case is that $x \in P \setminus P'$, $y \in P' \setminus P$, and $x',y'$ are in the same part, say, both are in $P'$. In this case, let $z \in P \cap P'$, and $a,b \in P \setminus \{ x,z\}$ be two distinct elements (if such $a,b$ do not exist, then $|P| \leq 3$, and so $\Sym(P) \leq \G$, which means that the following argument works even without these $a,b$). It holds $(x y)(x' y') = (x z)(a b)(z y)(x' y')(x z)(a b)$. The number of transpositions within $P$ and within $P'$ is even (we always swap $a$ and $b$ when we swap $x$ and $z$, and we swap $x'$ and $y'$ together with $(z y)$), so this product is in $\Alt(P) \times \Alt(P')$ and therefore in $\G$, because both $\Pp$ and $\Pp'$ are supporting partitions of $\G$.\\
	Another case is that $x,x' \in P \setminus P'$ and $y,y' \in P' \setminus P$. Let again $z \in P \cap P'$, and fix some $a,b \in P \setminus \{ x,x',z \}$ and $a',b' \in P' \setminus \{ y,y',z \}$. Again, if this is not possible, then $P$ and $P'$ are smaller than $5$ and so all permutations on them are in $\G$, which makes the next step only easier.
	Consider $(x z)(a b)(z y)(a' b')(x z)(a b)(x' z)(a b)(z y')(a' b')(x' z)(a b)$. This is equal to $(x y)(x' y')$ and again in $\Alt(P) \times \Alt(P')$ and hence in $\G$.\\
	It remains the case where $x,y \in P$ and $x',y' \in P'$. We can assume that $x,y \in P \setminus P'$ and $x',y' \in P'$ because $\Alt(P) \times \Alt(P') \leq \G$, and so we can move the elements that are to be swapped anywhere within $P$, $P'$, respectively. Then we can simulate the permutation $(x y)(x' y')$ by applying the previous case twice: First, we execute $(x x')(y y')$, and then $(x y')(y x')$. Both are in $\Alt(P) \times \Alt(P')$, as shown above, and hence, in total, we have $(x y)(x' y') \in \Alt(P) \times \Alt(P') \leq \G$.\\ 
	
	So we have shown that we can take the union of two intersecting parts from $\Pp$ and $\Pp'$, and the alternating group on this union will also be in $\G$. Iterating this, we can show that $\Alt(Q) \leq \G$, for every $Q \in \Ee$, because all parts of $\Ee$ can be obtained by iteratively taking the union of intersecting parts of $\Pp$ and $\Pp'$. It remains to show that for all $Q \in \Ee$ with $|Q| < 5$, $\Sym(Q) \leq \G$. But this is clear since such parts $Q \in \Ee$ can only be the union of small parts $P \in \Pp$ and $P' \in \Pp'$. Then we have $\Sym(P) \leq \G$ and $\Sym(P') \leq \G$. We can easily see that every transposition in $\Sym(P \cup P')$ is also in $\G$ by arguing as above, just that we do not need dummy-transpositions anymore in order to keep the sign even.
\end{proof}

\begin{lemma}[variation of Lemma 3 in \cite{dawarAnderson} for alternating supporting partitions]
	\label{lem:G_conjugacyOfAltSupportingPartitions}
	For any $\G \leq \Sym(A)$, and any $\sigma \in \Sym(A), \sigma \Sp_A(\G) = \Sp_A(\sigma G \sigma^{-1})$.
\end{lemma}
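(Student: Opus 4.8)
The plan is to mirror the proof of Lemma 3 in \cite{dawarAnderson}, adapting it to the alternating setting. The statement is a purely group-theoretic fact about conjugation of partitions, and since by Lemma~\ref{lem:G_uniqueAlternatingSupportingPartition} the coarsest alternating supporting partition $\Sp_A(\G)$ exists and is unique, it suffices to show two inclusions of partitions (where ``$\leq$'' means ``is a refinement of''): that $\sigma \Sp_A(\G)$ is an alternating supporting partition of $\sigma \G \sigma^{-1}$, and that $\sigma^{-1} \Sp_A(\sigma \G \sigma^{-1})$ is an alternating supporting partition of $\G$. Each of these, combined with the uniqueness/coarsest property, then pins down the equality: the first gives $\sigma \Sp_A(\G) \leq \Sp_A(\sigma\G\sigma^{-1})$ and the second, after applying $\sigma$ to both sides, gives $\Sp_A(\sigma\G\sigma^{-1}) \leq \sigma\Sp_A(\G)$.

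First I would prove the core claim: if $\Pp$ is any alternating supporting partition of $\G$, then $\sigma\Pp$ is an alternating supporting partition of $\sigma\G\sigma^{-1}$. The point is that conjugation by $\sigma$ carries the relevant local symmetry groups correctly. Concretely, for any part $P \in \Pp$ we have $\sigma \Sym(P) \sigma^{-1} = \Sym(\sigma P)$ and $\sigma \Alt(P) \sigma^{-1} = \Alt(\sigma P)$, since conjugation by a permutation is an automorphism of $\Sym(A)$ that maps the pointwise-fixing-everything-outside-$P$ symmetric (resp.\ alternating) group onto the corresponding group for $\sigma P$ — in particular it preserves cycle type and hence sign, so alternating groups go to alternating groups. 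Therefore, conjugating the defining inclusion
\[
\prod_{\stackrel{P \in \Pp}{|P| < 5}} \Sym(P) \times \prod_{\stackrel{P \in \Pp}{|P| \geq 5}} \Alt(P) \leq \G
\]
by $\sigma$ yields
\[
\prod_{\stackrel{Q \in \sigma\Pp}{|Q| < 5}} \Sym(Q) \times \prod_{\stackrel{Q \in \sigma\Pp}{|Q| \geq 5}} \Alt(Q) \leq \sigma\G\sigma^{-1},
\]
using that $\sigma$ is a bijection on parts preserving cardinalities. This is exactly the statement that $\sigma\Pp$ alternatingly supports $\sigma\G\sigma^{-1}$.

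Applying this claim with $\Pp = \Sp_A(\G)$ shows $\sigma\Sp_A(\G)$ is an alternating supporting partition of $\sigma\G\sigma^{-1}$, hence is refined by (i.e.\ at most as coarse as) the coarsest one: $\sigma\Sp_A(\G) \leq \Sp_A(\sigma\G\sigma^{-1})$. For the reverse, apply the same claim with the group $\sigma\G\sigma^{-1}$, the partition $\Sp_A(\sigma\G\sigma^{-1})$, and the permutation $\sigma^{-1}$; this gives that $\sigma^{-1}\Sp_A(\sigma\G\sigma^{-1})$ alternatingly supports $\sigma^{-1}(\sigma\G\sigma^{-1})\sigma = \G$, so $\sigma^{-1}\Sp_A(\sigma\G\sigma^{-1}) \leq \Sp_A(\G)$, and applying the (partition-preserving, refinement-preserving) action of $\sigma$ gives $\Sp_A(\sigma\G\sigma^{-1}) \leq \sigma\Sp_A(\G)$. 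The two inclusions together yield $\sigma\Sp_A(\G) = \Sp_A(\sigma\G\sigma^{-1})$. I do not expect a genuine obstacle here — the only thing to be careful about is the bookkeeping that the coarsest-supporting-partition operator is order-reversing in the obvious way and that the $\Sym_n$-action on partitions commutes with refinement, both of which are immediate; the substantive content is entirely the conjugation identity $\sigma\Alt(P)\sigma^{-1} = \Alt(\sigma P)$, which is standard.
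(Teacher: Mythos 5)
Your proof is correct and relies on the same essential computation as the paper's, namely that conjugating the defining inclusion $\prod_{|P|<5}\Sym(P)\times\prod_{|P|\geq 5}\Alt(P)\leq \G$ by $\sigma$ yields the corresponding inclusion for $\sigma\Pp$ and $\sigma\G\sigma^{-1}$. The only difference is the wrap-up: the paper invokes the fact that the $\Ee$ operator from the preceding lemma commutes with the $\Sym(A)$-action, while you instead apply the core claim in both directions and use uniqueness of the coarsest alternating supporting partition to pin down equality — a cosmetically different but equally valid way to finish, and arguably more self-contained.
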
	
\begin{proof}
	As mentioned in \cite{dawarAnderson}, it holds $\sigma \Ee(\Pp,\Pp') = \Ee(\sigma \Pp, \sigma \Pp')$ for any $\sigma \in \Sym(A)$. Therefore, it remains to show that for any alternating supporting partition $\Pp$ of $\G$, and any $\sigma \in \Sym(A)$, $\sigma \Pp$ is an alternating supporting partition of $\sigma G \sigma^{-1}$. So let
	\[
	\pi \in \prod_{\stackrel{P \in \sigma \Pp}{|P| < 5 }} \Sym(P) \times \prod_{\stackrel{P \in \sigma \Pp}{|P| \geq 5 }} \Alt(P). 
	\]
	Then 
	\[
	(\sigma^{-1} \pi \sigma) \in \prod_{\stackrel{P \in \Pp}{|P| < 5 }} \Sym(P) \times \prod_{\stackrel{P \in \Pp}{|P| \geq 5 }} \Alt(P). 
	\] 
	Therefore, $(\sigma^{-1} \pi \sigma) \in \G$ because $\Pp$ is an alternating supporting partition of $\G$. Consequently, $\pi \in \sigma \G \sigma^{-1}$, and so, $\sigma \Pp$ is an alternating supporting partition of $ \sigma \G \sigma^{-1}$.
\end{proof}

\begin{lemma}[Lemma 4 in \cite{dawarAnderson} for alternating supporting partitions]
	\label{lem:G_altSPsandwichLemma}
	Let $\G \leq \Sym(A)$. Then:
	\[	\prod_{\stackrel{P \in \Sp_A}{|P| < 5 }} \Sym(P) \times \prod_{\stackrel{P \in \Sp_A}{|P| \geq 5 }} \Alt(P) \leq \G \leq \Stab(\Sp_A(\G)).\]
\end{lemma}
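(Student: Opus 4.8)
The plan is to split the claimed chain of inclusions into its two halves: the left inclusion is essentially the definition of an alternating supporting partition, and the right inclusion follows from the conjugation behaviour of coarsest alternating supporting partitions established in Lemma~\ref{lem:G_conjugacyOfAltSupportingPartitions}.

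For the left inclusion I would simply invoke Definition~\ref{def:G_altSupportingPartition} together with Lemma~\ref{lem:G_uniqueAlternatingSupportingPartition}. The latter tells us that $\Sp_A(\G)$ is itself an alternating supporting partition of $\G$, so by definition
\[
\prod_{\stackrel{P \in \Sp_A(\G)}{|P| < 5 }} \Sym(P) \times \prod_{\stackrel{P \in \Sp_A(\G)}{|P| \geq 5 }} \Alt(P) \leq \G,
\]
which is exactly the left-hand inclusion. No further argument is needed here.

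For the right inclusion I would take an arbitrary $\sigma \in \G$ and show that $\sigma$ lies in $\Stab(\Sp_A(\G))$, i.e.\ that $\sigma$ permutes the parts of $\Sp_A(\G)$ among themselves. Since $\sigma \in \G$ we have $\sigma \G \sigma^{-1} = \G$, hence $\Sp_A(\sigma \G \sigma^{-1}) = \Sp_A(\G)$. On the other hand, Lemma~\ref{lem:G_conjugacyOfAltSupportingPartitions} gives $\sigma \Sp_A(\G) = \Sp_A(\sigma \G \sigma^{-1})$. Combining the two equalities yields $\sigma \Sp_A(\G) = \Sp_A(\G)$ as partitions of $A$, which says precisely that $\sigma$ maps every part of $\Sp_A(\G)$ onto a part of $\Sp_A(\G)$; that is, $\sigma \in \Stab(\Sp_A(\G))$. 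As $\sigma$ was arbitrary, $\G \leq \Stab(\Sp_A(\G))$, completing the proof.

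There is no genuine obstacle here, since all the content has been front-loaded into Lemmas~\ref{lem:G_uniqueAlternatingSupportingPartition} and~\ref{lem:G_conjugacyOfAltSupportingPartitions}. The only point deserving a sentence of care is the passage from ``$\sigma \Sp_A(\G) = \Sp_A(\G)$ as a partition'' to ``$\sigma$ lies in the setwise stabiliser of the partition'', which is immediate from the definition of the natural action of $\Sym(A)$ on partitions of $A$. This mirrors exactly the proof of Lemma~4 in \cite{dawarAnderson}; the only difference is that the inner direct product carries $\Alt(P)$ rather than $\Sym(P)$ on the large parts, a modification already absorbed into the statements of the two preceding lemmas, so the original argument transfers verbatim.
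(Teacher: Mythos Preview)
Your proposal is correct and follows essentially the same argument as the paper: the left inclusion is immediate from the definition of alternating supporting partition, and the right inclusion is obtained by taking $\sigma \in \G$, noting $\sigma \G \sigma^{-1} = \G$, and applying Lemma~\ref{lem:G_conjugacyOfAltSupportingPartitions} to conclude $\sigma \Sp_A(\G) = \Sp_A(\G)$. The paper's proof is just a terser version of what you wrote.
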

\begin{proof}
	The first part is by definition of alternating supporting partitions. For the second part, let $\sigma \in G$. Then $\sigma G \sigma^{-1} = G$. So $\sigma \Sp_A(G) = \Sp_A(G)$ by the preceding lemma.
\end{proof}	
Thus, every group has a unique coarsest alternating supporting partition and is sandwiched between its pointwise and setwise stabiliser. Now the reason why we introduce these \emph{alternating} supporting partitions is because they have a useful property: For groups of index $\leq \text{poly}(2^n)$, the coarsest alternating supporting partition has at most a sublinear number of singleton parts. The proof of this hinges on the following lemma, which we only prove in the appendix because this is quite lengthy and requires a few more prerequisites. In short, the lemma works similarly as Theorem 5.2 B in \cite{dixonMortimer}. The difference is that here, the index is upper-bounded by $2^{nk}$, whereas in \cite{dixonMortimer}, the bound is much smaller, namely only $n^k$. In the following, we use the notation from \cite{dixonMortimer}, so for any group $H \leq \Sym_n$, and any subset $\Delta \subseteq [n]$, $H^{(\Delta)}$ denotes the subgroup of $H$ that fixes $\Delta$ pointwise.
\begin{restatable}{lemma}{restateContainmentOfAlternatingGroup}
	\label{lem:G_containmentOfAlternatingGroup}
	Let $(G_n)_{n \in \bbN}$ be a family of groups such that for all $n$, $G_n \leq \Sym_n$. Assume that there exists some constant $k \in \bbN$ such that asymptotically, $[\Sym_n : G_n] \leq 2^{nk}$.\\
	\\
	Then there exists a constant $0 < c \leq 1$ such that for all large enough $n$, $G_n$ has a subgroup $H_n$ such that $\Alt(A) \leq H_n^{([n] \setminus A)}$ for some $H_n$-orbit $A$ of size $|A| \geq cn$.
\end{restatable}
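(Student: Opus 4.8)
The plan is to follow the strategy behind Theorem~5.2B in \cite{dixonMortimer}: reduce the general case to that of a \emph{primitive} permutation group by successively peeling off orbits and blocks, and then invoke the classical fact that a primitive permutation group of degree $p$ that does not contain $\Alt_p$ has order less than $4^p$ (for $p$ large; the sharper bound $p^{1+\log_2 p}$ would do as well). The place where our situation differs from \cite{dixonMortimer} is that the index bound here is \emph{exponential}, $2^{nk}$, rather than polynomial; accordingly the conclusion weakens from ``$\Alt$ on a co-bounded set'' to ``$\Alt$ on a linear-size orbit'', and one must be careful that the reductions stay under control.

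I would first record the elementary estimates, all consequences of Stirling's formula. Write $\phi(G,\Omega):=\log_2[\Sym(\Omega):G]$ for $G\le\Sym(\Omega)$. (i) If $\Omega$ is partitioned into the $G$-orbits of sizes $N_1\ge N_2\ge\cdots$, then $\phi(G,\Omega)\ge\log_2\binom{|\Omega|}{N_1,N_2,\dots}$, and the right-hand side is $\ge|\Omega|\log_2(1/\delta)-o(|\Omega|)$ whenever every $N_i\le\delta|\Omega|$; in particular an index bound $2^{|\Omega|\kappa}$ forces the largest orbit to have size $\ge 2^{-(\kappa+1)}|\Omega|$. (ii) Passing from $G$ to the pointwise stabiliser $G_{([n]\setminus\Omega_1)}$ of the complement of its largest orbit $\Omega_1$ gives a subgroup that fixes $[n]\setminus\Omega_1$ pointwise and, restricted to $\Omega_1$, has $\phi(\cdot,\Omega_1)\le\phi(G,\Omega)$. (iii) If $G$ is transitive on $\Omega$ and has a (maximal) block system with $b\ge 2$ blocks $B_1,\dots,B_b$ of size $s$, then the blocks are invariant under the kernel $K$ of the action on blocks, $|K|\le\prod_j|\pi_j(K)|$, and a short computation (the cancellation of the $-s\log_2 e$ Stirling terms is the point) gives both $\phi(G,\Omega)\ge |\Omega|\log_2 b\,(1-o(1))+\sum_j\phi(\pi_j(K),B_j)$ and, for $s$ large, $\phi(\pi_1(K),B_1)\le\tfrac1b\phi(G,\Omega)$. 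Two things follow: the number of blocks obeys $b\le 2^{O(\kappa)}$ (a larger $b$ would violate the index bound), and the ``exponential rate'' $\kappa$ of the index is essentially preserved when we descend into a single block, while the number of points drops by only the bounded factor $b$.

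Using these, I would argue as follows. Starting from $[\Sym_n:G_n]\le 2^{nk}$, estimate (i) yields a largest orbit $\Omega_1$ with $|\Omega_1|\ge 2^{-(k+1)}n$; replace $G_n$ by the subgroup $(G_n)_{([n]\setminus\Omega_1)}$ acting on $\Omega_1$, which by (ii) still has index $\le 2^{nk}$ in $\Sym(\Omega_1)$ and hence index $\le 2^{|\Omega_1|\kappa}$ for $\kappa=\kappa(k)$. Now iterate reductions of type (ii)/(iii): whenever the current group is intransitive on its point set, its orbits are equal-sized (normality of the relevant kernel), so this is in fact a block situation and we apply (iii); whenever it is transitive but imprimitive we again apply (iii). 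By the depth bound built into (iii) — each level contributes $\ge(\text{current size})(1-o(1))$ to the index exponent, which is at most $\kappa\cdot(\text{current size})$ — at most $2\kappa$ such steps occur before we reach a subgroup $H\le G_n$ that fixes its complement pointwise and acts \emph{primitively} on a set $A$ with $|A|\ge c(k)\,n$ and $[\Sym(A):H]\le 2^{|A|\kappa}$. Finally, since a primitive group on $p=|A|$ points not containing $\Alt_p$ has order $<4^p$, while $|H|\ge p!/2^{p\kappa}\gg 4^p$ for $p$ (hence $n$) large, we get $\Alt(A)\le H\le G_n$. Taking $H_n:=H$, the set $A$ is an $H_n$-orbit (as $\Alt(A)\le H_n\le\Sym(A)$) and $\Alt(A)\le H_n^{([n]\setminus A)}=H_n$, with $|A|\ge cn$ for a constant $c=c(k)>0$, which is the claim.

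The main obstacle is exactly this bookkeeping. Each reduction step must be realised by passing to an honest \emph{subgroup} of $G_n$ that fixes the discarded points pointwise — not merely to a restriction/quotient — and one must simultaneously guarantee (a) that the exponential rate $\kappa$ of the index is preserved up to a bounded factor and (b) that the point set shrinks by only a bounded factor, so that a primitive constituent on a \emph{linearly large} set is reached after only $O_k(1)$ steps. Getting (a) and (b) right is what prevents a tower-of-exponentials blow-up of the parameters (which a naive ``always peel off the largest orbit'' iteration would cause) and is the reason the estimate (iii), with the full multiplicity $b$ and careful Stirling bookkeeping, is needed rather than a cruder split into ``largest orbit versus the rest''. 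Everything else — the Stirling/entropy inequalities, the bound on orders of primitive groups, and the observation that discarding a constant fraction of points is expensive — is routine, and the whole scheme is a direct adaptation of \cite{dixonMortimer} to the exponential-index regime.
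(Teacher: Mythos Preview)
Your reduce-to-primitive strategy is natural, but there is a real gap at the heart of the iteration. In step~(iii) you correctly derive $\phi(\pi_1(K),B_1)\le\tfrac1b\,\phi(G,\Omega)$, and this is what underlies your claim that ``the exponential rate $\kappa$ of the index is essentially preserved when we descend into a single block''. But $\pi_1(K)$ is the \emph{restriction} of the block kernel to $B_1$, i.e.\ a quotient of $K$; the honest subgroup you must pass to (as you yourself stress in the final paragraph) is $K_{(\Omega\setminus B_1)}$, whose action on $B_1$ can be a proper---even trivial---subgroup of $\pi_1(K)$. For that subgroup only your estimate~(ii) is available, giving $\phi_{\text{new}}\le\phi(K,\Omega)\le\phi(G,\Omega)+\log_2 b!$ and hence $\kappa_{\text{new}}\lesssim b\,\kappa_{\text{old}}\le 2^{O(\kappa_{\text{old}})}\kappa_{\text{old}}$. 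Under iteration $\kappa$ therefore blows up tower-fashion, and your ``at most $2\kappa$ steps'' bound, which rests on $\kappa$ staying bounded, is not justified; consequently there is no guarantee that the primitive constituent you eventually reach has linear size in $n$.

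The paper closes this gap by a different route. It first uses the Babai--Cameron--P\'alfy bound (Lemma~\ref{lem:G_babai_alternatingGroup}) to secure an $\Alt_m$ \emph{composition factor} with $m\ge cn$, and then runs an induction on the compositional structure (Lemma~\ref{lem:G_groupInduction}) that tracks where this specific factor sits through the intransitive/imprimitive/primitive case split. The restriction-versus-subgroup problem is handled by Lemma~\ref{lem:G_fixingTheRestPointwise}: if $H$ acts as $\Alt(A)$ on an orbit $A$ but $H^{([n]\setminus A)}$ fails to contain $\Alt(A)$, then the action on $A$ is determined by the action on $[n]\setminus A$, whence $|H|\le(n-|A|)!$, which directly forces the index to exceed $2^{nk}$ without any further descent. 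This dichotomy---either the pointwise stabiliser already contains $\Alt(A)$, or $|G|$ is small---is precisely the ingredient missing from your scheme. (The paper also treats the primitive case via Babai's structural theorem on Johnson actions rather than a bare order bound, though that difference is less essential.)
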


With this lemma, one can show:
\begin{restatable}{theorem}{restateAlternatingPartitionWithLogManyParts}
	\label{thm:G_alternatingPartitionWithLogManyParts}
	Let $k \in \bbN$ be a constant and $(G_n)_{n \in \bbN}$ be a family of groups such that $G_n \leq \Sym_n$ and for all large enough $n$, $[\Sym_n : G_n] \leq 2^{nk}$. Then the number of singleton parts in $\Sp_A(G_n)$ grows at most sublinearly. In other words: There is no constant $0 < c \leq n$ such that for all large enough $n$, there exists a $\Delta_n \subseteq [n]$ of size $|\Delta_n| \geq cn$ on which $\Sp_A(G_n)$ contains only singleton parts.
	%Thenfor all large enough $n$, there exists a partition $\Pp$ of $[n]$ with at most $\Oo(\log(n))$ many parts such that
	%\[
	%\prod_{P \in \Pp} \Alt(P) \leq G_n.
	%\] 
\end{restatable}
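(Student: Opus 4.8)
The plan is to derive the statement by contradiction, using Lemma~\ref{lem:G_containmentOfAlternatingGroup} as the main engine and the minimality of the coarsest alternating supporting partition (Lemmas~\ref{lem:G_uniqueAlternatingSupportingPartition} and~\ref{lem:G_altSPsandwichLemma}) to close the loop. So suppose, for contradiction, that there is a constant $c > 0$ such that for all large enough $n$ there is a set $\Delta_n \subseteq [n]$ with $|\Delta_n| \geq cn$ on which $\Sp_A(G_n)$ induces only singleton parts. The first step is to pass to the subgroup $G_n' := G_n^{([n] \setminus \Delta_n)}$, the pointwise stabiliser of $[n]\setminus\Delta_n$ inside $G_n$, regarded as a subgroup of $\Sym(\Delta_n)$. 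The elementary fact that $[K : H \cap K] \leq [\Sym_n : H]$ for any $H,K \leq \Sym_n$ (applied to $K = \Sym(\Delta_n)$, $H = G_n$) gives $[\Sym(\Delta_n) : G_n'] \leq [\Sym_n : G_n] \leq 2^{nk}$; since $n \leq |\Delta_n|/c$, this is at most $2^{|\Delta_n|\, k'}$ with $k' := \lceil k/c \rceil$ a constant.

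The second step is to apply Lemma~\ref{lem:G_containmentOfAlternatingGroup} to the family $(G_n')_n$, regarded as subgroups of $\Sym(\Delta_n) \cong \Sym_{|\Delta_n|}$, with index constant $k'$. This produces a constant $c'' > 0$ and, for all large $n$, a subgroup $H_n' \leq G_n'$ and an $H_n'$-orbit $B_n \subseteq \Delta_n$ of size $|B_n| \geq c''|\Delta_n| \geq c''cn$ with $\Alt(B_n) \leq (H_n')^{(\Delta_n \setminus B_n)}$. Since $H_n'$ already fixes $[n]\setminus\Delta_n$ pointwise, this says that $\Alt(B_n)$ — embedded into $\Sym_n$ by letting it fix $[n]\setminus B_n$ pointwise — is contained in $G_n$. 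For $n$ large enough $|B_n| \geq 5$, so the partition $\mathcal{Q}_n := \{B_n\} \cup \{\{i\} : i \in [n]\setminus B_n\}$ satisfies $\prod_{P \in \mathcal{Q}_n, |P| < 5}\Sym(P) \times \prod_{P \in \mathcal{Q}_n, |P| \geq 5}\Alt(P) = \Alt(B_n) \leq G_n$; that is, $\mathcal{Q}_n$ is an alternating supporting partition of $G_n$.

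The final step is to invoke minimality: by the fact established in the proof of Lemma~\ref{lem:G_uniqueAlternatingSupportingPartition}, the coarsest alternating supporting partition $\Sp_A(G_n)$ is a coarsening of every alternating supporting partition, in particular of $\mathcal{Q}_n$. Hence the part of $\Sp_A(G_n)$ containing any $b \in B_n$ contains all of $B_n$, so it has size $\geq |B_n| \geq 5$. But $b \in B_n \subseteq \Delta_n$, and by assumption every point of $\Delta_n$ lies in a singleton part of $\Sp_A(G_n)$ — a contradiction, which completes the proof. I expect the only delicate points to be the index bookkeeping that converts the $2^{nk}$ bound into an index bound over $\Sym(\Delta_n)$, and a small re-indexing remark needed to legitimately apply Lemma~\ref{lem:G_containmentOfAlternatingGroup} to $(G_n')_n$ (whose degrees are $|\Delta_n|$ rather than $n$, but which still satisfy the lemma's hypothesis with a constant depending only on $k'$); the genuine difficulty of the theorem is absorbed into Lemma~\ref{lem:G_containmentOfAlternatingGroup} itself, proved in the appendix along the lines of Theorem~5.2B of \cite{dixonMortimer}.
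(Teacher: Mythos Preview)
Your proof is correct and follows the same overall strategy as the paper: pass to the pointwise stabiliser $G_n' = G_n^{([n]\setminus\Delta_n)}$, apply Lemma~\ref{lem:G_containmentOfAlternatingGroup} to it, and obtain a large alternating subgroup inside $\Delta_n$, contradicting the singleton assumption via the coarseness of $\Sp_A(G_n)$.

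The execution differs in a way worth noting. The paper argues \emph{indirectly}: it supposes $[\Sym(\Delta_n):G_n']$ is small, derives the same contradiction you do, concludes that this index must exceed $2^{nk}$ for every $k$, and then uses a Stirling computation (plus a case distinction on whether $|\Delta_n| \leq dn$ for some $d<1$) to push this back to $[\Sym_n:G_n] > 2^{nk}$, contradicting the hypothesis. You instead observe directly that $[\Sym(\Delta_n):G_n'] = [\Sym(\Delta_n): G_n \cap \Sym(\Delta_n)] \leq [\Sym_n:G_n] \leq 2^{nk}$ via the elementary inequality $[K:H\cap K]\leq [G:H]$, which immediately places you in the hypothesis of Lemma~\ref{lem:G_containmentOfAlternatingGroup} and yields the contradiction in one step. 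Your route avoids both the Stirling estimate and the case split, and is the cleaner argument.
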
	
\noindent
\textit{Proof sketch.} We let $\Delta_n \subseteq [n]$ be the set of points which are in singleton parts of $\Sp_A(G_n)$. For a contradiction, we assume that $|\Delta_n|$ grows linearly. Then we consider the action of $G_n$ on $\Delta_n$, while every point outside of $\Delta_n$ is fixed. Denote this subgroup of $G_n$ as $H_n$. Then
 Lemma \ref{lem:G_containmentOfAlternatingGroup} applied to $H_n$ yields a contradiction. One can show that $[\Sym(\Delta_n) : H_n] \leq 2^{nk}$, just like it is assumed for the index of $G_n$ in $\Sym_n$. Hence, Lemma \ref{lem:G_containmentOfAlternatingGroup} entails that $H_n$ contains an alternating group on a linearly-sized orbit $A$. However, then $A$ would be a single part in $\Sp_A(G_n)$, which contradicts the fact that the permutation domain of $H_n$ is the set of positions in singleton parts. The full proof is in the appendix. \hfill \qedsymbol \\ 

\subsection{The cancellation of input bits in highly symmetric circuits}
We now return to the proof of Theorem \ref{thm:XOR2_symmetricCircuitsDoNotCompute}. The alternating supporting partitions are used to approximate the stabiliser groups of the gates.
For a gate $g \in V_{C_n}$, we denote by $\Sp_A(g)$ the coarsest alternating supporting partition of the group 
\[
\Stab_n(g) := \{  \pi \in \Sym_n \mid \pi \text{ extends to an automorphism of } C_n \text{ that fixes } g \}.
\]
Here, we mean that at least one of the automorphisms that $\pi$ extends to fixes $g$. It can be seen that $\Stab_n(g)$ is indeed a subgroup of $\Sym_n$ because it contains the identity permutation, and: If $\pi, \pi' \in \Stab_n(g)$, then there exist circuit automorphisms $\sigma, \sigma'$ that $\pi, \pi'$ extend to such that $\sigma(g) = \sigma'(g) = g$. Thus, $\sigma \circ \sigma'$ fixes $g$ and is a circuit automorphism that $\pi \circ \pi'$ extends to. Thus, $\Stab_n(g) \leq \Sym_n$.\\

Importantly, $\Stab_n(g) \leq \Stab(\Sp_A(g))$ (Lemma \ref{lem:G_altSPsandwichLemma}), so every permutation in $\Stab_n(g)$ acts as a permutation on the parts of $\Sp_A(g)$. The supporting partition of an automorphic image of a gate can be obtained by applying a corresponding permutation in $\Sym_n$ to the supporting partition:
\begin{lemma}
	\label{lem:XOR2_supPartsGetShifted}
	Let $g, \sigma g$ be two gates in $C_n$, for a $\sigma \in \Aut(C_n)$. Let $\pi \in \Sym_n$ be a permutation that extends to the circuit automorphism $\sigma$. Then
	\[
	\Sp_A(\sigma g) = \pi (\Sp_A(g)).
	\]
\end{lemma}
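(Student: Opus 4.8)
The plan is to reduce the statement to the conjugation property of coarsest alternating supporting partitions, namely Lemma~\ref{lem:G_conjugacyOfAltSupportingPartitions}, which gives $\pi\,\Sp_A(\G) = \Sp_A(\pi \G \pi^{-1})$ for any $\G \leq \Sym_n$ and $\pi \in \Sym_n$. Since $\Sp_A(g)$ is by definition the coarsest alternating supporting partition $\Sp_A(\Stab_n(g))$, and likewise $\Sp_A(\sigma g) = \Sp_A(\Stab_n(\sigma g))$, it suffices to prove the group identity
\[
\Stab_n(\sigma g) = \pi\, \Stab_n(g)\, \pi^{-1}.
\]
Then Lemma~\ref{lem:G_conjugacyOfAltSupportingPartitions}, applied to $\G = \Stab_n(g)$ and the permutation $\pi$, immediately yields $\Sp_A(\sigma g) = \Sp_A(\pi\Stab_n(g)\pi^{-1}) = \pi\,\Sp_A(g)$, which is the claim.

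For the group identity I would first record two routine facts. One, $\Aut(C_n)$ is a group: it contains the identity and is closed under composition and inversion, since being an automorphism of the rooted DAG is preserved under these operations and the label condition composes. Two, if $\pi \in \Sym_n$ extends to $\sigma \in \Aut(C_n)$, then $\pi^{-1}$ extends to $\sigma^{-1}$: applying the label equation $\ell(\sigma(g')) = \pi(\ell(g'))$ to $g' = \sigma^{-1}(h)$ for an input gate $h$ gives $\ell(h) = \pi(\ell(\sigma^{-1}(h)))$, hence $\ell(\sigma^{-1}(h)) = \pi^{-1}(\ell(h))$. More generally, if $\pi$ extends to $\sigma$ and $\tau$ extends to $\rho$, then $\pi\tau\pi^{-1}$ extends to $\sigma\rho\sigma^{-1}$, by chaining the three label equations.

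Now let $\tau \in \Stab_n(g)$, and choose a circuit automorphism $\rho$ extending $\tau$ with $\rho(g) = g$ (such a $\rho$ exists by the definition of $\Stab_n(g)$). I claim $\sigma\rho\sigma^{-1}$ witnesses $\pi\tau\pi^{-1} \in \Stab_n(\sigma g)$: it lies in $\Aut(C_n)$, it is extended by $\pi\tau\pi^{-1}$ by the observation above, and $(\sigma\rho\sigma^{-1})(\sigma g) = \sigma(\rho(g)) = \sigma(g)$. Hence $\pi\,\Stab_n(g)\,\pi^{-1} \leq \Stab_n(\sigma g)$. The reverse inclusion follows by the same argument with $\sigma^{-1}$ (extended by $\pi^{-1}$) in place of $\sigma$ and $\sigma g$ in place of $g$: it yields $\pi^{-1}\,\Stab_n(\sigma g)\,\pi \leq \Stab_n(\sigma^{-1}(\sigma g)) = \Stab_n(g)$, i.e.\ $\Stab_n(\sigma g) \leq \pi\,\Stab_n(g)\,\pi^{-1}$. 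Combining the two inclusions gives the group identity, and hence the lemma.

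I do not expect a genuine obstacle; the only point requiring care is bookkeeping about which circuit automorphism realises a given permutation. Since a single $\pi \in \Sym_n$ may extend to several circuit automorphisms, the argument must conjugate the specific automorphism $\rho$ that fixes $g$ (not an arbitrary one extending $\tau$) and then verify that the conjugate $\sigma\rho\sigma^{-1}$ simultaneously fixes $\sigma g$ and extends $\pi\tau\pi^{-1}$ — which is exactly what the two auxiliary facts above are for.
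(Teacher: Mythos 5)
Your proof is correct and takes essentially the same route as the paper's: the paper also invokes Lemma \ref{lem:G_conjugacyOfAltSupportingPartitions} and reduces to the group identity $\pi\,\Stab_n(g)\,\pi^{-1} = \Stab_n(\sigma g)$, which it simply asserts; your version fills in the verification of that identity.
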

\begin{proof}
	By Lemma \ref{lem:G_conjugacyOfAltSupportingPartitions}, the partition $\pi \Sp_A(g)$ is the coarsest alternating supporting partition of the group $\pi \Stab_n(g) \pi^{-1}$. It holds $\pi \Stab_n(g) \pi^{-1} = \Stab_n(\sigma g)$.
\end{proof}

Next, we would like to formalise what it means that a given input gate $g$ cancels itself out in the circuit:
\begin{lemma}
	\label{lem:XOR2_evenPaths}
	Let $g$ be an input gate of an XOR-circuit $C$. The circuit $C$ is sensitive to the input gate $g$ if and only if the number of distinct paths from the root to $g$ is odd.
\end{lemma}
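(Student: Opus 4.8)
The plan is to reduce the statement to a fact about the characteristic vectors $\Xx(\cdot)$ attached to the gates, and then to establish that fact by a straightforward induction over the DAG. First I would observe that, since the labelling $\ell$ of the input gates is injective, the gate $g$ is the \emph{unique} input gate carrying the edge $e := \ell(g)$; consequently the variable $X_e$ occurs in the Boolean function computed by a gate $h$ precisely when $e \in \Xx(h)$, and the circuit $C$ being sensitive to the input gate $g$ means exactly that $e \in \Xx(r)$, where $r$ is the root of $C$. So it suffices to relate the membership $e \in \Xx(r)$ to the parity of the number of root-to-$g$ paths.

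The core claim I would prove is: for every gate $h$ of $C$, if $N(h)$ denotes the number of directed paths (following the wires in $E_C$) from $h$ down to $g$, then $e \in \Xx(h)$ if and only if $N(h)$ is odd. I would prove this by induction on the height of $h$ in the DAG (the length of a longest directed path from $h$ to a leaf). In the base case $h$ is a leaf: if $h = g$ then $\Xx(g) = \{e\}$ and $N(g) = 1$, so both sides hold; if $h$ is a leaf with $h \neq g$ then $\ell(h) \neq e$ by injectivity of $\ell$, hence $e \notin \Xx(h)$, while $h$ has no outgoing wires so $N(h) = 0$, and again both sides agree. For the inductive step, $h$ is internal, so $\Xx(h) = \bigtriangleup_{h' \in hE_C} \Xx(h')$; reading this off at the coordinate $e$ shows that $e \in \Xx(h)$ exactly when $e$ lies in an odd number of the sets $\Xx(h')$, $h' \in hE_C$. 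On the other hand, since $h$ is internal we have $h \neq g$, so every path from $h$ to $g$ decomposes uniquely into a first wire $(h,h')$ followed by a path from $h'$ to $g$; hence $N(h) = \sum_{h' \in hE_C} N(h')$ as natural numbers, and in particular modulo $2$. Combining this with the induction hypothesis applied to each child $h'$ yields that $e \in \Xx(h)$ iff $N(h)$ is odd, completing the induction.

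Instantiating the core claim at $h = r$ gives that $e \in \Xx(r)$ — equivalently, $C$ is sensitive to the input gate $g$ — if and only if $N(r)$, the number of distinct paths from the root to $g$, is odd, which is the assertion. I do not expect a genuine obstacle here: the only points requiring a little care are the bookkeeping that all contributions to the coordinate $e$ come from the single input gate $g$ (this is precisely where injectivity of $\ell$ is used), and the observation that in a DAG the number of directed paths between two vertices is finite, so that $N(h)$ and the recursion $N(h) = \sum_{h' \in hE_C} N(h')$ are well defined. The induction itself mirrors the recursive definition of $\Xx$.
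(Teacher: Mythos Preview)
Your proof is correct and follows essentially the same inductive argument as the paper: both decompose the path count at a gate as the sum of path counts at its children and match this against the recursive definition of the XOR computed there. The only cosmetic difference is that you induct bottom-up on the height of a gate and phrase everything via the sets $\Xx(\cdot)$, whereas the paper inducts on circuit size by recursing into the subcircuits rooted at the children of the root; your explicit use of the injectivity of $\ell$ is a nice touch that the paper leaves implicit.
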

\begin{proof}
	Via induction on the number of gates in $C$. In a circuit where $g$ is the root, there is only one path and the input determines the output. The smallest possible case where the number of paths from the root to $g$ is even is if $C$ consists of a root with two children $h_1,h_2$, and one input gate $g$ that is the child of both $h_1$ and $h_2$. Clearly, the input bit is canceled in the root.\\
	For the inductive step, let $h_1,...,h_m$ be the children of the root $r$. Let $p_i$ be the number of distinct paths from $h_i$ to the input $g$. By the induction hypothesis, the output of $h_i$ depends on $g$ iff $p_i$ is odd (use the statement for the smaller subcircuit rooted at $h_i$). The number of distinct paths from $r$ to $g$ is $\sum p_i$. That number is even iff an even number of the $p_i$ is odd. Then the input $g$ has no influence on the value computed at $r$ because only the $h_i$ with $p_i$ odd are sensitive to $g$, and these effects cancel at $r$ because it is an even number. If an odd number of the $p_i$ is odd, then $r$ is sensitive to $g$.
\end{proof}

Thus, our goal is to prove that the number of paths between the root and each input gate labelled with a ``too balanced edge'' of $\H_n$ is even.
\begin{comment}
	. As one would expect, these path counts are invariant under the symmetries of the circuit.
	\begin{lemma}
		\label{lem:XOR2_samePathCounts}
		Let $g,g'$ be two gates in $C_n$ in the same $\Aut(C_n)$-orbit. Then the number of paths from the root to $g$ is the same as from the root to $g'$.
	\end{lemma}
	\begin{proof}
		This is necessarily the case since every automorphism fixes the root and maps paths to paths and ``non-paths'' to non-paths. 
	\end{proof}
\end{comment}
Now the technical theorem that we want to prove in the next step reads as follows. From it, Theorem \ref{thm:XOR2_symmetricCircuitsDoNotCompute} follows with Lemma \ref{lem:XOR2_evenPaths}.

\begin{theorem}
	\label{thm:XOR2_evenPaths}
	Let $(C_n)_{n \in \bbN}$ be a family of XOR-circuits with the properties mentioned in Theorem \ref{thm:XOR2_symmetricCircuitsDoNotCompute}.
	Let a gate $g_n$ in every $C_n$ and a constant $\epsilon > 0$ be fixed such that $\Sp_A(g_n)$ contains at least two parts of size $\geq \epsilon \cdot n$.\\ 
	Then for all large enough $n$, the number of distinct paths from the root of $C_n$ to $g_n$ is even.
\end{theorem}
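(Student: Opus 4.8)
The plan is to reduce, via Lemma~\ref{lem:XOR2_evenPaths}, to showing that the number $\nu(r,g_n)$ of directed paths from the root $r$ of $C_n$ to $g_n$ is even for all large $n$, and then to run an induction that peels off the last edge of such a path. Since every path from $r$ to $g_n$ has a unique penultimate gate, which is a parent of $g_n$, we have $\nu(r,g_n)=\sum_{h\in E_{C_n}g_n}\nu(r,h)$. Now we use symmetry: a circuit automorphism fixing $g_n$ also fixes $r$, so it permutes the parents of $g_n$ and bijects the paths $r\to h$ with the paths $r\to\sigma(h)$; hence $\nu(r,h)$ is constant on each orbit of parents of $g_n$ under the group of such automorphisms. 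Writing $O_1,\dots,O_m$ for these orbits with representatives $h_1,\dots,h_m$, we get
\[
  \nu(r,g_n)\;\equiv\;\sum_{j=1}^{m}|O_j|\cdot\nu(r,h_j)\pmod 2 ,
\]
so it suffices to prove that $\nu(r,h_j)$ is even whenever $|O_j|$ is odd. The induction does exactly this, provided each such $h_j$ still carries the structural property that makes the hypothesis applicable.

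To control the orbits of parents we bring in the group theory. By hypothesis $\Sp_A(g_n)$ has two parts $P_1,P_2$ of size $\ge\epsilon n$, so by the alternating sandwich lemma (Lemma~\ref{lem:G_altSPsandwichLemma}) the group $\Alt(P_1)\times\Alt(P_2)$ sits inside $\Stab_n(g_n)$; lifting to circuit automorphisms fixing $g_n$ (possible since such lifts exist and are closed under composition, up to a bounded factor if $C_n$ is not rigid), it acts on the parents of $g_n$ and on the paths to $g_n$. For a parent $h$ of $g_n$, Property~3 of Theorem~\ref{thm:XOR2_symmetricCircuitsDoNotCompute} gives $|\Orbit_{(g_n)}(h)|\le f(n)=\Oo(n)$, which bounds the index of the stabiliser of $h$ inside $\Alt(P_1)\times\Alt(P_2)$ by $\Oo(n)$. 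For large $n$ this index is below $\binom{|P_1|}{2}$ and $\binom{|P_2|}{2}$, so by the simplicity of $\Alt(m)$ and the classification of its small-index subgroups, this stabiliser contains $\Alt(P_1')\times\Alt(P_2')$ with $|P_i\setminus P_i'|=\Oo(1)$. Hence every orbit $O_j$ of parents is either a singleton — in which case $h_j$ is fixed by all of $\Alt(P_1)\times\Alt(P_2)$, so $\Sp_A(h_j)$ still contains the undiminished parts $P_1,P_2$ — or has size at least $\epsilon n-\Oo(1)$; and in both cases $\Sp_A(h_j)$ has two parts of size at least $\epsilon n-\Oo(1)$. The identical estimate, applied to any edge of the circuit, shows that replacing a gate by a parent of it shrinks two distinguished large parts by only $\Oo(1)$ and turns only $\Oo(1)$ further points into singletons.

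With these facts the induction runs on the size of the sub-DAG of gates that are reachable from $r$ and can reach the current target gate $g$ — this strictly decreases when we pass to a parent of $g$ — under the inductive invariant ``$\Sp_A(g)$ contains two parts of size $\ge\delta n$'' for a fixed $\delta=\delta(\epsilon)>0$; the base cases ($g$ a child of $r$, or the sub-DAG trivial) are immediate, since there the hypothesis on $\Sp_A(g)$ is vacuous or forces $\nu(r,g)$ to be a small even number. The main obstacle is that a path from $r$ to $g_n$ can be far longer than $\delta n$, so propagating ``two parts of size $\ge\delta n$'' naively up the circuit would grind the two large parts down to nothing well before reaching the root; the point is therefore to re-establish the invariant \emph{with the same $\delta$} after each step, rather than letting it decay level by level. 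Handling this is the technical heart of the proof and is where Theorem~\ref{thm:G_alternatingPartitionWithLogManyParts} enters: it guarantees that in a polynomial-size circuit the alternating supporting partition of \emph{every} gate has only $o(n)$ singleton parts, and, combined with the observation that a step through a non-singleton orbit necessarily removes a point from one of the large parts, this caps the total erosion of the two large parts along the portion of the circuit that actually carries paths to $g_n$; choosing $\delta$ small enough then lets the invariant survive all the way to the root. Once it does, every odd-size orbit $O_j$ in the parent decomposition satisfies $\nu(r,h_j)\equiv 0\pmod 2$ by the inductive hypothesis, so $\nu(r,g_n)\equiv\sum_{j:\,|O_j|\text{ odd}}\nu(r,h_j)\equiv 0\pmod 2$, as required.
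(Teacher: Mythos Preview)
Your high-level structure is sound: decompose $\nu(r,g_n)$ over parent orbits, and show that whenever an orbit $O_j$ has odd size, $\nu(r,h_j)$ is even by induction. The gap is exactly where you flag it yourself. Your invariant ``$\Sp_A(h)$ has two parts of size $\geq \delta n$'' genuinely erodes by $\Oo(1)$ at every step with a non-singleton odd orbit, and path lengths in $C_n$ are not $\Oo(n)$ but can be polynomial in $2^n$. Your appeal to Theorem~\ref{thm:G_alternatingPartitionWithLogManyParts} does not fix this: that theorem bounds the number of singleton parts in the supporting partition of a \emph{single} gate, not the cumulative erosion along an unboundedly long path. There is no mechanism in your argument that prevents the two large parts from shrinking below $\delta n$ before the recursion terminates.

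The paper circumvents the erosion problem by tracking a different invariant. Instead of following two specific large parts, it fixes an even threshold $s$ with $\epsilon n \leq s \leq \min\{s_1,s_2\}$ and tracks the \emph{count} $\sum_{i\geq s}\zeta[h](i)$ of parts of size $\geq s$. The key technical lemma (Lemma~\ref{lem:XOR2_mainPartSizeLemma}) analyses exactly how size profiles change between a gate and its parent when the orbit size is bounded by $\Oo(n)$; its corollary (Corollary~\ref{cor:XOR2_numberOfLargePartsStaysHigh}) shows that if $|\Orbit_{(g)}(h)|$ is odd, then this count is \emph{non-decreasing} from $g$ to $h$ --- no erosion at all. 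Since the count is $\geq 2$ at $g_n$ and equals $1$ at the root (where $\Sp_A(r)=\{[n]\}$), every path must contain a step with an even orbit. The counting is then organised via \emph{orbit profiles}: each path has a unique profile, and the number of paths with a given profile is the product $\prod_i |\Orbit_{(h_i)}(h_{i-1})|$ (Lemma~\ref{lem:XOR2_numPathsWithProfile}), which contains an even factor. Your recursive decomposition is morally equivalent to grouping by orbit profile; the missing idea is replacing your eroding invariant by the exactly-monotone count of large parts.
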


The proof idea is vaguely similar to a technique known as ``bottleneck counting'', that has been used in proof complexity to establish lower bounds for resolution. 
Roughly speaking, we associate with every gate in a circuit a certain quantity of which we know that it must be high in the root and much lower in the input gates. Furthermore, we will prove that this quantity can only change by a small amount as we move from a gate to its parents. In other words: The quantity cannot ``jump'' from the low value at the leafs to the high value at the root, but it has to pass through many intermediate values in the middle of the circuit. We will then show that certain intermediate values, which must necessarily occur at some gates, entail that the number of paths from the root to the gate is even.\\ 

First, here is an observation about supporting partitions (or partitions in general). In most cases, their orbit has at least quadratic size, unless the partition has a very particular shape.
\begin{lemma}
	\label{lem:XOR2_quadraticOrbits}
	Let $\Pp$ be some partition of $[n]$, for some $n \geq 4$, and let $\Stab_n(\Pp) \leq \Alt_n$ denote the setwise stabiliser of the partition in the alternating group. Then the orbit size of $\Pp$, that is, $(1/2)n!/|\Stab_n(\Pp)|$, is at least $\Omega(n^2)$ unless $\Pp$ has one of the following forms:
	\begin{itemize}
		\item $\Pp = \{ [n] \}$.
		\item $\Pp = \{ \{s\}, [n] \setminus \{s\}  \}$, for some $s \in [n]$.
		\item $\Pp = \{ \{s\} \mid s \in [n]  \}$.
	\end{itemize}
\end{lemma}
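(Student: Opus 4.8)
The plan is to classify partitions $\Pp$ of $[n]$ according to their orbit size under the conjugation action of $\Alt_n$, and to show that the three listed shapes are the only ones with subquadratic orbit. The orbit size of $\Pp$ is $\frac{1}{2}n!/|\Stab_n(\Pp)|$ where $\Stab_n(\Pp) = \Alt_n \cap \Stab_{\Sym_n}(\Pp)$, and $\Stab_{\Sym_n}(\Pp)$ is the standard setwise stabiliser, which (if $\Pp$ has $m_j$ parts of size $j$ for each $j$) has order $\prod_j (j!)^{m_j} m_j!$. Since $\Alt_n$ has index $2$ in $\Sym_n$, the $\Alt_n$-orbit of $\Pp$ has size either equal to or half the $\Sym_n$-orbit of $\Pp$; in particular it is $\Theta$ of $\frac{n!}{\prod_j (j!)^{m_j} m_j!}$. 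So it suffices to show that this quantity is $\Omega(n^2)$ for every partition except the three exceptional ones.

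First I would handle the partition $\Pp = \{[n]\}$ (one part) and the partition into all singletons: both are fixed setwise by all of $\Sym_n$, so their orbit has size $1$ — these are genuine exceptions. Then I would consider a partition with exactly one singleton $\{s\}$ and one part $[n]\setminus\{s\}$ of size $n-1$; its $\Sym_n$-stabiliser is $\Sym_{n-1}$ (permuting the big part), so the orbit has size $n$, which is $O(n)$, not $\Omega(n^2)$ — again an exception. For every other partition I need a lower bound of $\Omega(n^2)$ on $n!/\prod_j (j!)^{m_j} m_j!$. I would do a short case analysis on the shape: (a) if there are at least two parts of size $\geq 2$, or one part of size $\geq 2$ together with at least two parts of size $\geq 2$ elsewhere, the multinomial coefficient counting the ways to distribute the elements among the parts is already at least $\binom{n}{2}\binom{n-2}{2}/\text{const} = \Omega(n^2)$ (choosing which pairs go where); (b) if there is exactly one part of size $\geq 2$ — say of size $k$ with $2 \leq k \leq n-2$ — and all other parts are singletons, then the orbit size is the number of $k$-subsets of $[n]$, namely $\binom{n}{k}$, and since $k \notin \{0,1,n-1,n\}$ we have $\binom{n}{k} \geq \binom{n}{2} = \Omega(n^2)$; (c) the remaining cases have either zero parts of size $\geq 2$ (all singletons — exceptional) or a unique part of size $\geq 2$ equal to $n$ or $n-1$ (the single-part or single-singleton exceptional cases).

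The one subtlety, and the only place requiring care, is the passage from the $\Sym_n$-orbit to the $\Alt_n$-orbit: these differ by a factor of at most $2$, because $|\Alt_n \cap \Stab_{\Sym_n}(\Pp)|$ is either $|\Stab_{\Sym_n}(\Pp)|$ or $\frac{1}{2}|\Stab_{\Sym_n}(\Pp)|$ depending on whether $\Stab_{\Sym_n}(\Pp)$ contains an odd permutation. Since a factor of $2$ does not affect an $\Omega(n^2)$ bound, the analysis above goes through verbatim; I would just remark this at the start. I expect the main (minor) obstacle to be making the case distinction in the "unique non-singleton part" case clean — in particular being careful that $k=n-1$ really does fall into the exceptional second bullet and $k=n$ into the first — but this is entirely routine bookkeeping with binomial coefficients, using $n \geq 4$ to ensure $\binom{n}{2} \geq 6$ and that $2 \le k \le n-2$ forces $\binom{n}{k}\ge\binom{n}{2}$.
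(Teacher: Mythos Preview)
Your overall strategy --- compute the $\Sym_n$-orbit of $\Pp$ via the multinomial formula $n!/\prod_j (j!)^{m_j} m_j!$, observe that the $\Alt_n$-orbit differs from this by at most a factor of $2$, and then run a case analysis on the shape --- is correct and is essentially what the paper does. Your treatment of the three exceptional shapes, of the $\Sym_n$-versus-$\Alt_n$ issue, and of case~(b) (exactly one non-singleton part of size $2 \le k \le n-2$, giving orbit $\binom{n}{k} \ge \binom{n}{2}$) is clean and right.

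There is, however, a genuine error in your case~(a). First, the description is garbled: ``one part of size $\geq 2$ together with at least two parts of size $\geq 2$ elsewhere'' already implies ``at least two parts of size $\geq 2$'', so the disjunction is redundant. More importantly, your claimed lower bound $\binom{n}{2}\binom{n-2}{2}/\text{const}$ is too strong and is actually false: the shape $\{2, n-2\}$ has two parts each of size $\geq 2$ (so it lies in your case~(a)), but its orbit is exactly $\binom{n}{2}$, which is $\Theta(n^2)$ and not $\Theta(n^4)$. So ``choosing which pairs go where'' over-counts when one of the parts is large. The conclusion you actually need in case~(a), namely $\Omega(n^2)$, \emph{is} true, but your argument as written does not establish it.

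The paper fixes this with a different case split on a part $P$ with $2 \le |P| \le n-2$: if $|P| > n/2$ then $P$ is the unique part of its size, so $\Stab_n(\Pp)$ must fix $P$ setwise and the orbit is at least $\binom{n}{|P|} \ge \binom{n}{2}$; if $|P| \le n/2$ then one bounds $|\Stab_n(\Pp)| \le \tfrac{1}{2}\,k!\,(|P|!)^k\,(n-k|P|)!$ where $k = m_{|P|}$, and a short calculation (maximising over $k$) gives the $\Omega(n^2)$ bound on the orbit. You could also repair your own split by noting that in case~(a) every part $P$ has $2 \le |P| \le n-2$; if some such $P$ is the unique part of its size you get $\binom{n}{|P|} \ge \binom{n}{2}$ immediately, and otherwise all part sizes are repeated, which forces all part sizes to be $\le n/2$ and lets you run the same stabiliser estimate as the paper.
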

\textit{Proof sketch.} It is easy to see that in each of the three cases above, the $\Alt_n$-orbit of $\Pp$ has size one or $n$ (since we are assuming $n$ to be large enough such that $\Alt_n$ acts transitively on $[n]$). It remains to show that the orbit size is at least quadratic if $\Pp$ has any other form. In that case, $\Pp$ must contain some part $P$ of size $|P| \geq 2$, whose complement in $[n]$ is also of size $\geq 2$. Since $\Alt_n$ is transitive on the subsets of $[n]$ (for each fixed subset-size), the part $P$ has $\binom{n}{|P|}$ many $\Alt_n$-images, which is in $\Omega(n^2)$. If $|P| > n/2$, then we are done because any permutation that does not map $P$ to itself is not in $\Stab_n(\Pp)$ then, and so the $\Alt_n$-orbit of $\Pp_n$ is as large as claimed. Otherwise, if $|P| \leq n/2$, then $|\Stab_n(\Pp)| \leq (1/2) \cdot k! \cdot (|P|!)^k \cdot (n-k|P|)!$, where $k$ is the number of parts of size $|P|$ in $\Pp$. One can show that this is maximised for $k=n/|P|$ or $k=1$. Now the rest is just a calculation of a suitable lower bound on the orbit size using the Orbit-Stabiliser theorem. \hfill \qed\\

In combination with our assumption that orbits of parents and children have size $\Oo(n)$, this lemma will help us to get a handle on the interplay of the supporting partitions of parent and child gates. 
We now define the quantity that we associate with each gate, as described above. This quantity is actually rather a vector, that we call the \emph{size profile} (of the supporting partition). It is invariant under symmetries, so we define this measure not for individual gates but for their entire orbits. For a gate $g \in V_{C_n}$, we denote by $[g]$ its $\Aut(C_n)$-orbit in $V_{C_n}$.
A \emph{size profile} is a mapping $\zeta : \bbN \lra \bbN$. For an orbit $[g]$, we define $\zeta[g] : \bbN \lra \bbN$ as follows:
\[
\zeta[g](i) := |\{ P \in \Sp_A(g) \mid |P| = i \}|.
\]
Due to Lemma \ref{lem:XOR2_supPartsGetShifted}, this definition is indeed independent of the choice of the representative $g$ of the orbit.
Note that, as we promised earlier, the measure $\zeta$ differs considerably between the root and the input gates of a circuit. For the root $r$, $\zeta[r](n) = 1$, and $\zeta[r](i) = 0$ for all $i < n$. This is because our circuits are invariant under all permutations in $\Sym_n$ by assumption, so the root is stabilised by all permutations, and hence its coarsest alternating supporting partition contains just one large part. For the input gates, by contrast, we know that their supporting partition always has two parts: If an input gate is labelled with a hypercube edge, say, $\{0^a1^{n-a}, 0^{a-1}1^{n-a+1}\}$, then its supporting partition is $\{ \{ 1,...,a-1 \}, \{a \}, \{a+1,...,n \} \}$. Unless either $a$ or $b$ are very small, this partition has two large parts. We will prove that, with each layer in the circuit, the size profile of the gates cannot change very much, so for example, the largest part will grow by one, and another part will shrink by one, as we go one layer up in the circuit. As a consequence, in the middle of the circuit, we must encounter several different part sizes in the supporting partitions until we can reach the supporting partition $\{ [n]\}$ at the root. In particular, we will encounter even part sizes in some gates, and when that happens, this more or less leads to an even number of paths.
\begin{comment}
	Before we come to this heart of the proof, we observe that gates in the same orbit also have the same number of parents from every orbit:
	\begin{lemma}
		\label{lem:XOR2_parentNumberInvariance}
		Let $g,g'$ be two gates in the same $\Aut(C_n)$-orbit. Let $h \in E_C g$ be a parent of $g$. Then $|\Orbit(h) \cap E_Cg| = |\Orbit(h) \cap E_Cg'|$.
	\end{lemma}
	\begin{proof}
		Let $\sigma \in \Aut(C_n)$ be such that $g' = \sigma g$. Then $\sigma(\Orbit(h) \cap E_Cg) \subseteq \Orbit(h) \cap E_C g'$. Conversely, $\sigma^{-1}(\Orbit(h) \cap E_C g') \subseteq \Orbit(h) \cap E_Cg$. 
	\end{proof}
\end{comment}
The next lemma is the key in our proof. It tells us precisely how the size-profiles of the gates can differ between children and parents. Essentially, the size of large parts can only change by at most one.
\begin{lemma}
	\label{lem:XOR2_mainPartSizeLemma}
	Let $0 < \epsilon < 1$ be any constant. Let $g$ a gate in $C_n$, $h$ a parent of $g$. Assume that $|\Orbit_{(g)}(h)| \in \Oo(n)$ and $|\Orbit_{(h)}(g)| \in \Oo(n)$.
	Let $\Delta : \{ m \in \bbN \mid m \geq \epsilon \cdot n \} \lra \bbN$ be the function defined as $\Delta(s) := \zeta[h](s) - \zeta[g](s)$. For all large enough $n \in \bbN$ and every $s \geq \epsilon \cdot n$ it holds:
	\begin{itemize}
	\item $|\Delta(s)| \leq 2$.
	\item If $\Delta(s) = 2$, then $\Delta(s+1) = -1, \Delta(s-1) = -1$. If additionally $s$ is odd, then $|\Orbit_{(g)}(h)|$ is even.
	\item If $\Delta(s) = -2$, then $\Delta(s+1) = 1, \Delta(s-1) = 1$. If additionally $s$ is even, then $|\Orbit_{(g)}(h)|$ is even.
	\item If no value of $\Delta$ is $2$ or $-2$, then one of the following is possible:
	\begin{itemize}
		\item If $\Delta(s) = 1$, then $\Delta(s-1) = -1$ or $\Delta(s+1) = -1$. In case that $s$ is odd and $\Delta(s+1) = -1$, then $|\Orbit_{(g)}(h)|$ is even.
		\item If $\Delta(s) = -1$, then $\Delta(s-1) = 1$ or $\Delta(s+1) = 1$. In case that $s$ is even and $\Delta(s-1) = 1$, then $|\Orbit_{(g)}(h)|$ is even. 
	\end{itemize}
	\item If $\Delta(s) \neq 0$, then for all other $s'$ except the ones mentioned in the cases above, it holds $\Delta(s') = 0$.
\end{itemize}
\end{lemma}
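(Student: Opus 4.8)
The plan is to analyse the relationship between the coarsest alternating supporting partitions $\Sp_A(g)$ and $\Sp_A(h)$ for a parent $h$ of a gate $g$, exploiting the two structural facts available: (a) every $\pi \in \Stab_n(g)$ lies in $\Stab(\Sp_A(g))$ and likewise for $h$ (Lemma \ref{lem:G_altSPsandwichLemma}), and (b) the orbit-size hypotheses $|\Orbit_{(g)}(h)|, |\Orbit_{(h)}(g)| \in \Oo(n)$. The first concrete step is to produce a large subgroup that stabilises \emph{both} $g$ and $h$: the pointwise stabiliser in $\Sym_n$ of a fixed representative $h_0 \in \Orbit_{(h)}(g)$ inside the child-orbit already fixes $g$, and since that orbit has size $\Oo(n)$, the corresponding index is polynomial; symmetrically, fixing a representative $g_0 \in \Orbit_{(g)}(h)$ gives a polynomial-index subgroup stabilising $h$. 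Intersecting, there is a subgroup $G' \le \Stab_n(g) \cap \Stab_n(h)$ of index $\Oo(n^2)$ in $\Sym_n$. Applying Theorem \ref{thm:G_alternatingPartitionWithLogManyParts} to the family of such groups, $\Sp_A(G')$ has only sublinearly many singleton parts; in particular it has at least one part of linear size, and $\Sp_A(G')$ refines $\Sp_A(\Stab_n(g))=\Sp_A(g)$ and $\Sp_A(\Stab_n(h))=\Sp_A(h)$ simultaneously.

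Next I would leverage Lemma \ref{lem:XOR2_supPartsGetShifted}: for each $h' \in \Orbit_{(g)}(h)$, $\Sp_A(h') = \pi_{h'}(\Sp_A(h))$ for some $\pi_{h'} \in \Stab_n(g)$, so all these partitions share the same size profile $\zeta[h]$, and $\pi_{h'}$ permutes the parts of $\Sp_A(g)$. Since $|\Orbit_{(g)}(h)| \in \Oo(n)$, the partition $\Sp_A(g)$ has an orbit (under the action of $\Stab_n(g)$ on its own large parts, restricted through these $\pi_{h'}$) that is small; by Lemma \ref{lem:XOR2_quadraticOrbits}, a partition whose setwise $\Alt$-stabiliser is large must be of one of the three special shapes — all parts equal, or a singleton plus its complement, or the one-block partition. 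The idea is that $\Sp_A(g)$ and $\Sp_A(h)$, once restricted to their large parts ($\ge \epsilon n$), must essentially agree up to moving a single point between two parts: moving a point from part of size $s$ to part of size $s'$ changes $\zeta$ at four indices by $\pm 1$, and more drastic changes would force a large orbit, contradicting the $\Oo(n)$ hypothesis on $|\Orbit_{(h)}(g)|$. This is what yields $|\Delta(s)| \le 2$ and the enumeration of which indices $\Delta$ can be nonzero at (each unit of $\Delta$ corresponds to one point crossing between two large parts).

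The parity statements ``$|\Orbit_{(g)}(h)|$ is even'' come from a counting argument on the crossing points. When $\Delta(s) = 2$ with $s$ odd, two points leave distinct parts of size $s+1$ and $s-1$ to join (one or two) parts of size $s$; the set of circuit automorphisms fixing $g$ acts on the small set $\Orbit_{(g)}(h)$, and the map $h' \mapsto$ (which point crossed to realise $\Sp_A(h')$ from $\Sp_A(g)$) is compatible with this action; because a part of odd size $s$ cannot be split evenly, a pairing-up / involution argument on $\Orbit_{(g)}(h)$ forces its cardinality to be even. I would make this precise by defining, for each $h' \in \Orbit_{(g)}(h)$, the unordered or ordered pair of parts of $\Sp_A(g)$ whose sizes differ in $\zeta[h']$ versus $\zeta[g]$, showing the $\Stab_n(g)$-action is transitive on $\Orbit_{(g)}(h)$, and counting stabilisers via Orbit--Stabiliser: the parity of $s$ controls whether a certain transposition-like element lies in the stabiliser, which changes the index by a factor of $2$. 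The analogous cases $\Delta(s)=-2$, and the $\pm 1$ cases (where only one point crosses, between a part of size $s$ and one of size $s\pm 1$), are handled the same way, with the parity of $s$ (resp. $s$ even) deciding evenness.

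The main obstacle will be the bookkeeping in this last parity argument: tracking exactly which part sizes are affected when one or two points migrate, ruling out the possibility that the migration is ``diffuse'' over many small parts (this is where Lemma \ref{lem:XOR2_quadraticOrbits} and the $\Oo(n)$ orbit bounds must be combined carefully, since small parts of size $<\epsilon n$ are not controlled by the hypothesis and could in principle absorb mass), and making the transitivity-of-the-action claims fully rigorous. I expect the case analysis to be long but each individual case to reduce to an Orbit--Stabiliser computation of the form already used in Lemma \ref{lem:XOR2_quadraticOrbits}. The key insight I would keep front and centre is that ``orbit size $\Oo(n)$ of a parent/child relative to a fixed child/parent'' is an extremely tight constraint: it essentially forces the two supporting partitions to be identical on their coarse skeleton except for the displacement of a single element, and any such displacement changes $\zeta$ in a rigidly prescribed way whose parity is dictated by the size of the part being altered.
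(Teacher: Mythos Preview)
Your proposal is in the right neighbourhood — you correctly identify Lemma~\ref{lem:XOR2_quadraticOrbits}, Theorem~\ref{thm:G_alternatingPartitionWithLogManyParts}, and Lemma~\ref{lem:XOR2_supPartsGetShifted} as the central tools, and your conclusion that ``the two supporting partitions agree on their large parts up to the displacement of a single element'' is exactly what has to be shown. But the execution differs from the paper's in two substantial ways, and one of your claims is incorrect.

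First, the index claim ``$[\Sym_n : G'] \in \Oo(n^2)$'' is wrong. Fixing a representative in $\Orbit_{(h)}(g)$ gives you a subgroup of index $\Oo(n)$ \emph{inside $\Stab_n(h)$}, not inside $\Sym_n$; you only know $[\Sym_n : \Stab_n(h)] \le |V_{C_n}| = 2^{\Oo(n)}$ from the polynomial-size assumption. The bound $[\Sym_n : G'] \le 2^{\Oo(n)}$ would still be enough to invoke Theorem~\ref{thm:G_alternatingPartitionWithLogManyParts}, so this is repairable, but it signals that the common-refinement detour through $G'$ is not the cleanest route.

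Second, the paper bypasses $G'$ entirely and instead works \emph{part by part}: for each large part $Q \in \Sp_A(g)$, the group $\Alt(Q)$ extends to circuit automorphisms fixing $g$, and by Lemma~\ref{lem:XOR2_supPartsGetShifted} each distinct $\Alt(Q)$-image of $\Sp_A(h)|_Q$ corresponds to a distinct element of $\Orbit_{(g)}(h)$. Since that orbit has size $\Oo(n)$, Lemma~\ref{lem:XOR2_quadraticOrbits} forces $\Sp_A(h)|_Q$ to be either the one-block partition or a singleton-plus-rest (the all-singleton case is excluded by Theorem~\ref{thm:G_alternatingPartitionWithLogManyParts}). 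The same argument with the roles of $g$ and $h$ swapped controls $\Sp_A(g)|_P$ for large parts $P$ of $\Sp_A(h)$. This yields a bijection $\gamma$ between the large parts of $\Sp_A(g)$ and those of $\Sp_A(h)$ with $|Q \triangle \gamma(Q)| \le 2$, and moreover at most one $Q$ loses a point and at most one $Q$ gains a point (else the product of two alternating groups would generate $\Omega(n^2)$ images). A five-case enumeration of how $\gamma$ can behave then reads off the possible $\Delta$-vectors directly. This structural device (the bijection with its four properties) is what your sketch is missing; your ``moving a single point'' intuition is the right target but you have not supplied the mechanism that proves it.

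Third, your parity argument via ``a transposition-like element in the stabiliser'' is not how the paper proceeds and I do not see how to make it work. The paper's argument is concrete and combinatorial: in the relevant case, there is a large part $Q$ (of even size) such that $\Sp_A(h)|_Q$ consists of a singleton $\{s\}$ and the rest. The group $\Alt(Q)$ moves the singleton to each of the $|Q|$ positions, each yielding a distinct parent; thus the set $H(Q) \subseteq \Orbit_{(g)}(h)$ of parents whose supporting partition splits $Q$ this way has even size. If $\Orbit_{(g)}(h) \neq H(Q)$, any other parent $h'$ in the orbit is reached by some $\pi \in \Stab_n(g)$ mapping $Q$ to another large part $\pi Q$, and the same argument applies to $H(\pi Q)$. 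Hence $\Orbit_{(g)}(h)$ is partitioned into even-size blocks $H(\pi Q)$ and is itself even. This is where the evenness of $|Q|$ — equivalently the parity conditions on $s$ in the statement — enters, not through stabiliser parity.
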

\begin{proof}
	Let $\Sp_A^*(g) \subseteq \Sp_A(g)$ be the set of parts of size $\geq \epsilon \cdot n$ and $\Sp_A^*(h) \subseteq \Sp(h)_A$ the parts of size $\geq \epsilon \cdot n -1$ in $\Sp_A(h)$.\\
	\textbf{Claim:} There is a bijection $\gamma : \Sp_A^*(g) \lra \Sp_A^*(h)$ such that:
	\begin{enumerate}[(a)]
		\item For every $Q \in \Sp_A^*(g)$, it holds $|\gamma(Q) \cap Q| \geq |Q|-1$.  
		\item For every $Q \in \Sp_A^*(g)$, it holds $|\gamma(Q) \setminus Q| \leq 1$. 
		\item There is at most one part $Q \in \Sp_A^*(g)$ such that $|\gamma(Q) \cap Q| = |Q|-1$.
		\item There is at most one part $Q \in \Sp_A^*(g)$ such that $|\gamma(Q) \setminus Q| = 1$. 
	\end{enumerate}
	\begin{figure}[H]
		\centering
		\begin{tikzpicture}[line width = 0.5mm]
			\draw[fill=green!20] (0,0) rectangle (2.8,-0.75) node[pos=.5] {$\gamma(Q)$};
			\draw[fill=green!20] (-0.2,-1) rectangle (2.6,-1.75) node[pos=.5] {$Q$};	
			
			\draw (0,0) -- (-1,0);
			\draw (0,-0.75) -- (-1,-0.75);
			
			\draw (0,-1) -- (-1,-1);
			\draw (0,-1.75) -- (-1,-1.75);
			
			\draw (2.8,0) -- (3.8,0);
			\draw (2.8,-0.75) -- (3.8,-0.75);
			
			\draw (2.6,-1) -- (3.8,-1);
			\draw (2.6,-1.75) -- (3.8,-1.75);	
		\end{tikzpicture}
		\caption{This is the most extreme way how $Q$ and $\gamma(Q)$ may differ: Each has at most one element that is not shared with the other part.}
	\end{figure}
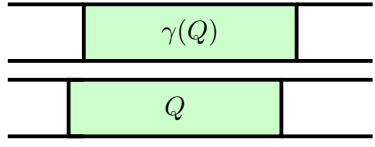
	\textit{Proof of claim:} Construct $\gamma$ by defining $\gamma(Q) \in \Sp_A^*(h)$ as the part whose intersection with $Q$ is largest possible. This is well-defined because there are only two cases how $\Sp_A(h)|_Q$ can look like: It either consists of one part or we have $\Sp_A(h)|_Q = \{\{s\}, Q \setminus \{s\} \}$, for some $s \in Q$. 
	Everything else is ruled out by Lemma \ref{lem:XOR2_quadraticOrbits} and Theorem \ref{thm:G_alternatingPartitionWithLogManyParts}. This is because by assumption, only $\Oo(n)$ many parents of $g$ are in $\Orbit_{(g)}(h)$. Therefore, the restriction of $\Sp_A(h)$ to $Q$, denoted $\Sp_A(h)|_Q$, can have at most $\Oo(n)$ many images under $\Alt(Q)$: Namely, every permutation in $\Alt(Q)$ extends to a circuit automorphism that fixes $g$ because $Q$ is a part in $\Sp_A(g)$. Moreover, by Lemma \ref{lem:XOR2_supPartsGetShifted}, any two distinct $\Alt(Q)$-images of $\Sp_A(h)|_Q$ must be the $Q$-restrictions of supporting partitions of distinct parents of $g$. So indeed, $\Orbit_{(g)}(h)$ contains at least as many gates as the size of the $\Alt(Q)$-orbit of $\Sp_A(h)|_Q$. This size can only be in $\Oo(n)$ if $\Sp_A(h)|_Q$ consists of singletons only or has just one big part or if it as a one-vs-rest split (because of Lemma \ref{lem:XOR2_quadraticOrbits}, where we also use that $|Q| \in \Theta(n)$). The case that $\Sp_A(h)|_Q$ consists only of singletons cannot happen because $Q$ has linear size and by Theorem \ref{thm:G_alternatingPartitionWithLogManyParts}, $\Sp_A(h)$ has at most $o(n)$ many singletons. Therefore, $\Sp_A(h)|_Q$ indeed either consists of one part or we have $\Sp_A(h)|_Q = \{\{s\}, Q \setminus \{s\} \}$, for some $s \in Q$. Thus, $\gamma(Q)$ is well-defined (and indeed, every part $\gamma(Q) \in \Sp_A(h)$ has size $\geq \epsilon \cdot n -1$).\\
	
	The above reasoning also directly proves statement (a). Statement (b) follows in a similar way because if it were not true, then $\Sp_A(g)|_{\gamma(Q)}$ would have $\Omega(n^2)$ many images under $\Alt(\gamma(Q))$ (again by combining Theorem \ref{thm:G_alternatingPartitionWithLogManyParts} and Lemma \ref{lem:XOR2_quadraticOrbits}), resulting in $\Orbit_{(h)}(g)$ being too large.\\
	
	We show that $\gamma$ is injective: If it were not, then there would be some $P \in \Sp_A^*(h)$ and $Q_1, Q_2 \in \Sp_A^*(g)$ such that $|Q_1 \cap P| \geq |Q_1|-1$ and $|Q_2 \cap P| \geq |Q_2|-1$. This is impossible because then, by Lemma \ref{lem:XOR2_quadraticOrbits}, $\Sp_A(g)|_P$ has $\Omega(n^2)$ many automorphic images under $\Alt(P)$, but $\Alt(P)$ fixes $h$, so $h$ has more than $\Oo(n)$ many children in $\Orbit_{(h)}(g)$, which is a contradiction.\\
	Also, $\gamma$ is surjective: Suppose there were a part $P \in \Sp_A(h)$ of size $\geq g(n) - 1$ that has no preimage. Then $\Sp_A(g)|_{P}$ must consist of parts smaller than $\epsilon \cdot n$.
	Then again, $\Sp_A(g)|_{P}$ has $\Omega(n^2)$ many images under $\Alt(P)$ by Lemma \ref{lem:XOR2_quadraticOrbits} (using also that the number of singleton parts in $\Sp_A(g)$ is sublinear and hence less than $|P|$ by Theorem \ref{thm:G_alternatingPartitionWithLogManyParts}).\\
	
	If statement (c) were not true, then there would be two parts $Q_1, Q_2 \in \Sp_A^*(g)$ such that $|\gamma(Q_i) \cap Q_i| = |Q_i|-1$. Then $\Alt(Q_1) \times \Alt(Q_2)$ fixes $g$ but generates $\Omega(n^2)$ many distinct automorphic images of $\Sp_A(h)$. Then again, $|\Orbit_{(g)}(h)|$ is greater than $\Oo(n)$, which contradicts the assumptions of the lemma.\\
	Similarly, statement (d) is shown: If it were not true, then $h$ would have too many children in $\Orbit_{(h)}(g)$. This proves the claim.\\
	\\
	Now with the claim we see that there are five possible cases:
	\begin{enumerate}
		\item $\gamma(Q) = Q$ for all $Q \in \Sp_A^*(g)$.
		\item There is one part $Q \in \Sp_A^*(g)$ such that $|\gamma(Q) \cap Q| = |Q|-1$ and $\gamma(Q) \subseteq Q$, and for all other parts $Q'$, $\gamma(Q') = Q'$.
		\item There is one part $Q$ with $|\gamma(Q) \setminus Q| = 1$ and $\gamma(Q) \supseteq Q$, and for all other parts $Q'$, $\gamma(Q') = Q'$.
		\item There is one part $Q$ that satisfies $|\gamma(Q) \cap Q| = |Q|-1$ \emph{and} $|\gamma(Q) \setminus Q| = 1$. For all other parts $Q'$, $\gamma(Q') = Q'$. 
		\item There is one part $Q_1$ that satisfies $|\gamma(Q_1) \cap Q_1| = |Q_1|-1$ (and $\gamma(Q_1) \subseteq Q_1$), and another part $Q_2$ that satisfies $|\gamma(Q_2) \setminus Q_2| = 1$ (and $Q_2 \subseteq \gamma(Q_2)$), and for all other parts $Q'$, $\gamma(Q') = Q'$.
	\end{enumerate}
	In Case 1, the $\Delta$-vector is zero.\\
	In Case 2, we have $\Delta(|Q|) = -1$ and $\Delta(|Q|-1) = 1$, and all other entries of $\Delta$ are zero.\\
	In Case 3, we have $\Delta(|Q|) = -1$ and $\Delta(|Q|+1) = 1$, and all other entries of $\Delta$ are zero.\\
	In Case 4, the $\Delta$-vector is zero.\\
	In Case 5, we have to distinguish several cases. If $|Q_1| = |Q_2|$, then $\Delta(|Q_1|) = -2$ and $\Delta(|Q_1|-1) = 1, \Delta(|Q_1|+1) = 1$. If $|Q_1| \neq |Q_2|$ and $|Q_1|-1 \neq |Q_2|+1$ , then $\Delta(|Q_1|) = -1, \Delta(|Q_2|) = -1, \Delta(|Q_1|-1) = 1, \Delta(|Q_2|+1) = 1$. If $|Q_1| = |Q_2|+2$, then $\Delta(|Q_1|) = -1, \Delta(|Q_2|) = -1, \Delta(|Q_1|-1) = 2$.\\
	\\
	In Case 2, assume that $|Q|$ is even. Then $\Sp_A(h)|_{Q}$ has an even number of images under $\Alt(Q)$ (because $\Sp_A(h)|_{Q}$ has one singleton part and the rest, and this singleton can be mapped to all $|Q|$ positions by $\Alt(Q)$). We now want to argue that therefore, $|\Orbit_{(g)}(h)|$ must be even. Let $H(Q) \subseteq \Orbit_{(g)}(h)$ be the set of parents $h'$ such that $\Sp_A(h')|_{Q}$ consists of one singleton part and the rest. It holds that $|H(Q)|$ is even:
	Every $\pi \in \Alt(Q)$ extends to a $\sigma \in \Aut(C_n)$ that fixes $g$. By Lemma \ref{lem:XOR2_supPartsGetShifted}, this $\sigma$ maps the parent $h$ of $g$ to another parent of $g$ with $\Sp_A(\sigma h)|_Q = \pi(\Sp_A(h))|_Q$. So the $\Alt(Q)$-orbit of every element of $H(Q)$ is even; hence, $H$ can be partitioned into $\Alt(Q)$-orbits, each of which is even, and so $|H(Q)|$ is even.\\
	
	Now if $H(Q)$ is equal to the whole set $\Orbit_{(g)}(h)$, then we are done. 
	Otherwise, $\Orbit_{(g)}(h)$ contains gates whose supporting partition on $Q$ does not split into a singleton and the rest. Let $h' \in \Orbit_{(g)}(h)$ be such a gate. There must exist a permutation $\pi \in \Sym_n$ that extends to a circuit automorphism $\sigma$ which maps $h$ to $h'$ and fixes $g$. So the corresponding $\pi$ must stabilise the partition $\Sp_A(g)$ setwise, and it will map $\Sp_A(h)$ to $\Sp_A(h')$. Therefore, $\pi(Q) \in \Sp_A(g)$ is a part for which we will again have Case 2 when we apply the above reasoning to $g$ and $h' = \sigma(h)$. Then we can define $H(\pi Q) \subseteq \Orbit_{(g)}(h)$ as the set of all parents whose supporting partition splits into singleton and rest on $\pi Q$, and we get that $|H(\pi Q)|$ is even. In total, with this reasoning we see that $\Orbit_{(g)}(h)$ is partitioned into even-size sets $H(\pi Q)$, for all $\pi \in \Sym_n$ which extend to circuit automorphisms that fix $g$ and permute its parents. So in total, $|\Orbit_{(g)}(h)|$ is even.\\
	
	Similarly, assume in Case 5 that $|Q_1|$ is even. Then the same argument shows that $|\Orbit_{(g)}(h)|$ is even.
	The lemma follows directly from these considerations. 
\end{proof}

\begin{corollary}
	\label{cor:XOR2_numberOfLargePartsStaysHigh}
	Let $0 < \epsilon < 1$ be any constant. Let $g$ be a gate in $C_n$ (for large enough $n$), $h$ a parent of $g$. Assume that $|\Orbit_{(g)}(h)| \in \Oo(n)$ and $|\Orbit_{(h)}(g)| \in \Oo(n)$, and that $|\Orbit_{(g)}(h)|$ is odd.\\
	Let $s$ with $\epsilon \cdot n \leq s < n$ be an even natural number.
	Then
	\[
	\sum_{i \geq s} \zeta[h](i) \geq \sum_{i \geq s} \zeta[g](i)  
	\]
\end{corollary}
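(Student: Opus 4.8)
The plan is to read the inequality off Lemma~\ref{lem:XOR2_mainPartSizeLemma} together with the parity hypothesis. Put $\Delta(i) := \zeta[h](i) - \zeta[g](i)$ for $i \geq \epsilon\cdot n$ and, for an even $s$ with $\epsilon\cdot n \leq s < n$, set $F(s) := \sum_{i \geq s}\Delta(i)$. Since every part of a supporting partition has size at most $n$ and $s \geq \epsilon\cdot n$, every index that contributes to $\sum_{i\geq s}\zeta[h](i)$ or to $\sum_{i\geq s}\zeta[g](i)$ lies in the domain of $\Delta$, so $F(s) = \sum_{i\geq s}\zeta[h](i) - \sum_{i\geq s}\zeta[g](i)$, and the claim is exactly that $F(s)\geq 0$. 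First I would invoke Lemma~\ref{lem:XOR2_mainPartSizeLemma}: because $|\Orbit_{(g)}(h)|, |\Orbit_{(h)}(g)| \in \Oo(n)$, the function $\Delta$ is either identically $0$ or a sum of at most two ``down-then-up'' blocks, each of one of the four shapes $\{+1\text{ at }m{-}1,\,-1\text{ at }m\}$, $\{-1\text{ at }m,\,+1\text{ at }m{+}1\}$, $\{+1\text{ at }m{-}1,\,-2\text{ at }m,\,+1\text{ at }m{+}1\}$, $\{-1\text{ at }m{-}1,\,+2\text{ at }m,\,-1\text{ at }m{+}1\}$, and $\Delta$ vanishes at all part-sizes outside these blocks.

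The second step uses the hypothesis that $|\Orbit_{(g)}(h)|$ is \emph{odd} to delete the configurations that the lemma flags as forcing $|\Orbit_{(g)}(h)|$ even: a value $-2$ at an even size, a value $+2$ at an odd size, and a value $-1$ at an even size whose compensating $+1$ sits just below it. After this I would record the uniform consequence I need: for every \emph{even} $s$, each surviving block contributes either $0$ or $+1$ to the tail sum $F(s)$. Indeed, a block's tail-contribution is $0$ for $s$ below or above the block, equals $+1$ at the block's largest size, and can equal $-1$ only at a single interior size — the size carrying the $-1$ in the first block type, the size carrying the $-2$ in the third type, or the size immediately above the $+2$ in the fourth type — and in each of those three cases the preceding step has forced that size to be odd, hence different from our even $s$; the second block type never contributes a negative amount at all.

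Finishing is then immediate: $\Delta$ is a sum of at most two blocks, each contributing non-negatively to $F(s)$ for even $s$, so $F(s)\geq 0$. Concretely I would present this as a short case distinction mirroring the five cases in the proof of Lemma~\ref{lem:XOR2_mainPartSizeLemma}. The one place that needs a little care — the main, though modest, obstacle — is the overlapping sub-case of the double shape, where the two blocks share an endpoint (for instance $|Q_1| = |Q_2|$, or $|Q_1|-1 = |Q_2|+1$): there one must first collapse the overlap to recover the true shape of $\Delta$, which may become identically $0$, a single $-2$-block, or a single $+2$-block, and then apply the block bookkeeping to that shape, being careful to invoke the lemma's parity constraints at the surviving positions rather than at the original block endpoints.
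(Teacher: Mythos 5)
Your proof is correct and follows essentially the same route as the paper's: set $\Delta(i) := \zeta[h](i) - \zeta[g](i)$, constrain its possible shapes via Lemma~\ref{lem:XOR2_mainPartSizeLemma}, and use the hypothesis that $|\Orbit_{(g)}(h)|$ is odd together with the fact that $s$ is even to rule out exactly those sizes at which the tail sum $F(s) = \sum_{i\geq s}\Delta(i)$ could be negative. The paper phrases the same case analysis pointwise (for $i>s$ the compensating $+1$ is always at an index $\geq s$; for $i=s$ the parity hypothesis forces the compensation to sit at $s+1$ rather than $s-1$), while you organize it by blocks, but the substance is identical. One slip worth flagging: your intermediate claim that each block's tail-contribution \emph{``equals $+1$ at the block's largest size''} is false for the first shape $\{+1$ at $m{-}1$, $-1$ at $m\}$, where the tail contribution at the largest size $m$ is $-1$, and for the fourth shape $\{-1$ at $m{-}1$, $+2$ at $m$, $-1$ at $m{+}1\}$, where the tail contribution at $m+1$ is $-1$. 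This is harmless because the very next clause of your argument correctly identifies the single problematic size for each block type (the $-1$ in shape one, the $-2$ in shape three, the size above the $+2$ in shape four) and applies the parity constraint precisely there, and that is what actually establishes $F(s)\geq 0$ for even $s$.
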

\begin{proof}
	According to Lemma \ref{lem:XOR2_mainPartSizeLemma}, for any $i > s$, whenever $\zeta[h](i) < \zeta[g](i)$, then this is compensated by other values of $\zeta[h](j)$ in the sum $\sum_{i \geq s} \zeta[h](i)$.\\
It only remains to consider the case $\zeta[h](s) < \zeta[g](s)$. Since we are assuming that $|\Orbit_{(g)}(h)|$ is odd, and $s$ is even, Lemma \ref{lem:XOR2_mainPartSizeLemma} implies that $\zeta[h](s) = \zeta[g](s)-1$, and $\zeta[h](s+1) = \zeta[g](s+1)+1$. Therefore, the sum $\sum_{i \geq s} \zeta[h](i)$ cannot be strictly less than $\sum_{i \geq s} \zeta[g](i)$.    %Since $s+2$ is even, the same goes for $\zeta[h](s+2)$: In case that $\zeta[h](s+2) < \zeta[g](s+2)$, this is compensated with $\zeta[h](s+3)$, and so on.
\end{proof}

Intuitively speaking, this means that if along some path from the root to a gate $g$, the orbit size of the next parent gate in the stabiliser group of its child is always odd, then the number of large parts in the supporting partitions can only increase along the path towards the root. This will allow us to show that an even orbit must occur along each path. And this means that the path together with its orbit cancels itself out in the XOR computation.\\

When we look at a path $P = (r=h_1,h_2,...,g)$ from the root $r$ of a circuit to a certain gate $g$, then we can associate with $P$ its \emph{orbit-profile} $\Omega(P)$. This orbit profile says for every gate $h_i$ on the path, which orbit its predecessor $h_{i-1}$ belongs to. By orbit, we mean again $\Orbit_{(h_i)}(h_{i-1})$, so we refer to the partition of the parents of $h_i$ into the orbits with respect to the subgroup of $\Sym_n$ that fixes $h_i$. The orbit profile of a path is not supposed to describe that path uniquely but we rather want that several paths share the same orbit profile -- in a sense, we want the orbit profile to describe the ``path'' that we get when we factor out the respective orbits $\Orbit_{(h_i)}(h_{i-1})$. We have to show that this indeed makes sense:
\begin{lemma}
	\label{lem:XOR2_orbitsOfOrbits}
	Let $g$ be a gate and $h$ a parent of $g$. Let $g'$ be another gate such that there is a $\sigma \in \Aut(C)$ with $g' = \sigma(g)$. In the partition of $E_C g'$ into orbits $\Orbit_{(g')}(h')$, for $h' \in E_C g'$, there is a unique orbit $\Orbit_{(g')}(h')$ to which $\Orbit_{(g)}(h)$ can be mapped by $\Aut(C)$.
\end{lemma}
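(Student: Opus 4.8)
The plan is to show that the action of circuit automorphisms on gates induces a well-defined action on the "orbit-types" of parents, using that conjugation of stabilizers matches the shifting of gates. Fix $\sigma \in \Aut(C)$ with $g' = \sigma(g)$, and consider the parent $h$ of $g$ together with its orbit $\Orbit_{(g)}(h) = \{\tau(h) \mid \tau \in \Aut(C), \tau(g) = g\}$. The natural candidate for the target orbit is $\Orbit_{(g')}(\sigma(h))$, since $\sigma(h)$ is a parent of $g' = \sigma(g)$ (automorphisms map wires to wires). First I would verify that $\sigma(\Orbit_{(g)}(h)) \subseteq \Orbit_{(g')}(\sigma(h))$: for $\tau$ with $\tau(g) = g$, we have $\sigma\tau\sigma^{-1}(g') = \sigma\tau(g) = \sigma(g) = g'$, so $\sigma\tau\sigma^{-1}$ is an automorphism fixing $g'$, and $(\sigma\tau\sigma^{-1})(\sigma(h)) = \sigma(\tau(h))$, which is therefore in $\Orbit_{(g')}(\sigma(h))$. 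Applying the same argument to $\sigma^{-1}$ gives the reverse inclusion, so $\sigma(\Orbit_{(g)}(h)) = \Orbit_{(g')}(\sigma(h))$, which establishes existence of a target orbit to which $\Orbit_{(g)}(h)$ is mapped by $\Aut(C)$.

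For uniqueness, suppose $\phi \in \Aut(C)$ maps $\Orbit_{(g)}(h)$ into some orbit $\Orbit_{(g')}(h')$ in the partition of $E_C g'$; I must show $\Orbit_{(g')}(h') = \Orbit_{(g')}(\sigma(h))$ (equivalently, the two orbits coincide as blocks of the orbit-partition of $E_C g'$). The key point is that the orbits $\Orbit_{(g')}(\cdot)$ genuinely partition $E_C g'$ — this is because "being in the same $\Stab(g')$-orbit" is an equivalence relation on the parent set $E_C g'$ — so two such orbits are either equal or disjoint. Since $\phi$ maps $g$ to some automorphic image of $g$, and the only automorphic image appearing here is $g'$, we need $\phi(g) = g'$; then $\phi = \sigma \cdot (\sigma^{-1}\phi)$ where $\sigma^{-1}\phi$ fixes $g$, so $\sigma^{-1}\phi \in \Stab(g)$, hence $(\sigma^{-1}\phi)(h) \in \Orbit_{(g)}(h)$, and therefore $\phi(h) = \sigma((\sigma^{-1}\phi)(h)) \in \sigma(\Orbit_{(g)}(h)) = \Orbit_{(g')}(\sigma(h))$. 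Since $\phi(h) \in \Orbit_{(g')}(h') \cap \Orbit_{(g')}(\sigma(h))$, these two blocks of the orbit-partition intersect and hence are equal. This gives uniqueness.

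The one subtlety to be careful about is the phrase "there is a $\sigma \in \Aut(C)$ with $g' = \sigma(g)$" in the hypothesis: it forces $g$ and $g'$ to lie in the same $\Aut(C)$-orbit, which is exactly what makes the argument "$\phi(g)$ must be $g'$" legitimate — any automorphism witnessing that $\Orbit_{(g)}(h)$ is sent somewhere in $E_C g'$ must send $g$ into the parent-set's base gate $g'$, and since all these gates are in one orbit, $\phi(g)$ really is $g'$. I would state this explicitly. The main obstacle, such as it is, is bookkeeping: one has to keep straight which stabilizer one is conjugating by and to confirm that the orbit relation on the parent set is genuinely an equivalence relation (so "orbit" blocks can be compared for equality). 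There is no deep content — it is a routine transport-of-structure argument — but care is needed to phrase it so that the two displayed orbits are identified as the same block of a partition rather than merely as overlapping sets, which is why I would invoke the partition property of the $\Stab(g')$-orbits as the closing step.
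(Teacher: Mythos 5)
Your proof is correct and follows essentially the same route as the paper: the paper also observes that any two automorphisms $\sigma,\sigma'$ taking $g$ to $g'$ differ by an element of $\Stab(g')$ (namely $\sigma'\sigma^{-1}$), which fixes each orbit $\Orbit_{(g')}(\cdot)$ setwise, so the two images of $\Orbit_{(g)}(h)$ must coincide. You additionally name the canonical target $\Orbit_{(g')}(\sigma(h))$ explicitly and invoke the partition property to close, but the underlying conjugation argument is the same.
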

\begin{proof}
	Firstly, it is clear that every $\sigma \in \Aut(C)$ that takes $g$ to $g'$ must map $\Orbit_{(g)}(h)$ to some orbit $\Orbit_{(g')}(h')$, for a $h' \in E_C g'$. We now show that there cannot be two distinct $\Orbit_{(g')}(h'), \Orbit_{(g')}(h'')$ that $\Orbit_{(g)}(h)$ can be mapped to. Suppose for a contradiction that there were $\sigma,\sigma' \in \Aut(C)$ with $\sigma(g) = \sigma'(g) = g'$ and $\sigma(\Orbit_{(g)}(h)) = \Orbit_{(g')}(h')$ and $\sigma'(\Orbit_{(g)}(h)) = \Orbit_{(g')}(h'')$. Then $\sigma' \circ \sigma^{-1}$ maps $\Orbit_{(g')}(h')$ to $\Orbit_{(g')}(h'')$ while fixing $g'$. Thus, $\Orbit_{(g')}(h') = \Orbit_{(g')}(h'')$, which is a contradiction because these orbits are distinct.
\end{proof}		
Thus, for any gate $g$ in $C$, and $h$ a parent of $g$, we can define
\begin{align*}
	\Orbit( \Orbit_{(g)}(h) ) := \{  \sigma(\Orbit_{(g)}(h)) \mid  \sigma \in \Aut(C) \},
\end{align*}
and this orbit of orbits contains exactly one $\Orbit_{(g')}(h')$ for every $g' \in \Orbit(g) = \{ \sigma(g) \mid \sigma \in \Aut(C) \}$. The orbit profile $\Omega(P)$ of a path $P = (r = h_1,h_2,...,h_\ell = g)$ is defined as
\[
\Omega(P) := (\Orbit(\Orbit_{(h_\ell)}(h_{\ell-1}) ), \Orbit(\Orbit_{(h_{\ell-1})}(h_{\ell-2})),...,\Orbit( \Orbit_{(h_2)}(h_1))). 
\]
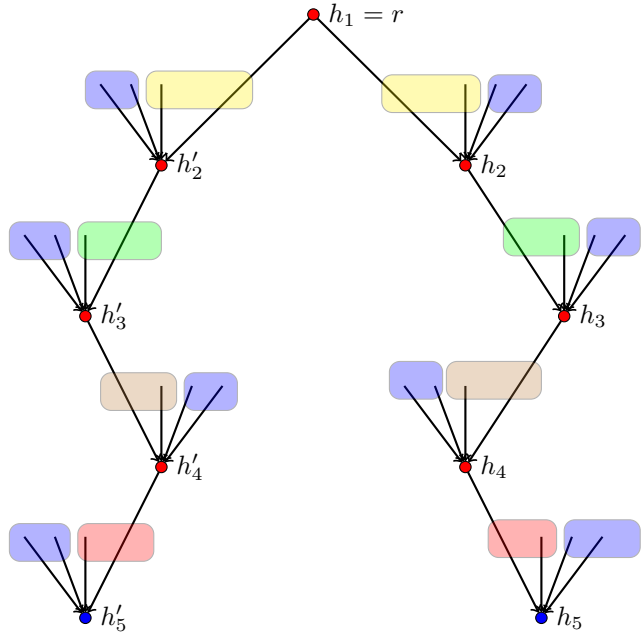
\begin{figure}[H]
	\centering
	\begin{tikzpicture}[dot/.style={draw,circle,minimum size=1.5mm,inner sep=0pt,outer sep=0pt,fill=red},circ/.style={draw,circle,minimum size=2.5mm,inner sep=0pt, fill=red},
		circY/.style={draw,circle,minimum size=2.5mm,inner sep=0pt, fill=yellow}]
		\node[dot,label=right:{$h_1 = r$}] (r) at (0,0) {};
		\node[dot,label=right:{$h_2$}] (h2) at (2,-2) {};	
		\node[dot,label=right:{$h_3$}] (h3) at (3.3,-4) {};	
		\node[dot,label=right:{$h_4$}] (h4) at (2,-6) {};	
		\node[dot,fill=blue,label=right:{$h_5$}] (h5) at (3,-8) {};	
		
		%\node[dot,label=right:{$h_1 = r$}] (r) at (0,0) {};
		\node[dot,label=right:{$h'_2$}] (h2p) at (-2,-2) {};	
		\node[dot,label=right:{$h'_3$}] (h3p) at (-3,-4) {};	
		\node[dot,label=right:{$h'_4$}] (h4p) at (-2,-6) {};	
		\node[dot,fill=blue,label=right:{$h'_5$}] (h5p) at (-3,-8) {};

		\draw[->,thick] (r) -- (h2);	
		\draw[->,thick] (h2) -- (h3);	
		\draw[->,thick] (h3) -- (h4);	
		\draw[->,thick] (h4) -- (h5);		
		
		\draw[->,thick] (r) -- (h2p);	
		\draw[->,thick] (h2p) -- (h3p);	
		\draw[->,thick] (h3p) -- (h4p);	
		\draw[->,thick] (h4p) -- (h5p);		
		
		\node[draw=none,above=of h2] (h2ghost) {};
		\node[draw=none,right=2mm of h2ghost] (h2ghost2) {};
		\node[draw=none,right=2mm of h2ghost2] (h2ghost3) {};
		
		\node[draw=none,above=of h3] (h3ghost) {};
		\node[draw=none,right=2mm of h3ghost] (h3ghost2) {};
		\node[draw=none,right=2mm of h3ghost2] (h3ghost3) {};
		
		\node[draw=none,above=of h4] (h4ghost) {};
		\node[draw=none,left=2mm of h4ghost] (h4ghost2) {};
		\node[draw=none,left=2mm of h4ghost2] (h4ghost3) {};	
		
		\node[draw=none,above=of h5] (h5ghost) {};
		\node[draw=none,right=2mm of h5ghost] (h5ghost2) {};
		\node[draw=none,right=2mm of h5ghost2] (h5ghost3) {};		
		
		\draw[->,thick] (h2ghost) -- (h2);	
		\draw[->,thick] (h2ghost2) -- (h2);	
		\draw[->,thick] (h2ghost3) -- (h2);	
		
		\draw[->,thick] (h3ghost) -- (h3);	
		\draw[->,thick] (h3ghost2) -- (h3);	
		\draw[->,thick] (h3ghost3) -- (h3);	
		
		\draw[->,thick] (h4ghost) -- (h4);	
		\draw[->,thick] (h4ghost2) -- (h4);	
		\draw[->,thick] (h4ghost3) -- (h4);	
		
		\draw[->,thick] (h5ghost) -- (h5);	
		\draw[->,thick] (h5ghost2) -- (h5);	
		\draw[->,thick] (h5ghost3) -- (h5);

		\draw[rounded corners, fill=yellow, opacity=0.3] (0.9,-0.8) rectangle (2.2,-1.3);
		\draw[rounded corners, fill=blue, opacity=0.3] (2.3,-0.8) rectangle (3,-1.3);
		
		\draw[rounded corners, fill=green, opacity=0.3] (2.6-0.1,-3+0.3) rectangle (3.6-0.1,-3.5+0.3);
		\draw[rounded corners, fill=blue, opacity=0.3] (3.7-0.1,-3+0.3) rectangle (4.4-0.1,-3.5+0.3);	
		
		\draw[rounded corners, fill=blue, opacity=0.3] (1,-4.6) rectangle (1.7,-5.1);
		\draw[rounded corners, fill=brown, opacity=0.3] (1.75,-4.6) rectangle (3,-5.1);	
		
		\draw[rounded corners, fill=red, opacity=0.3] (2.3,-6.7) rectangle (3.2,-7.2);
		\draw[rounded corners, fill=blue, opacity=0.3] (3.3,-6.7) rectangle (4.3,-7.2);

		\node[draw=none,above=of h2p] (h2pghost) {};
		\node[draw=none,left=2mm of h2pghost] (h2pghost2) {};
		\node[draw=none,left=2mm of h2pghost2] (h2pghost3) {};
		
		\node[draw=none,above=of h3p] (h3pghost) {};
		\node[draw=none,left=2mm of h3pghost] (h3pghost2) {};
		\node[draw=none,left=2mm of h3pghost2] (h3pghost3) {};
		
		\node[draw=none,above=of h4p] (h4pghost) {};
		\node[draw=none,right=2mm of h4pghost] (h4pghost2) {};
		\node[draw=none,right=2mm of h4pghost2] (h4pghost3) {};	
		
		\node[draw=none,above=of h5p] (h5pghost) {};
		\node[draw=none,left=2mm of h5pghost] (h5pghost2) {};
		\node[draw=none,left=2mm of h5pghost2] (h5pghost3) {};		
		
		\draw[->,thick] (h2pghost) -- (h2p);	
		\draw[->,thick] (h2pghost2) -- (h2p);	
		\draw[->,thick] (h2pghost3) -- (h2p);	
		
		\draw[->,thick] (h3pghost) -- (h3p);	
		\draw[->,thick] (h3pghost2) -- (h3p);	
		\draw[->,thick] (h3pghost3) -- (h3p);	
		
		\draw[->,thick] (h4pghost) -- (h4p);	
		\draw[->,thick] (h4pghost2) -- (h4p);	
		\draw[->,thick] (h4pghost3) -- (h4p);	
		
		\draw[->,thick] (h5pghost) -- (h5p);	
		\draw[->,thick] (h5pghost2) -- (h5p);	
		\draw[->,thick] (h5pghost3) -- (h5p);

		\draw[rounded corners, fill=blue, opacity=0.3] (-3,-0.75) rectangle (-2.3,-1.25);
		\draw[rounded corners, fill=yellow, opacity=0.3] (-2.2,-0.75) rectangle (-0.8,-1.25);
		
		\draw[rounded corners, fill=blue, opacity=0.3] (-4,-2.75) rectangle (-3.2,-3.25);
		\draw[rounded corners, fill=green, opacity=0.3] (-3.1,-2.75) rectangle (-2,-3.25);	
		
		\draw[rounded corners, fill=brown, opacity=0.3] (-2.8,-4.75) rectangle (-1.8,-5.25);
		\draw[rounded corners, fill=blue, opacity=0.3] (-1.7,-4.75) rectangle (-1,-5.25);	
		
		\draw[rounded corners, fill=blue, opacity=0.3] (-4,-6.75) rectangle (-3.2,-7.25);
		\draw[rounded corners, fill=red, opacity=0.3] (-3.1,-6.75) rectangle (-2.1,-7.25);	
		
	\end{tikzpicture}
	\caption{The colours indicate the partition of each set of parents into orbits. Orbits with the same colour belong to the same orbit of orbits. The orbit profile of the two paths in the picture is thus ``red, brown, green, yellow'', and the two paths are related by an automorphism of the circuit.}
\end{figure}
\begin{lemma}
	\label{lem:XOR2_numPathsWithProfile}
	Let $P = (r=h_1,...,h_\ell = g)$ be a path from $r$ to $g$ in $C_n$. The number of paths in $C_n$ from $r$ to $g$ with orbit-profile $\Omega(P)$ is exactly
	\[
	\prod_{2 \leq i \leq \ell} |\Orbit_{(h_i)}(h_{i-1})|
	\]
\end{lemma}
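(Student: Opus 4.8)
\textbf{Proof plan for Lemma \ref{lem:XOR2_numPathsWithProfile}.}

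The plan is to prove this by induction on the length $\ell$ of the path $P$, walking from the gate $g$ back towards the root. The base case $\ell = 1$ is $g = r$, where the only path is the trivial one and the empty product equals $1$, so the claim holds. For the inductive step, fix a path $P = (r = h_1,\dots,h_\ell = g)$ and let $P' = (h_1,\dots,h_{\ell-1})$ be the initial segment, so $\Omega(P') = (\Orbit(\Orbit_{(h_{\ell-1})}(h_{\ell-2})),\dots,\Orbit(\Orbit_{(h_2)}(h_1)))$ and $\Omega(P) = (\Orbit(\Orbit_{(h_\ell)}(h_{\ell-1})))$ prepended to $\Omega(P')$.

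First I would establish the key bijective correspondence: a path $Q = (r = k_1,\dots,k_\ell = g)$ from $r$ to $g$ has orbit-profile $\Omega(P)$ if and only if $k_{\ell-1} \in \Orbit_{(g)}(h_{\ell-1})$ and the truncated path $(k_1,\dots,k_{\ell-1})$ has orbit-profile $\Omega(P')$ as a path from $r$ to $k_{\ell-1}$. The ``only if'' direction is immediate from the definition of orbit-profile, since the first component of $\Omega(Q)$ is $\Orbit(\Orbit_{(k_\ell)}(k_{\ell-1})) = \Orbit(\Orbit_{(g)}(k_{\ell-1}))$, and this equals $\Orbit(\Orbit_{(g)}(h_{\ell-1}))$ exactly when $k_{\ell-1}$ lies in the same orbit of parents of $g$ as $h_{\ell-1}$, i.e.\ $k_{\ell-1} \in \Orbit_{(g)}(h_{\ell-1})$; here I use Lemma \ref{lem:XOR2_orbitsOfOrbits} to know that $\Orbit(\Orbit_{(g)}(k_{\ell-1}))$ is well-defined and that membership in a fixed orbit-of-orbits, with $g$ itself fixed, pins $k_{\ell-1}$ down to a single orbit $\Orbit_{(g)}(h_{\ell-1})$. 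The ``if'' direction is equally direct. Then I would count: for each of the $|\Orbit_{(g)}(h_{\ell-1})|$ choices of the second-to-last gate $k_{\ell-1} \in \Orbit_{(g)}(h_{\ell-1})$, I need the number of paths from $r$ to $k_{\ell-1}$ with orbit-profile $\Omega(P')$. Since $k_{\ell-1}$ lies in the same $\Aut(C_n)$-orbit as $h_{\ell-1}$ (being in $\Orbit_{(g)}(h_{\ell-1}) \subseteq \Orbit(h_{\ell-1})$), and since the number of paths with a given orbit-profile is an $\Aut(C_n)$-invariant of the target gate — any circuit automorphism maps paths to paths and preserves orbit-profiles, because it commutes with the formation of the orbits $\Orbit_{(h_i)}(h_{i-1})$ — this count is the same as the number of paths from $r$ to $h_{\ell-1}$ with orbit-profile $\Omega(P')$, which by the induction hypothesis is $\prod_{2 \le i \le \ell-1} |\Orbit_{(h_i)}(h_{i-1})|$. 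Multiplying by the $|\Orbit_{(g)}(h_{\ell-1})|$ choices for $k_{\ell-1}$ gives $\prod_{2 \le i \le \ell} |\Orbit_{(h_i)}(h_{i-1})|$, as desired.

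The main obstacle I anticipate is making the invariance argument fully rigorous: I must check carefully that if $\tau \in \Aut(C_n)$ maps $h_{\ell-1}$ to $k_{\ell-1}$, then $\tau$ induces a bijection between paths from $r$ to $h_{\ell-1}$ and paths from $r$ to $k_{\ell-1}$ that preserves orbit-profiles. This requires that $\tau$ fixes the root $r$ (true, since automorphisms fix the root) and that for each internal gate $h_i$ on a path, $\tau$ maps $\Orbit_{(h_i)}(h_{i-1})$ to $\Orbit_{(\tau h_i)}(\tau h_{i-1})$ and that these two orbits-of-parents lie in the same orbit-of-orbits — which is exactly the content of the well-definedness in Lemma \ref{lem:XOR2_orbitsOfOrbits}. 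A subtle point worth a sentence of care is that the orbit-profile of the truncated path $(k_1,\dots,k_{\ell-1})$ must be computed with $k_{\ell-1}$ as its endpoint, and one must verify this truncated profile does not depend on how $k_{\ell-1}$ sits inside the larger circuit — but this is automatic since orbit-profiles only reference the gate, its ancestors, and the global automorphism group $\Aut(C_n)$. Once these bookkeeping facts are in place, the induction closes cleanly.
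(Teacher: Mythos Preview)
Your proposal is correct and takes essentially the same approach as the paper: both walk backwards from $g$ toward $r$, using Lemma~\ref{lem:XOR2_orbitsOfOrbits} to argue that at each step the relevant orbit of parents has the same size regardless of which representative was chosen at the previous step. The paper phrases this as a direct iterative count (``we have $|\Orbit_{(h_\ell)}(h_{\ell-1})|$ choices, then $|\Orbit_{(h_{\ell-1})}(h_{\ell-2})|$ choices, \dots''), while you package it as a formal induction with an explicit automorphism-invariance step for the truncated count; these are the same argument in slightly different dress.
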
	
\begin{proof}
	We go backwards from $g=h_\ell$ to $r$ and count how many ways there are to construct a path with orbit-profile $\Omega(P)$. In the beginning, there are $|\Orbit_{(h_\ell)}(h_{\ell-1})|$ many options to choose a predecessor of $g$ that is in the orbit required by $\Omega(P)$. Let $h$ be the predecessor of $g$ that we choose. From there, we have $|\Orbit_{(h_{\ell-1})}(h_{\ell-2})|$ predecessors that we could continue with in a way that respects $\Omega(P)$. To see this, we use Lemma \ref{lem:XOR2_orbitsOfOrbits}: No matter which gate we chose for $h$, it is in $\Orbit(h_{\ell-1})$. Therefore, by Lemma \ref{lem:XOR2_orbitsOfOrbits}, there exists a unique $\Orbit_{(h)}(h')$ in $\{ \Orbit_{(h)}(h') \mid h' \in E_C h \}$ that is also a member of $\Orbit(\Orbit_{(h_{\ell-1})}(h_{\ell-2}))$. From this $\Orbit_{(h)}(h')$, we can choose the next gate on our path, and this orbit has the same size as $\Orbit_{(h_{\ell-1})}(h_{\ell-2})$. Hence, we have so far $|\Orbit_{(h_\ell)}(h_{\ell-1}) | \cdot | \Orbit_{(h_{\ell-1})}(h_{\ell-2})|$ possibilities to go two steps from $g$ towards $r$ in a way that complies with the orbit-profile $\Omega(P)$. In the same fashion, we continue counting until we reach the root, and obtain the number of paths that is stated in the lemma.
\end{proof}

\begin{lemma}
	\label{lem:XOR2_finalLemma}
	Let $0 < \epsilon \leq 1$. For each $n$, fix a gate $g_n$ in $C_n$ such that $\Sp_A(g_n)$ has at least two parts of size $\geq \epsilon n$. 
	For every possible orbit-profile $\Omega(P)$ that any path $P$ from the root of $C_n$ to $g_n$ can have, there exists an even number of distinct paths from the root to $g_n$ with exactly that orbit-profile.
\end{lemma}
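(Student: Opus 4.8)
The plan is to derive the lemma directly from Lemma~\ref{lem:XOR2_numPathsWithProfile} and Corollary~\ref{cor:XOR2_numberOfLargePartsStaysHigh}. Fix a path $P = (r = h_1, h_2, \dots, h_\ell = g_n)$ realising the given orbit-profile $\Omega(P)$, where as usual $h_{j-1}$ is a parent of $h_j$. Note first that $\ell \ge 2$: the root satisfies $\Sp_A(r) = \{[n]\}$ (since, by Property~2 of Theorem~\ref{thm:XOR2_symmetricCircuitsDoNotCompute}, every $\pi \in \Sym_n$ extends to a circuit automorphism, and every circuit automorphism fixes $r$, so $\Stab_n(r) = \Sym_n$), whereas $\Sp_A(g_n)$ has at least two parts of size $\ge \epsilon n$, hence $g_n \neq r$. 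By Lemma~\ref{lem:XOR2_numPathsWithProfile} the number of paths from $r$ to $g_n$ with orbit-profile $\Omega(P)$ equals $\prod_{2 \le j \le \ell} |\Orbit_{(h_j)}(h_{j-1})|$, so it suffices to show that at least one factor is even. I would argue by contradiction: assume $|\Orbit_{(h_j)}(h_{j-1})|$ is odd for every $j \in \{2,\dots,\ell\}$.

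Next I would track the quantity $\Sigma(h) := \sum_{i \ge s} \zeta[h](i)$ along the path for a suitable even threshold $s$. Choose $s$ to be the largest even integer not exceeding $\epsilon n / 2$; then for all large $n$ we have $\epsilon n/4 \le s < n$, and since both of the two large parts of $\Sp_A(g_n)$ have size $\ge \epsilon n \ge s$, we get $\Sigma(g_n) \ge 2$. By Property~3 of Theorem~\ref{thm:XOR2_symmetricCircuitsDoNotCompute}, $|\Orbit_{(h_j)}(h_{j-1})|$ and $|\Orbit_{(h_{j-1})}(h_j)|$ are in $\Oo(n)$; together with the assumed oddness this lets me apply Corollary~\ref{cor:XOR2_numberOfLargePartsStaysHigh} (with the constant $\epsilon/4$ in place of its $\epsilon$, and $g = h_j$, $h = h_{j-1}$) to obtain $\Sigma(h_{j-1}) \ge \Sigma(h_j)$ for each $j$. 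Chaining these inequalities up the path yields $\Sigma(r) \ge \Sigma(g_n) \ge 2$. But $\Sp_A(r) = \{[n]\}$ gives $\zeta[r](n) = 1$ and $\zeta[r](i) = 0$ for $i < n$, so $\Sigma(r) = 1$ (as $s < n$), a contradiction. Hence some $|\Orbit_{(h_j)}(h_{j-1})|$ is even, the product in Lemma~\ref{lem:XOR2_numPathsWithProfile} is even, and therefore the number of paths from $r$ to $g_n$ with orbit-profile $\Omega(P)$ is even.

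Given the machinery already built up, this lemma is essentially a bookkeeping argument, and I do not expect a genuine obstacle here; the real difficulty lives in the preceding Lemma~\ref{lem:XOR2_mainPartSizeLemma} and its Corollary~\ref{cor:XOR2_numberOfLargePartsStaysHigh}. The points that still need a little care are: making sure an even value $s$ lands inside the window $[\Theta(n), a]$ (where $a$ is the smaller of the two large part-sizes of $\Sp_A(g_n)$) and that the constant handed to Corollary~\ref{cor:XOR2_numberOfLargePartsStaysHigh} is admissible for all sufficiently large $n$; and the slightly self-referential flavour of the proof, namely that ``every orbit size along $P$ is odd'' is exactly the hypothesis under which Corollary~\ref{cor:XOR2_numberOfLargePartsStaysHigh} produces the monotonicity of $\Sigma$ that we use to contradict that very hypothesis. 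Both are routine once stated carefully.
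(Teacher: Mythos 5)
Your proof is correct and follows essentially the same route as the paper's: assume all orbit sizes $|\Orbit_{(h_j)}(h_{j-1})|$ along the path are odd, fix an even threshold $s$ in a suitable window, push the quantity $\sum_{i\ge s}\zeta[h](i)$ monotonically up the path via Corollary~\ref{cor:XOR2_numberOfLargePartsStaysHigh}, and derive a contradiction with $\Sp_A(r)=\{[n]\}$. The only difference is cosmetic: the paper picks $s$ directly in $[\epsilon n,\min\{s_1,s_2\}]$ and informally shrinks $\epsilon$ if no even integer lands there, while you pick the largest even $s\le\epsilon n/2$ and feed the corollary the constant $\epsilon/4$ — your version makes this parity-bookkeeping a little more explicit, but it is the same argument.
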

\begin{proof}
	Fix a path $P$ from the root to $g_n$ in $C_n$ and the corresponding orbit-profile $\Omega(P)$. We are going to show that there exists an even number of distinct paths from the root to $g_n$ with orbit-profile $\Omega(P)$.\\
	By the assumption on $g_n$, it holds $\zeta[g_n](s_1) \geq 1$ and $\zeta[g_n](s_2) \geq 1$ for $s_1, s_2 \geq \epsilon \cdot n$. For the root $r_n$ it holds $\zeta[r_n](n) = 1$ and $\zeta[r_n](s) = 0$ for every $s \neq n$ (because the root is fixed by all permutations in $\Sym_n$). Therefore, the size profiles $\zeta$ must change along the path $P$ from $g$ to $r$. Let $s$ be an even natural number such that $\epsilon \cdot n \leq s \leq \min \{s_1,s_2\}$. This always exists because otherwise we can just make $\epsilon$ a bit smaller such that $\min \{s_1,s_2\}-1 \geq \epsilon \cdot n$.\\
	Assume for a contradiction that for every gate $h \in P$, for its predecessor $h'$ on the path $P$ it holds: $|\Orbit_{(h)}(h')|$ is odd. 
	Then applying Corollary \ref{cor:XOR2_numberOfLargePartsStaysHigh} inductively along the path $P$ shows that $\sum_{i \geq s} \zeta[h](i) \geq \sum_{i \geq s} \zeta[g_n](i) \geq 2$, for every $h \in P$. This is a contradiction to the fact that $\sum_{i \geq s} \zeta[r_n](i) = 1$.\\
	This shows that there must be some $h$ on the path $P$ such that the predecessor $h'$ of $h$ satisfies: $|\Orbit_{(h)}(h')|$ is even. Then the total number of paths from $r_n$ to $g_n$ with profile $\Omega(P)$ is even because by Lemma \ref{lem:XOR2_numPathsWithProfile}, this number is a product containing the even number $|\Orbit_{(h)}(h')|$.
\end{proof}

From this, our main technical theorem follows, which states that not only the number of paths with a given orbit profile, but the total number of paths from $r$ to $g$ is even:\\
\\
\textit{Proof of Theorem \ref{thm:XOR2_evenPaths}}: Every path from $r_n$ to $g_n$ has exactly one orbit-profile. Hence, the number of paths from $r_n$ to $g_n$ is just
\[
\sum_{\Omega \text{ an orbit profile of a path from } r_n \text{ to } g_n} \#(\Omega),
\]
where $\#(\Omega)$ denotes the number of paths with orbit-profile $\Omega$ that end in $g_n$. By Lemma \ref{lem:XOR2_finalLemma}, all summands in this sum are even. \hfill \qedsymbol \\

Finally, let us summarise why Theorem \ref{thm:XOR2_symmetricCircuitsDoNotCompute} (``fully symmetric XOR-circuits are insensitive to all inputs except those labelled with very imbalanced binary strings'') follows from Theorem \ref{thm:XOR2_evenPaths}.\\
\\
\textit{Proof of Theorem \ref{thm:XOR2_symmetricCircuitsDoNotCompute}}:\\ 
Fix any $\epsilon > 0$. Let $g_n$ be an input gate of $C_n$ labelled with a hypercube-edge  $\ell(g_n) = \{0^a1^{n-a}, 0^{a-1}1^{n-a+1}\}$ such that $a \geq \epsilon n$ and $n-a \geq \epsilon n$. It is easy to see that $\Sp_A(g_n) = \{ \{ 1,...,a-1 \}, \{a \}, \{a+1,...,n \} \}$. This contains two parts of size $\geq \epsilon n$, so Theorem \ref{thm:XOR2_evenPaths} applies and the number of paths from the root to $g_n$ is even. By Lemma \ref{lem:XOR2_evenPaths}, the circuit $C_n$ is not sensitive to the input gate $g_n$. \hfill \qedsymbol

\section{Conclusion and future research}
We have defined different classes of choiceless algorithms for the CFI-query, according to the properties of the h.f.\ sets that they necessarily activate. All three currently known algorithms are both super- as well CFI-symmetric. A generalisation of CFI-symmetry is the symmetric basis property (Definition \ref{def:XOR_symmetricBasis}). We have shown that super-symmetric objects which are also CFI-symmetric or have the symmetric basis property can be translated into polynomial-size symmetric XOR-circuits in a meaningful way.\\

We have used this circuit construction to make progress towards showing that no CFI-symmetric CPT-algorithm defines the CFI-query over unordered hypercubes: The existence of such an algorithm would entail the existence of a family of symmetric XOR-circuits whose sizes and orbit sizes are polynomial in the size of the $n$-dimensional hypercube (i.e.\ $2^n$), which compute the XOR over $\Omega(2^n/n^{1.5})$ many input bits, and whose fan-in degree is bounded by $\Oo(n)$ (i.e.\ logarithmic in the hypercube size). Our aim was to show that such circuit families do not exist. We have not fully accomplished this but at least we have identified interesting further restrictions on the circuits which altogether are unsatisfiable: If the orbit size of each circuit is assumed to be exactly $1$ instead just polynomial, and instead of the fan-in dimension bound of $\Oo(n)$, we impose an $\Oo(n)$-bound on the number of children and parents of each gate (per orbit in the stabiliser of the gate), then these circuits cannot compute the XOR over $\Omega(2^n/n^{1.5})$ many input bits. It follows that if nonetheless there does exist a CFI-symmetric algorithm for the hypercube CFI-query, then the corresponding circuit families either have orbit size $> 1$ or must violate the orbit-wise bound on the number of child or parent gates. Thus, the next step should be to try and lift our lower bound techniques to a more general setting. It seems plausible that this can be done but there are technical challenges involved:\\

The first problem is how to argue for circuits whose orbit size is not exactly $1$, but bounded by some polynomial in $2^n$. Then the supporting partition of the root does not necessarily consist of only one part, but it can be many more (although if it is too many, then the orbit size will be greater than $2^{nk}$, which is forbidden). Our argument exploited the fact that the number of linear-size parts in the supporting partition can never decrease along a path from an input gate to the root unless the number of parents is even at some point. But if the supporting partition of the root can now have multiple linear-size parts, then this no longer leads to a contradiction. It might be that with a much more careful analysis of the circuits, our argument could still be recovered in this case, though. Our key technical lemma (Lemma \ref{lem:XOR2_mainPartSizeLemma}) is actually stronger than what we needed in our proof because it gives us several cases in which the number of parents of a gate must be even. Thus, if even parent numbers are forbidden, then the ways in which the size profiles of the supporting partitions can change along a path are very limited. But surely we can expect that not all gates have the same size profile, so changes will occur somewhere, and then again, this will lead to even parent numbers. It is just not clear at this moment how to turn this into a formal argument.\\

The second problem concerns the relationship between the logarithmic bound on the \emph{fan-in dimension}, that we get from Theorem \ref{thm:XOR_lowerboundProgram}, and on the logarithmic orbit-wise fan-in and fan-out bounds that we imposed in the last section. Currently, we do not know if one of these bounds implies the other. Probably, the bound on the parent number is not directly related to fan-in dimension but the bound on the children might be. It would be nice if logarithmic fan-in dimension implied a logarithmic number of children per orbit. Then we would have this covered with our lower bound. In case that the gates in $\Orbit_{(h)}(g) \subseteq hE_C$ all have distinct sensitivity sets $\Xx(g')$, which are also linearly independent as vectors in $\bbF_2^E$, then $|\Orbit_{(h)}(g)|$ is indeed at most the fan-in dimension. But it is unclear how to reason about the properties of these sets $\Xx(g')$, for all $g' \in \Orbit_{(h)}(g)$.\\
For removing the $\Oo(n)$-bound on the orbit-wise parent number of the gates, we have a rough idea. Namely, because our circuits are single-rooted, their levels should get narrower closer to the root. Therefore, it seems plausible that close enough to the root, each gate indeed only has a bounded number of parents because otherwise, the circuit would get wider. The good thing about our even-paths theorem (Theorem \ref{thm:XOR2_evenPaths}) is that it can be applied to any gate in the circuit, not only input gates. So we could potentially focus on the top-most part of the circuit, where its levels only get narrower, and could show that in this top part, all paths cancel each other out. This would suffice to show that the circuit is not sensitive to enough input bits.\\

All in all, it feels like our even-paths technique has more potential and might also work for less restricted circuit classes, perhaps even for \emph{all} circuits satisfying the necessary properties for the existence of a CFI-symmetric algorithm for the hypercube CFI-problem. In particular, it might also be possible to improve our group-theoretic Theorem \ref{thm:G_alternatingPartitionWithLogManyParts}, which says that the alternating supporting partitions can not have linearly many singleton parts. Getting an even more fine-grained understanding of the alternating supporting partitions of groups of index $\leq 2^{nk}$ could be useful. 
So we seem to be in the situation where we probably have not yet reached the limitations of our technique, but nonetheless, making further progress might be technically very challenging.\\

One particular question that could guide further research is to try and prove that no \emph{tree-like} circuits can satisfy the conditions from Theorem \ref{thm:XOR_lowerboundProgramInstantiatedHypercubes}. Our result indirectly shows that no tree-like circuits with logarithmic fan-in degree can have orbit size one with respect to the hypercube automorphisms: Such circuits would satisfy all properties from Theorem \ref{thm:XOR2_symmetricCircuitsDoNotCompute} \emph{and} they would be sensitive to all of their input gates because no cancellations can happen in tree-like circuits. This would contradict Theorem  \ref{thm:XOR2_symmetricCircuitsDoNotCompute}. It remains the question if tree-like circuits with a greater orbit size than one or fewer degree restrictions can satisfy the conditions from Theorem \ref{thm:XOR_lowerboundProgramInstantiatedHypercubes}. We hope the answer will turn out to be negative.\\

Finally, it may be interesting to investigate in how far the new lower bound technique against symmetric XOR-circuits that we developed here can be applied to other scenarios as well. As mentioned in the introduction, studying lower bounds for symmetric circuits also seems to be a promising approach towards separating the algebraic complexity classes VNP and VP. There exist lower bounds against symmetric arithmetic circuits for computing the determinant and permanent polynomials by Dawar and Wilsenach \cite{symmetricCircuitsDeterminant, dawar2020symmetric}. They raise the question in how far these lower bounds can be improved to weaker symmetry groups, and perhaps our technique can be adapted to that end. Of course, the even-paths theorem is probably only useful for circuits which purely consist of XOR-gates; but the statement that the alternating supporting partitions of the gates cannot change much between the layers could lead to new insights. A novelty of our technique in comparison with \cite{dawarAnderson, dawarGreg, symmetricCircuitsDeterminant, dawar2020symmetric} is that it does not use any ``support theorem''. Support theorems are a key ingredient in all these previous works, and they usually state that any gate in a highly symmetric circuit is supported by a constant number of elements of the permutation domain. For poly-size circuits with hypercube-symmetries, as we study here, we believe that a support theorem in that strong form does not hold. Thus, our approach via alternating supporting partitions might perhaps open up a perspective to study such weaker symmetry groups as well.
	
\bibliographystyle{plainurl}
\bibliography{references.bib}	

\newpage
\section{Appendix}

\subsection{Correctness of the inductive matrix construction in Section \ref{sec:generalisedCircuitConstruction}}
\restateNMatricesPermuteWithStabiliser*

\begin{proof}
	Let $[y] \in \Cc[x]$ be the primer of $\Omega_{[y']}$. Let $\pi_{[x'][y']} \in \Stab_G(\mu)$ be the permutation that was used to define $N[x'][y']$ from $N[x][y]$. For the matrix $N[x][y]$, there exists such a homomorphism $h : g_{[y]}(\Stab_G([y] \cap x)) \lra \Sym(J_{[x][y]})$ by definition of the matrix. We define the desired homomorphism $h': g_{[y']}(\Stab_G([y'] \cap x')) \lra \Sym(J_{[x'][y']})$ as follows. For every $\pi \in \Stab_G([y'] \cap x')$ and every $(i,[x'],[y']) \in J_{[x'][y']}$, let
	\[
	h'(g_{[y']}(\pi))(i,[x'],[y']) := (j,[x'],[y']),   
	\]
	where $j$ is the number such that 
	\[
	h(g_{[y]}(\pi^{-1}_{[x'][y']} \circ \pi \circ \pi_{[x'][y']}))(i,[x],[y]) = (j,[x],[y]).
	\]
	For every $\pi \in \Stab_G([y'] \cap x')$, $h'(g_{[y']}(\pi))$ is indeed a permutation in $\Sym(J_{[x'][y']})$, because $N[x][y]$ and $N[x'][y']$ have the same number of rows (and the index sets $J_{[x][y]}$ and $J_{[x'][y']}$ differ only with respect to the second and third entry of the index triples), and $(\pi^{-1}_{[x'][y']} \circ \pi \circ \pi_{[x'][y']}) \in \Stab_G([y] \cap x)$, so $h(g_{[y]}(\pi^{-1}_{[x'][y']} \circ \pi \circ \pi_{[x'][y']}))$ is a permutation on the rows of $N[x][y]$.\\
	The fact that $h'$ is a group homomorphism follows directly from the fact that $h$ is one, and because $(\pi^{-1}_{[x'][y']} \circ \pi_1 \circ \pi_{[x'][y']}) \circ (\pi^{-1}_{[x'][y']} \circ \pi_2 \circ \pi_{[x'][y']}) = \pi_1 \circ \pi_2$.\\
	Finally, we have to show that for every $\sigma \in g_{[y']}(\Stab_G([y'] \cap x'))$, it holds $(h'(\sigma), \sigma)(N[x'][y']) = N[x'][y']$. To prove this, we show that $N[x'][y']_{h'(\sigma)(j,[x'],[y']),\sigma(i)} = N[x'][y']_{(j,[x'],[y']),i}$ for every $(j,[x'],[y']) \in J_{[x'][y']}$ and $i \in I_{[y']}$. In the following, we will use that by definition of $N[x'][y']$, we have: $N[x'][y'] = (\text{id},g_{[y]}(\pi_{[x'][y']}))(N[x][y])$. Let $\sigma = g_{[y']}(\pi) = g_{[y]}(\pi)$ for a $\pi \in \Stab_G([y'] \cap x')$. Then for any $(j,[x'],[y']) \in J_{[x'][y']}, i \in I_{[y']}$, we obtain: 
	\begin{align*}
		N[x'][y']_{h'(\sigma)(j,[x'],[y']),\sigma(i)} &= (N[x][y])_{h(g_{[y]}(\pi^{-1}_{[x'][y']} \circ \pi \circ \pi_{[x'][y']}))(j,[x],[y]),(g_{[y]}(\pi_{[x'][y']})^{-1} \circ \sigma)(i)}\\
		&= (N[x][y])_{h(g_{[y]}(\pi^{-1}_{[x'][y']} \circ \pi \circ \pi_{[x'][y']}))(j,[x],[y]),g_{[y]}(\pi_{[x'][y']}^{-1} \circ \pi \circ \pi_{[x'][y']})(i')}\\
		&=  (N[x][y])_{(j,[x],[y]),i'}.
	\end{align*}
	In the second step, we used that $\sigma(i) = g_{[y']}(\pi)(i)$, and we replaced $i \in I_{[y']}$ with $g_{[y]}(\pi_{[x'][y']})(i')$ for some $i' \in I_{[y]}$ (which can be done because of item (c) of the induction hypothesis for $I_{[y]}$ and $I_{[y']} = I_{\pi_{[x'][y']}[y]}$). We also used that $g_{[y]}$ is a group homomorphism. The last step holds because we already know that $h$ satisfies the property that we are trying to prove for $h'$, i.e.\ $(h(\sigma'),\sigma')N[x][y] = N[x][y]$ for any $\sigma' \in g_{[y]}(\Stab_G([y] \cap x))$ (this is by construction of $N[x][y]$ and $h$). We can continue the equation, using the definition of $N[x'][y']$ again:
	\begin{align*}
		(N[x][y])_{(j,[x],[y]),i'} &= (N[x'][y'])_{(j,[x'],[y']),g_{[y]}(\pi_{[x'][y']})(i')}\\
		&= (N[x'][y'])_{(j,[x'],[y']),i}.
	\end{align*}		
	This proves that $(h'(\sigma), \sigma)(N[x'][y']) = N[x'][y']$, as desired.
\end{proof}

\restateHwellDefined*
\begin{proof}
	Let $[y] \in \Cc[x]$ be the primer of $\Omega_{[y']}$, and write $\sigma := \pi_{[x'][y']}, \sigma' := \pi_{\pi[x']\pi[y']}$. So these are the two permutations in $\Stab_G(\mu)$ that were used to construct $N[x'][y']$ and $N\pi[x']\pi[y']$ from $N[x][y]$. By construction of these matrices, it holds $N[x'][y'] = (\text{id},g_{[y]}(\sigma))(N[x][y])$ and $N[x'][y'] = (\text{id},g_{[y]}(\sigma'))(N[x][y])$. Thus, $(N\pi[x']\pi[y'])_{k,-} = (g_{[y]}(\sigma') \circ g_{[y]}(\sigma)^{-1})((N[x'][y'])_{k,-})$. Since $g_{[y]}$ is a group homomorphism, we can also write this as:
	\[
	(N\pi[x']\pi[y'])_{k,-} = g_{[y]}(\sigma' \circ \sigma^{-1})((N[x'][y'])_{k,-}) \tag{$\star$}
	\]
	If $\sigma' \circ \sigma^{-1}$ were equal to $\pi$, then this would suffice to prove the lemma. However, we only know that $(\sigma' \circ \sigma^{-1})([y'] \cap x') = \pi([y'] \cap x')$. We use this to show the following\\
	\textbf{Claim:} Let $m$ be the number of rows of $N[x'][y']$ and $N\pi[x']\pi[y']$. There exists a permutation $\theta \in \Sym_m$ such that $(N\pi[x']\pi[y'])_{\theta(k),-} = g_{[y]}(\pi)((N[x'][y'])_{k,-})$, for every $k \in [m]$.\\
	\textit{Proof.} It holds that $(\pi \circ \sigma \circ (\sigma')^{-1}) \in \Stab_G(\pi[y'] \cap \pi x')$. By Lemma \ref{lem:XOR_NmatricesPermuteWithStabiliser}, there exists a $\theta \in \Sym_m$ such that $(\theta, g_{[y]}(\pi \circ \sigma \circ (\sigma')^{-1})))(N\pi[x']\pi[y']) = N\pi[x']\pi[y']$. It holds $\sigma \circ (\sigma')^{-1} = (\sigma' \circ \sigma^{-1})^{-1}$. Thus, by $(\star)$ we have: $(N[x'][y'])_{k,-} = g_{[y]}(\sigma \circ (\sigma')^{-1})((N\pi[x']\pi[y'])_{k,-})$. It follows that $(N\pi[x']\pi[y'])_{\theta(k),-} = g_{[y]}(\pi)(N[x'][y'])_{k,-}$ for every $k \in [m]$. This proves the claim.\\
	The claim entails the lemma because $g_{[y']} = g_{[y]}$ (as $\Omega_{[y]} = \Omega_{[y']})$, and so we know that the rows of $N\pi[x']\pi[y']$ are the rows of $N[x'][y']$, with an application of $g_{[y']}(\pi)$ to the columns, and a potential reordering of the rows.
\end{proof}	

\subsection{Objects with and without symmetric bases}
\restateCFIsymmetricHasSymmetricBasis*
\begin{proof}
	Let $x \in \tc(\mu)$ and $[y] \in \Cc[x]$. Let $\Gamma := \Stab_E([y] \cap x) \leq \bbF_2^E$. We have to define two bases $\Bb_\Gamma \subseteq \Bb$ of $\Gamma$ and of $\bbF_2^E$, respectively, such that the group $\Stab_{\Stab_G([y] \cap x)}(\Bb_\Gamma) \cap \Stab_{\Stab_G([y] \cap x)}(\Bb)$ has polynomial index in $\Stab_G([y] \cap x)$. 
	Since $\mu$ is CFI-symmetric, by Definition \ref{def:XOR_CFIsymmetric}, the $\Aut_{\CFI}(\GG)$-orbit of $[y] \cap x$ has size exactly two. We have $\Aut_{\CFI}(\GG) \cong \bbF_2^E$, so by the Orbit-Stabiliser Theorem, $\Stab_E([y] \cap x)$ is a subspace of $\bbF_2^E$ with co-dimension one.
	We use this to analyse the structure of the space $\Stab_E([y] \cap x)$. The group $\Stab_G([y] \cap x)$ is a subgroup of $\Aut(G) \leq \Sym(V)$ and therefore also acts on the edge set $E$. Thus, we can partition $E$ into its $\Stab_G([y] \cap x)$-orbits. Let $\Pp = \{P_1,...,P_m\}$ denote this orbit partition of $E$.\\
	\\
	\textbf{Claim:} There is a partition $E = A \uplus B$ such that $\Stab_E([y] \cap x) = \bbF_2^A \oplus \widetilde{\bbF}_2^B$ and $B \neq \emptyset$. Moreover, $A$ and $B$ are unions of $\Stab_G([y] \cap x)$-orbits.\\
	\textit{Proof of claim.} Let $\Pp' \subseteq \Pp$ denote the set of orbits $P_i$ such that for any $e \in P_i$, the unit vector $\chi(e)$ is in $\Stab_E([y] \cap x)$. Note that whenever $\chi(e) \in \Stab_E([y] \cap x)$, then $\chi(e') \in \Stab_E([y] \cap x)$ for every $e'$ in the orbit of $E$ because $\Stab_G([y] \cap x)$ is transitive on each orbit and the space $\Stab_E([y] \cap x)$ is invariant under the action of $\Stab_G([y] \cap x)$ on the coordinates (Lemma \ref{lem:XOR_actionOfAutomorphismsOnStabiliserSpaces}). We let $A := \bigcup \Pp'$ and $B := E \setminus A$. It remains to show that $\widetilde{\bbF}_2^B$ is a subspace of $\Stab_E([y] \cap x)$. Assume for a contradiction that there is some vector $\mathbf{v} \in \bbF_2^E$ with even Hamming weight on $B$ and zero on $A$ which is not contained in $\Stab_E([y] \cap x)$. 
	Since $\Stab_E([y] \cap x)$ has co-dimension exactly one in $\bbF_2^E$, and since moreover, by definition of $\Pp'$, no unit vector $\chi(e)$ with $e \in B$ is in $\Stab_E([y] \cap x)$, we know that for any such unit vector $\chi(e)$ with $e \in B$, there exists some $\mathbf{w}_e \in \Stab_E([y] \cap x)$ such that $\mathbf{v} = \mathbf{w}_e + \chi(e)$. But then, every vector with Hamming-weight exactly two on $B$ is in $\Stab_E([y] \cap x)$. Namely, for any two $e,e' \in B$, it then holds that $\mathbf{w}_e + \mathbf{w}_{e'} = \chi(e) + \chi(e')$, and we have $\mathbf{w}_e, \mathbf{w}_{e'} \in \Stab_E([y] \cap x)$. So then, $\Stab_E([y] \cap x)$ does contain $\widetilde{\bbF}_2^B$. In total, this proves the claim (it holds $B \neq \emptyset$ because otherwise, the co-dimension would be zero).\\
	
	Now it is not hard to define a symmetric basis for $\Stab_E([y] \cap x)$.  
	Fix an arbitrary edge $f \in B$. We define 
	\[
	\Bb_\Gamma := \{ \chi(e) \mid e \in A\} \cup \{ \chi(\{ e,f\}) \mid e \in B \setminus \{f \} \}.
	\]
	One can check that this is indeed a basis of $\bbF_2^A \oplus \widetilde{\bbF}_2^B$. The basis $\Bb$ of $\bbF_2^E$ is then simply defined as $\Bb := \Bb_\Gamma \cup \{ \chi(f) \}$.\\
	Now the group
	$\Stab_{\Stab_G([y] \cap x)}(\Bb_\Gamma) \cap \Stab_{\Stab_G([y] \cap x)}(\Bb)$ contains all permutations in $\Stab_G([y] \cap x)$ that fix the edge $f$ and fix the sets $A$ and $B$ (setwise). By the Claim, $A$ and $B$ are unions of $\Stab_G([y] \cap x)$-orbits, so the latter condition is fulfilled by all permutations in $\Stab_G([y] \cap x)$. Therefore,  $\Stab_{\Stab_G([y] \cap x)}(\Bb_\Gamma) \cap \Stab_{\Stab_G([y] \cap x)}(\Bb)$ is simply the pointwise stabiliser of $f$ in $\Stab_G([y] \cap x) \leq \Sym(E)$, and this has index at most $|E|$. This is polynomial in $|\GG^S|$.
\end{proof}

\restateCounterExample*	
\begin{proof}
	Define $t(n)$ as the next even natural number $\geq n$. We now construct $\Gamma_n$ and $\mathbf{G}_n$. Let $\Pp_n$ be a partition of $[t(n)]$ into $\approx \log n$ many parts such that each part is roughly of the same size, namely $\approx \frac{n}{\log n}$. Importantly, every part must be of even size; such a partition exists because $t(n)$ is even. For a part $P \in \Pp_n$, let $\widetilde{\bbF}_2^P \leq \bbF_2 ^{t(n)}$ denote the Boolean vector space that contains all vectors whose projection to $P$ has even Hamming weight and which are zero outside of $P$. Then we define
	\[
	\Gamma_n := \bigoplus_{P \in \Pp_n} \widetilde{\bbF}_2^P.
	\]
	%The operator $\oplus$ denotes the direct sum of vector spaces. 
	In other words, $\Gamma_n$ contains exactly those vectors that have even Hamming weight on each of the parts in $\Pp_n$ (but not all vectors with even Hamming weight in $\bbF_2^{t(n)}$, namely not the vectors which are odd on an even number of parts). The permutation group $\mathbf{G}_n \leq \Sym_{t(n)}$ is defined as the largest group that setwise stabilises the partition $\Pp_n$. So $\mathbf{G}_n$ contains the direct product $\mathbf{H_n} := \prod_{P \in \Pp_n} \Sym(P)$ and all permutations that map each part of $\Pp_n$ to another part.\\
	
	It is clear that $\Gamma_n$ is invariant under $\mathbf{G}_n$. Furthermore, the codimension of $\Gamma_n$ is logarithmic in $t(n) \approx n$: Suppose $\Bb_\Gamma$ is any basis of $\Gamma_n$. Then it can be extended to a basis of $\bbF_2^{t(n)}$ by adding one unit vector $e_P$ for each part $P \in \Pp_n$, such that $e_P$ has a $1$-entry in $P$ and is zero otherwise. The number of parts is logarithmic, so the same holds for the codimension. Finally, we have to prove the third condition.\\
	Let $\Bb_\Gamma \subseteq \Bb$ be arbitrary bases for $\Gamma$ and $\bbF_2^{t(n)}$, respectively. Observe that $\Stab_{\mathbf{G}_n}(\Bb) \leq \Stab_{\mathbf{G}_n}(\Bb \setminus \Bb_\Gamma)$ because $\Gamma_n$ is $\mathbf{G}_n$-invariant and so, the vectors in $\Bb \setminus \Bb_\Gamma$ cannot be moved into $\Gamma$. Therefore:
	$[\mathbf{G}_n : \Stab(\Bb \setminus \Bb_\Gamma)] \leq [\mathbf{G}_n : \Stab(\Bb)]$. Thus, it suffices to show the desired lower bound for $[\mathbf{G}_n : \Stab(\Bb \setminus \Bb_\Gamma)]$. Let $\mathbf{w}_1,...,\mathbf{w}_{\log n}$ be an enumeration of $\Bb \setminus \Bb_\Gamma$. For each $i \leq \log n$, let $\Qq_i \subseteq \Pp_n$ denote the set of parts $P \in \Pp_n$ such that $\mathbf{w}_i$ has odd Hamming weight on $P$. We know that for each $i$, $\Qq_i \neq \emptyset$ because otherwise, $\mathbf{w}_i$ would be in $\Gamma$. Moreover, each $P \in \Pp_n$ is in at least one of the $\Qq_i$ because otherwise, $\Bb$ would not generate the whole space $\bbF_2^{t(n)}$.\\
	Now in order to estimate $|\Stab(\Bb \setminus \Bb_\Gamma)|$, we first estimate the size of the pointwise stabiliser of $\Bb \setminus \Bb_\Gamma$ in $\mathbf{H}_n$, $\StabP_{\mathbf{H}_n}(\Bb \setminus \Bb_\Gamma)$. This is the subgroup of $\mathbf{H}_n$ that stabilises each $\mathbf{w}_i \in \Bb \setminus \Bb_\Gamma$, so it consists of all permutations that fix each part $P \in \Pp_n$ and each $\mathbf{w}_i$. 
	We can bound this stabiliser as follows:
	\[
	|\StabP_{\mathbf{H}_n}(\Bb \setminus \Bb_\Gamma)| \leq \prod_{P \in \Pp_n} (|P|-1)!
	\]
	This holds because for each part $P \in \Pp_n$, there is a vector $\mathbf{w}_i \in \Bb \setminus \Bb_\Gamma$ which has odd weight on $P$. Since each part $P$ has even size, $\mathbf{w}_i$ is not the all-$1$-vector on $P$ (nor the all-zero vector, of course); therefore, the vector $\mathbf{w}_i$ is not fixed by all permutations in $\Sym(P)$ but at most by $(|P|-1)!$ many of them (more precisely by $k! \cdot (|P|-k)!$ many, if $k$ is the number of $1$-entries in $P$ -- but this is at most $(|P|-1)!$). 
	Now because $[\mathbf{G}_n  :  \mathbf{H}_n] = (\log n)!$, we have
	\[
	| \StabP_{\mathbf{G}_n}(\Bb \setminus \Bb_\Gamma)| \leq  (\log n)! \cdot |\StabP_{\mathbf{H}_n}(\Bb \setminus \Bb_\Gamma)|.
	\]
	Furthermore, $[\Stab_{\mathbf{G}_n}(\Bb \setminus \Bb_\Gamma)  :  \StabP_{\mathbf{G}_n}(\Bb \setminus \Bb_\Gamma) ] \leq (\log n)!$. So in total, we get:
	\[
	\Stab_{\mathbf{G}_n}(\Bb \setminus \Bb_\Gamma)  \leq  (\log n)!^2 \cdot  \prod_{P \in \Pp_n} (|P|-1)!
	\]
	Since $|\mathbf{G}_n| = (\log n)! \cdot \prod_{P \in \Pp_n} |P|!$, we get for the index:
	\[
	[\mathbf{G}_n : \Stab_{\mathbf{G}_n}(\Bb)] \geq [\mathbf{G}_n : \Stab_{\mathbf{G}_n}(\Bb \setminus \Bb_\Gamma)] \geq 1/(\log n)! \cdot \prod_{P \in \Pp_n} |P| \geq \Big( \frac{n}{(\log n)^2}  \Big)^{\log n}.
	\]
	The last inequality follows because $\Pp_n$ consists of $\log n$ many parts of size $\frac{n}{\log n}$ each, and because $(\log n)! \leq (\log n)^{\log n}$.
\end{proof}

\subsection{Homogeneity of hypercubes}

\restateMainHomogeneity*

\begin{proof}
	We assume that $\tp(\bar{\alpha}\gamma) = \tp(\bar{\alpha}\gamma')$.
	The first half of the proof consists in establishing that then, $\gamma, \gamma'$ are in the same edge or vertex gadget. Assume first that $\gamma, \gamma' \in \widehat{E}$. Now suppose for a contradiction that the edge gadget of $\gamma$ and $\gamma'$ is not the same. Let $\{u,v\} \in E_n$ be the edge in whose gadget $\gamma$ is, and let $\{u',v'\} \in E_n$ be the corresponding edge for $\gamma'$.
	W.l.o.g.\ we may assume $u \neq u'$ and $u \neq v'$. Now let $s_1,...,s_n \in E_n$ be the edges that form the star which is covered by $\bar{\alpha}$ according to the assumption. We may assume that the centre of the star is the string $0^n$, because the automorphism group of the hypercube is transitive and so we can always move the centre of the star to $0^n$. Let $s_i$ denote the edge $\{0^n, 0^{i-1}10^{n-i}\}$, i.e.\ the edge along which the position $i$ is flipped.\\
	Let $U \subseteq [n]$ be the positions at which the string $u$ is $1$. We construct a $\Cc^{5}$-formula $\phi_u(x)$ that defines the gadget $u^*$ in $\HH^i_n$ using the star $s_1,...,s_n$ as parameters. More precisely, let $s'_1,...,s'_n$ be the respective vertices in the edge gadgets that occur in $\bar{\alpha}$.\\
	Our formula uses some auxiliary formulas: $\psi_{\dist=\ell}(x,y) \in \Cc^3$ which asserts that there is a path of length $\ell$ from $x$ to $y$. This can be expressed with only three variables by requantifying variables in an alternating way (see e.g.\ Proposition 3.2 in \cite{immermanLander}). Also, we use a formula $\psi_\approx(x,y)$ which asserts that $x$ and $y$ are in the same vertex-gadget. This can be expressed by saying that both of them have exactly $n$ neighbours, and: For every neighbour $z$ of $x$, $z$ is either also a neighbour of $y$ or adjacent to a neighbour of $y$ (in the same edge-gadget). The same must hold for every neighbour $z$ of $y$. Expressing this requires not more than five variables in total. Now we define:
	\begin{align*}
		\phi_u(x) := &\bigwedge_{i \in U} \exists z  (\psi_{\dist=1+2(|U|-1)}(s'_i,z) \land \psi_\approx(z,x) ) \land \\
		&\bigwedge_{i \in [n] \setminus U} \neg \exists z  (\psi_{\dist=1+2(|U|-1)}(s'_i,z) \land \psi_\approx(z,x) ).
	\end{align*}
	\textbf{Claim:} $\HH_n^i \models \phi_u(a)$ iff $a \in u^*$.\\
	\textit{Proof of claim:} We are assuming that $s_i$ is the edge between $0^n$ and the string with a $1$ at position $i$. Now we show that $\HH_n^i \models \phi_u(a)$ if $a \in u^*$: We have to check that for every $i \in [n]$, the respective conjunct of the formula is satisfied. If $i \in U$, then for any $s_i' \in s_i^*$, there exists $z \in u^*$ such that there is a path from $s_i'$ to $a$ of length exactly $1+2(|U|-1)$ in $\HH_n^i$: The path goes one step from $s_i'$ into the vertex gadget for $0^{i-1}10^{n-i}$, and from there, the path follows a shortest path of length $|U|-1$ in $\H_n$ that goes from  $0^{i-1}10^{n-i}$ to the vertex $u$ and flips the remaining $|U|-1$ zeros on the way. That path in the CFI-structure $\HH^i_n$ is twice as long because every edge is subdivided by a gadget. The path will end in some node in the vertex-gadget $u^*$. If $i \notin U$, then there is no path from $s_i'$ of length $1+2(|U|-1)$ that ends in a node in $u^*$: The shortest path from $s_i'$ into the gadget $u^*$ requires $1+2|U|$ steps. Hence, $\HH_n^i \models \phi_u(a)$.\\
	If $a \notin u^*$, then $\HH_n^i \not\models \phi_u(a)$, because $\phi_u(a)$ can only be satisfied if the required paths exist (do not exist, respectively) in $\H_n$, and the above arguments also show that these conditions are only satisfiable if $a$ is in the gadget of $u$. This proves the claim.\\
	
	Therefore, we have
	\[
	\HH_n^i \models \exists x ( E x \gamma \land \phi_u(x)  ),
	\]
	but
	\[
	\HH_n^i \not\models \exists x ( E x \gamma' \land \phi_u(x)  ).
	\]
	This is a contradiction to the assumption that $\bar{\alpha}\gamma$ and  $\bar{\alpha}\gamma'$ have the same $\Cc^{\tw_n}$-type, because $\phi_u$ uses only $\Oo(n)$ many variables. Thus we have shown that there is one edge $g \in E_n$ such that $\gamma, \gamma' \in g^*$.\\
	In case that $\gamma$ and $\gamma'$ are both in vertex gadgets, then the same argument shows that they must be in the same gadget $u^*$ because we can define this gadget with the above formula. Similarly, we can argue if $\gamma$ is in a vertex gadget and $\gamma'$ is in an edge gadget: Then we define the vertex gadget of $\gamma$ with the above formula, and $\gamma'$ will not satisfy it. So the types of the tuples being equal entails that $\gamma$ and $\gamma'$ must be in the same gadget, be it of an edge or vertex.\\
	
	In the second half of the proof we show that there is an automorphism $\rho \in \Aut(\HH_n^i)$ that maps $\bar{\alpha}\gamma$ to $\bar{\alpha}\gamma'$, again under the assumption that $\tp(\bar{\alpha}\gamma) = \tp(\bar{\alpha}\gamma')$. 
	We first deal with the case that $\gamma, \gamma' \in g^*$ for some edge gadget $g^*$. 
	It is not necessary to permute the hypercube, so it suffices to find an edge-flip automorphism, i.e.\ $\rho \in \Aut_{\CFI}(\HH_n^i)$.\\
	In case that $\gamma = \gamma'$, there is nothing to show. So let us assume that w.l.o.g.\ $\gamma = g_0$ and $\gamma' = g_1$. We need to find $\rho \in \Aut(\HH_n^i)$ such that $\rho(g_0) = g_1$, and such that $\rho$ fixes $\bar{\alpha}$. Now we call an edge $e = \{u,v\} \in E_n$ \emph{fixed} if $e_0$ or $e_1$ occurs in $\bar{\alpha}$. We call a vertex $v \in V_n$ \emph{fixed} if some node in$v^*$ occurs in $\bar{\alpha}$. We know that $g$ is not fixed because if it is, then $\bar{\alpha}g_0$ and $\bar{\alpha}g_1$ have different types. A cycle in $\H_n$ is called \emph{fixed} if at least one edge or one vertex on it is fixed. Else, the cycle is \emph{free}. If there exists a free cycle in $\H_n$ on which $g$ lies, then the desired automorphism is $\rho_F \in \Aut_{\CFI}(\HH_n^i)$, where $F$ is the edge-set of the free cycle.\\
	Otherwise, every cycle on which $g$ lies is fixed. We want to show that in this case, $\bar{\alpha}\gamma$ and $\bar{\alpha}\gamma'$ do not have the same $\Cc^{\tw_n}$-type and so, this situation cannot occur.\\
	
	Let $u,v \in \{0,1\}^n$ be the endpoints of $g$. Let $X_u \subseteq \{0,1\}^n$ be the set of vertices in $\H_n$ that are reachable from $u$ via paths using only free vertices and free edges in $E_n \setminus \{g\}$. The set $X_u$ is meant to include the fixed vertices which are reachable in this way. Similarly, we define $X_v$ as the set of reachable vertices from $v$ via such free paths (also including fixed vertices). Because no free cycle exists, the sets $X_u$ and $X_v$ must be disjoint (except for potential shared fixed vertices). We restrict the graph to the smaller of these two sets, w.l.o.g.\ this is $X_u$. So let $X := X_u$.
	We now consider the graphs $G$ and $G'$, which are induced subgraphs of $\HH^i_n$ on the universe 
	\begin{align*}
		\Xx := \{ v^Y \mid v^Y \in v^*, v \in X \} &\cup \{ e_j \mid j \in \{0,1 \}, e \in E(\H_n[X]) \}\\
		&\cup \{ a \in V(\HH^i_n) \mid a \text{ an entry in } \bar{\alpha}\gamma \} \cup \{ \gamma'\}.
	\end{align*}
	Now $G := (\HH_n^i[\Xx], \bar{\alpha}\gamma)$ and $G' := (\HH_n^i[\Xx], \bar{\alpha}\gamma')$, that is, they are both the same induced subgraph of the CFI-graph, expanded with the respective tuples of constants (strictly speaking, the constant symbol for $\gamma$ and $\gamma'$ should be the same, but with the interpretation $\gamma$ and $\gamma'$, respectively).\\ 
	As for the size of $X \subseteq V(\H_n)$, we have: $|X| <  |\bar{\alpha}\gamma| \cdot n= \tw_n-n$. To see this, let $\delta X \subseteq E_n$ be the cut of $X$ in $\H_n$, i.e.\ the set of edges between $X$ and its complement. The Cheeger number of $\H_n$, which denotes the minimum of $\frac{|\delta A|}{|A|}$ over all $A \subseteq V(\H_n)$ with $|A| \leq |V(\H_n)|/2$, is between $1$ and $2 \sqrt{n}$. This can be seen from the Cheeger inequalities (see e.g.\ in \cite{cheegerInequality}) together with the fact that the smallest non-zero Eigenvalue of the Laplacian of any Hypercube is $2$ \cite{hypercubeLaplacian} (the fastest way to look this up is actually Wikipedia). This means that $|\delta X| \geq |X|$. From this it follows that $|\bar{\alpha}\gamma| \cdot n \geq |X|$ because the edges in the cut $\delta X$ are exactly the fixed edges, and each entry of $\bar{\alpha}\gamma$ fixes at most $n$ edges. Moreover, we can say that the inequality must actually be strict, so $|\bar{\alpha}\gamma| \cdot n > |X|$. This is because not all entries in $\bar{\alpha}$ are used to fix the edges in $\delta X$; some entries of $\bar{\alpha}$ must also be in $\delta X_v \setminus \delta X_u$. Hence, we have $|X| < |\bar{\alpha}\gamma| \cdot n= \tw_n-n$.\\
	\\
	\textbf{Claim:} Spoiler wins the bijective % $(n+ (1/2)\cdot |\bar{\alpha}| \cdot n)$-pebble
	$\tw_n$-pebble game on $G$ and $G'$.
	\\
	\textit{Proof of claim:} Trivially, the treewidth of $\H_n[X]$ is strictly less than $\tw_n-n$ because $|X| < \tw_n-n$. Hence, the Cops win the Cops and Robber game on $\H_n[X]$ with $\tw_n-n$ many cops.
	%$(1/2)\cdot |\bar{\alpha}| \cdot n$, because $X$ is at most this big. Hence, the Cops win the Cops and Robber game on $\H_n[X]$ with $(1/2)\cdot |\bar{\alpha}|\cdot n$ many cops.\\
	Spoiler's goal is to pebble in both structures $G$ and $G'$ the vertex $e_1$ in every edge in $E_n(u) \setminus \{g\}$, i.e.\ every edge incident to $u$, except $g$. If he achieves that, then he wins in the next round: Assume w.l.o.g.\ that the gadget $u^*$ is even. Then in $G'$, an even number of neighbours of every vertex in $u^*$ is pebbled or equal to the constant $\gamma' = g_1$. In $G$, this number is odd for every vertex in $u^*$, because $\gamma = g_0$. Therefore, Spoiler can then place an additional pebble on an arbitrary vertex in $u^*$ and wins because Duplicator's bijection must map the gadget $u^*$ in $G$ to $u^*$ in $G'$.\\
	Now Spoiler can achieve this goal by a standard argument, as for example given in \cite{atseriasBulatovDawar}: Initially, the ``target vertex'' for Spoiler is $u$. This means that he has to pebble the $e_1$-vertices in all its incident edges in both graphs. Duplicator's bijections can flip edges, which changes the set of target vertices for Spoiler. Suppose $F \subseteq E(\H_n[X])$ is the set of edges flipped by Duplicator in a given round. This has the effect that every vertex in $\H_n[X]$ whose $F$-degree is odd changes its role from target- to non-target vertex and vice versa. However, Duplicator cannot flip edges which are pebbled by Spoiler. If Spoiler places his pebbles on edges (or their endpoints) according to the Cops' winning strategy (while Duplicator ``moves the robber'' by flipping paths), then he can eventually pin down a target vertex that Duplicator cannot move anywhere else. From such a position, he can enforce a situation as described above and wins the game. This argument is well-known; the only additional difficulty in our setting is that the base graph is not ordered, so we have to argue that Duplicator cannot win by playing bijections other than edge-flips. This can be enforced by Spoiler, using at most $n$ extra pebbles: The important observation is that Duplicator's bijection must respect distances to all pebbles on the board and to the parameters $\bar{\alpha}\gamma$. So Duplicator can only map a gadget $w^*$ to some other gadget $\pi(w^*)$ if the vertices $w$ and $\pi(w)$ have the same distance in $\H_n[X]$ to every pebbled vertex and parameter. This holds because if Duplicator disrespects distances between such marked elements, then Spoiler can easily win using three pebbles, that he moves along the shortest paths.\\
	Now Spoiler can simply start by pebbling some star in $\H_n[X]$ with $n$ pebbles (this is possible because we are playing with $\tw_n$ pebbles, but only $\tw_n-n$ many are needed to simulate the Cops' winning strategy). Once a star is pebbled in $G$ and $G'$, it follows with the argument used earlier in the proof of this lemma that any $w \in V(\H_n[X])$ has a unique set of distances to the edges of the star, and therefore, Duplicator is then forced to map every $w^*$ to a unique vertex gadget $\pi(w^*)$. This entails that there is also a unique edge gadget $\pi(e^*)$ that she has to map each $e^*$ to. So, from that moment on, Duplicator is indeed limited to playing only edge-flips, and then, Spoiler wins in the aforementioned way using the Cops' strategy. This proves the claim.\\
	
	The claim directly entails that $\bar{\alpha}\gamma$ and $\bar{\alpha}\gamma'$ do not have the same $\Cc^{\tw_n}$-type if $g$ does not lie on a free cycle. This finishes the case where $\gamma$ and $\gamma'$ are in an edge gadget. The other case is that $\gamma = u^*_Y$ and $\gamma' = u^*_{Y'}$ are both in some vertex gadget $u^*$. Then there is an even-sized set of edges $F = Y \triangle Y'$ incident with $u$ in $\H_n$ such that we have to flip the edges in $F$ (and no other edges in $E(u)$) in order to map $\gamma$ to $\gamma'$. This is possible if we can pair up the edges in $F$ in such a way that each pair $\{f_1,f_2\} \subseteq F$ lies on a free cycle (that avoids all other edges in $E(u)$). Suppose for a contradiction that there is some pair $\{f_1,f_2\} \subseteq F$ which is not on a free cycle. Then let $g := f_1$, and make the same argument as above in the case where we wanted to flip the edge $g$ (with the difference that we now remove both $f_1$ and $f_2$ in order to get the two sets $X_u$ and $X_v$ that are not connected by any free path). Then the above proof shows that the two nodes in the gadget $f_1^*$ are distinguishable in $\Cc^{\tw_n}$ using the parameters $\bar{\alpha}$. But then also $\gamma$ and $\gamma'$ in $u^*$ are distinguishable because one of them is adjacent to the $0$-node in $f_1^*$, and the other is adjacent to the $1$-node in $f_1^*$. So again, the required free cycles must exist because otherwise, $\bar{\alpha}\gamma$ and $\bar{\alpha}\gamma'$ have distinct $\Cc^{\tw_n}$-types.
\end{proof}

\subsection{Alternating supporting partitions only have sublinearly many singleton parts}

Here, we prove Theorem \ref{thm:G_alternatingPartitionWithLogManyParts}, which depends on the following key lemma:
	
\restateContainmentOfAlternatingGroup*

For the proof of this lemma, we have to introduce a few notions from group theory first. These can be found for example in the textbook \cite{dixonMortimer}.  	
For a group $G \leq \Sym(\Omega)$, and a subset $\Delta \subseteq \Omega$, $G^{(\Delta)}$ denotes the pointwise stabiliser of $\Delta$ in $G$. If $\Delta$ is a union of orbits of $G$, then we write $G^\Delta$ to denote the restriction of $G$ to its action on $\Delta$. This is a subgroup of $\Sym(\Delta)$.\\

A group $G \leq \Sym(\Omega)$ acts \emph{transitively} on $\Omega$ if every element can be mapped to every other element by $G$, so if $\Omega$ is itself an orbit. If $G$ acts transitively on $\Omega$, then a non-empty set $\Delta \subseteq \Omega$ is called a \emph{block} if for each $\pi \in G$, $\pi(\Delta) = \Delta$ or $\pi(\Delta) \cap \Delta = \emptyset$. A \emph{block system}, or \emph{system of imprimitivity}, is a partition of $\Omega$ into blocks (of equal size). The group $G$ acts as a permutation group on the set of blocks because it always maps blocks to blocks. Every transitive group has the trivial block systems in which each point forms a singleton block, or the whole point set is one block, respectively. If a transitive group $G$ has other block systems than these two, then $G$ is called \emph{imprimitive}, and otherwise, \emph{primitive}. In particular, primitive groups are always transitive.\\

A subgroup $N \leq G$ is called \emph{normal} (denoted $N \lhd G$) if its right and left cosets coincide, i.e.\ if $\gamma N = N \gamma$ for every $\gamma \in G$. An equivalent formulation is that for all $g \in G, h \in N$, we have $ghg^{-1} \in N$. We are interested in normal subgroups because they can be factored out: If $N \lhd G$, then $G/N$ is the group whose elements are the cosets of $N$, that is: For any two $\gamma N, \gamma' N$, the group operation in the factor group $G/N$ is defined as $(\gamma N) \circ (\gamma' N) = (\gamma \circ \gamma') N$. Thus, the order $|G/N|$ of the factor group is equal to the index $[G : N]$, and so, $|G| = |N| \cdot |G/N|$.\\

If $G \leq \Sym(\Omega)$ is intransitive and $\Delta \subseteq \Omega$ is an orbit of $G$, then $G^{(\Delta)}$, the pointwise stabiliser of the orbit, is a normal subgroup of $G$, as one can easily verify. The factor group $G/G^{(\Delta)}$ is isomorphic to $G^{\Delta}$, the action of $G$ on $\Delta$. Also, if $G$ has a non-trivial block system, then the subgroup of $G$ that fixes every block setwise is normal in $G$. Factoring out this stabiliser yields a group that is isomorphic to the action of $G$ on the blocks. Thus, if $G$ is intransitive or imprimitive, it has these mentioned ``canonical'' normal subgroups. These can then be factored out, which is useful in inductive proofs. The only case where it is not clear how to factor out a normal subgroup is if $G$ is primitive. Note that the primitive cases $G = \Sym(\Omega)$ or $G = \Alt(\Omega)$ are not difficult: The symmetric group has the alternating group as a normal subgroup, which leaves $\bbZ_2$ when it is factored out. The alternating group is \emph{simple}, which means that it only has itself and the trivial group $\{1\}$ as normal subgroups. The other primitive cases are less clear but thankfully, the finite primitive groups have been classified completely. For our proof, the following theorem by Babai, which essentially summarises the relevant primitive cases, is sufficient:
\begin{theorem}[Theorem 3.2.1 in \cite{babai}]
	\label{thm:G_primitiveGroups}
	Let $G \leq \Sym_n$ be a primitive group of order $|G| \geq n^{1+\log n}$ where $n$ is greater
	than some absolute constant. Then $G$ has a normal subgroup $N$ of index $\leq n$ such that $N$ has a system of imprimitivity on which $N$ acts as a Johnson group $\Alt^{(t)}_k$ with $k \geq \log n$.
\end{theorem}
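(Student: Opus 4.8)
\textbf{Proof plan for Theorem~\ref{thm:G_primitiveGroups}.} This statement is Babai's classification-based theorem on primitive groups of large order, and a fully self-contained proof would require essentially the whole machinery of the O'Nan--Scott theorem together with bounds on the orders of primitive groups, which is far beyond what can be reproduced here. The plan is therefore to give the structural argument that reduces the claim to known facts about the O'Nan--Scott types, rather than to reprove those facts. First I would recall the O'Nan--Scott partition of the primitive groups $G \leq \Sym_n$ into the types: affine (HA), almost simple (AS), simple diagonal (SD), compound diagonal (CD), product action (PA), and twisted wreath (TW). For each type there is a well-known upper bound on $|G|$ in terms of $n$: except when $G$ involves a large alternating group acting on $k$-subsets (a Johnson action $\mathrm{Alt}_k^{(t)}$) or a large symmetric/alternating group in a product action built from such a Johnson action, one has $|G| \leq n^{c}$ for an absolute constant $c$, or more precisely $|G| < n^{1+\log n}$ for $n$ above an absolute constant. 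This is Cameron's bound (refined by Liebeck, Maróti, and others); I would cite it as the input. Hence the hypothesis $|G| \geq n^{1+\log n}$ forces $G$ to be one of the ``large'' exceptional types.

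The next step is to analyse exactly those exceptional cases. In the pure almost simple case, $G$ has a socle $\mathrm{Alt}_k$ acting on the $t$-subsets of $[k]$ (so $n = \binom{k}{t}$), and $N := \mathrm{soc}(G) = \mathrm{Alt}_k$ is normal in $G$ of index $[G:N] \le 2$ (it is $\mathrm{Sym}_k/\mathrm{Alt}_k$, acting on $t$-subsets), which is certainly $\leq n$; and $N$ acting on $t$-subsets of $[k]$ has the evident system of imprimitivity coming from the action on $k$, on which it acts as the Johnson group $\mathrm{Alt}_k^{(t)}$ itself. The size bound $|G| \geq n^{1+\log n}$ forces $k$ to be large, and a direct estimate $\binom{k}{t} \leq k^t$ together with $|G| \le k!$ shows that necessarily $t$ is small compared to $k$, in particular $k \geq \log n$ once $n$ is large. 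In the product-action case, $G \leq H \wr \mathrm{Sym}_\ell$ with $H$ almost simple primitive of the above Johnson form, acting on $n = n_0^\ell$ where $n_0$ is the degree of $H$; here one takes $N$ to be the base group intersected with $G$ (or a suitable normal subgroup thereof) and identifies one Johnson factor, again of index at most $n$ by a coset-counting argument on the top group $\mathrm{Sym}_\ell$, whose order is negligible. The remaining O'Nan--Scott types (HA with $|G| \leq n \cdot |\mathrm{GL}(d,p)| \leq n^{1+\log n}$ for bounded parameters, SD, CD, TW) all fall under the polynomial bound and are excluded by hypothesis.

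The main obstacle is precisely that this argument is not really ``from scratch'': it rests on (i) the O'Nan--Scott theorem and (ii) the quantitative bounds on orders of primitive groups that are not of Johnson or product-of-Johnson type, both of which ultimately depend on the classification of finite simple groups. So the honest plan is to state Theorem~\ref{thm:G_primitiveGroups} as a consequence of Cameron's structure theorem together with O'Nan--Scott, give the case analysis above to extract the explicit normal subgroup $N$, the index bound $[G:N] \leq n$, and the Johnson imprimitivity system with $k \geq \log n$, and cite \cite{babai} (and the underlying sources therein) for the parts that genuinely use the classification. The only genuinely new calculations needed are the elementary estimates relating $n = \binom{k}{t}$ (or $n_0^\ell$) to $k$ (resp.\ $k$ and $\ell$) that convert the order hypothesis $|G| \geq n^{1+\log n}$ into the conclusion $k \geq \log n$; these are routine and I would defer them to a short lemma on binomial coefficients.
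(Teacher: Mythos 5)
The paper does not prove Theorem~\ref{thm:G_primitiveGroups}; it is imported verbatim as Theorem~3.2.1 of \cite{babai}, with the sole editorial remark that the order bound $|G|\geq n^{1+\log n}$ in the original source contains a typo, which the paper corrects in the statement it quotes. There is therefore no internal proof to compare your sketch against. Your plan --- invoke O'Nan--Scott, apply Cameron-type order bounds to rule out all types except almost simple Johnson actions and product actions built from them, then extract $N$ and verify $[G:N]\leq n$ and $k\geq\log n$ by elementary estimates --- is a faithful account of the standard route, and your candid acknowledgement that the argument bottoms out in the classification of finite simple groups is exactly right. For the purposes of this paper you should simply cite the result as the paper does and, at most, add the note about the corrected exponent; reproducing even a schematic O'Nan--Scott case analysis would be out of scope. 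One small point of care if you did flesh the sketch out: in the pure almost simple Johnson case the action of $\Alt_k$ on $t$-subsets is itself primitive, so the ``system of imprimitivity'' in the theorem's conclusion is the trivial one with singleton blocks; this is harmless but worth stating explicitly so the reader is not misled into looking for a nontrivial block system there.
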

Note that the theorem in \cite{babai} has a typo in the order of $G$, which we have corrected here. The \emph{Johnson group} $\Alt^{(t)}_k$ is isomorphic to $\Alt_k$, the alternating group on $k$ elements, but $\Alt^{(t)}_k$ acts on the set of all $t$-tuples over $[k]$ (in the natural way). So the above theorem guarantees the existence of a normal subgroup $N$ in any large enough primitive group, and moreover, it tells us that $N$ more or less looks like an alternating group. This will essentially be one of the base cases in the proof of Lemma \ref{lem:G_containmentOfAlternatingGroup}.\\

Before we can start with that proof, we need one more concept, namely the \emph{composition series} of a group $G$. This is a series $1 = H_0 \lhd H_1 \lhd ... \lhd H_n = G$ such that each $H_i$ is a maximal proper normal subgroup of $H_{i+1}$. The factors $H_{i+1}/H_i$ are called the \emph{composition factors} of $G$. Every finite group has such a composition series, which is not necessarily unique. But by the Jordan-Hölder theorem, every composition series yields the same composition factors (see for example \cite{Rose2009}). Therefore, no matter in which order we factor out normal subgroups of a given group $G$, we will eventually encounter the same composition factors (just like in the prime factorisation of a natural number). This holds even if we do not factor out a \emph{maximal} normal subgroup in each step. Therefore, it holds:
\begin{lemma}
	\label{lem:G_recursiveStep}
	Let $G$ be a group and $H$ be a composition factor of $G$. Let $N \lhd G$ be a normal subgroup. Then $H \cong G/N$, or $H$ is a composition factor of $N$ or of $G / N$.
\end{lemma}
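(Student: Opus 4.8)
The plan is to reduce the statement to the Jordan–Hölder theorem, which is already invoked in the excerpt (any two composition series of a finite group have the same multiset of composition factors up to isomorphism). Since every group occurring here is a finite permutation group, composition series always exist, so this is legitimate; the whole lemma is then pure bookkeeping with composition series and the correspondence theorem.

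First I would fix a composition series $1 = K_0 \lhd K_1 \lhd \dots \lhd K_r = N$ of the normal subgroup $N$, so that $K_1/K_0,\dots,K_r/K_{r-1}$ are by definition the composition factors of $N$. Next I would fix a composition series $\overline{1} = \overline{L}_0 \lhd \overline{L}_1 \lhd \dots \lhd \overline{L}_s = G/N$ of the quotient. By the correspondence (lattice isomorphism) theorem, the subgroups $\overline{L}_i$ pull back to a chain $N = L_0 \lhd L_1 \lhd \dots \lhd L_s = G$ of subgroups of $G$ with $L_{i-1} \lhd L_i$ and, by the third isomorphism theorem, $L_i/L_{i-1} \cong \overline{L}_i/\overline{L}_{i-1}$ for each $i$; in particular each such quotient is simple, so the $L_i$ realise exactly the composition factors of $G/N$.

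Concatenating the two chains yields
\[
1 = K_0 \lhd K_1 \lhd \dots \lhd K_r = N = L_0 \lhd L_1 \lhd \dots \lhd L_s = G,
\]
which is a composition series of $G$, since all successive quotients are simple. Its list of factors (with multiplicity) is precisely the composition factors of $N$ followed by those of $G/N$. Now $H$ is, by definition of a composition factor of $G$, isomorphic to some factor appearing in some composition series of $G$; by Jordan–Hölder the multiset of composition factors of $G$ does not depend on the chosen series, so $H$ is isomorphic to one of the factors of the concatenated series above. Hence $H$ is a composition factor of $N$, or a composition factor of $G/N$. The degenerate cases are immediate: $N = 1$ gives $G/N \cong G$, and $N = G$ gives $G/N = 1$ (no factors), so $H$ is a composition factor of $N$; and when $G/N$ is simple its only composition factor is $G/N$ itself, recovering the alternative $H \cong G/N$ stated in the lemma.

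The only point requiring any attention — routine rather than hard — is verifying that the correspondence theorem genuinely lifts a composition series of $G/N$ to a chain from $N$ up to $G$ with simple successive quotients; this is the lattice isomorphism theorem combined with the third isomorphism theorem, so there is no real obstacle. Everything else is an application of Jordan–Hölder, whose use is already sanctioned by the references cited in the excerpt.
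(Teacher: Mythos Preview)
Your proof is correct and follows essentially the same approach as the paper's: build a composition series of $G$ that passes through $N$ and invoke Jordan--H\"older. The paper splits into the cases where $N$ is or is not a maximal normal subgroup, whereas you handle both uniformly by concatenating a composition series of $N$ with a lifted composition series of $G/N$; your version is slightly cleaner for this reason, but the underlying idea is identical.
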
	
\begin{proof}
	If $N$ is a maximal normal subgroup in $G$, then there exists a composition series of $G$ of the form $1 \lhd ... \lhd N \lhd G$. Then either $H = G/N$, or $H$ appears as a composition factor later in the series, which means that it is a composition factor of $N$. If $N$ is not a maximal normal subgroup in $G$, then
	we have $N \lhd N_1 \lhd ... \lhd N_k = G$ for $k \geq 1$ normal subgroups of $G$ containing $N$. 	
	Then either $H$ is a composition factor of $N$, or if it is not, then it must be equal to $N_{i+1}/N_i$, for some $i \in [k]$. 
	By the Third Isomorphism Theorem, $1 \lhd N_1/N \lhd ... \lhd N_k/N \lhd G/N$ is a composition series of $G/N$, and $(N_{i+1}/N) /(N_i/N) \cong H$, so $H$ is a composition factor of $G/N$ in this case.
\end{proof}	

What we will also need is that a group which is alternating on one of its orbits $A$ is either still alternating on $A$ when the rest is fixed pointwise, or the action on $A$ is always completely determined by the action outside of $A$. For the proof idea of this lemma, I thank Daniel Wiebking.
\begin{lemma}
	\label{lem:G_fixingTheRestPointwise}
	Let $H \leq G \leq \Sym_n$ and let $A$ be an orbit of $H$ such that $\Alt(A) \leq H^A$. Then either, $\Alt(A) \leq (H^{([n] \setminus A)})^A$, or for every $h \in H$, the action of $h$ on $[n] \setminus A$ also determines the action of $h$ on $A$. The latter means that there are no two distinct $g,h \in H$ which induce the same permutation on $[n] \setminus A$ but distinct permutations on $A$.
\end{lemma}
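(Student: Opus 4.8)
\textbf{Proof plan for Lemma \ref{lem:G_fixingTheRestPointwise}.}

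The plan is to study the restriction homomorphism $\pi_{[n]\setminus A}: H \to \Sym([n]\setminus A)$ which sends $h$ to its action on $[n]\setminus A$. The kernel of this homomorphism is exactly $H^{([n]\setminus A)}$, the subgroup of $H$ fixing everything outside $A$ pointwise. The statement to prove is a dichotomy: either this kernel already contains $\Alt(A)$ (in its action on $A$), or the kernel is trivial, i.e.\ $\pi_{[n]\setminus A}$ is injective, which is precisely the assertion that the action on $[n]\setminus A$ determines the action on $A$. So the whole lemma reduces to showing: \emph{if $\pi_{[n]\setminus A}$ is not injective, then $\ker \pi_{[n]\setminus A}$ restricted to $A$ contains $\Alt(A)$.}

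First I would observe that $K := \ker \pi_{[n]\setminus A} = H^{([n]\setminus A)}$ is a normal subgroup of $H$, hence $K^A$ (its action on the orbit $A$) is a normal subgroup of $H^A$. Now $H^A$ contains $\Alt(A)$ by hypothesis, so $H^A$ is either $\Alt(A)$ or $\Sym(A)$. The key group-theoretic fact is that for $|A| \geq 5$ the only normal subgroups of $\Sym(A)$ are $1$, $\Alt(A)$, and $\Sym(A)$, and the only normal subgroups of $\Alt(A)$ are $1$ and $\Alt(A)$ (simplicity of $\Alt(A)$). Therefore $K^A \in \{1, \Alt(A), \Sym(A)\}$. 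If $K^A = 1$, then every element of $H$ that fixes $[n]\setminus A$ pointwise also fixes $A$ pointwise, hence is the identity; this makes $\pi_{[n]\setminus A}$ injective, giving the second alternative. If $K^A \supseteq \Alt(A)$, then $\Alt(A) \leq (H^{([n]\setminus A)})^A$, which is the first alternative. This exhausts all cases (and the small cases $|A| < 5$ can be handled separately or are irrelevant in the application, since there $\Alt(A)$ is trivial-ish; in fact for $|A| \leq 4$ the statement is either vacuous or follows from the same normal-subgroup analysis applied to $\Sym(A)$ with its known normal subgroup lattice).

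The main obstacle, such as it is, is simply being careful with the bookkeeping: $K^A$ is literally the image of $K$ under the restriction-to-$A$ map, and one must check it is genuinely a \emph{normal} subgroup of $H^A$ (this follows because $K \lhd H$ and restriction to a union of orbits is a group homomorphism, so it carries normal subgroups to normal subgroups of the image), and one must invoke the correct classification of normal subgroups of the symmetric and alternating groups. I would also double-check the edge case where $A$ is an orbit of $H$ but possibly not all of $[n]$; this is fine because $[n]\setminus A$ is then itself a union of $H$-orbits (the orbit $A$ being separated off), so $H^{([n]\setminus A)}$ and the restriction maps are all well-defined in the sense used in the excerpt. No delicate estimates are needed here; the lemma is essentially a clean application of simplicity of the alternating group together with the elementary observation that $\ker \pi_{[n]\setminus A}$ is trivial iff the outside action determines the inside action.
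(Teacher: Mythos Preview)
Your proposal is correct and follows essentially the same approach as the paper: both arguments show that $K^A = (H^{([n]\setminus A)})^A$ is a normal subgroup of $H^A$, observe that $H^A \in \{\Alt(A),\Sym(A)\}$, and then use the classification of normal subgroups of the alternating and symmetric groups to conclude $K^A \in \{1,\Alt(A),\Sym(A)\}$. The only cosmetic difference is that the paper verifies normality of $K^A$ in $H^A$ by an explicit conjugation computation with lifts $h',g' \in H$, whereas you phrase it via the restriction homomorphism carrying normal subgroups to normal subgroups.
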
	
\begin{proof}
	It holds that $N := (H^{([n] \setminus A)})^A$ is a subgroup of $H^A$. Moreover, this subgroup is normal. To see this, let $h \in N, g \in H^A$. We want to show that $ghg^{-1} \in N$. There exist $h' \in H^{([n] \setminus A)}, g' \in H$ such that $h', g'$ are extensions of $h$ and $g$, i.e.\ their restriction to $A$ corresponds to $h$, $g$, respectively. It is clear that $g'h'g'^{-1} \in H^{([n] \setminus A)}$ because this permutation fixes every point outside of $A$. Therefore, $ghg^{-1} \in N$, because this is just the action of $g'h'g'^{-1}$ on $A$. So $N \lhd H^A$. Since $\Alt(A) \leq H^A$, $H^A$ is either $\Alt(A)$ or $\Sym(A)$. If it is $\Alt(A)$, then $N$ is either also $\Alt(A)$ or the trivial group $\{1\}$, because $\Alt(A)$ has no other normal subgroups. If $H^A = \Sym(A)$, then $N$ is trivial, $N = \Alt(A)$, or $N = \Sym(A)$. So if $N$ is not trivial, then $\Alt(A) \leq N = (H^{([n] \setminus A)})^A$. Otherwise, if $N$ is trivial, then every permutation in $H^{([n] \setminus A)}$ also fixes $A$ pointwise. It follows that there do not exist any two distinct $g,h \in H$ such that $g,h$ are equal on $[n] \setminus A$ but different on $A$. If they existed, then $gh^{-1} \in H^{([n] \setminus A)}$, but $gh^{-1}$ is not the identity on $A$.  
\end{proof}	
Now we will prove the main technical result that is needed for Lemma \ref{lem:G_containmentOfAlternatingGroup}. It essentially says that if a group $G$ has a large alternating group as a \emph{composition factor}, then it also has this large alternating group as a \emph{subgroup} in some sense, or otherwise, the index of $G$ in $\Sym_n$ must be large. The proof is by induction on the compositional structure of $G$, i.e.\ in the inductive step, we choose a normal subgroup and factor it out and then continue inductively with the normal subgroup or with the factor group, in the spirit of Lemma \ref{lem:G_recursiveStep}. With the next lemma, we can prove Lemma \ref{lem:G_containmentOfAlternatingGroup} using a fact from the literature: Every large group must also have a large alternating group as a composition factor. So the key step is the one from composition factor to subgroup. Again, I thank Daniel Wiebking for his help with the proof, especially for solving the primitive case.
\begin{lemma}
	\label{lem:G_groupInduction}
	Let $0 < c \leq 1$ be a constant. 
	Let $(G_n)_{n \in \bbN}$ be a family of groups such that for all $n$, $G_n \leq \Sym_d$, where $c \cdot n \leq d \leq n$ (to be precise: this can be a different $d$ for every $n$), and such that for all large enough $n$, $G_n$ has a composition factor isomorphic to $\Alt_{m}$, for some $m \geq c \cdot n$. Then, for every large enough $n$, one of following two cases can arise:
	\begin{enumerate}[(i)]
		\item There exists a subgroup $H_n \leq G_n$ and an orbit $A$ of $H_n$ with $|A| \geq c\cdot n$ such that $\Alt(A) \leq H_n^{([d] \setminus A)}$.
		\item $[\Sym_d : G_n ] \geq \frac{1}{n} \cdot \Big(\frac{2d \cdot \alpha}{e } \Big)^{cn/2} \cdot \alpha^\ell$, where $d$ is the degree of $G_n$, $\ell$ is the number of occurrences of the transitive imprimitive case in the recursion starting with $G$ and ending with the composition factor $\Alt_m$ or in another non-recursive case. The factor $\alpha$ is $1/(\lfloor c^{-1} \rfloor!)$. 
	\end{enumerate}	
\end{lemma}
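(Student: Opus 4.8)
The plan is to prove Lemma~\ref{lem:G_groupInduction} by induction on $|G_n|$ (with the constant $c$ and the parameter $n$ fixed throughout), following a composition series of $G_n$ in the spirit of Lemma~\ref{lem:G_recursiveStep} and mimicking the proof of Theorem~5.2\,B in \cite{dixonMortimer}, the essential difference being that, since the permitted index is $2^{nk}$ rather than $n^k$, the index bound appearing in case~(ii) must be tracked more carefully as the degree shrinks along the recursion. The base of the recursion is the situation where $G_n$ already contains an alternating group $\Alt(A)$ in its action on one of its orbits $A$ with $|A|\ge c\cdot n$: here Lemma~\ref{lem:G_fixingTheRestPointwise} gives exactly the required dichotomy — either $\Alt(A)\le(G_n^{([d]\setminus A)})^A$, which is case~(i), or the projection $G_n\to\Sym([d]\setminus A)$ is injective, whence $|G_n|\le(d-|A|)!\le((1-c)n)!$ and $[\Sym_d:G_n]\ge d!/((1-c)n)!$, which one checks exceeds the bound claimed in case~(ii).

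For the inductive step I distinguish the three standard cases for a permutation group. If $G_n$ is \emph{intransitive}, pick a non-singleton orbit $\Delta$ and set $N:=G_n^{(\Delta)}\lhd G_n$, so $G_n/N\cong G_n^\Delta$; by Lemma~\ref{lem:G_recursiveStep} the factor $\Alt_m$ is isomorphic to $G_n^\Delta$ or is a composition factor of the smaller group $N$ (acting on $[d]\setminus\Delta$) or of $G_n^\Delta$ (acting on $\Delta$), and in each case it embeds into the group we recurse on, so that group has degree $\ge m\ge cn$ and the induction hypothesis applies. A recursive instance of case~(i) is lifted to a subgroup of $G_n$, re-applying Lemma~\ref{lem:G_fixingTheRestPointwise} once more if necessary to restore the pointwise-fixing condition; a recursive instance of case~(ii) is carried over using $|G_n|\le|N|\cdot|\Delta|!$, giving $[\Sym_d:G_n]\ge\binom{d}{|\Delta|}\,[\Sym_{|\Delta|}:G_n^\Delta]$ (or the analogue with $N$), so the index only grows. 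If $G_n$ is \emph{transitive imprimitive}, take a minimal block system with $b$ blocks of size $k$; the setwise block-stabiliser $\prod_i\Sym(B_i)$ has composition factors of bounded size, so $\Alt_m$ is a composition factor of the block action $G_n^{\mathrm{blocks}}\le\Sym_b$, forcing $b\ge m\ge cn$ and hence $k=d/b\le 1/c$, i.e.\ $k\le\lfloor 1/c\rfloor$. We recurse on $G_n^{\mathrm{blocks}}$ — this is the step that increments $\ell$ — and recover the index of $G_n$ from $|G_n|\le(k!)^b\,|G_n^{\mathrm{blocks}}|$ together with a Stirling estimate of the multinomial $\binom{d}{k,\dots,k}=d!/(k!)^b$, the slack between $k!$ and $\lfloor 1/c\rfloor!$ being precisely the factor $\alpha=1/(\lfloor 1/c\rfloor!)$ that this step introduces.

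Finally, if $G_n$ is \emph{primitive}: if $|G_n|<d^{1+\log d}$ then $[\Sym_d:G_n]>d!/d^{1+\log d}$ already dwarfs the bound of case~(ii); otherwise Babai's classification (Theorem~\ref{thm:G_primitiveGroups}) yields a normal subgroup $N\lhd G_n$ of index $\le d$ with a system of imprimitivity on which $N$ acts as a Johnson group $\Alt_k^{(t)}\cong\Alt_k$, $k\ge\log d$. Since $G_n/N$ has order $\le d\le n$ it carries no alternating composition factor of degree $\ge cn$, so $\Alt_m$ is a composition factor of $N$; using $d\le n$ one argues that $t=1$ (otherwise the degree $\binom{k}{t}$ of the Johnson action would exceed $n$ once $k\ge cn$), so that $N$ is, up to a subgroup of index bounded by the block size, a full alternating group $\Alt_k$ on $k\ge cn$ points, which either directly gives case~(i) with $H_n=N$, $A=[d]$, or reduces to a strictly smaller imprimitive instance handled above; the single application of Babai's theorem costs only the harmless factor $\le d$ in the index.

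I expect the main obstacle to be keeping the case~(ii) estimate sharp as the degree $d$ decreases: the crude inequality $[\Sym_d:G_n]\ge[\Sym_b:G_n^{\mathrm{blocks}}]$ loses too much, so the multinomial coefficients in the imprimitive and primitive cases must be bounded precisely (via Stirling), and it is this computation that should produce the stated exponent $cn/2$ and base $2d\alpha/e$. A secondary subtlety is the lifting in the intransitive and primitive cases, where Lemma~\ref{lem:G_fixingTheRestPointwise} must be invoked to upgrade ``$\Alt(A)$ acts on $A$'' to ``$\Alt(A)$ acts on $A$ while pointwise fixing the complement in $[d]$'', a step that may incur an extra index penalty which again has to be checked against the bound of case~(ii).
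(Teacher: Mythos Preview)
Your overall architecture matches the paper's: induction along a composition series, split into intransitive / transitive imprimitive / primitive, with Lemma~\ref{lem:G_fixingTheRestPointwise} handling the lift in case~(i) and Babai's Theorem~\ref{thm:G_primitiveGroups} dispatching the primitive case. The intransitive case and the use of Lemma~\ref{lem:G_fixingTheRestPointwise} are essentially as in the paper.

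However, your treatment of the \emph{transitive imprimitive} case has a genuine gap. You write that ``the setwise block-stabiliser $\prod_i\Sym(B_i)$ has composition factors of bounded size, so $\Alt_m$ is a composition factor of the block action''. This is false as stated: the block kernel is a subgroup of $\prod_i\Sym(B_i)$, whose composition factors include $\Alt(B_i)$, and nothing so far bounds $|B_i|$. Your subsequent deduction that $b\ge cn$ and hence $k\le\lfloor 1/c\rfloor$ is therefore circular --- it presupposes that the big alternating factor sits on the block-action side. In the paper's proof both sub-cases are handled: if $\Alt_m$ is a factor of the block action $G/N$, then indeed $b\ge cn$, block size is at most $\lfloor c^{-1}\rfloor$, and a direct Stirling estimate gives the index bound with \emph{no further recursion}; but if $\Alt_m$ is a factor of the block kernel $N$, then the \emph{blocks} (which are the $N$-orbits) must have size $\ge cn$, so the number of blocks is at most $\lfloor c^{-1}\rfloor$, and one recurses on $N$ --- and it is \emph{this} recursion, not the one on the block action, that increments $\ell$ and introduces the factor $\alpha=1/(\lfloor c^{-1}\rfloor!)$ (coming from $|G/N|\le\lfloor c^{-1}\rfloor!$). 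You have this backwards, and the case you omitted is precisely the one that drives the definition of~$\ell$.

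A similar omission affects your primitive case: you argue $t=1$ by assuming $k\ge cn$, but $k\ge cn$ only follows once you know $\Alt_m$ lies on the Johnson-action side of $N$'s block system rather than in its block kernel; the paper rules out the kernel side separately by observing that the block size $d/(\log d)^t$ is asymptotically below $cn$, so no $\Alt_m$ can hide there.
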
	
\begin{proof}
	Fix $n \in \bbN$ and let $d$ be such that $G_n \leq \Sym_d$ (we suppress the subscript $n$ in the following). We prove the lemma by induction on the compositional structure of $G$ and choose a normal subgroup that we factor out in each step, until we arrive at the composition factor $\Alt_m$, which occurs in $G$ according to the assumption of the lemma. If $G = \Alt_m$, then we are in case (i) and are done. Otherwise, $G$ must have a normal subgroup because else, $G$ would be simple and would not contain the composition factor $\Alt_m$.
	We choose this subgroup depending on which of the following is the case:\\
	\\
	\textit{Case 1: $G$ is intransitive.}\\
	Let $\Omega \subseteq [d]$ be an arbitrary orbit with $|\Omega| \geq cn$. Such an orbit must exist because otherwise, $G$ cannot have $\Alt_m$ for $m \geq cn$ as a composition factor. To see this, consider a chain of normal subgroups that pointwise fix an orbit, one after the other. The corresponding factor groups are always the restrictions of the next normal subgroup to one orbit, and so they can never contain $\Alt_m$ if all orbits are too small. Eventually, we have fixed every orbit pointwise, which leads to the trivial group, and this cannot contain $\Alt_m$, either. Therefore, at least one large enough orbit $\Omega$ must exist.\\	
	Let $N := G^{(\Omega)}$ be the pointwise stabiliser of that orbit. It holds $N \lhd G$, and the factor $G / N$ is isomorphic to $G^{\Omega}$, the action of $G$ on $\Omega$. Let $\Delta := [d] \setminus \Omega$ be the complement of the orbit. Now we apply Lemma \ref{lem:G_recursiveStep}. It tells us that the large alternating group which must appear as a composition factor in $G$ is either isomorphic to $G / N$, or it is a composition factor of $N$ or of $G / N$.\\
	
	If $\Alt_{m}$ is a composition factor of $N$, then we apply the inductive hypothesis to $N^{\Delta} \cong N$. It yields in case (i) a subgroup $H \leq N \leq G$ and a set $A \subseteq [d] \setminus \Delta$ with $|A| \geq c\cdot n$ that is an $H$-orbit and satisfies $\Alt(A) \leq H^{(\Delta \setminus A)}$. Note that in the induction, only the degree $d$ decreases, but $c \cdot n$ remains fixed for each group $G_n$. Therefore, the size of the set $A$ that we get by induction is indeed $\geq cn$. Since $N$ fixes $\Omega$ pointwise, so does $H$, and thus, we have $\Alt(A) \leq H^{([d] \setminus A)}$. So we also have case (i) for $G$.\\
	If case (ii) applies to $N$, then let $d' := |\Delta|$ be the degree of $N^{\Delta} \cong N$. The induction hypothesis yields $[\Sym_d^{(\Omega)} : N] = [\Sym(\Delta) : N ] \geq  \frac{1}{n} \cdot \Big(\frac{2d' \cdot \alpha}{e} \Big)^{cn/2} \cdot \alpha^\ell$. We have
	\[
	[\Sym_d : N ] = [\Sym_d : \Sym_d^{(\Omega)}] \cdot [\Sym_d^{(\Omega)} : N] = (d!/d'!) \cdot [\Sym_d^{(\Omega)} : N].
	\]
	Since $N \leq G \leq \Sym_d$, we also have $[\Sym_d : N ] = [\Sym_d : G] \cdot [G : N]$. We know that $[G : N] \leq (d - d')!$ because $[G : N]$ is the order of $G/N$, whose permutation domain is $\Omega$.
	In total we get for $[\Sym_d : G]$, using the Stirling approximation $n! \approx \sqrt{2\pi n} \cdot \Big( \frac{n}{e}\Big)^n$: 
	\begin{align*}
		[\Sym_d :  G] &= \frac{[\Sym_d : N ]}{[G : N]} \geq \frac{d!}{d'! \cdot (d-d')!} \cdot \frac{1}{n} \cdot \Big(\frac{2d' \cdot \alpha}{e} \Big)^{cn/2} \cdot \alpha^\ell\\
		& \approx \frac{1}{n} \cdot \Big(\frac{2d' \cdot \alpha}{e} \Big)^{cn/2} \cdot \alpha^\ell \cdot \frac{\sqrt{2\pi d} \cdot (d/e)^d}{d'! \cdot (d-d')!}\\
		&=  \frac{1}{n} \cdot \Big(\frac{2d \cdot \alpha}{e} \Big)^{cn/2}  \cdot \alpha^\ell \cdot \frac{d^{d-(cn/2)} \cdot d'^{cn/2} \cdot \sqrt{2\pi d}}{d'! \cdot (d-d')! \cdot e^{d}}\\
		&\approx  \frac{1}{n} \cdot \Big(\frac{2d \cdot \alpha}{e} \Big)^{cn/2}  \cdot \alpha^\ell \cdot \frac{d^{d-(cn/2)} \cdot d'^{cn/2} \cdot \sqrt{2\pi d}}{d'^{d'} \cdot (d-d')^{d-d'} \cdot 2\pi \sqrt{d'(d-d')}}
	\end{align*}
	Now since $d' \geq cn$ (otherwise $N$ cannot have $\Alt_m$ as a composition factor), we can use the $d-(cn/2)$ many $d$-factors in the numerator to dominate all factors $(d-d')$ in the denominator. This yields:
	\begin{align*}
		&\geq \frac{1}{n} \cdot \Big(\frac{2d \cdot \alpha}{e} \Big)^{cn/2}  \cdot \alpha^\ell  \cdot \frac{d^{d'-(cn/2)} \cdot \sqrt{2\pi d}}{d'^{d'-(cn/2)} \cdot 2\pi \sqrt{d'(d-d')}}\\
		&\geq \frac{1}{n} \cdot \Big(\frac{2d \cdot \alpha}{e} \Big)^{cn/2}  \cdot \alpha^\ell  \cdot \frac{(1+(c/(1-c)))^{d'-(cn/2)}}{\sqrt{2\pi d'}}
	\end{align*}
	In the last step, we used that $d/d' \geq (d'+cn) /d' \geq 1+\frac{c}{1-c}$. This holds because $d-d' \geq cn$ by the choice of $\Omega$ (and therefore, $d' \leq (1-c)n$). Furthermore, we cancelled $\sqrt{d}$ and $\sqrt{d-d'}$. Now the exponent $d'-(cn/2)$ is at least $d'/2$ (because $d' \geq cn$), and the base is some constant $>1$, so it can be checked that the whole fraction in the right factor is $\geq 1$, unless $d'$ and hence $cn$ is smaller than some constant depending on $c$. So, for large enough $n$, we can remove the factor on the right and are left with:
	\[
	\geq  \frac{1}{n} \cdot \Big(\frac{2d \cdot \alpha}{e} \Big)^{cn/2} \alpha^\ell.
	\]
	So (ii) holds for $G$.
	It remains to deal with the case that the large alternating group is a composition factor of $G/N \cong G^{\Omega}$ or that it is isomorphic to $G/N$. In the latter case, $G$ acts on $\Omega$ as $\Alt_m$, for $m = |\Omega| \geq cn$. If $\Alt(\Omega) \leq G^{(\Delta)}$, then we are done and have case (i) for $G$. If $G$ does not act as $\Alt(\Omega)$ when it fixes $\Delta$ pointwise, then by Lemma \ref{lem:G_fixingTheRestPointwise}, for every $g \in G$, the effect of $g$ on $\Omega$ is fully determined by the effect of $g$ on $\Delta$. Thus, $|G| \leq |\Delta|! = d'!$. Then $[\Sym_d : G] \geq \frac{d!}{d'!}$. We have $d' \leq d - cn$ because $|\Omega| \geq cn$. So
	\begin{align*}
		[\Sym_d : G] \geq \frac{d!}{d'!} \geq \frac{d!}{(d-cn)!} \geq \Big(\frac{d}{e}\Big)^{cn} \cdot \sqrt{d/(d-cn)} \geq \Big(\frac{2d}{e \cdot {\lfloor c^{-1} \rfloor!}}\Big)^{cn/2} = \Big(\frac{2d \cdot \alpha}{e}\Big)^{cn/2} .
	\end{align*}	
	The third inequality uses again the Stirling approximation for the factorials, and the last inequality holds because $\frac{d}{e} \geq \frac{2}{c^{-1}}$. Therefore, if this happens, we have case (ii) for $G$ (the additional factors $\frac{1}{n} \cdot \alpha^\ell$ in case (ii) only make the expression smaller).\\
	
	If $G/N$ is not isomorphic to the alternating group, then $\Alt_m$ must be a composition factor of $G/N = G^\Omega$. We apply the induction hypothesis. 
	Again, this gives us two cases that can arise for $G/N$. In case (i), there exists $H \leq G^{\Omega}$ and an orbit $A \subseteq \Omega$ of $H$ with $|A| \geq c\cdot n$ such that $\Alt(A) \leq H^{(\Omega \setminus A)}$. Then let $H'$ be a subgroup of $G$ whose restriction to $\Omega$ is $H$. If $\Alt(A) \leq H'^{([d] \setminus A)}$, then we have case (i) for $G$, as witnessed by $H'$. Otherwise, we use again Lemma \ref{lem:G_fixingTheRestPointwise}, which says that the action of $H'$ outside of $A$ determines the action in $A$. We claim that this also true for $G$ itself, i.e.\ there are no two $g,g' \in G$ which are different on $A$ and equal on $[d] \setminus A$.\\ 
	\textit{Proof of claim:} If the claim is not true, then $G^{([d] \setminus A)}$ is non-trivial. Hence $G^{([d] \setminus A)}$ is not a subgroup of $H'$ because then, the action of $H'$ on $A$ would not be determined by its action outside of $A$. If $G^{([d] \setminus A)}$ is not a subgroup of $H'$, then $(G^{([d] \setminus A)})^\Omega$ is not a subgroup of $H$, either. But we can always assume that $H$ contains $(G^{([d] \setminus A)})^\Omega$: The fact that $\Alt(A) \leq H^{(\Omega \setminus A)}$ will still hold if we add to $H$ all permutations in $G^{([d] \setminus A)}$. This can at most push $H^{(\Omega \setminus A)}$ from $\Alt(A)$ to $\Sym(A)$. Thus, we can assume that $H$ is such that the claim holds and the action of $G$ on $A$ is indeed determined by its action on $[d] \setminus A$.\\
	
	It follows that $|G| \leq (d - |A|)!$, and since $|A| \geq cn$, we get the same lower bound for $[\Sym_d : G]$ as above in the case where $G/N \cong \Alt_m$. So in this case, case (ii) applies to $G$.\\
	
	It remains to check what happens if the induction gives us case (ii) for $G/N$. Then we have $[\Sym(\Omega) : G/N ] \geq \frac{1}{n} \cdot \Big(\frac{2(d-d') \cdot \alpha}{e} \Big)^{cn/2} \cdot \alpha^\ell$. It holds 
	\[
	|G/N| = |\Sym(\Omega)|/[\Sym(\Omega) : G/N ] = \frac{(d-d')!}{[\Sym(\Omega) : G/N ]}.
	\]
	We have $|G| = |N| \cdot |G/N|$, and $|N| \leq d'!$. We get a similar chain of inequalities as before:
	\begin{align*}
		[\Sym_d : G] &= \frac{d!}{|G|} = \frac{d!}{|N| \cdot |G/N|}\\
		&\geq \frac{d! \cdot (2(d-d')\alpha)^{cn/2} \cdot\alpha^\ell}{d'! \cdot (d-d')! \cdot n \cdot e^{cn/2}}\\
		&\approx  \frac{1}{n} \cdot \Big(\frac{2d\alpha}{e} \Big)^{cn/2} \cdot \alpha^\ell \cdot \frac{d^{d-(cn/2)} \cdot (d-d')^{cn/2} \cdot \sqrt{2\pi d}}{d'^{d'} \cdot (d-d')^{d-d'} \cdot 2\pi \sqrt{d'(d-d')}} \\
		&\geq \frac{1}{n} \cdot \Big(\frac{2d\alpha}{e} \Big)^{cn/2} \cdot \alpha^\ell \cdot \frac{d^{d'}}{d'^{d'} \cdot \sqrt{2\pi \cdot d'}} \tag{$\star$}
	\end{align*}	
	In the last step, we cancelled all factors $(d-d')$ in the denominator with the factors $d$ and $(d-d')$ in the numerator, and also removed $\frac{\sqrt{d}}{\sqrt{d-d'}}$. Now we can almost continue as in the case before, except that we need a case distinction. The important difference is that now, we have no lower bound for $d'$ because this time, the factor $\Alt_m$ is in $G^\Omega$, so $[d] \setminus \Omega$ might be arbitrarily small (at least size $1$ because it contains at least one orbit). We distinguish the cases whether $d' \geq cn$ or $d' < cn$. Let us start with the latter case. As above, we have $d/d' \geq (d'+cn) /d'$, because this only depends on the fact that $|\Omega| \geq cn$, which still holds by choice of $\Omega$. If $d' < cn$, then this becomes: $d/d' \geq (d'+cn) /d' \geq 2$. So then, the right factor in $(\star)$ is at least $\frac{2^{d'}}{\sqrt{2\pi d'}}$. This is greater than one for all values of $d' \geq 2$. If $d' = 1$, then we can argue differently. Then, the right fraction in $(\star)$ is equal to $d/\sqrt{2 \pi}$, which is also greater than one for all large enough $n$ (because $d \geq cn$). In case that $d' \geq cn$, we make the same argument as before, and use the bound $d/d' \geq (d'+cn) /d' \geq 1+\frac{c}{1-c}$. Then the right factor in $(\star)$ is at least $\frac{(1+\frac{c}{1-c})^{d'}}{\sqrt{2\pi d'}}$. Again, for large enough $d'$, this is $\geq 1$. Since $d' \geq cn$, this happens for large enough $n$. So in all cases, we can remove the right factor and the product only gets smaller. So all in all, we have 
	\[
	[\Sym_d : G] \geq \frac{1}{n} \cdot \Big(\frac{2d \cdot \alpha }{e} \Big)^{cn/2} \alpha^\ell,
	\]
	as desired.
	This finishes the case where $G$ is intransitive.\\
	\\
	\textit{Case 2: $G$ is transitive, but not primitive.}\\
	In this case, $G$ has a non-trivial block system. Let $N \lhd G$ be the normal subgroup that stabilises each block setwise. Then $G/N$ is the action of $G$ on the set of blocks. Again, according to Lemma \ref{lem:G_recursiveStep}, the large alternating group is either a composition factor of $N$, of $G/N$, or it is isomorphic to $G/N$. In the latter case, (ii) applies to $G$: If $G/N \cong \Alt_m$ for some $m \geq c \cdot n$, then the block system has $m \leq d/2$ blocks. Each block has size $t$, with $2 \leq t \leq \lfloor c^{-1} \rfloor$ (note that this case can only happen if $c \leq \frac{1}{2}$). Then we have $|G| \leq (t!)^m \cdot m!$. Thus, 
	\begin{align*}
		[\Sym_d : G] &\geq \frac{d!}{(t!)^m \cdot m!} \geq \frac{d!}{(t!)^m \cdot (d/2)!}\\
		&\geq \frac{\sqrt{2} \cdot (2d/e)^{(d/2)}}{(t!)^{(d/2)}} \geq \Big( \frac{2d \cdot \alpha}{e} \Big)^{d/2}\\
		& \geq \Big( \frac{2d \cdot \alpha}{e} \Big)^{c \cdot n/2} \geq  \frac{1}{n} \cdot \Big( \frac{2d \cdot \alpha}{e} \Big)^{c \cdot n/2} \cdot  \alpha^1.
	\end{align*}	
	The last step holds because $\alpha \leq 1$. 
	The next case is that $\Alt_m$ is a composition factor of $G/N$. Then we do not need the inductive step either and can immediately conclude with the same lower bound as above for $[\Sym_d : G]$. This is because if $\Alt_m$ is a composition factor of $G/N$, the degree of $G/N$ must be at least $m \geq cn$, and so, the number of blocks must also be greater than $cn$, and the block size can be at most $\lfloor c^{-1} \rfloor$.\\
	
	It remains the case that $N$ has $\Alt_m$ as a composition factor. We apply the inductive hypothesis to $N$. Should case (i) hold for $N$, then we immediately know that case (i) also applies to $G$, because then we have some $H \leq N \leq G$ (so in particular, $H \leq G$) and an $H$-orbit $A \subseteq [d]$ such that $\Alt(A) \leq H^{([d] \setminus A)}$ (this is easier than before because $N$ and $G$ have the same permutation domain now).\\ 
	If we instead have case (ii) for $N$, then $[\Sym_{d} : N] \geq \frac{1}{n} \cdot \Big( \frac{2d \cdot \alpha}{e} \Big)^{c \cdot n/2} \cdot \alpha^{\ell-1}$.\\
	Each block must be of size at least $c \cdot n$ because $N$ has an alternating group of that degree as a composition factor, and the blocks are the orbits of $N$. If these orbits were smaller than $c \cdot n$, then $N$ could not have a composition factor of that degree, as we already argued in the intransitive case.
	Therefore, the number of blocks is at most $\lfloor c^{-1} \rfloor$ and so, $|G/N| = [G : N] \leq \lfloor c^{-1} \rfloor!$. In total, we have
	\begin{align*}
		[\Sym_d : G] &= [\Sym_d : N] / [G : N] \geq \frac{1}{n} \cdot \Big( \frac{2d \cdot \alpha}{e} \Big)^{c \cdot n/2} \cdot \frac{\alpha^{\ell-1}}{\lfloor c^{-1} \rfloor!}\\
		&= \frac{1}{n} \cdot \Big( \frac{2d \cdot \alpha}{e} \Big)^{c \cdot n/2} \cdot \alpha^{\ell}.
	\end{align*}
	This finishes the transitive and imprimitive case.\\
	\\
	\textit{Case 3: $G$ is primitive.}\\
	It remains the case that $G$ is primitive. Since $G$ has $\Alt_m$ as a composition factor, with $m \geq cn$, we have for the order of $G$: $|G| \geq (cn)!/2 \approx \frac{1}{2}\Big(\frac{cn}{e}\Big)^{cn} \cdot \sqrt{2\pi cn} \geq n^{1+\log n} \geq d^{1+\log d}$. Thus, by Theorem \ref{thm:G_primitiveGroups}, $G$ has a normal subgroup $N$ of index $\leq d$ such that $N$ has a block system on which it acts as a Johnson group $\Alt_k^{(t)}$, where $k \geq \log d$. Here, $k$ is the size of the permutation domain, and $t$ is the length of the tuples over $[k]$ of the Johnson action. Since the index of $N$ in $G$ is $\leq d$, $G/N$ has order at most $d$, so it is too small to have the composition factor $\Alt_m$. Therefore, $N$ must have $\Alt_m$ as a composition factor. 
	Now $N$ acts as $\Alt_k^{(t)}$ on the blocks of its block system. We have $k \geq \log d$, and every block is identified with a $t$-tuple over a $k$-element domain. So the number of blocks is $k^t \geq (\log d)^t$. Thus, the block-size is at most $d/(\log d)^t$. Even for $t = 1$, this is asymptotically less than $cn$ (recall that $d \leq n$). 
	So if $n$ is large enough, then our sought alternating group cannot be a factor of the blockwise stabiliser $N' \lhd N$ (whose orbits are the blocks), but it must be a factor of $N/N'$, which is the action of $N$ on the blocks. This requires the number of blocks to be $\geq cn$. Then the block-size can be at most $\lfloor c^{-1} \rfloor$, as in Case 2 above. If the block-size is $\geq 2$, then we obtain $[\Sym_d : N] \geq \Big( \frac{2d\alpha}{e} \Big)^{cn/2}$ with exactly the same calculation as in Case 2. Then 
	\[
	[\Sym_d : G] = [\Sym_d : N] / [G : N] \geq  \frac{1}{n} \cdot \Big( \frac{2d \cdot \alpha}{e} \Big)^{cn/2}.
	\]
	Here, we used that $[G:N] \leq d \leq n$. Thus, we get the desired index-bound for $G$ without having to recurse any further. In the case that the block system of $N$ is trivial and the blocks are singletons, then we must have $t=1$: If $t \geq 2$, then $N$ is isomorphic to an alternating group on $\leq \sqrt{d} < cn$ points. Alternating groups have no normal subgroups other than the trivial group and itself, so in this case, $N$ cannot have $\Alt_m$ with $m \geq cn$ as a composition factor. So it remains the case that $t=1$, and $N = \Alt_d$. Since $N \leq G$, we have case (i) for $G$ then. 
\end{proof}	

We now combine this lemma with the following result from the literature, which guarantees the existence of a large alternating group as a composition factor in any sufficiently large group.
\begin{lemma}[Lemma 2.2 in \cite{babaiCameronPalfy}]
	\label{lem:G_babai_alternatingGroup}
	Let $G$ be a permutation group of degree $n$. If $G$ has no composition factor isomorphic to an alternating group of degree $> C$, then $|G| \leq C^{n-1} (C \geq 6)$.
\end{lemma}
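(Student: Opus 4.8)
The plan is to prove the bound by induction on the degree $n$, reducing the general case to that of a \emph{primitive} group, which is the only genuinely hard case. For $n \le C$ one has $|G| \le n! \le n^{n-1} \le C^{n-1}$, so assume $n > C$; note that then the hypothesis forbids $\Alt_n$ (and hence $\Sym_n$) as a composition factor of $G$. First I would dispose of the intransitive case: if $\Delta$ is a $G$-orbit with $1 \le |\Delta| = k < n$, then an element of $G$ that fixes $\Delta$ pointwise and also fixes $[n]\setminus\Delta$ pointwise is trivial, so $G$ embeds into $G^{\Delta} \times G^{[n]\setminus\Delta}$; both factors are permutation groups of smaller degree, and every composition factor of a subgroup or quotient of $G$ is a composition factor of $G$, so the induction hypothesis gives $|G| \le C^{k-1}\cdot C^{(n-k)-1} = C^{n-2}$. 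Next the imprimitive case: fix a nontrivial block system with $b$ blocks of size $a$, so $2 \le a,b < n$ and $ab = n$; the kernel $K$ of the action on blocks embeds into $\prod_{i=1}^{b} K^{B_i}$ with each $K^{B_i} \le \Sym_a$, so $|K| \le C^{b(a-1)} = C^{n-b}$ by induction, while $G/K \le \Sym_b$ gives $|G/K| \le C^{b-1}$; multiplying yields $|G| \le C^{n-1}$.

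This leaves the case $G$ primitive of degree $n > C$, which cannot be reduced and where $G$ is in particular not a ``giant'' (not $\Alt_n$ or $\Sym_n$). Here I would appeal to the O'Nan--Scott description of the socle $N = \mathrm{soc}(G)$ (see \cite{dixonMortimer}) and argue case by case. If $N$ is elementary abelian then $G \le \mathrm{AGL}_d(p)$ with $n = p^d$, hence $|G| < n \cdot p^{d^2} \le n^{1 + \log_2 n}$, and $n^{1+\log_2 n} \le C^{n-1}$ holds for all $n > C \ge 6$ by a direct comparison of logarithms. If $N = T^r$ with $T$ a nonabelian simple group, then $T$ is a composition factor of $G$, so if $T \cong \Alt_m$ the hypothesis forces $m \le C$: in the almost simple, diagonal and twisted-wreath cases one then has $|G| \le |\Aut(T)|^r \cdot r!$ with $T$ of bounded order, which is subexponential in $n$ and so below $C^{n-1}$ once $n$ exceeds a small threshold, while in the product-action case $G \le \Sym_m \wr \Sym_r$ acts on a set of the form $\Gamma^r$ with $|\Gamma| = \binom{m}{k}$ bounded in terms of $C$, so $n = |\Gamma|^r$ grows exponentially in $r$ and $|G| \le (m!)^r\cdot r!$ is at most $C^{n-1}$. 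If $T$ is a group of Lie type or sporadic, I would invoke the standard CFSG-based bound on the order of a primitive group that is not a giant (Mar\'oti's refinement of Cameron's theorem), namely $|G| < n^{1+\lfloor\log_2 n\rfloor}$ with a short explicit list of exceptions consisting of the Mathieu groups $M_{11},M_{12},M_{23},M_{24}$ in their natural actions and product-action groups already handled; the Mathieu exceptions satisfy $|G| \le C^{n-1}$ by crude inequalities such as $|M_{24}| \le 6^{23}$, and otherwise $n^{1+\log_2 n} \le C^{n-1}$ exactly as in the affine case. Assembling the primitive cases and combining with the induction proves the lemma; the hypothesis $C \ge 6$ is used only to clear finitely many small-degree coincidences, such as the exceptional isomorphisms $\mathrm{PSL}_2(9)\cong\Alt_6$ and $\mathrm{PSL}_4(2)\cong\Alt_8$ and the first few primitive groups.

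The main obstacle is the primitive case. Keeping the proof self-contained, in the spirit of the original argument, would require a hands-on analysis of the socle together with an order bound for almost simple primitive groups, which in turn rests on CFSG; alternatively one accepts the modern O'Nan--Scott/Mar\'oti machinery as a black box, in which case the remaining work is to verify that the relevant numeric inequalities already hold for $C = 6$ and $n > C$ — which is precisely where the handful of small exceptional groups and simple-group isomorphisms must be checked explicitly.
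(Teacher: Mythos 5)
The paper does not prove this lemma: it is quoted from Babai, Cameron and P\'alfy and used as a black box in Corollary~\ref{cor:G_groupsOfBoundedIndexHaveAlt}, so there is no in-paper argument to compare yours against.

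Assessed on its own, your reduction to the primitive case is correct and standard: the intransitive and imprimitive inductions give $C^{k-1}\cdot C^{n-k-1}\le C^{n-1}$ and $C^{n-b}\cdot C^{b-1}=C^{n-1}$ respectively, exactly as you write, and the remark that composition factors only pass down to subgroups and quotients is the right reason the inductive hypothesis stays available. The entire difficulty is the primitive case, and your treatment via O'Nan--Scott plus the Cameron--Mar\'oti order bound is a legitimate way to finish, but it is anachronistic relative to the 1982 source (Mar\'oti's refinement postdates it by two decades); the original argument has to work from a coarser CFSG-based estimate for primitive group orders, with correspondingly more hands-on case analysis. Two small points to tighten. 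First, $\Sym_n$ is not simple for $n\ge 3$ and so is never a composition factor; the phrase ``and hence $\Sym_n$'' is not what you mean. What the induction actually needs is only that $G$ is not equal, as a group, to $\Alt_n$ or $\Sym_n$, which does follow from $n>C$ since both of those have $\Alt_n$ among their composition factors. Second, in the product-action subcase the reason $|\Gamma|=\binom{m}{k}$ is bounded by a function of $C$ alone is that $(\Alt_m)^r\le G$ makes $\Alt_m$ a composition factor of $G$, forcing $m\le C$; you should say that explicitly, since the hypothesis on its own does not obviously cap $m$. Modulo these clarifications and the residual small-degree numeric checks you acknowledge deferring (which is also where the constraint $C\ge 6$ is actually spent), the sketch is coherent, but be aware you are reconstructing a plausible modern proof rather than the one in the cited reference.
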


We get the following consequence for groups of bounded index:
\begin{corollary}	
	\label{cor:G_groupsOfBoundedIndexHaveAlt}
	Let $(G_n)_{n \in \bbN}$ be a family of groups such that for all $n$, $G_n \leq \Sym_n$. Assume that there exists some constant $k \in \bbN$ such that asymptotically, $[\Sym_n : G_n] \leq 2^{nk}$.\\
	Then there is a function $f(n) \in \Theta(n)$ such that for all large enough $n$, $G_n$ has a composition factor isomorphic to $\Alt_{m}$, for some $m \geq f(n)$.
\end{corollary}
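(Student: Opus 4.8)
The statement to prove is Corollary~\ref{cor:G_groupsOfBoundedIndexHaveAlt}: if $[\Sym_n:G_n]\le 2^{nk}$ asymptotically, then $G_n$ has a composition factor $\Alt_m$ with $m\ge f(n)$ for some $f(n)\in\Theta(n)$. The plan is to run a straightforward contrapositive-style argument combining the order lower bound forced by small index with the order upper bound forced by the absence of large alternating composition factors (Lemma~\ref{lem:G_babai_alternatingGroup}).

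First I would translate the index hypothesis into a lower bound on $|G_n|$. Since $[\Sym_n:G_n]\le 2^{nk}$ for all large enough $n$, we get $|G_n|=n!/[\Sym_n:G_n]\ge n!/2^{nk}$. Using the Stirling approximation $n!\approx\sqrt{2\pi n}\,(n/e)^n$, this is at least $(n/(e\cdot 2^k))^n$ up to lower-order factors, i.e. $|G_n|\ge c_1^{\,n}$ for a suitable constant $c_1>1$ once $n$ is large (concretely any $c_1$ with $c_1<n/(e\,2^k)$ works, and $n/(e\,2^k)$ eventually exceeds, say, $7$). So there is a constant $C_0\ge 7$ with $|G_n|\ge C_0^{\,n-1}$ for all large $n$.

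Next I would invoke Lemma~\ref{lem:G_babai_alternatingGroup} in contrapositive form. That lemma states that if $G$ (of degree $n$) has \emph{no} composition factor isomorphic to an alternating group of degree $>C$, then $|G|\le C^{\,n-1}$ for $C\ge 6$. Apply this with $C:=C_0-1\ge 6$ (chosen so that $C<C_0$, hence $C^{\,n-1}<C_0^{\,n-1}\le|G_n|$ for large $n$): if $G_n$ had no composition factor $\Alt_m$ with $m>C_0-1$, it would satisfy $|G_n|\le (C_0-1)^{\,n-1}<C_0^{\,n-1}\le|G_n|$, a contradiction. Hence $G_n$ must have a composition factor $\Alt_m$ with $m> C_0-1$. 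This does not immediately give the desired \emph{linear} lower bound on $m$; to get $m\in\Theta(n)$ I would instead apply Lemma~\ref{lem:G_babai_alternatingGroup} with a degree parameter that grows linearly: for a constant $c_2$ with $1<c_2<n/(e\,2^k)$, set $C:=c_2 n$; if $G_n$ had no composition factor $\Alt_m$ with $m> c_2 n$, then $|G_n|\le (c_2 n)^{\,n-1}$, whereas our lower bound gives $|G_n|\ge (n/(e\,2^k))^{\,n}$ up to polynomial factors; since $n/(e\,2^k)>c_2$ these two estimates are incompatible for all large $n$ (the ratio is $\ge ((n/(e\,2^k))/(c_2 n))^{\,n}\cdot(\text{poly})= (1/(e\,2^k c_2))^{\,n}\cdot\ldots$ — I should be slightly more careful here and actually pick $c_2$ a fixed constant, not scaling with $n$, so that $(c_2 n)^{n-1}$ is compared against $(\Omega(n))^n$; the base $\Omega(n)$ beats the base $c_2$ outright, so the contradiction is clean). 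Thus $G_n$ has a composition factor $\Alt_m$ with $m\ge c_2 n$, and $f(n):=c_2 n\in\Theta(n)$ works.

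The only subtle point — and the step I would double-check most carefully — is choosing the growth parameter fed into Lemma~\ref{lem:G_babai_alternatingGroup} correctly so that the resulting bound on $m$ is genuinely linear in $n$ rather than merely a constant. The cleanest route is: fix any constant $b$ with $6\le b$ and $b<$ (the eventual value of $n/(e\,2^k)$ is unbounded, so in fact \emph{any} constant $b\ge 6$ works for large $n$); apply the lemma with $C=b$ to conclude $G_n$ has a composition factor $\Alt_m$ with $m\ge b$ — but this is too weak. Instead, the right statement of the lemma to use is that if the largest alternating composition factor has degree $\le C$ then $|G|\le C^{\,n-1}$; reading it as a function of $C$, and knowing $|G_n|\ge (\gamma_n)^{\,n}$ with $\gamma_n\to\infty$, we conclude the largest alternating composition factor degree $C_n$ satisfies $C_n^{\,n-1}\ge(\gamma_n)^{\,n}$, hence $C_n\ge \gamma_n^{\,n/(n-1)}\ge\gamma_n\to\infty$ — still only superconstant, not linear. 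To force linearity I genuinely need $\gamma_n^{\,n/(n-1)}$ to be $\Omega(n)$, which it is: $\gamma_n = \Theta(n/2^k)$, so $\gamma_n\in\Theta(n)$ and therefore $C_n\in\Omega(n)$. So the final bookkeeping is: $|G_n|\ge n!/2^{nk}\ge (n/(4\cdot e\cdot 2^k))^{\,n}$ for large $n$ (absorbing the $\sqrt{2\pi n}$ factor crudely), set $C_n$ to be the largest alternating composition factor degree, and Lemma~\ref{lem:G_babai_alternatingGroup} gives $C_n^{\,n-1}\ge (n/(4e\,2^k))^{\,n}$, whence $C_n\ge n/(4e\,2^k)$ for large $n$; take $f(n)=n/(4e\,2^k)$.
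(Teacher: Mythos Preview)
Your proposal is correct and follows essentially the same approach as the paper: both arguments combine the order lower bound $|G_n|\ge n!/2^{nk}$ with the Babai--Cameron--P\'alfy upper bound (Lemma~\ref{lem:G_babai_alternatingGroup}) to force the maximal alternating composition factor to have linear degree. The paper phrases it as a contradiction from the assumption $f(n)\in o(n)$, whereas you extract an explicit constant $f(n)=n/(4e\cdot 2^k)$; this is purely a matter of presentation, and your final bookkeeping paragraph is the clean version of the argument (the earlier exploratory paragraphs can be dropped).
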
	
\begin{proof}
	Let $f(n): \bbN \lra \bbN$ be the smallest upper bound for the maximum degree of an alternating group that appears as a composition factor of $G_n$. We have to show that $f(n)$ is linear in $n$. Suppose for a contradiction that $f(n) \in o(n)$. Let $g(n) := f(n)+1$, for example. Then $g(n) \in o(n)$, and it holds for all large enough $n$ that $G_n$ has no composition factor isomorphic to $\Alt_{g(n)}$. By Lemma \ref{lem:G_babai_alternatingGroup}, we have $|G_n| \leq g(n)^{n-1}$. Then we get for the index:
	\[
	[\Sym_n :  G_n  ] \geq \frac{n!}{g(n)^{n-1}}.
	\]
	Using the Stirling approximation, we obtain:
	\[
	\frac{n!}{g(n)^{n-1}} \approx \Big(\frac{n}{e \cdot g(n)}\Big)^{n-1} \cdot \Big(\frac{n \cdot \sqrt{2 \pi n} }{e} \Big).
	\]
	Since $g(n)$ is sublinear, the fraction $\frac{n}{e \cdot g(n)}$ is not bounded from above by any constant, so we have $[\Sym_n :  G_n  ] > 2^{nk}$, for every constant $k \in \bbN$. This is a contradiction, so $f(n)$ must be linear. 
\end{proof}

Finally, we can put everything together to prove Lemma \ref{lem:G_containmentOfAlternatingGroup}.
\begin{proof}[Proof of Lemma \ref{lem:G_containmentOfAlternatingGroup}]
	Corollary \ref{cor:G_groupsOfBoundedIndexHaveAlt} states that all large enough $G_n$ have a composition factor isomorphic to $\Alt_m$, where $m \geq f(n)$, for some function $f(n) \in \Theta(n)$. There is a constant $0 < c \leq 1$ such that $f(n) = c \cdot n$. Hence, Lemma \ref{lem:G_groupInduction} applies (where $d := n$).\\
	We now want to show that case (ii) from Lemma \ref{lem:G_groupInduction} cannot occur for any of the $G_n$. Suppose for a contradiction that $[\Sym_n : G_n] \geq \frac{1}{n} \cdot \Big(\frac{2n \cdot \alpha}{e} \Big)^{cn/2} \cdot \alpha^\ell$ for some large enough $n$. First, we bound $\ell$. In Lemma \ref{lem:G_groupInduction}, this was defined as the number of times the transitive and imprimitive case occurs in the recursion. By inspection of the proof, we see that this case occurs only if $G$ is transitive and imprimitive, \emph{and} the sought composition factor $\Alt_m$ is in the blockwise stabiliser $N \lhd G$. In this case, the block-size must be $\geq cn$. Whenever this case occurs, we continue with the intransitive case where the orbits are the blocks, which means that we eventually end up with a group whose degree is at most the block size of the block system of $G$. So whenever we have the transitive imprimitive case and continue the recursion, we at least halve the degree. We may only do this until the degree drops below $cn$, so we have $\ell \leq \log n$ (actually, the bound could be made even smaller, but $\log n$ is already much tighter than we really need). Thus,
	\begin{align*}
		[\Sym_n : G_n] &\geq \frac{1}{n} \cdot \Big(\frac{2n}{e \cdot \lfloor c^{-1} \rfloor!} \Big)^{cn/2} \cdot \Big(\frac{1}{\lfloor c^{-1}\rfloor!} \Big)^{\log n}\\
		&\geq \Big(\frac{2n}{e \cdot \lfloor c^{-1}\rfloor!^2} \Big)^{cn/2} 
	\end{align*}	
	Now for every constant $k\in \bbN$ that we could choose, $2^{kn}$ can be written as $a^n$ for some constant $a$ (that can be arbitrarily large but still constant). Comparing the above expression with $a^n$, we find:
	\begin{align*}
		\frac{[\Sym_n : G_n]}{a^n} &\geq \Big(\frac{2n}{e \cdot (c^{-1})!^2} \Big)^{cn/2} \cdot \frac{1}{a^n}\\
		&=  \Big(\frac{2n}{a^{2/c} \cdot e \cdot (c^{-1})!^2} \Big)^{cn/2}. 
	\end{align*}	
	Clearly, for any constant $a$, this fraction tends to infinity in the limit. So, if $[\Sym_n : G_n]$ is bounded from above by $2^{kn}$, then for all large enough $n$, case (ii) from Lemma \ref{lem:G_groupInduction} cannot possibly apply to $G_n$.
\end{proof}

Now we can prove Theorem \ref{thm:G_alternatingPartitionWithLogManyParts} using Lemma \ref{lem:G_containmentOfAlternatingGroup}:

\restateAlternatingPartitionWithLogManyParts*

\begin{proof}
Assume for a contradiction that there was a constant $0 < c \leq 1$ such that for every large enough $n$, there exists a subset $\Delta_n \subseteq [n]$ of size $|\Delta_n| \geq cn$, which is precisely the set of elements in singleton parts in $\Sp_A(G_n)$. In the following, we assume that the size of $\Delta_n$ is not only lower-bounded linearly, but that there also exists some other constant $1 > d > c$ such that $|\Delta_n| \leq dn$, for all large enough $n$. We deal with the other case in the end.\\ 
Let $H_n := (G_n^{([n] \setminus \Delta)})^{\Delta}$ be the subgroup of $G_n$ that fixes every point outside of $\Delta$, restricted to its action on $\Delta$. We now show that $[\Sym(\Delta) : H_n] > 2^{nk}$, for every $k \in \bbN$. Indeed, if there were a $k \in \bbN$ such that $[\Sym(\Delta) : H_n] \leq 2^{nk}$ for all large enough $n$, then by Lemma \ref{lem:G_containmentOfAlternatingGroup}, all the $H_n$ would have a subgroup $H'_n$ containing the alternating group on an orbit $A$ of size $|A| \geq c' \cdot |\Delta|$, where $c'$ is the constant from the lemma. More precisely, $\Alt(A) \leq H_n'^{(\Delta \setminus A)}$. Thus, every even permutation on $A$ (that fixes everything in $\Delta$ outside of $A$) would be contained in $H_n$, and since $H_n$ is a subgroup of $G_n$ that fixes everything outside of $\Delta$, this means that $\Alt(A) \leq G^{([n] \setminus A)}$. But then, $A \subseteq \Delta$ would be a part of the coarsest alternating supporting partition $\Sp_A(G_n)$. This contradicts the fact that $\Sp_A(G_n)$ only has singleton parts on $\Delta$. Therefore, we must have that $[\Sym(\Delta_n) : H_n] > 2^{nk}$, for every $k \in \bbN$.\\
As a next step, we calculate that this entails a violation of the assumption that $[\Sym_n : G_n]$ can be upper-bounded by $2^{nk}$, for some $k$. Namely, if $[\Sym(\Delta_n) : H_n] > 2^{nk}$ for all $k \in \bbN$, then $|H_n| < \frac{|\Delta_n|!}{2^{nk}}$, for every choice of $k$. Then we also have $|G_n| < \frac{|\Delta_n|!}{2^{nk}} \cdot (n-|\Delta_n|)!$. This is because by Lemma \ref{lem:G_altSPsandwichLemma}, every element of $G_n$ stabilises $\Sp_A(G_n)$ setwise and so in particular, maps singleton parts only to singleton parts; therefore, $[n] \setminus \Delta$ is a union of $G$-orbits and so $H_n$ is normal in $G_n$. It follows for the index, for every $k \in \bbN$:
\begin{align*}
[\Sym_n : G_n] &= \frac{n!}{|G_n|} > \frac{n! \cdot 2^{nk}}{|\Delta_n|! \cdot (n-|\Delta_n|)!}.
\end{align*}	
Since $cn \leq |\Delta_n| \leq dn$, we have $|\Delta_n| \leq dn$ and $n-|\Delta_n| \leq (1-c)n$. So we can 	continue: 
\begin{align*}
\frac{n! \cdot 2^{nk}}{|\Delta_n|! \cdot (n-|\Delta_n|)!} &\geq \frac{n! \cdot 2^{nk}}{(dn)! \cdot ((1-c)n)!}\\
& \approx 2^{nk} \cdot \Big( \frac{1}{d}\Big)^{dn} \cdot \Big( \frac{1}{1-c}\Big)^{(1-c)n} \cdot \frac{1}{\sqrt{2\pi \cdot d(1-c) \cdot n}}
\end{align*}
Since $d < 1$ and $(1-c) < 1$, the above product is greater than $2^{nk}$ (for large enough $n$). Because this lower bound for $[\Sym_n : G_n]$ holds for every $k \in \bbN$, we can conclude that $[\Sym_n : G_n] > 2^{nk}$, for every $k$. But the assumption of Theorem \ref{thm:G_alternatingPartitionWithLogManyParts} says that there exists a $k \in \bbN$ such that for all large enough $n$, $[\Sym_n : G_n] \leq 2^{nk}$. This is a contradiction. This proves the theorem in case that $|\Delta_n|$ can be upper-bounded by some linear function $d \cdot n$, for $d < 1$. The case that $|\Delta_n| > dn$, for all $d < 1$, cannot occur: Lemma \ref{lem:G_containmentOfAlternatingGroup} applied to $G_n$ states that $G_n$ contains an alternating group of linear degree (which fixes the rest pointwise). Therefore, for all large enough $n$, $\Sp_A(G_n)$ must have at least one part of linear size. So it is impossible that all but sublinearly many elements of $[n]$ are in singleton parts in $\Sp_A(G_n)$. 
\end{proof}	
	
\end{document}